\documentclass{ttumsthesis}

\usepackage[utf8]{inputenc}
\usepackage{amsfonts}
\usepackage{mathtools}
\usepackage{thmtools}
\usepackage{braket}
\usepackage{soul}
\usepackage{xcolor}
\usepackage{dsfont}

\usepackage[colorlinks=true,linkcolor=black,citecolor=black,urlcolor=black]{hyperref}

\usepackage[square,numbers,sort]{natbib}
\usepackage{cleveref}
\newcommand{\Var}{\operatorname{Var}}

\newcommand{\Tr}{\operatorname{Tr}}
\newcommand{\Ad}{\operatorname{Ad}}

\newcommand{\End}{\operatorname{End}}
\newcommand{\op}{\operatorname{op}}
\newcommand{\Hom}{\operatorname{Hom}}

\newcommand{\Ent}{\operatorname{Ent}}

\crefname{appendix}{appendix}{appendices}
\Crefname{appendix}{Appendix}{Appendices}

\crefname{remark}{remark}{remarks}
\Crefname{remark}{Remark}{Remarks}

\newtheorem{theorem}{Theorem}[section]
\newtheorem{lemma}[theorem]{Lemma}
\newtheorem{proposition}[theorem]{Proposition}
\newtheorem{corollary}[theorem]{Corollary}

\theoremstyle{definition}
\newtheorem{definition}[theorem]{Definition}
\newtheorem{example}[theorem]{Example}

\theoremstyle{remark}
\newtheorem{remark}[theorem]{Remark}

\usepackage{enumitem}
\setlist[enumerate,1]{label=(\arabic*)}

\thesistitle{Krylov-Lie Algebras for Variational Quantum Algorithms: Geometric, Depth-Aware Insights into Expressivity and Trainability}
\thesisauthor{Anžej Margeta-Cacace}
\thesisdepartment{Mathematics and Statistics}
\thesismonth{August}
\thesisyear{2026}
\thesischair{Dmitri Pavlov, Ph.D.}
\thesiscommitteeone{Lucas T. Braydwood, Ph.D.}
\thesiscommitteetwo{Travis Thompson, Ph.D.}

\hypersetup{pdftitle={Krylov-Lie Algebras for Variational Quantum Algorithms: Geometric, Depth-Aware Insights into Expressivity and Trainability}, pdfauthor={Anžej Margeta-Cacace}, pdflang={en-US}}

\begin{document}

\maketitle
\makecopyright

\begin{frontmatter}

\begin{acknowledgments}
This thesis would have been entirely impossible without the assistance of so many in my life, and so it would be incomplete without appropriate acknowledgments. 

Throughout all the trials and tribulations, Mom and Dad, you have been there for me, and when I needed it most, you came to help. Thank you.

I would particularly like to thank my advisor, Dmitri Pavlov, for his assistance in the process of writing this thesis and for his especially shrewd insights into how I should frame my work. I do not believe I would have been even as close to being as satisfied with the outcome of this text without his feedback. Dmitri, you have been a great advisor, and thank you for trusting in my ability to do this. 

To all those who never doubted me for a moment---Cole, Enzel, Galilea, Evelyn, Amaris---your presence and friendship are invaluable to me, and I hope I get to keep you all around as long as I can. Life would not mean as much without you.

James, talking math with you is a joy. I do not know anyone else with whom I can speak about research so fluently. I long for the next opportunity I have to hear what you are doing, to tell you what has piqued my interest. You are a wonderful friend, and a great colleague.

I also must give special thanks to my committee members, Lucas T. Braydwood and Travis Thompson. Besides being wonderful, incredibly supportive people, they very much saved the defense of this thesis from the horrors of bureaucracy. I cannot thank either of you enough for everything you have done for me over the past several years.

Finally, I would like to thank Michael Ragone and Marco Cerezo for their enlightening correspondence on the dynamical Lie algebraic theory of VQA loss landscapes.
\end{acknowledgments}

\tableofcontents

\begin{abstract}

Variational quantum algorithms (VQAs) are a promising method for potentially realizing quantum advantage on the noisy intermediate-scale quantum hardware of today by leveraging methodological wisdom from classical machine learning and optimization. However, their utility is impeded by features such as barren plateaus and other nontrivial pathologies in their loss landscapes. Much work has therefore focused on improving our understanding of VQA landscapes, leading to the development of dynamical Lie algebraic theories of barren plateaus and Jordan-algebraic Wishart systems. Though valuable and foundational, these models typically employ the assumptions of $\epsilon$-approximate $t$-designs and Haar-random circuits, thereby neglecting the depth and geometric content of realistic variational ans\"atze. This makes such approaches ill-suited to the shallow-depth regime, where VQAs are known to be poor approximators of $2$-designs and where trainability is often most practically relevant.

To address these limitations of existing VQA landscape theory, this thesis introduces a novel framework centered around \emph{Krylov algebras}: algebraic structures induced by the Krylov span of a finite generator set acting on one or more seed vectors. We show that the reachable manifolds to which VQAs give rise can be well-approximated in a numerically robust and geometrically faithful fashion through the use of Krylov-Lie algebras and groups. Furthermore, we demonstrate that these structures induce the canonical invariant measures necessary for computing expectation values and variances of loss functions under general sampling measures. In particular, we derive weighted non-Haar variance formulas that recover the usual Lie-algebraic Haar formulas as a special case while isolating all non-Haar effects into explicit correction terms.

We also refine the common heuristic that sufficiently deep depth-$L$ circuit ensembles must converge to Haar by showing that this conclusion does not hold in full generality without additional hypotheses. In the process, we identify concrete obstructions to naive Haar convergence; nevertheless, we find that this convergence may be recovered under natural necessary and sufficient ergodic conditions. Finally, our variance formulas reveal that non-Haar contributions can furnish a possible mechanism for mitigating barren plateaus by reweighting the visible sectors of the landscape, suggesting that VQAs may be more trainable than recent barren plateau literature has posited.
\end{abstract}


\end{frontmatter}

\mainmatter

\chapter{Introduction}\label{chap:intro}

\section{Variational Quantum Algorithms and Barren Plateaus}

Variational quantum algorithms (VQAs), such as the quantum approximate optimization algorithm (QAOA), have emerged as promising approaches to realizing quantum computational advantage \cite{cerezo_variational_2021}.
They avoid implementing a long,
fully error-corrected quantum computation, instead working with a parameterized quantum circuit,
evaluating a loss function on the output, and updating the parameters through a classical
optimization loop. In this way, VQAs combine the expressive power of quantum dynamics
with the flexibility of modern optimization, and they have therefore become central to
proposals for near-term quantum advantage in simulation, optimization, and machine
learning \cite{larocca_barren_2025}.

Simultaneously, the practical promise of VQAs has been shadowed by a substantial and
rapidly growing theory of trainability obstructions. The most famous of these are \emph{barren plateaus}: regimes in which gradients or cost function variances (these are equivalent conditions; \cite{larocca_barren_2025, ragone_lie_2024}) become exponentially
small in the system size, rendering training prohibitively expensive since such training landscapes necessitate an exponential number of measurements to reliably determine a parameter update direction \cite{larocca_barren_2025}. Over the past several years, the study of barren plateaus has developed into a large and sophisticated literature, with important contributions from concentration of measure, unitary design theory, dynamical Lie algebras, and more recently Jordan-algebraic approaches to loss-landscape statistics \cite{anschuetz_2025, cerezo_variational_2021,ragone_lie_2024, larocca_barren_2025}. This body of work has been valuable, and many of its theorems are mathematically
elegant. Nevertheless, a central contention of this thesis is that much of the existing theory has been developed around asymptotic randomization principles that are poorly matched to
the finite-depth, highly structured variational circuits that are most relevant in practice.

A variational quantum algorithm is specified by a parameterized family of unitaries
\[
U(\boldsymbol{\theta})=e^{-i\theta_p H_{i_p}}\cdots e^{-i\theta_1 H_{i_1}},
\]
together with a set of Hermitian circuit generators $\mathcal{S}=\{H_1, \dots, H_n\}$ (the indices in the VQA circuit can refer to any of the generators in $\mathcal{S}$), an initial state $\ket{\psi}$, an observable $O$, and a classical rule (e.g. gradient descent) for updating the circuit parameters \( \boldsymbol \theta\), which is a real vector of length $p$. The number of terms in the product of parameterized unitaries is called the \emph{depth} or \emph{number of layers}, and typically, the function to be optimized is the loss (or cost) function of the form \cite{ragone_lie_2024}:
\begin{equation} \label{eq:loss-fn}
    \ell_{\boldsymbol{\theta}}(\rho,O)=\Tr\bigl[U(\boldsymbol{\theta})\rho U(\boldsymbol{\theta}^\dagger O \bigr].
\end{equation}
The set of all unitaries that can be obtained from the parameterized circuit via some choice of parameters, or the \emph{reachable set} of the ansatz, is hence a highly structured
finite-dimensional manifold determined by the chosen generators, parameterization, depth, and architecture. Any convincing theory of VQA trainability should, therefore, be sensitive
to that geometry.

In hopes of making analysis tractable, much of the modern barren plateau literature instead proceeds by replacing this structured, finite-depth object with a more symmetric random model \cite{anschuetz_2025, ragone_lie_2024, fontana_adjoint_2024}. In many cases, one studies
ensembles that are Haar-random, or close in a suitable sense to Haar-random, on some
large compact Lie group associated with the ansatz. In other cases, one assumes or proves that the circuit behaves as an \(\varepsilon\)-approximate \(t\)-design, which can be thought of as ensembles whose integrals of $t$-degree polynomials of unitaries are within a tolerance $\varepsilon$ of the corresponding integrals over the Haar measure, and then imports the
corresponding Haar moment identities \cite{ragone_lie_2024, anschuetz_2025}. These ideas have been extremely fruitful, especially for deriving clean formulas for expectation values, variances, and concentration phenomena. And yet, they also build a substantial assumption gap between
our mathematical models and the variational circuits we wish to analyze.

\section{Limitations of Existing Theory}

\subsection{The Assumption Gap: Haar Randomness and \(t\)-Designs}

The first major limitation of existing VQA landscape theory is its dependence on
Haar-randomness and unitary design heuristics. The Haar measure is the natural invariant measure on a compact group, and from a purely representation-theoretic perspective, it is
an extremely convenient object. Likewise, \(t\)-designs provide a powerful finite-moment surrogate for Haar randomness. If a circuit ensemble forms an exact or sufficiently constrained approximate \(t\)-design, then one can often compute or bound the low-order moments of the
loss function by appealing to universal formulas that depend only on the surrounding group
representation.

The problem is that realistic variational ans\"atze are generally neither Haar-random nor close to Haar-random in the regimes that matter most for training \cite{anschuetz_2025, Holmes2022, HafterkampRandomCircuits, Park2024hamiltonian, Zhang2024absence}. In the shallow- to intermediate-depth setting, where one typically hopes to avoid concentration pathologies and
retain useful optimization signal, the circuit manifold explored by the ansatz, coupled with its assigned parameter distribution, is usually far too small and too structured to resemble Haar on any ambient group \cite{Zhang2024absence, Park2024hamiltonian, Holmes2022, HafterkampRandomCircuits}. Ansatz properties such as depth, locality, and generator choice all impose strong geometric restrictions on
the accessible region of state space \cite{Zhang2024absence, Park2024hamiltonian, Holmes2022, larocca_barren_2025}. When these features are suppressed in favor of a Haar surrogate, one risks proving theorems about an asymptotic random model rather than about the actual variational family.

This mismatch is especially severe in Lie-algebraic approaches based on the full dynamical Lie algebra. The dynamical Lie algebra captures the infinitesimal closure of the generators,
and it is an important asymptotic object as it can capture the full expressivity of the circuit \cite{ragone_lie_2024}. However, the Lie group it integrates to may
be exponentially larger than the manifold actually explored by a polynomial-depth ansatz, and consequently, formulas derived from the Haar measure on that full group can dramatically overstate the amount of effective mixing present at realistic depths \cite{HafterkampRandomCircuits, Zhang2024absence, Park2024hamiltonian}. This implies that the Lie algebraic viewpoint is incomplete unless it is coupled to a depth-aware description of the part of the Lie-theoretic structure actually visible to
the circuit.

\subsection{The Insufficiency of \(\varepsilon\)-Approximate \(t\)-Designs}

Even when approximate design arguments are available, they are often too coarse to resolve the questions one most wants to ask about trainability. An \(\varepsilon\)-approximate
\(t\)-design controls moments only up to order \(t\), and it does so relative to a chosen ambient Haar model. This can be enough to transfer certain concentration estimates or moment formulas, but it does not identify which representation-theoretic
sectors of the original sampling measure are still visible at finite depth, which sectors are being suppressed, or how the nonuniform sampling distribution induced by the ansatz differs from Haar on the region it actually explores \cite{ragone_lie_2024}. Additionally, examining Equation 21 from Ragone et al., one quickly sees that in cases where the Schatten $1$-norm of our observable is an increasing exponential in the number of qubits, something that is not uncommon in realistic ans\"atze \cite{EHMQAOAPaper}, then at shallow depths these bounds will be vacuous.

For these reasons, constructs based around approximate designs are often better viewed as asymptotic diagnostics than as faithful finite-depth models \cite{anschuetz_2025}. Moreover, once one leaves the exact-Haar setting, the difference between the true ansatz measure and Haar should not be viewed solely as an error term to be discarded. It may itself contain the most important trainability information because the sampling distortion from Haar can amplify or suppress precisely those sectors of the landscape that govern optimization. Any realistic finite-depth theory must therefore replace the design heuristic with an object that approximates the reachable manifold itself rather than its asymptotic random limit.

\subsection{Problematic Empirics: Cragged Terrains Versus Barren Plateaus}

The most serious challenge to the universality of current barren-plateau theory is empirical.
Across a wide range of practically relevant shallow ans\"atze, one does not consistently see
the kind of exponentially vanishing optimization signal predicted by Haar-based arguments \cite{EHMQAOAPaper}.
Instead, one often encounters highly nontrivial landscapes whose variances remain non-concentrated,
and usually even grow polynomially with system size. These are not the nearly featureless surfaces suggested by a naive extrapolation of deep random-circuit theory. They are rough,
structured, and often anisotropic landscapes, which in ongoing work are referred to as cragged terrains \cite{EHMQAOAPaper}.

This empirical picture indicates
that current theory has often been overextended beyond the regime in which it is mathematically justified. If the practically relevant shallow-depth regime overwhelmingly exhibits structured,
non-Haar behavior, then a theory built around universal convergence to Haar cannot be taken as a default explanation of trainability. At best, it describes a limiting regime that may or may not be reached at useful depths. At worst, it obscures the very mechanisms that make
real variational circuits trainable.

This concern becomes more percipient after accounting for the results developed later in this thesis. Specifically, the correction of the common heuristic that increasing the circuit depth should generically force convergence to Haar, and therefore to a tighter approximate \(t\)-design, is not valid in full generality. The issue is structural: one can have visible sectors that fail to mix at all. In the framework of this thesis, those sectors appear explicitly as nontrivial Fourier components of the induced sampling law. In the cycle-graph QAOA, this leads to a concrete obstruction showing that a key contraction lemma used in the literature fails without additional hypotheses \cite{ragone_lie_2024}. Thus, although we do find conditions that do allow us to conclude Haar convergence, the prevailing narrative that depth
alone drives generic variational ensembles toward Haar must be treated with considerable care.

Because of these impediments to the applicability of current theory, the community needs a finite-depth theory that keeps the Lie-theoretic and representation-theoretic insights of past literature while abandoning the assumption that realistic variational circuits are already well approximated by asymptotic randomness. The central idea of this work is
that such a theory should begin from the geometry of the reachable manifold itself.

\section{Thesis Contributions and Approach}

This thesis proposes Krylov--Lie structures as a depth-aware replacement for purely Haar-centric models of variational quantum dynamics. The guiding idea is to combine the graded, seed-dependent flexibility of Krylov constructions with the noncommutative closure properties of Lie theory. Starting from a finite generator set and one or more seed vectors, we build a finite-dimensional Krylov--Lie subspace by repeated commutator action up to a chosen depth, then compress the original generators to that subspace and take the Lie algebra they generate. The resulting Krylov--Lie algebra and its represented compact Lie group furnish a finite-depth Lie-theoretic surrogate for the reachable manifold of the original ansatz. 

The first advantage of this construction is geometric. Instead of comparing a shallow circuit to the full dynamical Lie group it would generate only asymptotically \cite{ragone_lie_2024}, one approximates the manifold actually
explored at the chosen depth by a compact Lie group of controlled dimension. This makes it possible to ask whether the ansatz admits a dimension-matched or dimension-reduced Lie-theoretic model, whether the induced comparison map has full rank almost everywhere, and whether the pushforward of the parameter law becomes absolutely continuous with respect to Haar measure on the approximating group. When this happens, one obtains a canonical Haar background together with a Radon--Nikodym density recording the sampling distortion of the original circuit relative to the approximating geometry. 

The second advantage is structural. Krylov--Lie subspaces and algebras vary with the seed, and this flexibility creates a rich dimensional theory. One can ask which dimensions are attainable, when maximal dimension is generic, and under what conditions dimension matching with the reachable manifold can be achieved. A substantial portion of this thesis is devoted to answering these questions using determinantal stratifications, algebraic genericity arguments, and universal dimension bounds. These results show that the seed dependence of Krylov-Lie structures is a boon of the formalism: it supplies the tunable finite-dimensional geometry needed for realistic manifold approximation.

The third advantage is quantitative. Beyond constructing Lie-theoretic surrogates, the thesis proves a finite-depth approximation theory for them. After establishing a compact, full-rank approximation theorem for reachable manifolds on regular loci, we show that the canonical comparison map admits a Baker--Campbell--Hausdorff expansion whose seed-level truncation error decays factorially in the commutator depth; the original circuit converges to its canonical Krylov--Lie proxy with a seed-based error of order \(B_C^{k+1}/(k+1)!\) on any compact subset.

The fourth advantage is measure-theoretic. Once a compact represented Krylov--Lie group is obtained, it carries a canonical normalized Haar measure, and any sampling law induced by the original ansatz that is absolutely continuous with respect to that Haar measure may be expressed through a Radon--Nikodym density relative to it. This makes it possible to derive exact weighted expectation and variance formulas for loss functions under general non-Haar sampling. In the Haar case, these formulas reduce to the familiar Lie-algebraic expressions. Away from Haar, all deviation is isolated into explicit correction operators governed by the visible sectors of the density. In this sense, the framework identifies exactly where non-Haar behavior lives, rather than treating it as an undifferentiated error term.

A further contribution is that the quantitative approximation theory is strong enough to transport observable information from the Krylov--Lie proxy back to the original circuit. For seed-dependent observables, the thesis proves that expectation values, variances, and concentration inequalities for the original ansatz are controlled by the corresponding quantities for the Krylov--Lie surrogate up to the same factorially decaying Baker-Campbell-Hausdorff error. This yields a clean bridge between geometric concentration properties on Krylov--Lie groups and  those of finite-depth variational quantum circuits. 

Finally, this thesis uses this framework to revisit the interpretation of barren plateaus at finite depth. The resulting theory clarifies how and when Haar-based variance formulas should still be understood as asymptotic limits, but it also shows that non-Haar effects are not peripheral. They may dominate the shallow-depth regime and can in principle mitigate variance concentration by reweighting the visible sectors of the landscape. Relatedly, this thesis corrects the common heuristic that increasing depth alone should generically force convergence to Haar: this is not true in full generality, because one may have sectors visible to our order of moment that fail to mix. We identify concrete obstructions to naive Haar convergence and then recover convergence under natural necessary and sufficient observable ergodic conditions, formulated through a Kawada--It\^o theorem for convergence of observables on unitary $t$-designs. 

The organization of the thesis is as follows. \Cref{chap:background} recalls the minimal background on
dynamical Lie algebras and existing VQA landscape theory needed for this work. \Cref{chap:krylov-lie-algebras}
introduces Krylov algebras and Krylov-Lie algebras, establishes their basic structural
properties, and develops the seed-dependent dimension theory. \Cref{chap:krylov-lie-approximations} proves the main approximation results, including the Krylov--Lie approximation theorem, the Baker-Campbell-Hausdorff error bounds, and concentration transport for Krylov--Lie proxies. \Cref{chap:applications} applies this machinery to VQA landscape theory by deriving weighted moment and variance formulas, comparing them with existing Haar-based results, and analyzing the failure of naive Haar-convergence heuristics. \Cref{subsec:observable-kawada-ito} finds necessary and sufficient conditions for convergence to the Haar measure of the dynamical Lie group of the $L$-fold convolution of a single-layer measure (in the sense of convergence in all orders of a unitary $t$-design over the dynamical Lie group). \Cref{chap:discussion} concludes
with limitations of the present framework and several directions for future work, including Jordan-Lie extensions, more detailed theory for the dimension of Krylov-Lie algebras, and the role of nonuniform sampling in
trainability.

\chapter{Background}\label{chap:background}

This chapter introduces the algebraic frameworks that will be used as points of reference in our analysis of variational quantum algorithms throughout the thesis. Our focus is on how the operator family generated by a parametrized circuit encodes structural information about the reachable states, the effective observables, and the resulting optimization landscape. We review the aspects of this framework that are most relevant to expressivity, trainability, and barren plateau theory, with an emphasis on the quantities and constructions that will be used later in this thesis; we do not aim for this to be an entirely comprehensive introduction to barren plateau theory and empirics. For such an exposition, see Larocca et al. \cite{larocca_barren_2025}.

\singlespacing
\section{Generators, Reachable Manifolds, and the Dynamical Lie Algebra}

\onehalfspacing
A variational quantum algorithm is specified by a parameterized circuit
\[
U(\boldsymbol{\theta})=e^{-i\theta_p H_{i_p}}\cdots e^{-i\theta_1 H_{i_1}},
\]
with parameter counter $p$ and indices $i_\ell$ corresponding to Hermitian generators $H_1,\dots,H_n$ acting on a finite-dimensional Hilbert space $\mathcal H \cong (\mathbb C^2)^{\otimes N}$ (where $N$ is the number of qubits), together with a density matrix $\rho$ (correlating with an input state) and a Hermitian observable $O$. As mentioned in \Cref{eq:loss-fn}, the loss function is
\[
\ell_{\boldsymbol{\theta}}(\rho,O):=\Tr\bigl[U(\boldsymbol{\theta})\rho U(\boldsymbol{\theta})^\dagger O\bigr].
\]
For example, we have the Quantum Approximate Optimization Algorithm (QAOA), fittingly introduced for approximate combinatorial optimization \cite{brady_iterative_2023, EHMQAOAPaper, allcock2026dynamicalliealgebrasquantum}. Given a cost Hamiltonian \(H_C\) encoding the objective function and a mixer Hamiltonian \(H_M\) driving noncommutativity between layers, the \(L\)-layer QAOA ansatz is
\[
|\boldsymbol{\gamma},\boldsymbol{\beta}\rangle
=
e^{-i\beta_L H_M}e^{-i\gamma_L H_C}\cdots
e^{-i\beta_1 H_M}e^{-i\gamma_1 H_C}|\psi_0\rangle,
\]
where \( |\psi_0\rangle \) is typically an easily prepared reference state. Note that while one can in some sense consider QAOA to be given by a depth-$2L$ quantum circuit, it is most common to group each mixer Hamiltonian gate with a cost Hamiltonian gate and consider those groupings as one layer each. The parameters \( (\boldsymbol{\gamma},\boldsymbol{\beta}) \) are chosen to optimize the expectation value of \(H_C\) so that measurement of the final state yields high-quality approximate solutions. Because of its simple alternating structure and direct connection between generators and circuit geometry, QAOA serves as a useful model for studying expressivity and trainability in variational quantum algorithms \cite{brady_iterative_2023, EHMQAOAPaper, allcock2026dynamicalliealgebrasquantum}.

In VQA landscape theory, borrowing from quantum control \cite{OMeara2014LieTheoretic}, one is interested in the geometry of the reachable family (or reachable manifold)
\begin{equation}\label{eq:reachable-manifold}
M:=\{U(\boldsymbol{\theta}) \, \vert \,\boldsymbol{\theta}\in\boldsymbol{\Theta}\},
\end{equation}
where $\boldsymbol{\Theta}$ is the set of all possible $\boldsymbol{\theta}$ values with periodic structure quotiented out. $M$ is also taken to have a probability distribution $\nu$ over its parameters. The goal of studying this reachable manifold is that it may permit us to classify barren plateaus. We now give a precise definition of this landscape phenomenon.

\begin{definition}[Barren plateau]\label{def:barren-plateau}
A \emph{barren plateau} occurs when there exists a $b > 1$ such that
\begin{equation}\label{eq:barren-plateau}
\Var_{\nu}(\ell_\theta(\rho, O)) \in \mathcal O (1/b^N),
\end{equation}
where $\Var_{\nu}$ indicates a second-moment integral over the appropriate set of parameters and their distribution $\nu$ \cite{ragone_lie_2024, larocca_barren_2025, fontana_adjoint_2024}. 
\end{definition}
In much of the barren plateau literature, the finite-depth reachable set $M$ is studied through a larger Lie theoretic object attached only to the circuit generators.

\begin{definition}[Dynamical Lie algebra]\label{def:DLA-and-DLG}
Let $\mathcal S=\{H_1,\dots,H_n\}$ be the Hermitian generators of the variational circuit.
The \emph{dynamical Lie algebra} (DLA) of the ansatz is the real Lie algebra
\[
\mathfrak g:=\langle i\mathcal S\rangle_{\mathrm{Lie}}
\subset \mathfrak u(2^N),
\]
namely the smallest real linear subspace of $\mathfrak u(2^N)$ containing
$\{iH_1,\dots,iH_n\}$ and closed under the commutator bracket, where $N$ is the number of qubits \cite{ragone_lie_2024, fontana_adjoint_2024}. The compact, connected Lie group integrating the DLA for a given representation is termed the \emph{dynamical Lie group}, or DLG.
\end{definition}

The importance of the DLA is that it captures the asymptotic controllability or expressivity
of the circuit: every circuit unitary $U(\theta)$ lies in the connected Lie group generated by
$\mathfrak g$ and its representation, and sufficiently rich compositions of the circuit generators can explore that group. Since $\mathfrak g$ is a Lie subalgebra of the compact Lie algebra $\mathfrak u(2^N)$,
it is reductive \cite{ragone_lie_2024}, and hence admits a decomposition
\[
\mathfrak g= \mathfrak z \oplus \mathfrak g_1\oplus\cdots\oplus \mathfrak g_{k-1},
\]
where $\mathfrak g_1,\dots,\mathfrak g_{k-1}$ are simple ideals and $\mathfrak z$ is abelian;
equivalently, $\mathfrak z$ is the center and
$[\mathfrak g,\mathfrak g]=\mathfrak g_1\oplus\cdots\oplus \mathfrak g_{k-1}$ is semisimple.

If $G$ is any connected compact Lie-group realization integrating $\mathfrak g$, then the
adjoint representation preserves this decomposition \cite[§3.4 and §6.4]{kirillov_lie_groups}, and one may project operators in the set of all endomorphisms on $\mathcal H$, namely
$\End(\mathcal H)$, onto the corresponding ideals. This is the point of entry for the Lie
algebraic variance formulas of Ragone et al.: the loss landscape is decomposed into
independent algebraic sectors, and its Haar moments are computed sector-by-sector with Schur's lemma and the Weingarten calculus \cite{ragone_lie_2024}.

One might worry that the DLG construction omits global topological information like discrete lattice structure on the circuit by using the compact, connected Lie group given by the DLA representation. Fortunately, for the Haar averages relevant to the moments of the cost function, this loss is inessential.

\begin{proposition}[Pushforward of the Haar measure]
\label{HaarPushforward}
Let \(G'\) and \(G\) be compact Hausdorff groups, and let $q:G'\to G$ be a continuous surjective homomorphism. Moreover, have \(\mu_{G'}\) and \(\mu_G\) denote the
normalized Haar probability measures on \(G'\) and \(G\), respectively. Then
\[
q_*\mu_{G'}=\mu_G.
\]
Equivalently, for every bounded Borel function \(f:G\to \mathbb C\),
\[
\int_G f(g)\,d\mu_G(g)
=
\int_{G'} f(q(g'))\,d\mu_{G'}(g').
\]
\end{proposition}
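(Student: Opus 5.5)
The plan is to use the uniqueness of normalized Haar measure on the compact group \(G\): a Radon probability measure on \(G\) that is invariant under left translations must equal \(\mu_G\). We set \(\nu := q_*\mu_{G'}\), defined by \(\nu(B) = \mu_{G'}(q^{-1}(B))\) for Borel \(B\subseteq G\). The proof then amounts to showing three things: \(\nu\) is a probability measure, \(\nu\) is Radon, and \(\nu\) is left-invariant.

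The first two are routine. Since \(q\) is continuous, \(q^{-1}(B)\) is Borel, so \(\nu\) is a well-defined Borel measure. Its total mass is \(\nu(G)=\mu_{G'}(G')=1\). For regularity, note that \(G\) is compact, so every finite Borel measure on it is at least outer regular on Borel sets. When \(G\) is second countable, which covers every Lie group in this thesis, regularity is automatic. In general I would instead define \(\nu\) through the Riesz representation theorem, using the positive normalized functional \(f\mapsto \int_{G'} f\circ q\,d\mu_{G'}\) on \(C(G)\). This gives a Radon measure that agrees with the pushforward on functions of the form \(f\circ q\).

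The key step is left-invariance. Fix \(g\in G\). By surjectivity there is some \(g'\in G'\) with \(q(g')=g\). Because \(q\) is a homomorphism, for every Borel \(B\) we have
\[
q^{-1}(gB)=\{x\in G' \mid q(x)\in gB\} = \{x \mid q(g'^{-1}x)\in B\} = g'\,q^{-1}(B).
\]
Left-invariance of \(\mu_{G'}\) then gives
\[
\nu(gB)=\mu_{G'}(g'q^{-1}(B))=\mu_{G'}(q^{-1}(B))=\nu(B).
\]
Equivalently, at the level of functions,
\[
\int f(gx)\,d\nu = \int f(q(g'y))\,d\mu_{G'}(y) = \int f\circ q\,d\mu_{G'}.
\]
Uniqueness of Haar measure now yields \(\nu=\mu_G\). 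This step is where surjectivity is essential: without it, \(\nu\) is only invariant under the subgroup \(q(G')\).

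The function form of the statement follows by the standard extension argument. The identity holds for indicators by the definition of \(\nu\). It extends to simple functions by linearity. It extends to bounded Borel functions by uniform approximation, using bounded convergence since both measures are finite. The only step I expect to need care is the Radon regularity of \(\nu\) when \(G\) is not second countable, and the Riesz route above handles it. Everything else follows directly from the definitions.
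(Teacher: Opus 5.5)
Your proposal is correct and is essentially the paper's proof: the same pushforward $\nu(B)=\mu_{G'}(q^{-1}(B))$, the same identity $q^{-1}(gB)=g'\,q^{-1}(B)$ obtained from surjectivity and the homomorphism property, and the same appeal to uniqueness of normalized Haar measure.

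One correction to your regularity aside, which the paper skips entirely. In the non-metrizable case, a finite Borel measure on a compact space need not be outer regular; Dieudonn\'e's measure on $[0,\omega_1]$ is the standard counterexample. Your Riesz measure is also a priori only known to agree with $q_*\mu_{G'}$ on $C(G)$. So your later step ``holds for indicators by the definition of $\nu$'' quietly switches back to the pushforward definition. The clean fix is that $q_*\mu_{G'}$ is automatically Radon. Any compact $K\subseteq q^{-1}(B)$ gives a compact $q(K)\subseteq B$ with $\nu(q(K))\ge\mu_{G'}(K)$, so inner regularity of $\mu_{G'}$ transfers to $\nu$. Finiteness then yields outer regularity by passing to complements.
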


\begin{proof}
Define a Borel probability measure \(\nu\) on \(G\) by
\[
\nu(E):=\mu_{G'}\bigl(q^{-1}(E)\bigr)
\]
for every Borel set \(E\subset G\). Since \(q\) is surjective,
\[
\nu(G)=\mu_{G'}\bigl(q^{-1}(G)\bigr)=\mu_{G'}(G')=1.
\]

We claim that \(\nu\) is left-translation invariant. Let \(g\in G\), and choose
\(g_0'\in G'\) such that \(q(g_0')=g\), which is possible because \(q\) is surjective.
Then for every Borel set \(E\subset G\),
\[
q^{-1}(gE)
=
q^{-1}\bigl(q(g_0')E\bigr)
=
g_0'\,q^{-1}(E).
\]
Indeed, if \(x\in q^{-1}(gE)\), then \(q(x)=q(g_0')e\) for some \(e\in E\), so
\(q((g_0')^{-1}x)=e\in E\), hence \(x\in g_0'q^{-1}(E)\). The reverse inclusion is immediate.

Therefore, by left invariance of Haar measure on \(G'\),
\[
\nu(gE)
=
\mu_{G'}\bigl(q^{-1}(gE)\bigr)
=
\mu_{G'}\bigl(g'_0q^{-1}(E)\bigr)
=
\mu_{G'}\bigl(q^{-1}(E)\bigr)
=
\nu(E).
\]
Thus \(\nu\) is a left-invariant probability measure on the compact group \(G\). By uniqueness of the normalized Haar measure on compact groups, \(\nu=\mu_G\). Hence
\(q_*\mu_{G'}=\mu_G\).

The integral identity follows from the defining property of pushforward measure:
for every bounded Borel \(f:G\to\mathbb C\),
\[
\int_G f(g)\,d(q_*\mu_{G'})(g)
=
\int_{G'} f(q(g'))\,d\mu_{G'}(g').
\]
Since \(q_*\mu_{G'}=\mu_G\), this becomes
\[
\int_G f(g)\,d\mu_G(g)
=
\int_{G'} f(q(g'))\,d\mu_{G'}(g').
\]
\end{proof}

\begin{corollary}[Representation-invariant Haar averages of trace-class observables]
\label{TraceHaarTopologyInvariance}
Let \(\mathfrak g\) be a finite-dimensional real Lie algebra, let $\rho:\mathfrak g\to \mathfrak u(\mathcal H)$ be a finite-dimensional unitary Lie-algebra representation, and let $\Pi:\widetilde G\to U(\mathcal H)$ be the unique integrated representation on the simply connected Lie group
\(\widetilde G\) with Lie algebra \(\mathfrak g\). Next, have \(G_1=\widetilde G/\Gamma_1\) and \(G_2=\widetilde G/\Gamma_2\) be connected compact Lie-group realizations such that
\[
\Gamma_1\subset \Gamma_2\subset \ker \Pi.
\]
Further, set
\[
q:G_1\to G_2
\]
as the canonical surjective homomorphism between connected Lie groups with Lie algebra $\mathfrak g$ such that the discrete central kernel of one is a subset of the discrete central kernel of the other (see Kirillov \cite[Cor.~3.49]{kirillov_lie_groups}), and let
\[
\Pi_i:G_i\to U(\mathcal H)
\]
be the descended unitary representations, so that
\[
\Pi_2\circ q=\Pi_1.
\]

Then for every bounded Borel function
\[
\Phi:\mathbb C\to \mathbb C,
\]
every trace-class operator \(T\in \mathcal T_1(\mathcal H)\), and every bounded operator
\(A\in \mathcal B(\mathcal H)\), the functions
\[
F_i(g):=\Phi\bigl(\operatorname{Tr}\bigl[T\,\Pi_i(g)^\dagger A\,\Pi_i(g)\bigr]\bigr),
\]
where $g\in G_i$, satisfy
\[
F_1=F_2\circ q
\]
and we see that
\[
\int_{G_1} F_1(g)\,d\mu_{G_1}(g)
=
\int_{G_2} F_2(g)\,d\mu_{G_2}(g).
\]
\end{corollary}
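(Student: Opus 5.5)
The plan is to prove the pointwise identity \(F_1=F_2\circ q\) directly from the intertwining relation \(\Pi_2\circ q=\Pi_1\), and then deduce the integral identity from \Cref{HaarPushforward} applied to \(q\). First I will check that the hypotheses of \Cref{HaarPushforward} hold. Both \(G_1\) and \(G_2\) are compact Lie groups by assumption, hence compact Hausdorff. The map \(q\) is the canonical quotient \(\widetilde G/\Gamma_1\to \widetilde G/\Gamma_2\), given by \(g\Gamma_1\mapsto g\Gamma_2\). It is well defined because \(\Gamma_1\subset\Gamma_2\), and it is a homomorphism because the \(\Gamma_i\) are central (discrete normal subgroups of a connected group), which is what the cited Kirillov corollary provides. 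It is continuous since it is induced by the identity on \(\widetilde G\) via the universal property of quotient topologies. It is surjective since both groups are images of \(\widetilde G\).

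For the pointwise identity, fix \(g\in G_1\). Since \(\Pi_2(q(g))=\Pi_1(g)\), taking adjoints gives \(\Pi_2(q(g))^\dagger=\Pi_1(g)^\dagger\). Hence
\[
\Tr\bigl[T\,\Pi_2(q(g))^\dagger A\,\Pi_2(q(g))\bigr]=\Tr\bigl[T\,\Pi_1(g)^\dagger A\,\Pi_1(g)\bigr].
\]
Both traces are well defined: \(\Pi_i(g)^\dagger A\,\Pi_i(g)\) is bounded and \(T\) is trace class, so the product is trace class. Applying \(\Phi\) to both sides gives \(F_1=F_2\circ q\).

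For the integral identity, I will use \Cref{HaarPushforward} with \(f=F_2\). This requires \(F_2\) to be bounded and Borel. Boundedness follows from boundedness of \(\Phi\). For measurability, note that \(g\mapsto \Pi_2(g)\) is continuous into \(U(\mathcal H)\) with the norm topology, because \(\mathcal H\) is finite-dimensional and Lie group representations are continuous. Then \(g\mapsto \Tr[T\,\Pi_2(g)^\dagger A\,\Pi_2(g)]\) is continuous, using \(|\Tr[TX]|\le\|T\|_1\|X\|\). Composing a continuous map with a Borel function \(\Phi\) gives a Borel function. \Cref{HaarPushforward} then yields
\[
\int_{G_2}F_2\,d\mu_{G_2}=\int_{G_1}F_2\circ q\,d\mu_{G_1}=\int_{G_1}F_1\,d\mu_{G_1}.
\]

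The only step needing care is confirming that \(q\) really is a continuous surjective homomorphism, in particular that the \(\Gamma_i\) are normal so the quotients are groups. I will handle this by noting that a discrete normal subgroup of a connected topological group is central, and that the kernels of coverings from \(\widetilde G\) are of this type. The measurability check is routine.
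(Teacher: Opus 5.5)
Your proposal is correct and follows the paper's proof: the pointwise identity $F_1=F_2\circ q$ comes from $\Pi_2\circ q=\Pi_1$, and the integral identity then follows by applying \Cref{HaarPushforward} to $q$. Your extra checks, that $q$ is a well-defined continuous surjective homomorphism and that $F_2$ is bounded and Borel, make explicit hypotheses that the paper uses without comment.
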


\begin{proof}
Because \(\Pi_2\circ q=\Pi_1\), for every \(g\in G_1\) we have
\[
\operatorname{Tr}\bigl [T\,\Pi_1(g)^\dagger A\,\Pi_1(g) \bigr ]
=
\operatorname{Tr} \bigl [ T\,\Pi_2(q(g))^\dagger A\,\Pi_2(q(g)) \bigr ].
\]
Applying \(\Phi\) yields \(F_1=F_2\circ q\).

Since \(q:G_1\to G_2\) is a continuous surjective homomorphism of compact groups,
Theorem~\ref{HaarPushforward} gives \(q_*\mu_{G_1}=\mu_{G_2}\). Therefore,
\[
\int_{G_1} F_1(g)\,d\mu_{G_1}(g)
=
\int_{G_1} F_2(q(g))\,d\mu_{G_1}(g)
=
\int_{G_2} F_2(h)\,d\mu_{G_2}(h).
\]
\end{proof}
From the definition of trace-class observables, this corollary shows that, for our purposes, one may work with the unique closed, compact, connected matrix Lie group determined by the given Lie algebra representation without sacrificing the accuracy of the model. Furthermore, as an interesting product of this corollary, we have that $t$-design and barren plateau structure are invariants of the Lie algebra alone and not of any group-specific topology.

\section{The Dynamical Lie Algebraic Theory of Barren Plateaus}

The most systematic DLA-based theory of barren plateaus currently available is due to
Ragone et al., with Fontana et al. describing the canonical adjoint formalism for the DLA theory. Ragone et al. begins with the observation that once the circuit is sufficiently deep, one may replace averaging over parameters by averaging over the compact
Lie group generated by the DLA, and then compute the first two moments of the loss by
representation-theoretic means. In this way they obtain an exact Haar variance formula
which unifies several previously separate mechanisms for barren plateaus.

To tidily state their framework, let
\[
G=\exp(\mathfrak g)
\]
denote the compact, connected Lie-group realization of the DLA given by the relevant representation. Decompose
\[
\mathfrak g= \mathfrak z \oplus \mathfrak g_1\oplus\cdots\oplus \mathfrak g_{k-1}
\]
into simple ideals together with the center. For each ideal $\mathfrak g_j$, let
$P_{\mathfrak g_j}$ denote the Hilbert--Schmidt orthogonal projection from $\End(\mathcal H)$
onto the complexification of $\mathfrak g_j$, and define the generalized purity
of a Hermitian operator $A$ by \cite{ragone_lie_2024}
\[
\mathcal P_{\mathfrak g_j}(A):=\bigl\|P_{\mathfrak g_j}(A)\bigr\|_{HS}^2.
\]
Up to notation, this is the Lie-algebraic quantity that measures how much of the state or
observable is visible in the $j$th simple sector.

The crucial assumption in Ragone et al. is that the circuit is \emph{sufficiently deep}
that its distribution forms a $2$-design on each simple or abelian component. Equivalently, the first two moments of the circuit ensemble are
assumed to agree with those of Haar measure on the corresponding components. Under this
assumption, Haar integration on $G$ becomes available, and one may compute the loss
moments exactly.

\begin{definition}[$t$-design and approximate $t$-design]\label{def:t-design}
Let $\nu$ be a distribution of unitaries associated with ensemble $\mathcal{E}$ in a compact group $G$ acting on
$\mathcal H$. Its $t$-th moment operator is the superoperator
\[
\mathcal M_{\nu}^{(t)}(X)
:=
\int_{\mathcal E} U^{\otimes t}X(U^\dagger)^{\otimes t}\,d\nu(U).
\]
If $\mathcal M_{\mathcal E}^{(t)}=\mathcal M_G^{(t)}$, where $\mathcal M_G^{(t)}$ denotes the Haar moment operator on
$G$, then $\mathcal E$ is an exact $G$ $t$-design. If instead
\[
\|\mathcal  M_{\mathcal E}^{(t)}-\mathcal M_G^{(t)}\|\le \varepsilon
\]
in a chosen operator norm, then $\mathcal E$ (or more accurately, $\nu$) is called an $\varepsilon$-approximate
$G$ $t$-design \cite{ragone_lie_2024, anschuetz_2025}.
\end{definition}

In Ragone et al., the main exact variance theorem applies when either the observable
or the input state lies in the represented Lie algebra: if $O\in i\mathfrak g$
or $\rho\in i\mathfrak g$, then the loss expectation and variance decompose neatly according
to the reductive splitting of $\mathfrak g$.

\begin{theorem}[Ragone et al., schematic form]
Assume that the parameterized circuit is deep enough to form a $2$-design on each
component $G_j=\exp(\mathfrak g_j)$ of the compact Lie group generated by the DLA, and
assume further that either $O\in i\mathfrak g$ or $\rho\in i\mathfrak g$. Then:
\begin{enumerate}
    \item the mean of the loss receives contributions only from the center $\mathfrak z$;
    \item the variance receives contributions only from the simple ideals; and
    \item one has the exact formula
    \[
    \mathrm{Var}_\theta[\ell_\theta(\rho,O)]
    =
    \sum_{j=1}^{k-1}
    \frac{\mathcal P_{\mathfrak g_j}(\rho)\,\mathcal P_{\mathfrak g_j}(O)}{\dim(\mathfrak g_j)}.
    \]
\end{enumerate}
\end{theorem}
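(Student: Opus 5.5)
By the $2$-design hypothesis, we may replace the parameter averages defining the first two moments of $\ell_\theta$ by Haar averages over $G$. Corollary~\ref{TraceHaarTopologyInvariance} lets us pick any convenient compact connected realization. The natural choice is $G\cong Z\times G_1\times\cdots\times G_{k-1}$ modulo a finite central kernel, with $Z=\exp(\mathfrak z)$ a torus, and the Haar measure on $G$ is the product of the factor Haar measures. By symmetry of the trace we may assume $O\in i\mathfrak g$; the case $\rho\in i\mathfrak g$ follows by the substitution $U\mapsto U^\dagger$, which preserves Haar measure. Decompose
\[
O=O_{\mathfrak z}+\sum_{j=1}^{k-1}O_j,\qquad O_j:=P_{\mathfrak g_j}(O),
\]
and note that each summand lies in $i\mathfrak g$. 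The center acts trivially under conjugation by $G$, so $O_{\mathfrak z}$ is $\Ad$-invariant.

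\emph{Mean.} We have
\[
\mathbb E[\ell]=\Tr\Bigl[\rho\int_G U^\dagger O U\,d\mu_G\Bigr].
\]
The twirl $\int_G \Ad_{U^\dagger}(X)\,d\mu_G$ is the orthogonal projection onto $\Ad(G)$-fixed vectors. On $i\mathfrak g_j$ the adjoint action of the simple factor $G_j$ is irreducible and nontrivial, so the only fixed vector there is $0$. On $i\mathfrak z$ the action is trivial. Hence the twirl of $O$ is $O_{\mathfrak z}$, and
\[
\mathbb E[\ell]=\Tr[\rho\,O_{\mathfrak z}],
\]
which proves item (1). It also follows that $\ell-\mathbb E[\ell]=\sum_j \Tr[\rho\,U^\dagger O_j U]$.

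\emph{Second moment.} The variance is
\[
\Var[\ell]=\sum_{j,j'}\int_G \Tr[\rho\, U^\dagger O_jU]\,\Tr[\rho\, U^\dagger O_{j'}U]\,d\mu_G.
\]
For $j\ne j'$, factor the Haar integral over $G_j\times G_{j'}$. The function $\Tr[\rho\,U^\dagger O_jU]$ depends only on the $G_j$ component, because the other factors commute with $O_j\in i\mathfrak g_j$ (ideals commute). The integrals therefore separate, and each factor is zero by the mean computation. So the cross terms vanish, which proves item (2).

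For a diagonal term, write $\Tr[\rho\,U^\dagger O_jU]=\langle P_{\mathfrak g_j}(\rho),\Ad_{U^\dagger}O_j\rangle_{HS}$. This uses that $\Ad_{U^\dagger}O_j\in i\mathfrak g_j$, so only the projection of $\rho$ onto that subspace contributes. Now apply Schur orthogonality to the real irreducible representation $\Ad$ of $G_j$ on $i\mathfrak g_j$. Since $\mathfrak g_j$ is a compact simple algebra, this representation is irreducible of real type over the complexification. The result is
\[
\int_{G_j}\langle A,\Ad_gB\rangle\langle A,\Ad_gB\rangle\,dg=\frac{\|A\|^2\|B\|^2}{\dim\mathfrak g_j}.
\]
With $A=P_{\mathfrak g_j}(\rho)$ and $B=O_j$, we get $\mathcal P_{\mathfrak g_j}(\rho)\mathcal P_{\mathfrak g_j}(O)/\dim\mathfrak g_j$. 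Summing over $j$ gives item (3).

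\emph{Main obstacle.} The delicate step is justifying the Schur computation. One must check that the adjoint representation on $i\mathfrak g_j$, viewed as a real representation, is absolutely irreducible, i.e.\ its commutant is $\mathbb R$. This holds because the Killing form is the unique invariant inner product up to scale on a compact simple algebra. One must also check that the Hilbert--Schmidt projection $P_{\mathfrak g_j}$ onto the complexification commutes with $\Ad$. Both follow from $\Ad$-invariance of the Hilbert--Schmidt inner product together with simplicity of $\mathfrak g_j$.
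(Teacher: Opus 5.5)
The paper gives no proof of this statement; it quotes it from Ragone et al. The nearest in-paper argument is the $q\equiv 1$ case of \Cref{thm:weighted_expectation_decomposition} and \Cref{thm:weighted_ragone_compatible_variance}. That argument treats the mean exactly as you do: the center is $\Ad$-fixed, and the Haar first moment vanishes on each simple ideal. It then simply imports the Haar second moment from Ragone et al., including the vanishing of the $j\neq k$ terms. Your proof supplies that imported step. Haar measure on $Z\times G_1\times\cdots\times G_{k-1}$ is a product measure, which kills the cross terms, and Schur orthogonality on $i\mathfrak g_j$ gives the diagonal ones. To pass to $G$, push forward along the multiplication map; \Cref{HaarPushforward} is the cleaner citation here. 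The argument is correct and more self-contained than anything in the paper.

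The step that actually carries weight is your first sentence, not the one you flag. Cross-term cancellation needs the components to be independent at second order. So the hypothesis must be read jointly: the first two moments of the ensemble agree with Haar on $G$. This is also how the paper's later remark on the DLA assumptions phrases it. It cannot be read as ``each marginal is a $2$-design on $G_j$.'' Under the marginal reading, (2) and (3) fail. Here is a counterexample:
\begin{itemize}
\item Work on $\mathbb C^2\otimes\mathbb C^2$ with $\mathfrak g=\mathfrak{su}(2)\otimes I\oplus I\otimes\mathfrak{su}(2)$.
\item Take $U=V\otimes V$ with $V$ Haar on $\mathrm{SU}(2)$, e.g.\ by tying the angles of paired gates. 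Both marginals are then exactly Haar.
\item With $O=Z\otimes I+I\otimes Z$ and $\rho=|00\rangle\langle 00|$, one gets $\ell=2\langle 0|V^\dagger ZV|0\rangle$, so $\Var[\ell]=4/3$.
\item The formula instead gives $2\cdot\frac{(1/4)\cdot 4}{3}=2/3$.
\end{itemize}
A single angle driving several ideals at once, as in QAOA, produces exactly such correlations. So state the joint reading explicitly.

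On the step you do flag: uniqueness of the invariant inner product up to scale does not imply that the commutant is $\mathbb R$. Rotations of $\mathbb R^2$ have a unique invariant inner product but commutant $\mathbb C$. Absolute irreducibility of the adjoint representation instead comes from simplicity of $\mathfrak g_j\otimes_{\mathbb R}\mathbb C$.

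You do not need it, though. Your diagonal integral equals $\langle A,TA\rangle$, where $T=\int_{G_j}\Ad_g E_B\Ad_g^{-1}\,dg$ and $E_B(X)=\langle B,X\rangle B$. $T$ is a self-adjoint intertwiner. For any real irreducible orthogonal representation, such an intertwiner is scalar, by exactly the uniqueness you cite. Taking traces gives $T=\|B\|^2/\dim\mathfrak g_j$.

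Finally, say explicitly that $P_{\mathfrak g_j}(\rho)\in i\mathfrak g_j$. This holds because the complexification is closed under $\dagger$, so $A$ really lies in the real representation space.
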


This is an incredibly strong result. First, it replaces bounds by an exact
Haar-sector variance formula. Second, it shows that the variance splits over the simple
components of the DLA, so that trainability may be analyzed one ideal at a time. Third,
it packages several familiar sources of barren plateaus into a single algebraic statement:
large expressivity corresponds to large $\dim(\mathfrak g_j)$, highly generalized-entangled
states correspond to small $\mathcal P_{\mathfrak g_j}(\rho)$, and highly generalized-nonlocal
observables correspond to small $\mathcal P_{\mathfrak g_j}(O)$.

This yields the central interpretation promoted in that paper, being that barren plateaus
may arise absent noise from three algebraic sources: excessive circuit expressivity,
insufficient purity of the initial state relative to the DLA, or insufficient purity of the
observable relative to the DLA. In that sense, the DLA furnishes a common language for
expressiveness-, state-, and observable-induced concentration.

Ragone et al. also discuss extensions to SPAM noise (state preparation and measurement noise) and coherent noise (noise that causes deviations from the intended state without the loss of information). In their
formalism, noise can suppress the same generalized purities or enlarge the effective DLA, and
thereby induce additional concentration. Thus the DLA framework is a rather broad algebraic account of several known barren
plateau mechanisms.

Several assumptions of the DLA theory of barren plateaus are essential and should be stated explicitly, since
they are precisely the assumptions whose finite-depth adequacy is later questioned in this
thesis.

\begin{remark}[Assumptions in the DLA theory]
The accomplishments of Ragone et al. depend on several structural facts:
\begin{enumerate}
    \item one passes from the actual finite-depth circuit ensemble to Haar averaging on the
    compact Lie group generated by the DLA, justified by a deep-circuit or $2$-design
    hypothesis;
    \item the main exact variance formula assumes that either the observable $O$ or the
    input state $\rho$ lies in the represented Lie algebra;
    \item the practical approximation statements are expressed through exact or approximate
    $t$-design control, and therefore inherit the usual dependence on convergence-to-Haar
    heuristics; and
    \item the DLA itself is an asymptotic expressivity object, so its associated compact Lie
    group may be much larger than the manifold actually explored by a shallow circuit.
\end{enumerate}
These are not flaws in the theory; on the contrary, they are exactly what makes the theory
clean and powerful. But they do delimit its natural regime of validity.
\end{remark}

For the purposes of the present thesis, Ragone et al. provides the correct Haar-limit benchmark: once one has passed to a sufficiently mixed
ensemble on the compact group generated by the DLA, the variance is governed by the
semisimple ideal decomposition above. Much of what follows in later chapters may be read as an
attempt to retain this Lie-theoretic clarity while replacing the full DLA group by a more
depth-aware geometric object.

It is worth recognizing that Ragone et al. also give a quantitative design-depth theorem
in terms of the $t$-moment expressivity superoperator
\[
\mathcal A_{\mathcal E_L}^{(t)}:=\mathcal M_{\mathcal E_L}^{(t)}-\mathcal M_G^{(t)},
\]
where $\mathcal E_L$ denotes the depth-$L$ ensemble. Their claim is that once $\|\mathcal A_{\mathcal E_L}^{(t)}\|$
is sufficiently small (less than one, so that the application of subsequent layers is a contractive map) and $L$ is modestly large, the circuit meaningfully forms an approximate $t$-design and the Haar variance formula becomes accurate \cite{ragone_lie_2024}. This design-based passage from finite depth to Haar is important background for the present thesis, since \Cref{chap:applications} returns
to precisely this issue and when such convergence is or is not defensible.

\section{Brief Aside on Jordan-Algebraic Wishart Systems}

A more recent and more ambitious generalization of the DLA picture is due to Anschuetz,
who proposes the Jordan-algebraic Wishart systems framework, or \emph{JAWS}, for VQA loss
landscapes. The central observation is that for a variational family $U_\theta$ and observable
$O$, the operators
\[
\{U(\boldsymbol{\theta})^\dagger O U(\boldsymbol{\theta})\}_{\boldsymbol{\theta}}
\]
generate under the anticommutator product a Jordan subalgebra
\[
\mathcal A \subset H_N(\mathbb C).
\]
Using the structure theory of finite-dimensional Jordan algebras, one decomposes
\[
\mathcal A \cong \bigoplus_\alpha \mathcal A_\alpha,
\]
where each simple component $\mathcal A_\alpha$ is isomorphic to
$H_{N_\alpha}(F_\alpha)$ for $F_\alpha\in\{\mathbb R,\mathbb C,\mathbb H\}$.

This framework has real achievements. Most notably, it extends Lie-algebraic barren plateau
formulas beyond the setting in which $O$ or $\rho$ must belong to the DLA---a circuit for which this is the case is called a Lie algebra-supported ansatz (LASA). Anschuetz derives variance formulas whose leading asymptotic term takes the form
\[
\mathrm{Var}_{\boldsymbol{\theta}}[\ell(\rho;\boldsymbol{\theta})]
=
\sum_\alpha
\frac{\Tr \! \big[(O_\alpha)^2\big]\,\Tr \! \big[(\rho_\alpha)^2\big]}
{\dim_{\mathbb R}\mathrm{Aut}(\mathcal A_\alpha)}
+ O(\pi),
\]
and explicitly notes that this generalizes the corresponding LASA formulas of
Ragone et al. Thus the JAWS framework genuinely enlarges the class of observables and states for which one can write asymptotic variance formulas.

Even so, for the purposes of the present thesis, JAWS is best regarded as a refinement and
extension of the same broad program and not a wholly separate foundation for VQA landscape theory. The group that enters its asymptotic formulas is the identity-connected automorphism group of the
Jordan algebra $\mathcal A$, whose connected components are products of the familiar
classical groups $SO(N_\alpha)$, $SU(N_\alpha)$, and $Sp(N_\alpha)$ \cite{anschuetz_2025}. In other words, the role played in Ragone et al. by the compact group generated by the DLA is replaced
by the compact symmetry group naturally attached to the Jordan algebra generated by the
loss observable under the variational orbit.

From this point of view, JAWS preserves much of the architecture of the DLA theory.
One still organizes the variance by a reductive or semisimple decomposition into algebraic
sectors, one still computes asymptotic landscape statistics by passing to a compact symmetry
group and its invariant measure, and one still relies on approximately uniform random
initialization assumptions to access Wishart or Haar-type asymptotic regimes. The main
conceptual gains are that the Jordan algebra sees a larger class of observables than the Lie
algebra alone (therefore being strictly more expressive as a bookkeeping device for the
loss landscape when compared to LASAs), and that it sanctions direct analysis of the density of poor minima \cite{anschuetz_2025}.

For this thesis, however, the additional generality of JAWS is not central. The principal aim
here is to refine the finite-depth geometry of Lie-algebraic barren plateau theory, not to
develop a competing asymptotic framework. Since JAWS largely extends rather than replaces
the DLA and Haar paradigm, and since most of the later arguments in this thesis are framed in
terms of compact Lie-group geometry, Haar measure, and adjoint-sector visibility, we shall
not pursue the Jordan-algebraic formalism further. It will reappear in the discussion
chapter as a natural direction for future extension, especially if one wishes to incorporate
symmetric products or noisy open-system dynamics, but that lies outside the main scope of the
present work.
\chapter{Krylov-Lie Algebras}\label{chap:krylov-lie-algebras} 

\section{Motivation from Measure Theory and Krylov Subspaces}

In light of the deficiencies of mainline VQA landscape theory, it can seem unclear where to go next in terms of a sensible theoretical framework. We can begin carving a path forward by taking lessons from measure theory. Specifically, given some VQA with pushforward measure $\Phi_* \nu_M$ over the image of its circuit parameters (which we conservatively take to be compact, naturally implying $\sigma$-finiteness of $\Phi_* \nu_M$), suppose that we have some ``correct" compact approximation to its manifold structure, and that this structure does have an honest, normalized ($\sigma$-finite) Haar measure over itself, $\mu_\mathcal{G}$ \cite[Thms.~2.10 and 2.20]{Folland1995Harmonic}. By the Lebesgue-Radon-Nikodym theorem \cite[Thm.~3.8]{Folland1999RealAnalysis}, we have that there exist uniquely-determined, $\sigma$-finite $\nu_1$ and $\nu_2$ such that $\nu_1$ is absolutely continuous with respect to $\mu_\mathcal{G}$, $\nu_2$ and $\mu_\mathcal{G}$ are singular, $\Phi_* \nu_\mathcal{M} = \nu_1 + \nu_2$, and 
\begin{equation}
\Phi_* \nu_M(E) = \int_{E} \frac{d \nu_1}{d \mu_\mathcal{G}} d \mu_\mathcal{G} + \nu_2 (E) 
\end{equation}
for all Borel sets $E$ in the $\sigma$-algebra of $\mathcal{G}$, where $\frac{d \nu_1}{d \mu_\mathcal{G}}$ is a unique, nonnegative Lebesgue-integrable Radon-Nikodym derivative for $\nu_1$. We then have the interpretation of $\frac{d \nu_1}{d \mu_\mathcal{G}}$ as a probability density over the (geometric) Haar measure for our approximation to the VQA reachable manifold. Similarly, $\nu_2$ represents the component of our pushforward measure on our VQA that exists externally from distributional effects on the geometry of our approximate structure. It immediately follows that if one were able to find an approximate structure such that $\nu_2$ had total mass $0$, then we would have a semi-canonical choice of measure and approximator for our VQA, since any probabilistic information we would want to investigate from quantities like moments could be purely decoupled into geometric contributions gleaned from the Haar measure of our approximator and sampling distortion over the geometry associated with our approximator. It is in this sense that such an approximator would be a model that is \textit{faithful} to the underlying geometry of the VQA: we are not losing any information for a particular circuit by working with the Haar measure of the approximator instead of the pushforward measure of the VQA, but rather we are merely isolating the VQA's geometric content (as opposed to the supplementary distributional content that lives on top of it). Furthermore, due to the suggestiveness of this representation, we would be able to analyze evolutionary behavior of the VQA by discussing the flow of the Radon-Nikodym derivative under the dynamics of a chosen optimizer on the constant background of our approximator's geometry. We can do even better still by explicitly recognizing the smoothness of the structures between which we are implicitly mapping:


\begin{theorem}\label{thm:DimensionMatchingJacobian}
Let the $m$-dimensional smooth reachable manifold of our VQA be $M$, and equip it with the finite parameter law
\[
\nu_M=\tilde q\, d\operatorname{vol}_M
\]
where $\tilde q\in L^1(M)$ and $\tilde q\ge 0$. Let $\mathcal G^*$ be an $n$-dimensional compact Lie group, $n\le m$, with normalized Haar measure $\mu_{\mathcal G^*}$. Suppose there exists an open set $U\subset M$ such that
\[
\nu_M(M\setminus U)=0
\]
and $\Phi|_U:U\to \mathcal G^*$ is $C^1$. If
\[
\operatorname{rank}(d\Phi_x)=n
\qquad\text{for }\nu_M\text{-a.e. }x\in U,
\]
then the singular component in the Lebesgue--Radon--Nikodym decomposition of $\Phi_*\nu_M$ with respect to $\mu_{\mathcal G^*}$ is zero.
\end{theorem}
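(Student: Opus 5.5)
We must show that $\Phi_*\nu_M$ is absolutely continuous with respect to $\mu_{\mathcal G^*}$: if $E\subset\mathcal G^*$ is Borel with $\mu_{\mathcal G^*}(E)=0$, then $\nu_M(\Phi^{-1}(E))=0$. Uniqueness in the Lebesgue--Radon--Nikodym decomposition then forces the singular part $\nu_2$ to vanish. Haar measure on a Lie group is a smooth positive density times Lebesgue measure in any chart, so $\mu_{\mathcal G^*}$-null sets are exactly the sets that are Lebesgue-null in every chart. Likewise, $\nu_M=\tilde q\,d\operatorname{vol}_M$ is absolutely continuous with respect to Riemannian volume. It therefore suffices to prove that $\operatorname{vol}_M(\Phi|_U^{-1}(E)\cap R)=0$, where
\[
R:=\{x\in U:\operatorname{rank}(d\Phi_x)=n\}.
\]
The complement $U\setminus R$ is $\nu_M$-null by hypothesis, so it may be discarded.

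The first step is to observe that $R$ is open in $U$, since the rank condition is a nonvanishing minor of a continuous Jacobian and hence lower semicontinuous. On $R$ the map $\Phi$ is a $C^1$ submersion because $n\le m$. By the local submersion theorem, each $x\in R$ has a neighborhood $V_x$ with $C^1$ charts in which $\Phi$ becomes the projection $(y,z)\mapsto y$ from $\mathbb R^n\times\mathbb R^{m-n}$ to $\mathbb R^n$. In these coordinates,
\[
\Phi^{-1}(E)\cap V_x=(E'\times\mathbb R^{m-n})\cap V_x,
\]
where $E'$ is the Lebesgue-null chart image of $E$. Fubini--Tonelli then gives Lebesgue measure zero in the domain chart. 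Because the chart is $C^1$, it carries Lebesgue null sets to $\operatorname{vol}_M$-null sets, so this neighborhood contributes nothing.

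Next, $M$ is second countable, so $R$ is covered by countably many such $V_x$. Countable subadditivity yields $\operatorname{vol}_M(\Phi^{-1}(E)\cap R)=0$, and hence $\nu_M(\Phi^{-1}(E))=0$. The main technical obstacle is the local normal form for a map that is only $C^1$. I would handle it by applying the $C^1$ inverse function theorem to $x\mapsto(\Phi(x),\pi(x))$, where $\pi$ is a linear complement projection in a chart of $M$. This produces a $C^1$ diffeomorphism, and $C^1$ diffeomorphisms preserve Lebesgue-null sets because they are locally Lipschitz. No second-order regularity is needed. A secondary point is measurability: $\Phi^{-1}(E)$ is Borel in $U$ because $\Phi$ is continuous there, and $\Phi_*\nu_M$ is understood as the pushforward of $\nu_M|_U$, which is legitimate since $\nu_M(M\setminus U)=0$.
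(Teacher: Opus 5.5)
Your proof is correct and follows essentially the same route as the paper's: discard the $\nu_M$-null rank-drop set, put $\Phi$ into submersion normal form near each full-rank point, use Fubini in the domain chart to show that preimages of Haar-null sets are volume-null, and extend to all of $R$ by a countable cover coming from second countability. The paper disjointifies the cover and sums the absolutely continuous pieces, which is equivalent to your countable subadditivity; your explicit $C^1$ normal form, via the $C^1$ inverse function theorem and local Lipschitzness, is a small tightening, since the paper applies the smooth rank theorem to a map assumed only $C^1$ and writes the chart image as a product without first shrinking it.
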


\begin{proof}
Let
\[
R:=\{x\in U \, \vert \,\operatorname{rank}(d\Phi_x)=n\},
\qquad
C:=M\setminus R.
\]
Since $\nu_M(M\setminus U)=0$ and $\operatorname{rank}(d\Phi_x)=n$ for $\nu_M$-a.e. $x\in U$, we have
\[
\nu_M(C)=0.
\]
Hence
\[
\Phi_*\nu_M=\Phi_*(\nu_M|_R),
\]
because for every Borel set $E\subset \mathcal G^*$,
\[
\Phi_*(\nu_M|_C)(E)
=
\nu_M(C\cap \Phi^{-1}(E))
\le \nu_M(C)=0.
\]

Now fix $x\in R$. Since $\Phi|_U$ is $C^1$ and $\operatorname{rank}(d\Phi_x)=n=\dim\mathcal G^*$, the differential $d\Phi_x$ is surjective \cite[Thm.~4.12]{Lee2013SmoothManifolds}. Thus $\Phi$ is a submersion at $x$, and by the local submersion theorem there exist charts 
\[
\alpha_x:U_x\to \widetilde U_x\subset \mathbb R^m,
\qquad
\beta_x:V_x\to \widetilde V_x\subset \mathbb R^n,
\]
with $x\in U_x\subset U$ and $\Phi(x)\in V_x$, such that in these coordinates
\[
\beta_x\circ \Phi\circ \alpha_x^{-1}(u,v)=u,
\]
where $u\in \mathbb R^n$ and $v\in \mathbb R^{m-n}$ \cite[Thm.~4.12]{Lee2013SmoothManifolds}.

Because $M$ is second countable, we may choose a countable cover $\{U_i\}_{i=1}^\infty$ of $R$ by such submersion neighborhoods \cite[App.~A]{Lee2013SmoothManifolds}. Define a measurable disjointification by
\[
W_1:=U_1\cap R,
\qquad
W_i:=(U_i\cap R)\setminus \bigcup_{j<i}\,(U_j\cap R)
\quad (i\ge 2).
\]
Then the sets $W_i$ are pairwise disjoint, measurable, and
\[
R=\bigsqcup_{i=1}^\infty W_i.
\]
Therefore
\[
\nu_M|_R=\sum_{i=1}^\infty \nu_M|_{W_i}
\qquad\text{and hence}\qquad
\Phi_*\nu_M=\sum_{i=1}^\infty \Phi_*(\nu_M|_{W_i}).
\]
It therefore suffices to show that each measure
\[
\nu_i:=\Phi_*(\nu_M|_{W_i})
\]
is absolutely continuous with respect to $\mu_{\mathcal G^*}$.

Fix $i$, and let $A\subset \mathcal G^*$ be Borel with $\mu_{\mathcal G^*}(A)=0$. Since the Haar measure on a compact Lie group is locally given by a smooth positive density with respect to the Euclidean Lebesgue measure in any smooth chart \cite[Ch.~16]{Lee2013SmoothManifolds}, it follows that in the chart $\beta_i:V_i\to \widetilde V_i\subset\mathbb R^n$ associated to $U_i$, the set
\[
\beta_i(A\cap V_i)
\]
has Lebesgue measure zero in $\mathbb R^n$.

In the submersion coordinates on $U_i$, the map $\Phi$ is just the projection onto the first $n$ coordinates, so
\[
\alpha_i\bigl(\Phi^{-1}(A\cap V_i)\cap U_i\bigr)
= \{(x,y) \in \alpha_i(U_i)=\beta_i(V_i) \times W_i' \, \big| \, x \in \beta_i(A \cap V_i) \}
=
\beta_i(A\cap V_i)\times W_i'
\]
for some open set $W_i'\subset \mathbb R^{m-n}$.
By Fubini's theorem, this product set has Lebesgue measure zero in $\mathbb R^m$ \cite[Thm.~2.37]{Folland1999RealAnalysis}.

Now $\nu_M=\tilde q\,d\operatorname{vol}_M$ is absolutely continuous with respect to the smooth volume measure $d\operatorname{vol}_M$, and in local coordinates a smooth volume measure is a smooth positive density times the Lebesgue measure. Hence
\[
\nu_M\bigl(\Phi^{-1}(A\cap V_i)\cap U_i\bigr)=0.
\]
Since $W_i\subset U_i$, we obtain
\[
\nu_i(A\cap V_i)
=
\nu_M\bigl(W_i\cap \Phi^{-1}(A\cap V_i)\bigr)
=0.
\]
On the other hand, $\nu_i$ is supported in $\Phi(W_i)\subset V_i$, so
\[
\nu_i(A)=\nu_i(A\cap V_i)=0.
\]
Thus $\nu_i\ll \mu_{\mathcal G^*}$ for every $i$.

Finally, a countable sum of measures absolutely continuous with respect to $\mu_{\mathcal G^*}$ is again absolutely continuous with respect to $\mu_{\mathcal G^*}$ \cite[§3.2]{Folland1999RealAnalysis}. Therefore
\[
\Phi_*\nu_M\ll \mu_{\mathcal G^*}.
\]
By the Lebesgue--Radon--Nikodym decomposition theorem, the singular component of $\Phi_*\nu_M$ with respect to $\mu_{\mathcal G^*}$ is therefore zero \cite[Thm.~3.8]{Folland1999RealAnalysis}.
\end{proof}
Before proceeding, it should be mentioned that by the coarea formula \cite[§3.4]{evans_gariepy_2015}\cite[Ch.~III]{chavel2006riemannian}, when the singular part of the Lebesgue decomposition is zero, we have that
\begin{equation}\label{eq:corea-formula}
 \frac{d \Phi_* \nu_M}{d\mu_\mathcal{G^*}}(g) = \int_{\Phi^{-1}(g)} \frac{\tilde q(x)}{J_\Phi(x)}d\mathcal{H}^{m-n}(x),   
\end{equation}

for $\mu_\mathcal{G^*}$-almost every $g$, where $J_\Phi=\text{sqrt}(\text{det}(d\Phi_x d\Phi^\dagger_x))$ is the metric Jacobian and $\mathcal{H}^{m-n}$ is the Hausdorff measure on the fiber. 

The importance of Theorem \ref{thm:DimensionMatchingJacobian} is that it tells us that in order to even begin to approach the idea of a reasonable Lie group approximator to $M$, we cannot have a positive $\nu_M$-mass set of critical points (which could under the pushforward place mass on a lower-dimensional subset of $\mathcal{G}^*$, thereby creating a genuinely singular component), and more crucially that we should have this full-rank property of the Jacobian of our pushforward. In fact, if we have that $\text{dim} \: \mathcal{G}^* > \text{dim} \: M$ (making full rank impossible), such as in the case of most DLAs, it is easy to see that we can no longer obtain a singular component with zero mass. The question thus becomes how to actually find such a Lie group approximation to $M$, especially one that matches the dimension of $M$, as it is not terribly obvious how to do this, or if a Lie group that satisfies our requirements even exists (although it is relatively unforced in that setting to soften to the problem of finding the Lie group that minimizes the total mass of the singular part). Fortunately, to this end, we can take some inspiration from Krylov subspace methods.

The idea of Krylov subspace methods is elementary: given a linear operator $A$ on a vector space $V$, one selects a seed vector $v\in V$ and studies the finite-dimensional subspace generated by repeated action of $A$ on that seed. The order-$r$ Krylov subspace is
\[
\mathcal{K}_r(A,v)=\operatorname{span}\{v,Av,A^2v,\dots,A^{r-1}v\}.
\]
Equivalently, \(\mathcal{K}_r(A,v)\) is the space of all vectors of the form \(p(A)v\), where \(p\) ranges over polynomials of degree at most \(r-1\) \cite[Prop.~6.2]{saad2003iterative}. In this sense, a Krylov subspace is the natural finite-dimensional receptacle for the action of \(A\) on the seed vector: it contains, by construction, all powers of \(A\) acting on \(v\) up to the prescribed order.

The first important feature of these spaces is that they possess a built-in notion of truncation that is algebraically meaningful. Indeed, while \(\mathcal{K}_r(A,v)\) need not be invariant under arbitrary further action of \(A\) for small \(r\), there is a first index---the grade of \(v\) relative to \(A\), also called the \textit{grade}---at which the sequence stabilizes, and beyond that point no genuinely new directions are produced by applying additional powers of \(A\) \cite{gutknecht2009blockgrade}. Concretely, once one reaches the grade degree, every higher power \(A^kv\) can be written as a linear combination of the previously obtained vectors, so the Krylov subspace has become closed under further multiplication by \(A\) in the sense relevant to the generated dynamics; $\mathcal{K}_r(A,v)$ is a cyclic module. This is precisely the structural phenomenon that is suggestive for our purposes: there is a canonical graded growth process, and at a finite stopping point one attains a canonical algebraic closure with respect to the operation generating the space.

The second important feature is that Krylov constructions are remarkably flexible with respect to dimension. For a fixed operator \(A\), the dimension of \(\mathcal{K}_r(A,v)\) depends strongly on the choice of seed \(v\), and is controlled by the degree of the minimal polynomial of \(A\) relative to that seed \cite{gutknecht2009blockgrade}. Thus different seed vectors can produce Krylov spaces of dramatically different sizes, even for the same ambient operator. In practice this means that, by selecting the seed appropriately, one can realize a wide range of effective dimensions very easily (in fact any dimension up to the degree of $A$'s minimal polynomial, provided that one allows for any choice of initial seed), without ever needing to alter the underlying operator itself. This flexibility is one of the main reasons Krylov methods are so powerful as compression devices: the operator is fixed, while the seed determines which portion of its action is exposed at low dimension \cite[Ch.~11]{antoulas2005approximation}.

It is exactly these two features that make Krylov subspaces so attractive. On one hand, they provide a natural notion of graded approximation, together with a canonical stopping rule given by algebraic closure at the grade degree. On the other, they allow one to tune the dimension of the approximating object simply by changing the seed data. For our problem, however, ordinary Krylov subspaces are still insufficient, because they are fundamentally linear and commutative in their construction: they are built from repeated powers of a single operator, whereas the expressive structure of a variational circuit is governed by noncommutative growth under repeated commutation of its generators.

Thus, it is astute to ask whether one can construct an analogous object to Krylov subspaces in the Lie-theoretic setting. More specifically, one would like a graded family obtained not from powers \(A^k v\), but from repeated adjoint actions or nested commutators of chosen generators acting on some seed element or seed set. Such an object should retain the two desirable properties just discussed. First, it should admit a notion of grade degree at which further commutation produces no new directions, so that one reaches closure under the Lie bracket. Second, by varying the seed data, it should allow substantial control over the dimension of the resulting space. If such a construction can be made precise, then it would furnish exactly the kind of finite-dimensional Lie-algebraic approximator we seek: one that is flexible in dimension, naturally graded, and capable of becoming a bona fide Lie algebra at finite grade.

This aim is now what we shall set out to accomplish.

\section{Construction of Krylov and Krylov-Lie Algebras}

\subsection{Krylov Algebras}

We shall begin by defining a slightly more abstract notion of Krylov algebra before eventually restricting our focus to a Lie-centric structure. To this end, we introduce the notions of \textit{seeds}, \textit{block seeds}, and \(\circ\)\textit{-words}.

\begin{definition}[Seeds and block seeds]
\phantomsection
\label{def:seeds-and-block-seeds}
Let \(V\) be a vector space over a field \(\mathbb{F}\).

A \emph{seed} is an element \(v\in V\setminus\{0\}\).

A \emph{block seed} is an $s$-tuple
\[
\Psi=(\psi_1,\dots,\psi_s)\in (V\setminus\{0\})^s,
\]
\end{definition}

\begin{remark} \label{rem:SurjectiveParamBlockSeeds}
    Clearly, if we have a surjective parameterization for V, then one can always recursively produce a linearly independent block seed for any $s \leq \text{dim}(V)$ by projecting out all previous vectors from the parameterization with $1-\sum^{s-1}_{i=1}\ket{\psi_i} \bra{\psi_i}$ (assuming normalized vectors). Furthermore, it is occasionally valuable to restrict to unit norm seeds (call this space $\Sigma_s$) so that we are working over a compact space. Since this never alters the linear dependence of vectors in our block seed, we can typically perform this restriction without it affecting the validity or complexity of our proofs.
\end{remark}

\begin{definition}[\(\circ\)-words generated by a block seed] \label{def:CircWords}
Let \(V\) be a vector space over \(\mathbb{F}\), $\circ: \mathfrak{a}\times \mathfrak{a}\to \mathfrak{a}$ a bilinear operation on a chosen algebra $\mathfrak{a}$ induced by said operation, $\mathcal{S}=\{H_1, ..., H_n\}$ a finite set of generators for $\mathfrak{a}$, and $\Psi=(\psi_1,\dots,\psi_s)$ a block seed in \( (V \setminus \{ 0\})^s\) (\Cref{def:seeds-and-block-seeds}). For every integer \(r\geq 1\) and every multi-index
$(i_1,\dots,i_r)\in \{1,\dots,n\}^r$
define the associated nested \(\circ\)-word $w_\circ(i_1,...,i_r)$ as
\[
w_\circ(i_1,...,i_r)=w_\circ(H_{i_1},...,H_{i_r})
:=
H_{i_r} \circ (\cdots \circ ( H_{i_3} \circ (H_{i_2}\circ H_{i_1}))\cdots),
\]
and call the set of all such words up to a depth of $k$ in our generators
\[
\mathcal{W}_{\leq k}( \mathcal{S}):= \{ w_\circ(i_1,...,i_r) \:\vert \: 1 \leq r \leq k, \:i_\ell \in \{1,..,n\}\}.
\]
\end{definition}

\begin{remark}
\phantomsection
\label{rmk:words-convention}
On occasion, specifying the specific indices and order of our words will be inconvenient and tedious. In these cases, we write 
\[
w_\circ(H_1, \dots, H_n) \in \mathcal{W}_{\leq k}( \mathcal{S})
\]
to mean any arbitrary word constructed from the generator set $\mathcal S = \{H_1, \dots, H_n \}$ up to a depth of $k$. Also, when unambiguous, we often say $\mathcal W_{\leq k}$ to mean $\mathcal W_{\leq k} (\mathcal S)$ for brevity, we use $\mathcal W_{k}$ to refer to the set of all $\circ$-words of exactly depth $k$, and as a matter of pure convenience let $\mathcal W_0 =I$ ($I$ does not have to be in $\mathcal S$ or its dynamical Lie algebra).
\end{remark}

We now are equipped to introduce Krylov filtrations as well as their algebraically closed cousins, Krylov algebras.
\begin{definition}[Krylov subspace for a binary product]
\phantomsection
\label{def:KrylovSubspace}
Let \(V\), \(\circ\), $\mathcal{S}$, and $\Psi$ be as \Cref{def:CircWords}. The \textit{grade-\(k\) Krylov subspace} (or filtration) defined for $k \geq 1$ and generated by the block seed \(\Psi\) (\Cref{def:seeds-and-block-seeds}) with respect to \(\circ\) and the generators $\mathcal{S}$ is 
\[
\mathcal{K}^{(k)}_{\Psi}(\mathcal{S})
:=
\operatorname{span}_{\mathbb{F}}
\{w \psi_j \, \vert \, w\in I \cup \mathcal{W}_{\leq k} \, , \, 1 \leq j \leq s\}.
\]
If the generators and depth are left implicit, then the Krylov subspace is denoted $\mathcal{K}_{\Psi}$. 

The full Krylov filtration is the increasing family
\[
\mathcal{K}^{(1)}_{\Psi}(\mathcal{S})
\subset
\mathcal{K}^{(2)}_{\Psi}(\mathcal{S})
\subset
\cdots.
\]
Meanwhile, if there exists a least integer \(r_\ast\geq 1\) such that
\[
\mathcal{K}^{\infty}_{\Psi}(\mathcal{S})
=
\mathcal{K}^{(r_*)}_{\Psi}(\mathcal{S})=\mathcal{K}^{*}_{\Psi}(\mathcal{S}),
\]
then \(r_\ast\) is called the \emph{grade degree} of the filtration.
\end{definition}

\begin{remark}
It is not strictly necessary to define the Krylov subspace so that it include the initial seed vectors (which is why the identity operator is present). The vast majority of the results pass by largely unaltered by this, with the slight exception of those in \Cref{sec:dim-matching-seeds}, since the inclusion of the block seed in the Krylov subspace can very well increase its dimension, although the themes remain. For standard cases, especially Lie-theoretic ones, this will be marginally unfavorable as it can sometimes make it harder to achieve dimension-matching at shallow circuit depths (\Cref{prop:witt-upper-bound-krylov-subspace}; the change is quite simple when $\Psi$ is assumed linearly independent).

There are good reasons to include the seeds. Chiefly among these is that it becomes necessary for the derivation of factorially-decaying error bounds, for else we must consider leakage of the seed vectors into the orthogonal complement of the subspace under application of the generators. It also aligns better with standard Krylov theory, which directly employs the seeds in the construction of its subspace.
\end{remark}

\begin{definition}[Krylov algebra] 
\phantomsection
\label{def:KrylovAlgebra}
Assume now that \(V=\mathcal H\) is a finite-dimensional Hilbert space, and let
$\mathcal K_\Psi(\mathcal S)$ be a depth-$k$ Krylov subspace of dimension $r^{(k)}_{\Psi}$ as in \Cref{def:KrylovSubspace}. Say that
\[
B_\Psi
:=
\begin{bmatrix}
b_1 & \cdots & b_{r^{(k)}_{\Psi}}
\end{bmatrix}
\]
is the corresponding basis matrix of vectors than span our Krylov subspace. The orthogonal projector (which can include an indicator for depth if desired) onto
\(\mathcal K_\Psi(\mathcal S)\) is
\[
P_\Psi
:=
B_\Psi B_\Psi^{+}
=
B_\Psi (B_\Psi^\dagger B_\Psi)^{-1} B_\Psi^\dagger,
\]

where \(B_\Psi^{+}\) denotes the Moore-Penrose pseudoinverse \cite[§1.9]{saad2003iterative}.
If \(B_\Psi\) has orthonormal columns, then
\[
P_\Psi=B_\Psi B_\Psi^\dagger.
\]
Define the compressed or reduced generators on the ambient space by
\[
\widetilde H_{j}
:=
P_\Psi H_j P_\Psi,
\qquad j=1,\dots,n.
\]
The \emph{Krylov algebra} associated with the block seed \(\Psi\) and the bilinear operation $\diamond:\mathfrak a \times \mathfrak a \rightarrow \mathfrak a$ (which may or may not be the same as $\circ$) is the
\(\diamond\)-algebra generated by these compressed generators, namely
\[
\mathfrak K_\Psi^{(k)}(\mathcal S)
:=
\operatorname{span}_{\mathbb F}
\Big\{
w_\diamond(\widetilde H_{i_1},\dots,\widetilde H_{i_r})
\,\Big|\,
w_\diamond(\widetilde H_{i_1},\dots,\widetilde H_{i_r})\in\ \widetilde{\mathcal{W}}_{\infty}^{(k)} \Big\},
\]
where $\widetilde{\mathcal{W}}_{\infty}^{(k)}$ is the finite set of all $\diamond$-words of compressed generators induced by the depth of words used to create the Krylov subspace (see \Cref{def:CircWords} for index notation). 
\end{definition}
\begin{remark}
Equivalently, in basis-free notation one may regard \(\mathfrak K^{(k)}_\Psi(\mathcal S)\) as
the algebra generated by the compressed operators $P_\Psi H_j P_\Psi + (I-P_\Psi)$ when working over the full space $V$, or simply $P_\Psi H_j |_{\mathcal{K}_{\Psi}(\mathcal{S})}$ if restricted to the Krylov subspace. Also, it is vital to remember that the depth $k$ does not refer to the depth of the nested $\circ$-products of the compressed generators in $\mathfrak{K}^{(k)}_\Psi$, but rather the depth of the nested $\circ$-products used to generate the Krylov subspace $\mathcal{K}_{\Psi}^{(k)}$.
\end{remark}

\subsection{Krylov-Lie Algebras}

With this setup, Krylov-Lie subspaces and algebras are quite easy to describe.

\begin{definition}[Krylov-Lie subspace] \label{def:krylov-lie-subspace}
Let \(V\) be a vector space over \(\mathbb{F}\), $[\, \cdot \, , \, \cdot \, ]: \mathfrak{g}\times \mathfrak{g}\to \mathfrak{g}$ the Lie bracket commutator  for the Lie algebra $\mathfrak{g}$, $\mathcal{S}=\{H_1, ..., H_n\}$ a finite set of generators for $\mathfrak{g}$, and $\Psi=(\psi_1,\dots,\psi_s)$ a block seed in \( (V \setminus \{ 0\})^s\). The grade-\(k\) Krylov-Lie filtration generated by \(\Psi\), $\mathcal{K}^{(k)}_{\Psi}(\mathcal{S})$, is then the grade-$k$ Krylov subspace generated by $\Psi$ with respect to the Lie bracket and $\mathcal{S}$, with the full Krylov-Lie filtration, and grade degree (which always exists when $\mathcal{H}$ has finite dimension) defined analogously as in \Cref{def:CircWords}.
\end{definition}

\begin{definition}[Krylov-Lie algebra] \label{def:krylov-lie-algebra}
Let \(V\), $\mathcal{S}$, $[\, \cdot \, , \, \cdot \, ]$, and $\Psi$ be as in \Cref{def:krylov-lie-subspace}.  Say our depth-$k$ Krylov subspace (possibly not induced by the Lie bracket) is $\mathcal{K}_\Psi^{(k)}(\mathcal{S})$ with basis matrix of spanning vectors $B_\Psi^{(k)}$ and orthogonal projector $P_\Psi^{(k)}$ (\Cref{def:KrylovAlgebra} and \Cref{def:krylov-lie-subspace}). Then for the compressed (or reduced) generators $\widetilde H_{j}^{(k)}
=
P_\Psi^{(k)} H_j P_\Psi^{(k)}$ (see \Cref{def:KrylovAlgebra}),
\[
\mathfrak{l}_{\Psi}^{(k)}(\mathcal{S})
:=
\operatorname{span}_{\mathbb F}
\Big\{
w_{[ \cdot , \cdot ]}(\widetilde H_{i_1}^{(k)},\dots,\widetilde H_{i_r}^{(k)})
\,\Big|\,
w_{[ \cdot , \cdot ]}(\widetilde H_{i_1}^{(k)},\dots,\widetilde H_{i_r}^{(k)}) \in\ \! \! \widetilde{\mathcal{W}}_{\infty}^{(k)}\Big\}= \operatorname{Lie}_{\mathbb{F}}\langle \widetilde{H}_1^{(k)}, \dots, \widetilde H_n^{(k)}\rangle
\]
is called the \emph{depth-$k$ Krylov-Lie algebra} (KLA) generated by the block seed \(\Psi\). If $k=r_*$, the grade degree of \Cref{def:KrylovAlgebra}, this KLA is the unique and minimal one that has saturated the commutator.
\end{definition}

\begin{remark}[Krylov-Lie algebras are Krylov algebras]
In other words, a depth-$k$ Krylov-Lie algebra $\mathfrak{l}_{\Psi}^{(k)}(\mathcal{S})$ (or $\mathfrak{l}_{\Psi}^{(k)}$ when the generators are left implicit, and $\mathfrak{l}_{\Psi}$ if the depth is implict as well) is a depth-$k$ Krylov algebra whose algebraic product is a Lie bracket, by \Cref{def:KrylovAlgebra}.
\end{remark}

We will find that while these constructions are natural and faithful to the Lie-algebraic dynamics, they are not highly compatible with clean, universal, and tight error bounds. To alleviate this issue, we introduce a useful variant.

\begin{definition}[Krylov prefix subspace and prefix KLA] 
\phantomsection
\label{def:prefix-subspace-kla}
Take the setup of \Cref{def:krylov-lie-subspace}, but add to it matrix multiplication. The \emph{grade-\(k\) Krylov prefix subspace} $\mathcal P^{(k)}_{\Psi}$ is the grade-$k$ Krylov subspace generated by $\Psi$ with respect to matrix multiplication and $\mathcal{S}$ (cf. \Cref{def:KrylovSubspace}). Moreover, the \emph{grade-$k$ prefix Krylov-Lie algebra} $\mathfrak p^{(k)}_{\Psi}$ is the KLA induced by this (associative) Krylov prefix subspace.
\end{definition}

\begin{remark}
Owing to them both being Krylov-Lie algebras, the prefix KLA and the standard KLA share many of the same properties. In fact, they share so many and the transfer of proofs so trivial that, unless otherwise noted, it can be assumed for the remainder of this text that \emph{any} property held by a KLA is also a property of the prefix KLA (though the converse need not be true). The prefix KLA is not as physical as the usual \Cref{def:krylov-lie-algebra} due to its subspace being crafted from directions not directly reachable by the dynamical Lie commutator structure, but it will serve its purpose well in \Cref{chap:krylov-lie-approximations}.
\end{remark}

\begin{remark}[VQA conventions] \label{rem:VQAConventions}
In the setting of VQAs, the ambient vector space is the Hilbert space
\(
\mathcal{H} \cong \mathbb{C}^{2^N}
\) 
(assumed to have complex dimension $d$), and a block seed
\(
\Psi=(\psi_1,\dots,\psi_s)
\)
consists of state vectors in \(\mathcal{H}\). The generators
\(
\mathcal{S}=\{H_1,\dots,H_n\}
\)
are taken to be Hermitian linear operators acting on \(\mathcal{H}\) so that the associated ansatz unitaries are of the form
\[
U(\boldsymbol \theta)=e^{-i\theta_p H_{i_p}}\cdots e^{-i\theta_1 H_{i_1}},
\]
and $i_1, \dots, i_p$ are indices corresponding to the generators in $\mathcal S$.

From the Lie-theoretic point of view, the more natural algebra is the real Lie algebra of anti-Hermitian operators
\[
\mathfrak{u}(d)
=
\{X\in \operatorname{End}(\mathcal{H}) \, \vert \, X^\dagger=-X\},
\]
or alternatively, the real Lie subalgebra generated by
\(
\{iH_1,\dots,iH_n\}
\).
Indeed, if \(H_i\) and \(H_j\) are Hermitian, then \([H_i,H_j]\) is anti-Hermitian rather than Hermitian.

Accordingly, whenever we work with Hermitian generators, the underlying Lie structure should be understood via the identification
\[
H \longleftrightarrow iH.
\]
Equivalently, one may transport the Lie bracket to the Hermitian side by defining
\[
[ A,B ] := i[A,B],
\qquad A,B \in \operatorname{Herm}(\mathcal{H}),
\]
so that  $[A,B]$ is again Hermitian. Thus, up to the harmless factor of \(i\), the Hermitian and anti-Hermitian conventions encode the same real Lie algebra.

Throughout the remainder of this work, we shall use Hermitian generators for physical clarity. Whenever Krylov-Lie words (see \Cref{def:CircWords} and \Cref{def:krylov-lie-algebra}) are written directly in the \(H_j\) or their compressed variants, the corresponding factors of \(i\) needed to pass to the mathematically natural anti-Hermitian Lie algebra are to be understood implicitly. In particular, all Lie-theoretic spans of generators are understood as real spans, whereas the associated Krylov subspaces of states in \(\mathcal{H}\) may still be taken over \(\mathbb{C}\).
\end{remark}

Next, we shall discuss how to induce a Krylov-Lie group (KLG) from these Krylov-Lie algebras (KLAs).

\begin{definition}[Krylov-Lie group]\label{def:krylov-lie-group}
Let \(\mathcal{K}_\Psi^{(k)}(\mathcal{S})\subset \mathcal H\) be our
Krylov-Lie subspace as in \Cref{def:krylov-lie-subspace}, \(P_\Psi^{(k)}\) denote the orthogonal projector onto \(\mathcal{K}_\Psi^{(k)}(\mathcal{S})\) as in \Cref{def:KrylovAlgebra},
and let
$\mathfrak l_\Psi^{(k)}(\mathcal S)\subset \mathfrak u(\mathcal{K}_\Psi^{(k)})$
be the corresponding Krylov-Lie algebra (\Cref{def:krylov-lie-algebra}).

By Lie's third theorem, there exists a unique simply-connected Lie group
\(\widetilde K_\Psi^{(k)}\) with Lie algebra canonically identified with
\(\mathfrak l_\Psi(\mathcal S)\) \cite[Thm.~3.48 and Cor.~3.49]{kirillov_lie_groups}. We call $\widetilde K_\Psi^{(k)}$ the \textit{initial depth-$k$ Krylov-Lie group} for $(\Psi,\mathcal{S})$. 

When our Krylov-Lie algebra is realized by a subalgebra of the Lie algebra for a unitary group, by Lie's second theorem \cite[Thm.~3.38]{kirillov_lie_groups}, the inclusion
\[
\mathfrak l_\Psi^{(k)}(\mathcal S)\hookrightarrow \mathfrak u(\mathcal{K}_\Psi^{(k)})
\]
integrates uniquely to a Lie-group homomorphism
\[
\Pi_\Psi^{(k)}:\widetilde K_\Psi^{(k)}\to U(\mathcal{K}_\Psi^{(k)}).
\]
We define the \textit{depth-$k$ Krylov-Lie group} associated to \((\Psi,\mathcal S)\) to be the compact, connected matrix Lie group
\[
K_\Psi^{(k)}:=\overline{\Pi_\Psi^{(k)}(\widetilde K_\Psi^{(k)})}=\overline{\left\langle
\exp(d\Pi_\Psi^{(k)}(X)) \: \Big \vert \: X\in \mathfrak l_\Psi^{(k)}(\mathcal S)
\right\rangle}\subset U(\mathcal{K}_\Psi^{(k)}),
\]
where the bar means closure in the topology of $U(\mathcal{K}_\Psi^{(k)})$. Isomorphically, viewing \(\mathcal{K}_\Psi^{(k)}\) as a subspace of \(\mathcal H\), we may realize
\(K_\Psi^{(k)}\) in the ambient unitary group \(U(d)\) through the embedding
\[
\iota_\Psi^{(k)}:U(\mathcal{K}_\Psi^{(k)})\to U(d),\qquad
\iota_\Psi^{(k)}(U)=P_\Psi^{(k)} U P_\Psi^{(k)} + (I-P_\Psi^{(k)}).
\]
\end{definition}

\begin{remark}
The image \(\Pi_\Psi^{(k)}(\widetilde K_\Psi^{(k)})\subset U(\ell_\Psi^{(k)})\) is a connected Lie subgroup
with Lie algebra \(\mathfrak l_\Psi^{(k)}(\mathcal S)\), but it need not be closed \cite[Thm.~3.43]{kirillov_lie_groups}. The topological closure \(K_\Psi^{(k)}\) is therefore the natural compact matrix-group realization of the Krylov-Lie construction. For measure-theoretic purposes below, the normalized Haar
measure on \(K_\Psi^{(k)}\) is the canonical reference measure---by \Cref{TraceHaarTopologyInvariance}, the discrete kernel and global lattice data of the compact connected model (or any Lie-centric model, for that matter) become irrelevant for integrals of observables
that factor through the represented unitary action, which includes the moments we wish to calculate.
\end{remark}

\section{Structural Results on Krylov-Lie Algebras and Groups}
With these definitions of Krylov-Lie algebras and Krylov-Lie groups, we shall prove some germane structural results about our constructions.


\begin{proposition}[Haar measure and unimodularity]
\label{KLGUnimodular}
The compact Lie group \(K_\Psi\) (\Cref{def:krylov-lie-group}) admits a unique, normalized, bi-invariant Haar measure $\mu_{K_{\Psi}}$.
\end{proposition}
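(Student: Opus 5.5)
The plan is to reduce the statement to two standard facts: existence and uniqueness of Haar measure on locally compact Hausdorff groups, and unimodularity of compact groups. First I will check that $K_\Psi$ (\Cref{def:krylov-lie-group}) is a compact Hausdorff topological group. By definition it is the closure in $U(\mathcal K_\Psi^{(k)})$ of the subgroup $\Pi_\Psi^{(k)}(\widetilde K_\Psi^{(k)})$. The closure of a subgroup of a topological group is again a subgroup, since multiplication and inversion are continuous. $U(\mathcal K_\Psi^{(k)})$ is a closed and bounded subset of the finite-dimensional space $\End(\mathcal K_\Psi^{(k)})$, so it is compact, and hence its closed subset $K_\Psi$ is compact. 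It is Hausdorff because it carries the subspace topology of a metric space. By Cartan's closed subgroup theorem it is moreover a Lie group, although only the topological structure is needed here.

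Next I will invoke the Haar existence and uniqueness theorem \cite[Thms.~2.10 and 2.20]{Folland1995Harmonic}. It gives a left-invariant Radon measure $\lambda$ on $K_\Psi$ that is nonzero and unique up to a positive scalar. Compactness gives $0<\lambda(K_\Psi)<\infty$: positivity holds because Haar measure is positive on nonempty open sets, and finiteness because Radon measures are finite on compact sets. So I set $\mu_{K_\Psi}:=\lambda/\lambda(K_\Psi)$. Uniqueness of the normalized measure then follows, since any two normalized left Haar measures differ by a scalar that must equal $1$.

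The remaining step, which is the only nonformal point, is right invariance. I will use the modular function $\Delta:K_\Psi\to\mathbb R_{>0}$, defined by $\lambda(Eg)=\Delta(g)\lambda(E)$. It is a continuous homomorphism, so its image is a compact subgroup of the multiplicative group $\mathbb R_{>0}$. The only such subgroup is $\{1\}$: if some $t\neq1$ lay in the image, the powers $t^n$ would be unbounded or would accumulate at $0$, contradicting compactness. Alternatively, take $E=K_\Psi$ directly: $\lambda(K_\Psi g)=\lambda(K_\Psi)$ is finite and nonzero, which forces $\Delta(g)=1$. Either way $\Delta\equiv1$, so $\mu_{K_\Psi}$ is also right invariant and hence bi-invariant.

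Finally, I will note that inversion invariance, $\mu_{K_\Psi}(E^{-1})=\mu_{K_\Psi}(E)$, follows as well. The pushforward of $\mu_{K_\Psi}$ under inversion is a right-invariant, hence left-invariant, normalized measure, so it equals $\mu_{K_\Psi}$ by uniqueness. This is the form that will be convenient for the later moment computations.
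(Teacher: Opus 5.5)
Your proposal is correct and takes essentially the same approach as the paper, whose proof is a one-line appeal to the unimodularity of compact Lie groups (Kirillov \S4.6, Folland \S2.4). You unpack that citation: you verify compactness of $K_\Psi$ directly from its definition as a closure in $U(\mathcal K_\Psi^{(k)})$, normalize the Haar measure, and show the modular function is trivial, which is precisely the standard argument behind the cited result.
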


\begin{proof}
This follows from every compact Lie group being unimodular \cite[§4.6]{kirillov_lie_groups}\cite[§2.4]{Folland1995Harmonic}.
\end{proof}

\begin{proposition}[Invariant inner products]
\label{Ad-invariant-inner-product}
The KLA \(\mathfrak l_\Psi\) (\Cref{def:krylov-lie-algebra}) admits an \(\operatorname{Ad}(K_\Psi)\)-invariant (\Cref{def:krylov-lie-group}) positive-definite
inner product.
\end{proposition}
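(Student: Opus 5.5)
The plan is the standard averaging (Weyl unitary trick) argument, using the Haar measure from \Cref{KLGUnimodular}. First, I would record that $K_\Psi$ acts on $\mathfrak l_\Psi$ by the adjoint action. Concretely, $K_\Psi\subset U(\mathcal K_\Psi)$ is the closure of the connected subgroup $\Pi_\Psi(\widetilde K_\Psi)$, so for $g\in K_\Psi$ we may set $\Ad_g(X)=gXg^{-1}$. I would show this preserves $\mathfrak l_\Psi$. For $g$ in the dense subgroup $\Pi_\Psi(\widetilde K_\Psi)$ this holds because the Lie algebra of that subgroup is $\mathfrak l_\Psi$. For general $g$, take a limit of $g_m\to g$: the set of $g$ with $g\mathfrak l_\Psi g^{-1}\subset\mathfrak l_\Psi$ is closed, since $\mathfrak l_\Psi$ is a closed (finite-dimensional) subspace of $\mathfrak u(\mathcal K_\Psi)$. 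Hence $\Ad:K_\Psi\to GL(\mathfrak l_\Psi)$ is a continuous homomorphism.

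Next comes the construction. I fix any positive-definite inner product $\langle\cdot,\cdot\rangle_0$ on the real vector space $\mathfrak l_\Psi$. Here there is a canonical choice: the restriction of the real Hilbert--Schmidt form $\langle X,Y\rangle_0=\operatorname{Re}\Tr(X^\dagger Y)=-\Tr(XY)$ from $\mathfrak u(\mathcal K_\Psi)$. I would then note that this form is already $\Ad$-invariant, since $\Tr\bigl((gXg^\dagger)^\dagger gYg^\dagger\bigr)=\Tr(X^\dagger Y)$ for unitary $g$. That gives the result immediately.

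For robustness, and to not rely on the matrix realization, I would also give the averaging construction
\[
\langle X,Y\rangle:=\int_{K_\Psi}\langle \Ad_g X,\Ad_g Y\rangle_0\,d\mu_{K_\Psi}(g).
\]
The checks are routine. The integrand is continuous in $g$, so the integral is finite on a compact group. Bilinearity and symmetry follow from linearity of the integral. Positive-definiteness holds because for $X\neq0$ the integrand $\|\Ad_gX\|_0^2$ is continuous and strictly positive, and Haar measure gives positive mass to open sets. Invariance follows from right-invariance of $\mu_{K_\Psi}$ after substituting $g\mapsto gh$, which is available by bi-invariance in \Cref{KLGUnimodular}.

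The only mildly delicate step is the first: making sure $\Ad$ of the closure, not just of the immersed subgroup, preserves $\mathfrak l_\Psi$. I would handle it by the closedness argument above. Alternatively, I would simply define the invariance with respect to the Hilbert--Schmidt form on all of $\mathfrak u(\mathcal K_\Psi)$ and restrict, noting that the subspace is preserved.
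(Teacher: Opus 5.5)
Your proof is correct. Your second construction, averaging $\langle\cdot,\cdot\rangle_0$ against $\mu_{K_\Psi}$ and using right-invariance, is exactly the paper's argument: the paper invokes the standard Haar-averaging trick and cites Hall.

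Your primary route is genuinely different and more direct. Since $K_\Psi$ is realized inside $U(\mathcal K_\Psi)$ and $\mathfrak l_\Psi\subset\mathfrak u(\mathcal K_\Psi)$, the restricted real Hilbert--Schmidt form $-\Tr(XY)$ is already $\Ad$-invariant, so no integration is needed. This choice also matches the paper's later use of Hilbert--Schmidt orthogonal projections $\Pi_{\mathfrak z},\Pi_j$ in the moment formulas. With it, the orthogonality of the reductive splitting and the equivariance of those projections come for free. The averaging route, by contrast, does not depend on the matrix realization, and it would work for any compact group acting continuously on $\mathfrak l_\Psi$.

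Your check that the closure $K_\Psi=\overline{\Pi_\Psi(\widetilde K_\Psi)}$ still normalizes $\mathfrak l_\Psi$ fills a step the paper passes over. Because $\Pi_\Psi(\widetilde K_\Psi)$ need not be closed, $\operatorname{Lie}(K_\Psi)$ can strictly contain $\mathfrak l_\Psi$, for example via an irrational winding in a central torus. So the phrase ``$K_\Psi$ acts on $\mathfrak l_\Psi$ via the adjoint representation'' really does need your closed-normalizer argument.
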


\begin{proof}
Let \(K_\Psi\) act on \(\mathfrak l_\Psi\) via the adjoint representation. Since \(K_\Psi\) is compact,
the standard averaging argument with normalized Haar measure turns any positive-definite inner
product on \(\mathfrak l_\Psi\) into an \(\operatorname{Ad}(K_\Psi)\)-invariant positive-definite inner
product; see Hall \cite[Ch.~4]{HallLieGroups2015}.
\end{proof}

\begin{proposition}[Reductivity of the KLA]
\phantomsection
\label{ReductiveKLA}
The Krylov-Lie algebra \(\mathfrak l_\Psi\) of \Cref{def:krylov-lie-algebra} is reductive. More precisely,
\[
\mathfrak l_\Psi
=
\mathfrak z(\mathfrak l_\Psi)\oplus [\mathfrak l_\Psi,\mathfrak l_\Psi],
\]
where \(\mathfrak z(\mathfrak l_\Psi)\) is abelian and
\([\mathfrak l_\Psi,\mathfrak l_\Psi]\) is semisimple of compact type.
Moreover, this decomposition is orthogonal with respect to any
\(\operatorname{Ad}(K_\Psi)\)-invariant positive-definite inner product on
\(\mathfrak l_\Psi\).
\end{proposition}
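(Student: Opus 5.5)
The plan is to use the Ad-invariant positive-definite inner product from \Cref{Ad-invariant-inner-product}, which we denote \(\langle\cdot,\cdot\rangle\). Differentiating \(\langle \operatorname{Ad}_{\exp(tY)}X,\operatorname{Ad}_{\exp(tY)}Z\rangle=\langle X,Z\rangle\) at \(t=0\) gives ad-invariance:
\[
\langle [Y,X],Z\rangle=-\langle X,[Y,Z]\rangle
\qquad\text{for all } X,Y,Z\in\mathfrak l_\Psi .
\]
This step needs \(\mathfrak l_\Psi\) to be the Lie algebra of \(K_\Psi\), or at least \(\operatorname{Ad}(K_\Psi)\)-stable with \(\exp(\mathfrak l_\Psi)\subset K_\Psi\). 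That holds by \Cref{def:krylov-lie-group}: \(\mathfrak l_\Psi\subset\operatorname{Lie}(K_\Psi)\), and it is stable under \(\operatorname{Ad}\) of the dense subgroup \(\Pi_\Psi(\widetilde K_\Psi)\), hence under its closure by continuity.

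Alternatively, and more concretely, one can avoid the group entirely. In the anti-Hermitian convention of \Cref{rem:VQAConventions}, \(\mathfrak l_\Psi\subset\mathfrak u(\mathcal K_\Psi)\), so the form \(\langle X,Y\rangle=-\Tr(XY)\) is real, positive definite, and ad-invariant by cyclicity of the trace. I would present this as the primary route, since it sidesteps subtleties with the closure.

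With an ad-invariant inner product in hand, the standard argument runs as follows.
\begin{enumerate}
\item For any ideal \(\mathfrak a\subset\mathfrak l_\Psi\), the orthogonal complement \(\mathfrak a^\perp\) is again an ideal. Indeed, for \(X\in\mathfrak a^\perp\), \(Y\in\mathfrak l_\Psi\) and \(Z\in\mathfrak a\), we have \(\langle [Y,X],Z\rangle=-\langle X,[Y,Z]\rangle=0\).
\item By induction on dimension, \(\mathfrak l_\Psi\) decomposes as an orthogonal direct sum of ideals that are either one-dimensional (hence abelian) or simple.
\item Collect the one-dimensional summands into \(\mathfrak a_0\) and the simple ones into \(\mathfrak s\). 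Then \([\mathfrak l_\Psi,\mathfrak l_\Psi]=\mathfrak s\), because each simple ideal equals its own derived algebra and brackets between distinct ideals vanish.
\item Likewise \(\mathfrak z(\mathfrak l_\Psi)=\mathfrak a_0\). A central element has a component in each simple ideal that is central in that ideal, and the center of a simple algebra is zero.
\end{enumerate}
This yields \(\mathfrak l_\Psi=\mathfrak z(\mathfrak l_\Psi)\oplus[\mathfrak l_\Psi,\mathfrak l_\Psi]\) with the center abelian and the derived algebra semisimple.

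The compact-type claim follows because the Killing form \(B\) of \([\mathfrak l_\Psi,\mathfrak l_\Psi]\) is negative definite. For \(X\) in the derived algebra, \(\operatorname{ad}X\) is skew-symmetric with respect to \(\langle\cdot,\cdot\rangle\), so \(B(X,X)=\Tr(\operatorname{ad}X^2)=-\|\operatorname{ad}X\|_{HS}^2\le 0\). Equality forces \(X\) to be central, hence zero.

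It remains to show the decomposition is orthogonal for \emph{any} \(\operatorname{Ad}(K_\Psi)\)-invariant inner product, not just the one used to build it. Fix such an inner product; by the differentiation above it is ad-invariant. For \(Z\in\mathfrak z\) and \(X,Y\in\mathfrak l_\Psi\),
\[
\langle Z,[X,Y]\rangle=-\langle [X,Z],Y\rangle=0 .
\]
So \(\mathfrak z\perp[\mathfrak l_\Psi,\mathfrak l_\Psi]\) independently of the choice.

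I expect the main obstacle to be the bookkeeping for the group \(K_\Psi\). Since \(K_\Psi\) is defined as a closure, its Lie algebra may strictly contain \(\mathfrak l_\Psi\). The point to check is that \(\operatorname{Ad}(K_\Psi)\) still preserves \(\mathfrak l_\Psi\) and that invariance differentiates along \(\mathfrak l_\Psi\) directions. I would handle this by noting that \(\exp(t\,d\Pi(Y))\in K_\Psi\) for \(Y\in\mathfrak l_\Psi\), and that \(\operatorname{Ad}\)-stability passes to closures because \(\mathfrak l_\Psi\) is a closed subspace.
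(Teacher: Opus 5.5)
Your proof is correct. It differs from the paper's mainly in how the reductive splitting is obtained. The paper argues in two lines: \(\mathfrak l_\Psi\) is a compact Lie algebra because \(K_\Psi\) is compact, and the structure theorem for compact Lie algebras then gives \(\mathfrak l_\Psi=\mathfrak z(\mathfrak l_\Psi)\oplus[\mathfrak l_\Psi,\mathfrak l_\Psi]\) with semisimple derived algebra.

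You instead take the explicit ad-invariant form \(-\Tr(XY)\) on \(\mathfrak u(\mathcal K_\Psi)\) and run the orthogonal-complement-of-ideals induction. This reproves that structure theorem from scratch. It also produces the finer splitting into simple ideals that \Cref{sec:moment-formulas} later relies on, and it gives compact type directly from negative definiteness of the Killing form.

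The orthogonality clause is argued the same way in both (\(\langle [X,Y],Z\rangle=\langle X,[Y,Z]\rangle=0\) for central \(Z\)). The paper uses the infinitesimal invariance without comment, whereas you derive it from \(\operatorname{Ad}(K_\Psi)\)-invariance. You also check that \(\operatorname{Ad}(K_\Psi)\) preserves \(\mathfrak l_\Psi\), even though \(K_\Psi\) is a closure whose Lie algebra may strictly contain \(\mathfrak l_\Psi\). The paper's phrase ``since \(K_\Psi\) is compact, \(\mathfrak l_\Psi\) is a compact Lie algebra'' passes over this point; it remains true, because a subalgebra of a compact Lie algebra is compact.

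The paper's route buys brevity by outsourcing to a standard citation. Yours is self-contained, and it depends on the group \(K_\Psi\) only for the final ``any \(\operatorname{Ad}(K_\Psi)\)-invariant inner product'' clause.
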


\begin{proof}
Since \(K_\Psi\) is compact, \(\mathfrak l_\Psi\) is a compact
Lie algebra. By the structure theorem for compact Lie algebras \cite[§6.4 and §6.8]{kirillov_lie_groups},
\[
\mathfrak l_\Psi
=
\mathfrak z(\mathfrak l_\Psi)\oplus [\mathfrak l_\Psi,\mathfrak l_\Psi],
\]
and \([\mathfrak l_\Psi,\mathfrak l_\Psi]\) is semisimple.

Now let \(\langle\cdot,\cdot\rangle\) be any
\(\operatorname{Ad}(K_\Psi)\)-invariant positive-definite inner product on
\(\mathfrak l_\Psi\). If \(Z\in \mathfrak z(\mathfrak l_\Psi)\), then for all
\(X,Y\in\mathfrak l_\Psi\),
\[
\langle [X,Y],Z\rangle
=
\langle X,[Y,Z]\rangle
=
0,
\]
since \([Y,Z]=0\). Hence
\(
[\mathfrak l_\Psi,\mathfrak l_\Psi]\subset
\mathfrak z(\mathfrak l_\Psi)^\perp
\).
Because
\(
\mathfrak l_\Psi=
\mathfrak z(\mathfrak l_\Psi)\oplus[\mathfrak l_\Psi,\mathfrak l_\Psi]
\),
the decomposition is orthogonal.
\end{proof}

\begin{theorem}[Restricted DLA action on the Krylov--Lie algebra]
\label{thm:restricted-dla-action}
Let \(\mathfrak{l}_\Psi(\mathcal S)\) be a Krylov-Lie algebra in the sense of Definition \ref{def:krylov-lie-algebra}, with associated Krylov-Lie subspace \(\mathcal K_\Psi(\mathcal S)\subset \mathcal H\) (\Cref{def:krylov-lie-subspace}),
orthogonal projector \(P_\Psi\) (\Cref{def:KrylovAlgebra}), and compressed generators
\[
\widetilde H_j = P_\Psi H_j P_\Psi, \qquad j=1, \dots , n.
\]
Next, let
\[
\mathfrak g:=\operatorname{Lie} \langle H_1,\dots,H_n\rangle = \mathfrak z(\mathfrak g) \oplus [\mathfrak g, \mathfrak g] \subset \mathfrak u(d)
\]
be the reductive dynamical Lie algebra generated by the Hermitian circuit generators, with the
convention of \Cref{rem:VQAConventions}.

Then the following are equivalent:
\begin{enumerate}
\item
\[
H_j \mathcal K_\Psi \subset \mathcal K_\Psi
\qquad \forall \, j=1,\dots,n.
\]
\item Every \(X\in\mathfrak g\) preserves \(\mathcal K_\Psi\), so restriction defines a Lie algebra
homomorphism 
\[
\rho_\Psi:\mathfrak g\to \mathfrak u(\mathcal K_\Psi),\qquad \rho_\Psi(X)=X|_{\mathcal K_\Psi}.
\]
\item The restricted action of \(\mathfrak g\) on \(\mathcal K_\Psi\) is represented
by the Lie algebra homomorphism
\[
\rho_\Psi:\mathfrak g\to \mathfrak u(\mathcal K_\Psi),\qquad
\rho_\Psi(X)=P_\Psi X P_\Psi,
\]
and
\[
\rho_\Psi(\mathfrak g)=\mathfrak l_\Psi(\mathcal S).
\]
\end{enumerate}
Thus by the fact that $\mathfrak l_\Psi$ and $\mathfrak{g}$ are both reductive Lie algebras, $\mathfrak l_\Psi$ must contain an abelian subalgebra or a Lie subalgebra that is an image of the semisimple part of \(\mathfrak{g}\).
\end{theorem}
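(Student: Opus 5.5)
I will prove the cycle of implications (1) $\Rightarrow$ (2) $\Rightarrow$ (3) $\Rightarrow$ (1) and then read off the closing structural remark.

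\emph{From (1) to (2).} The set $\mathfrak s:=\{X\in\mathfrak u(d)\,:\,X\mathcal K_\Psi\subset\mathcal K_\Psi\}$ is a real linear subspace closed under the commutator. Indeed, if $X,Y$ preserve $\mathcal K_\Psi$, then so do $XY$, $YX$, and hence $[X,Y]$. By (1), $\mathfrak s$ contains every $iH_j$. Since $\mathfrak g$ is the smallest Lie algebra containing these generators, $\mathfrak g\subset\mathfrak s$. For $X\in\mathfrak g$, the restriction $X|_{\mathcal K_\Psi}$ is anti-Hermitian on $\mathcal K_\Psi$, because the restriction of an anti-Hermitian operator to an invariant subspace is anti-Hermitian. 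Restriction also preserves products on invariant subspaces, so $[X,Y]|_{\mathcal K_\Psi}=[X|_{\mathcal K_\Psi},Y|_{\mathcal K_\Psi}]$. Hence $\rho_\Psi$ is a Lie algebra homomorphism.

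\emph{From (2) to (3).} If every $X\in\mathfrak g$ preserves $\mathcal K_\Psi$, then $XP_\Psi=P_\Psi XP_\Psi$. Because $X$ is anti-Hermitian, it also preserves $\mathcal K_\Psi^\perp$, so $P_\Psi$ commutes with $X$. On $\mathcal K_\Psi$, the compression $P_\Psi XP_\Psi$ therefore coincides with $X|_{\mathcal K_\Psi}$, which is the basis-free identification noted in the remark after \Cref{def:KrylovAlgebra}. In particular $\widetilde H_j=\rho_\Psi(iH_j)$. Then
\[
\rho_\Psi(\mathfrak g)=\rho_\Psi\bigl(\operatorname{Lie}\langle iH_j\rangle\bigr)=\operatorname{Lie}\langle\rho_\Psi(iH_j)\rangle=\operatorname{Lie}\langle\widetilde H_j\rangle=\mathfrak l_\Psi(\mathcal S).
\]
The middle equality holds because a homomorphic image of the Lie algebra generated by a set is the Lie algebra generated by the image of that set: the image of the span of nested brackets is the span of nested brackets of the images.

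\emph{From (3) to (1).} This is the step I expect to be the main obstacle, because (3) is phrased as an assertion that the compression map $X\mapsto P_\Psi XP_\Psi$ is a homomorphism, not directly as an invariance statement. I will use the fact that compression is a homomorphism on all of $\mathfrak g$. Write $Q=I-P_\Psi$. For $X,Y\in\mathfrak g$,
\[
P[X,Y]P-[PXP,PYP]=PXQYP-PYQXP.
\]
The homomorphism property forces this to vanish for all pairs. That alone does not kill the off-diagonal blocks $QXP$, so I will add a separate argument using the definition of the subspace. By \Cref{def:KrylovSubspace}, $\mathcal K_\Psi$ contains every $\psi_j$ and the vectors $w\psi_j$. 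If $\mathcal K_\Psi$ is at its grade degree, then it is closed under the word action, which gives (1) directly. In the general case I expect the equivalence to require reading (3) as including that $\rho_\Psi$ is the restriction representation, in the sense of (2). So I will interpret (3) as asserting that $\rho_\Psi$ is a \emph{representation} of $\mathfrak g$ on $\mathcal K_\Psi$, meaning that $\rho_\Psi(X)v=Xv$ for $v\in\mathcal K_\Psi$. Under that reading, (1) is immediate by taking $X=iH_j$. I will flag this interpretive point explicitly in the proof.

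\emph{Closing remark.} Given (3), $\mathfrak l_\Psi\cong\mathfrak g/\ker\rho_\Psi$. The kernel $\ker\rho_\Psi$ is an ideal of the reductive algebra $\mathfrak g$, so it is the sum of a central subspace and a sum of simple ideals. Hence
\[
\mathfrak l_\Psi\cong\rho_\Psi(\mathfrak z(\mathfrak g))\oplus\rho_\Psi([\mathfrak g,\mathfrak g]).
\]
Here $\rho_\Psi(\mathfrak z(\mathfrak g))$ is an abelian subalgebra and $\rho_\Psi([\mathfrak g,\mathfrak g])$ is a sum of images of simple ideals of $\mathfrak g$. This is consistent with \Cref{ReductiveKLA}.
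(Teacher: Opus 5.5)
Your proof follows the paper's route step for step: the same invariance-subalgebra argument for (1)$\Rightarrow$(2), the same ``image of the generated algebra is generated by the images'' argument for (2)$\Rightarrow$(3), and for (3)$\Rightarrow$(1) the paper likewise just reads (3) as presupposing that $\rho_\Psi$ is the restricted action. Your interpretive flag is therefore well placed: under the literal compression reading the implication genuinely fails, e.g.\ for a single generator $H$, where $X\mapsto P_\Psi XP_\Psi$ is trivially a homomorphism of the abelian $\mathfrak g$ onto $\mathfrak l_\Psi$ while $\mathcal K_\Psi=\operatorname{span}\{\psi,H\psi\}$ need not be $H$-invariant.

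Do drop your grade-degree aside, though, because it is false. At grade degree $\mathcal K_\Psi$ contains every $w\psi_j$, but that is closure of the seed orbit, not invariance under the $H_j$. For instance, take the spin-$1$ representation with generators $J_x,J_y$ and seed $|1,1\rangle$: one gets $\mathcal K_\Psi=\operatorname{span}\{|1,1\rangle,|1,0\rangle\}$ already at grade one, and it is not $J_x$-invariant. So the grade-degree case does not yield (1) on its own; your general reading of (3) covers that case anyway.
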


\begin{proof}
We prove \((1)\Rightarrow(2)\Rightarrow(3)\Rightarrow(1)\).

\((1)\Rightarrow(2)\):
Define
\[
\mathfrak I(\mathcal K_\Psi):=\{X\in\mathfrak u(d) \, \vert \, X(\mathcal K_\Psi)\subset \mathcal K_\Psi\}.
\]
This is a real Lie subalgebra of \(\mathfrak u(d)\): it is closed under real linear
combinations, and if \(X,Y\in \mathfrak I(\mathcal K_\Psi)\), then
\[
[X,Y](\mathcal K_\Psi)=XY(\mathcal K_\Psi)-YX(\mathcal K_\Psi)\subset \mathcal K_\Psi.
\]
Hence \([X,Y]\in \mathfrak I(\mathcal K_\Psi)\).

By hypothesis, \(H_j\in \mathfrak I(\mathcal K_\Psi)\) for all \(j\). Since \(\mathfrak g\) is the smallest real Lie
subalgebra of \(\mathfrak u(d)\) containing \(\{H_1,\dots,H_n\}\), it follows that
\[
\mathfrak g\subset \mathfrak I(\mathcal K_\Psi).
\]
Therefore every \(X\in\mathfrak g\) preserves \(\mathcal K_\Psi\), so restriction defines
\[
\rho_\Psi(X):=X|_{\mathcal K_\Psi}\in \mathfrak u(\mathcal K_\Psi).
\]
For \(X,Y\in\mathfrak g\), because both $X, Y$ preserve $\mathcal K_\Psi$ we note that $\forall \:v \in \mathcal{K}_\Psi$,
\[
[\rho_\Psi(X),\rho_\Psi(Y)](v) = \rho_\Psi(X)\rho_\Psi(Y)(v) - \rho_\Psi(Y)\rho_\Psi(X)(v) =X(Yv)-Y(Xv)=[X,Y]v.
\]
Then by definition,
\[
\rho_\Psi([X,Y])=[X,Y]|_{\mathcal K_\Psi}
=
[\rho_\Psi(X),\rho_\Psi(Y)],
\]
so \(\rho_\Psi\) is a Lie algebra homomorphism.

\((2)\Rightarrow(3)\):
Identify the projector $P_\Psi$ onto the Krylov subspace. The matrix of \(X|_{\mathcal K_\Psi}\) is
\(P_\Psi X P_\Psi\), so the restricted representation is given by
\[
\rho_\Psi(X)=P_\Psi X P_\Psi.
\]
In particular,
\[
\rho_\Psi(H_j)=P_\Psi H_j P_\Psi=\widetilde H_j.
\]
Hence \(\rho_\Psi(\mathfrak g)\) is a real Lie subalgebra containing
\(\widetilde H_1,\dots,\widetilde H_n\), so
\[
\mathfrak l _\Psi(\mathcal S)\subset \rho_\Psi(\mathfrak g).
\]
Conversely, since \(\rho_\Psi\) is a Lie algebra homomorphism and \(\mathfrak g\) is generated
by \(\{H_1,\dots,H_n\}\), the image \(\rho_\Psi(\mathfrak g)\) is generated by
\(\rho_\Psi(H_1),\dots,\rho_\Psi(H_n)\), i.e. by
\(\{\widetilde H_1,\dots,\widetilde H_n\}\). Therefore
\[
\rho_\Psi(\mathfrak g)\subset \mathfrak l _\Psi(\mathcal S),
\]
and it is seen that
\[
\rho_\Psi(\mathfrak g)=\mathfrak l_\Psi(\mathcal S).
\]
Moreover, since \(\rho_\Psi\) is a Lie algebra homomorphism of reductive Lie algebras (by \Cref{ReductiveKLA}), by the First Isomorphism Theorem and the fact that $\ker \rho_\Psi$ is an ideal, we have that 
\[
\rho_\Psi( \mathfrak g ) \cong \mathfrak z( \mathfrak g) / (\mathfrak z( \mathfrak g) \cap \ker  \rho_\Psi) \oplus [\mathfrak g, \mathfrak g] / ([\mathfrak g, \mathfrak g] \cap \ker \rho_\Psi ).  
\]
Thus, if $\ker \rho_\Psi $ does not contain the entire center of $\mathfrak g$, then $\mathfrak z( \mathfrak g) / (\mathfrak z( \mathfrak g) \cap \ker  \rho_\Psi)$ is isomorphic to an abelian subalgebra of $\mathfrak g$, and if $\ker \rho_\Psi $ does contain the entire center of $\mathfrak g$, then by standard theory of semisimple Lie algebras, the quotient $[\mathfrak g, \mathfrak g] / ([\mathfrak g, \mathfrak g] \cap \ker \rho_\Psi )$ is isomorphic to a semisimple ideal in $\mathfrak g$ \cite[§6.7]{kirillov_lie_groups}\cite[Ch.~7]{HallLieGroups2015}. That is to say, we have that $\mathfrak l _\Psi$ is isomorphic to an algebra that contains a nontrivial abelian subalgebra of $\mathfrak{g}$ or is itself isomorphic to a semisimple ideal of \(\mathfrak{g}\).

\((3)\Rightarrow(1)\):
Under (3), \(\rho_\Psi\) is the matrix representation of the restricted action of
\(\mathfrak g\) on \(\mathcal K_\Psi\). In particular, for each generator \(H_j\), the restriction
\(H_j|_{\mathcal K_\Psi}\) is well defined. Consequently,
\[
H_j \mathcal K_\Psi\subset \mathcal K_\Psi
\qquad \forall \, j=1,\dots,n.
\]
\end{proof}

\begin{remark}[Restriction versus compression]
\label{rem:restriction-vs-compression}
The hypothesis of Proposition~\ref{thm:restricted-dla-action} is substantially stronger
than merely having a Krylov-Lie state space \(\mathcal{K}_{\Psi}\subset\mathcal H\). If \(\mathcal{K}_{\Psi}\) fails to be
invariant under the original generators, one may still form compressed operators
\[
\widetilde H_j := P_{\Psi} H_j P_\Psi \in \operatorname{End}(\mathcal{K}_{\Psi}),
\]
where \(P_\Psi\) denotes the orthogonal projector onto \(\mathcal{K}_{\Psi}\), and then study the Lie algebra
generated by the \(\widetilde H_j\). However, this yields a \emph{compressed} or
\emph{reduced} Lie model on \(\mathcal{K}_{\Psi}\), not the restriction of the original DLA to an invariant
subspace.

The relevance of this hypothesis is that it educates us that compression does not in general preserve Lie brackets unless the subspace is invariant. Given the importance of dimension matching in constructing a suitable Lie group approximator to $M$, since invariance forces a homomorphism and necessarily that $\mathfrak{l}_\Psi$ is isomorphic to a subalgebra of $\mathfrak g$ by the proof of \Cref{thm:restricted-dla-action}, greatly restricting the possible dimension(s) of $\mathfrak{l}_\Psi$, we should not expect most usable Krylov-Lie algebras to possess this sort of invariance. Even so, there is a much weaker condition that will grant enough preservation of the Lie bracket under projection to derive very strong error bounds: being a prefix KLA.
\end{remark}

\begin{theorem}[Word preservation for prefix Krylov-Lie algebras]
\phantomsection
\label{thm:word-preservation-prefix}
Let $\mathfrak p^{(k)}_\Psi$ be a prefix Krylov-Lie algebra with prefix Krylov subspace $\mathcal{P}^{(k)}_{\Psi}$ as in \Cref{def:prefix-subspace-kla}. Then for any prefix words $w(i_1, ..., i_r) \in \mathcal{W}_{\leq k}(\{H_1, \dots, H_n\})=\mathcal{W}_{\leq k}$ and $\widetilde w(i_1, ..., i_r) \in \widetilde{\mathcal{W}}_{\leq k}(\{\widetilde H_1, \dots, \widetilde H_n\})=\widetilde{\mathcal{W}}_{\leq k}$ where $r \leq k$ (see \Cref{def:CircWords}, \Cref{def:KrylovSubspace}, and \Cref{def:KrylovAlgebra}), we have for any seed $\psi \in \Psi$ that
\[
w(i_1, ..., i_r) \psi = \widetilde w(i_1, ..., i_r) \psi.
\]
Additionally, since any commutator word to grade $k$ can be written as a linear combination of prefixes of at most depth $k$, the same statement holds for commutator words.
\end{theorem}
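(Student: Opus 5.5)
The plan is an induction on the word length $r$. Read with matrix multiplication as the product, a prefix word is $w(i_1,\dots,i_r)=H_{i_r}\cdots H_{i_2}H_{i_1}$, and its compressed counterpart is $\widetilde w(i_1,\dots,i_r)=\widetilde H_{i_r}\cdots\widetilde H_{i_1}$ with $\widetilde H_j=P_\Psi H_j P_\Psi$ and $P_\Psi$ the orthogonal projector onto $\mathcal P^{(k)}_\Psi$. The one fact we need from \Cref{def:prefix-subspace-kla} and \Cref{def:KrylovSubspace} is this: for every $0\le t\le k$, every seed $\psi\in\Psi$, and every prefix word $u$ of length $t$ (with $u=I$ when $t=0$), the vector $u\psi$ lies in $\mathcal P^{(k)}_\Psi$. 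The seed itself is included in the subspace through the identity term. Consequently $P_\Psi u\psi=u\psi$ for all such $u$ and $\psi$.

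The induction statement is: for $0\le t\le r\le k$,
\[
\widetilde H_{i_t}\cdots\widetilde H_{i_1}\psi = H_{i_t}\cdots H_{i_1}\psi .
\]
The base case $t=0$ is trivial. For the step, write $v_{t}:=H_{i_t}\cdots H_{i_1}\psi$ and assume the compressed product gives $v_t$. Then
\[
\widetilde H_{i_{t+1}}v_t=P_\Psi H_{i_{t+1}}P_\Psi v_t=P_\Psi H_{i_{t+1}}v_t=P_\Psi v_{t+1}=v_{t+1}.
\]
The second equality holds because $v_t\in\mathcal P^{(k)}_\Psi$. The last equality holds because $v_{t+1}$ is a prefix word of length $t+1\le k$ applied to $\psi$, so it also lies in $\mathcal P^{(k)}_\Psi$. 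This projection step is the only place where $r\le k$ enters. Taking $t=r$ gives $w\psi=\widetilde w\psi$.

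For the commutator statement, expand the nested commutator $[H_{i_r},[\cdots,[H_{i_2},H_{i_1}]\cdots]]$, including the implicit factors of $i$ from \Cref{rem:VQAConventions}, as a signed sum of products $H_{\sigma(r)}\cdots H_{\sigma(1)}$. Each product is a permutation of the letters $i_1,\dots,i_r$ and has length exactly $r\le k$. Doing the same expansion for the compressed generators gives the same coefficients on the matching compressed products, because the expansion is purely formal in the free associative algebra. By linearity, the identity for each product term, which is the previous paragraph applied to the permuted multi-index, transfers to the commutator.

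The main point to check is the step $P_\Psi v_{t+1}=v_{t+1}$. It needs $\mathcal P^{(k)}_\Psi$ to contain every product word of length up to $k$ in every order of generators applied to each seed, not only those words arising from the particular chosen multi-index. This holds because the definition takes all $(i_1,\dots,i_r)\in\{1,\dots,n\}^r$. Under the plain commutator Krylov–Lie subspace this step fails, and that failure is what the prefix construction is designed to avoid.
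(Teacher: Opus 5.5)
Your proposal is correct and matches the paper's proof: the paper runs the same induction on the prefix vectors $v_a = H_{i_a}\cdots H_{i_1}\psi$, using $P^{(k)}_\Psi v_a = v_a$ for every $a \le r \le k$ to collapse $P^{(k)}_\Psi H_{i_a} P^{(k)}_\Psi v_{a-1}$ to $v_a$. It then extends to commutator words by expanding with $[A,B]=AB-BA$ and using linearity. One minor point: for a single product word you only need that word's own prefixes to lie in $\mathcal P^{(k)}_\Psi$, and the full family of permuted words matters only for the commutator extension.
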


\begin{proof}
Let $
P^{(k)}_\Psi \colon \mathcal{H} \to \mathcal{P}^{(k)}_\Psi$
denote the orthogonal projector onto the prefix Krylov subspace
\[
\mathcal{P}^{(k)}_\Psi
=
\operatorname{span}
\left\{
H_{i_r}\cdots H_{i_1}\psi
:
0 \leq r \leq k,\;
1 \leq i_1,\ldots,i_r \leq n,\;
\psi \in \Psi
\right\},
\]
where $r=0$ corresponds to the identity operator. Recall by \Cref{def:KrylovAlgebra} that
\[
\widetilde H_j:=P^{(k)}_\Psi H_jP^{(k)}_\Psi
\]
for every \(j=1,\ldots,n\).  Set
\[
w(i_1, ..., i_r)=H_{i_r}\cdots H_{i_1}
\in
\mathcal{W}_{\leq k},
\]
where \(0 \leq r \leq k\), and say
\[
\widetilde w(i_1, ..., i_r)
=
\widetilde H_{i_r}\cdots \widetilde H_{i_1}
\in
\widetilde{\mathcal{W}}_{\leq k}
\]
is the corresponding compressed prefix word with the same ordered index sequence.  For \(0 \leq a \leq r\), write
\[
v_a:=H_{i_a}\cdots H_{i_1}\psi,
\]
where \(v_0:=\psi\), and define $\widetilde v_a$ analogously.

By the definition of \(\mathcal{P}^{(k)}_\Psi\), each prefix vector
\(v_a\) belongs to \(\mathcal{P}^{(k)}_\Psi\), since \(a\leq r\leq k\).
Therefore,
\[
P^{(k)}_\Psi v_a=v_a
\]
for every \(0\leq a\leq r\). We now proceed by induction on \(a\).  For \(a=0\), we have $\widetilde v_0=\psi=v_0$. Next, suppose that
\[
\widetilde v_{a-1}=v_{a-1}
\]
for some \(1\leq a\leq r\).  Then
\[
\begin{aligned}
\widetilde v_a
&=
\widetilde H_{i_a}\widetilde v_{a-1} \\
&=
P^{(k)}_\Psi H_{i_a} P^{(k)}_\Psi \widetilde v_{a-1} \\
&=
P^{(k)}_\Psi H_{i_a} P^{(k)}_\Psi v_{a-1} \\
&=
P^{(k)}_\Psi H_{i_a}v_{a-1} \\
&=
P^{(k)}_\Psi v_a \\
&=
v_a.
\end{aligned}
\]
By induction, $\widetilde v_r=v_r$, proving the first claim by \Cref{def:CircWords}. From this identity the final claim follows after expanding grade-$k$ commutator words into finite linear combinations of prefixes with the formula
\[
[A,B]=AB-BA.
\]
\end{proof}

\section{Dimension-Matching and Seed Dependence}\label{sec:dim-matching-seeds}

To elucidate some of the ever-essential dimensional structure of Krylov-Lie algebras (and correspondingly, Krylov-Lie groups), we shall spend a reasonable portion of this text on the algebraic geometry of Krylov-Lie subspaces and Krylov-Lie algebras. Although this sort of language is uncommon in the VQA literature, our arguments rely only on basic facts about polynomial or real-analytic maps, algebraic sets, and genericity, and an introductory exposure to the subject like in the book of Beltrametti et al. or that of Harris  \cite{BeltramettiAlgebraicGeometry2009, Harris1992} should prove sufficient to understand the contents of this section. Nonetheless, a more sophisticated analysis of the algebro-geometric properties of the sets constructed hereafter may be warranted in future work to improve our grasp on the practicality of finding dimension-matched KLAs and KLGs.

Following this excursion, we shall also derive some generic bounds for the grade-$k$ Krylov-Lie subspace dimension as well as the KLA dimension with Witt's formula.

\begin{theorem}[Determinantal stratification of seed-dependent Krylov--Lie dimensions]
\label{thm:determinantal-krylov-dim}
Fix a depth \(k\), a generator set \(\mathcal S=\{H_1,\dots,H_n\}\subset \mathrm{End}(\mathbb C^d)\),
and a block size \(s\ge 1\) (\Cref{def:seeds-and-block-seeds} and \Cref{def:krylov-lie-algebra}). Let \(\mathcal W_{\le k}\) be the finite set of Lie words in \(\mathcal S\) of
depth at most \(k\) (\Cref{def:CircWords}) including $\mathcal W_0=I$ (\Cref{rmk:words-convention}), and define
\[
\mathcal E_k:(\mathbb C^d)^s \setminus\{0\} \to \mathrm{Mat}_{d\times sN_k}(\mathbb C)
\]
by
\[
\mathcal E_k(\Psi)
=
\bigl[w(\psi_a)\bigr]_{w\in\mathcal W_{\le k},\,1\le a\le s},
\qquad
\Psi=(\psi_1,\dots,\psi_s),
\]
where \(N_k=\lvert \mathcal W_{\le k}\rvert\) and $N_0=\lvert \mathcal W_{0}\rvert=1$. Then for every \(r\ge 0\), the locus
\[
X_{\le r}
:=
\{\Psi\in(\mathbb C^d)^s \setminus\{0\} \, \vert \,\dim \mathcal K^{(k)}_\Psi(\mathcal S)\le r\}
\]
is a Zariski closed affine algebraic set, cut out by the vanishing of all
\((r+1)\times(r+1)\) minors of \(\mathcal E_k(\Psi)\). Equivalently, the locus
\[
X_{\ge r}
:=
\{\Psi\in(\mathbb C^d)^s \setminus\{0\} \, \vert \,\dim \mathcal K^{(k)}_\Psi(\mathcal S)\ge r\}
\]
is Zariski open.
\end{theorem}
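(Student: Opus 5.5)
The plan is to express the Krylov--Lie subspace as the column space of the evaluation matrix $\mathcal E_k(\Psi)$, and then use the fact that a rank condition is cut out by minors. First I will check that
\[
\mathcal K^{(k)}_\Psi(\mathcal S)=\operatorname{span}_{\mathbb C}\{w\psi_a \,\vert\, w\in I\cup\mathcal W_{\le k},\ 1\le a\le s\}
\]
is exactly the column span of $\mathcal E_k(\Psi)$. This holds by construction: the columns are the vectors $w(\psi_a)$ indexed by the same finite set of words together with $\mathcal W_0=I$. Consequently
\[
\dim\mathcal K^{(k)}_\Psi(\mathcal S)=\operatorname{rank}\mathcal E_k(\Psi).
\]

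Second, I will verify that $\Psi\mapsto\mathcal E_k(\Psi)$ is a polynomial map, in fact linear, from $(\mathbb C^d)^s\cong\mathbb C^{ds}$ to $\mathrm{Mat}_{d\times sN_k}(\mathbb C)$. Each word $w$ is a fixed matrix, obtained as a nested commutator of the $H_j$ with the implicit factors of $i$ from \Cref{rem:VQAConventions}. So each column $w\psi_a$ depends linearly on the coordinates of $\psi_a$. It follows that every $(r+1)\times(r+1)$ minor of $\mathcal E_k(\Psi)$ is a homogeneous polynomial of degree $r+1$ in the $ds$ complex coordinates of $\Psi$.

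Third, I will apply the standard linear-algebra fact that a matrix has rank at most $r$ if and only if all of its $(r+1)\times(r+1)$ minors vanish. When $r+1>\min(d,sN_k)$ there are no such minors, the condition is vacuous, and the locus is the whole domain. This shows
\[
X_{\le r}=\{\Psi \,\vert\, \text{all }(r+1)\text{-minors of }\mathcal E_k(\Psi)\text{ vanish}\},
\]
the common zero set of finitely many polynomials, and hence Zariski closed.

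The only subtlety is that the domain is written as $(\mathbb C^d)^s\setminus\{0\}$ rather than affine space. I will take the topology to be the subspace Zariski topology on this open subset of $\mathbb C^{ds}$. Then $X_{\le r}$ is the intersection of an affine algebraic set with the domain, which is closed in the domain. If one instead wants an affine algebraic set on the nose, one can take the zero locus in all of $\mathbb C^{ds}$. This is harmless, because $\Psi=0$ gives $\mathcal E_k(0)=0$ and therefore lies in every $X_{\le r}$ anyway. The condition that each $\psi_a\neq 0$ is likewise an intersection with a Zariski open set and does not affect closedness within the domain.

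Finally, the openness statement follows by taking complements, since
\[
X_{\ge r}=\text{domain}\setminus X_{\le r-1}.
\]
Explicitly, $X_{\ge r}$ is the set where some $r\times r$ minor is nonzero, which is a finite union of principal open sets. For $r=0$ the set is everything. I expect no real obstacle here: the main point needing care is the bookkeeping between affine space and the punctured domain. Beyond that, the proof is only the identification of the dimension with the rank of a matrix depending linearly on the seed.
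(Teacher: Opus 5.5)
Your proposal is correct and follows the paper's proof essentially line for line. You identify $\dim\mathcal K^{(k)}_\Psi(\mathcal S)$ with $\operatorname{rank}\mathcal E_k(\Psi)$, note that the entries are polynomial (indeed linear) in the coordinates of $\Psi$, apply the minor criterion for rank at most $r$, and obtain openness of $X_{\ge r}$ as the complement of $X_{\le r-1}$. Your extra bookkeeping about the punctured domain is a small but genuine sharpening: strictly, $X_{\le r}$ is the trace on $(\mathbb C^d)^s\setminus\{0\}$ of the affine zero locus of the minors rather than itself an affine algebraic set. The origin is not literally in $X_{\le r}$, but adding it back is harmless, as you say.
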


\begin{proof}
By construction,
\[
\dim \mathcal K^{(k)}_\Psi(\mathcal S)=\operatorname{rank}\mathcal E_k(\Psi).
\]
A matrix has rank at most \(r\) if and only if all of its \((r+1)\times(r+1)\) minors vanish.
Since each entry of \(\mathcal E_k(\Psi)\) is polynomial in the coordinates of \(\Psi\), each such
minor is a polynomial on \((\mathbb C^d)^s\). Hence
\[
X_{\le r}
\]
is the common zero locus of these minors, and therefore is Zariski closed, as is $X_{\le r-1}$ \cite[Ch.~1]{BeltramettiAlgebraicGeometry2009}\cite[Lecture~1]{Harris1992}. The complement of the latter,
\(X_{> r-1}=X_{\geq r}\), is Zariski open.
\end{proof}

\begin{corollary}[Positive-measure genericity]
\label{cor:positive-measure-genericity}
In the setting of Theorem~\ref{thm:determinantal-krylov-dim}, suppose there exists a block
seed \(\Psi_0\in (\mathbb C^d)^s \setminus\{0\}\) (see \Cref{def:seeds-and-block-seeds}) such that
\[
\dim \mathcal K^{(k)}_{\Psi_0}(\mathcal S)\ge r,
\]
where $\mathcal K^{(k)}_\Psi(\mathcal S)$ is the Krylov-Lie subspace of \Cref{def:krylov-lie-subspace}. Then the set
\[
X_{\ge r}
=
\{\Psi\in(\mathbb C^d)^s \setminus\{0\} \, \vert \,\dim \mathcal K^{(k)}_\Psi(\mathcal S)\ge r\}
\]
contains a nonempty Zariski open subset of \((\mathbb C^d)^s\). In particular,
\(X_{\ge r}\) has positive Lebesgue measure in \((\mathbb C^d)^s\cong \mathbb R^{2ds}\) \cite[§2.8]{bochnak1998realalgebraicgeometry}\cite{sottile2016realalgebraicgeometry}.
\end{corollary}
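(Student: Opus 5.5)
The plan is to combine Theorem~\ref{thm:determinantal-krylov-dim} with the elementary fact that a nonzero complex polynomial cannot vanish on a set of positive Lebesgue measure. First, I will observe that the rank condition extends to all of \((\mathbb C^d)^s\), including the origin. The matrix \(\mathcal E_k(\Psi)\) is defined by the same polynomial entries everywhere, so
\[
U:=\{\Psi\in(\mathbb C^d)^s \,\vert\, \operatorname{rank}\mathcal E_k(\Psi)\ge r\}
\]
is the complement of the common zero locus of all \(r\times r\) minors. It is therefore Zariski open in \((\mathbb C^d)^s\). Whenever \(r\ge 1\), the origin (which has rank \(0\)) is excluded automatically, so \(U\subset X_{\ge r}\). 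In the degenerate case \(r=0\), \(X_{\ge 0}\) is all of \((\mathbb C^d)^s\setminus\{0\}\), which is itself Zariski open, and the claim is immediate.

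Second, I will use \(\Psi_0\) to show that \(U\) is nonempty. Since \(\operatorname{rank}\mathcal E_k(\Psi_0)\ge r\), some \(r\times r\) minor \(m\) satisfies \(m(\Psi_0)\neq 0\). Hence \(m\) is a nonzero polynomial, and the basic open set \(D(m)=\{m\neq 0\}\subset U\subset X_{\ge r}\) is a nonempty Zariski open subset, which proves the first assertion.

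For the measure statement, I will view \(m\) as a real polynomial, or as a pair of real polynomials \(\operatorname{Re} m\) and \(\operatorname{Im} m\), on \(\mathbb R^{2ds}\). At least one of these is not identically zero. The zero set of a nonzero real polynomial on \(\mathbb R^{2ds}\) has Lebesgue measure zero. The standard argument is by induction on the number of variables using Fubini: for all but finitely many values of the last coordinate, the restricted polynomial is nonzero in the remaining variables. Alternatively, one can cite the real-algebraic-geometry references already given. It follows that \(\{m=0\}\) is null, so \(D(m)\) has full measure, and in particular positive measure. Therefore \(X_{\ge r}\supset D(m)\) has positive Lebesgue measure, and indeed full measure.

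The only step needing care is the measure-zero property of proper algebraic sets. It is standard, and I would invoke it with the Fubini induction sketched above rather than prove it in full. Everything else follows directly from Theorem~\ref{thm:determinantal-krylov-dim} and the handling of the origin.
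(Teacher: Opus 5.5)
Your proposal is correct and follows essentially the same route as the paper. A nonvanishing $r\times r$ minor $m$ of $\mathcal E_k$ at $\Psi_0$ gives the nonempty principal open set $\{m\neq 0\}\subset X_{\ge r}$, and positive (indeed full) measure follows because the zero locus of a nonzero polynomial on $\mathbb R^{2ds}$ is Lebesgue-null; your handling of the origin and the case $r=0$, and the Fubini induction, make explicit what the paper leaves implicit or cites.
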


\begin{proof}
If there exists \(\Psi_0\) with
\(\operatorname{rank}\mathcal E_k(\Psi_0)\ge r\),
then some \(r\times r\) minor of \(\mathcal E_k\) is nonzero at \(\Psi_0\). Hence that minor is
not the zero polynomial, and its nonvanishing locus is a nonempty Zariski open subset
contained in \(X_{\ge r}\) \cite[Ch.~1]{BeltramettiAlgebraicGeometry2009}.

A proper complex algebraic subset of \(\mathbb C^N\) has empty Euclidean interior and, being
contained locally in the zero set of a nonzero holomorphic polynomial, has Lebesgue measure
zero \cite[§2.8]{bochnak1998realalgebraicgeometry}\cite{sottile2016realalgebraicgeometry}. Therefore a nonempty Zariski open subset of \(\mathbb C^N\) has full measure in some
Euclidean ball, in particular positive Lebesgue measure \cite[§2.8]{bochnak1998realalgebraicgeometry}\cite{sottile2016realalgebraicgeometry}.
\end{proof}

\begin{theorem}[Generic maximal dimension]
\label{thm:generic-maximal-dimension}
Fix $k$, $\mathcal S$, and $m$ as in \Cref{cor:positive-measure-genericity}, and set
\[
r_{\max}:=\max_{\Psi\in(\mathbb C^d)^s}\dim \mathcal K^{(k)}_\Psi(\mathcal S),
\]
where $K^{(k)}_\Psi(\mathcal S)$ is the Krylov-Lie subspace of \Cref{def:krylov-lie-subspace}. Then the set
\[
X_{\max}
:=
\{\Psi\in(\mathbb C^d)^s \setminus\{0\} \, \vert \,\dim \mathcal K^{(k)}_\Psi(\mathcal S)=r_{\max}\}
\]
is a nonempty Zariski open subset of \((\mathbb C^d)^s\). In particular, \(X_{\max}\) is
Euclidean dense and has positive Lebesgue measure.
\end{theorem}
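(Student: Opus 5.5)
The plan is to reduce everything to the determinantal description of \Cref{thm:determinantal-krylov-dim}. First, $r_{\max}$ is a well-defined finite integer. By construction $\dim \mathcal K^{(k)}_\Psi(\mathcal S)=\operatorname{rank}\mathcal E_k(\Psi)\le \min(d,sN_k)$, so the maximum over the nonempty set of block seeds is attained. Pick a maximizer $\Psi_0$ with $\operatorname{rank}\mathcal E_k(\Psi_0)=r_{\max}$. If $r_{\max}=0$, the statement is trivial, since every seed satisfies the equality; this case is in fact excluded, because $I\in\mathcal W_{\le k}$ puts each $\psi_a\ne 0$ into the span. So we may assume $r_{\max}\ge 1$.

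Next, identify $X_{\max}$ with $X_{\ge r_{\max}}$. By maximality no seed has dimension exceeding $r_{\max}$, so the two sets agree. By \Cref{cor:positive-measure-genericity} applied with $r=r_{\max}$ and the witness $\Psi_0$, some $r_{\max}\times r_{\max}$ minor $\Delta$ of $\mathcal E_k$ is a nonzero polynomial on $(\mathbb C^d)^s$. Its nonvanishing locus $D(\Delta)$ is a nonempty Zariski open subset of $(\mathbb C^d)^s$ contained in $X_{\max}$.

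To get that $X_{\max}$ is itself Zariski open, and not merely that it contains an open set, write it as the union over all $r_{\max}\times r_{\max}$ minors $\Delta_\alpha$ of the loci $D(\Delta_\alpha)$. A union of basic open sets is Zariski open. One point needs care: the theorem's ambient set excludes $0$. Every $D(\Delta_\alpha)$ automatically avoids $\Psi=0$, since all entries of $\mathcal E_k(0)$ vanish, so this union is an open subset of $(\mathbb C^d)^s$ itself.

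Finally, transfer to Euclidean statements. The complement $(\mathbb C^d)^s\setminus X_{\max}$ is the common zero set of the $\Delta_\alpha$. It is a proper algebraic subset, being contained in $Z(\Delta)$ with $\Delta\not\equiv 0$. As in the proof of \Cref{cor:positive-measure-genericity}, $Z(\Delta)$ has empty Euclidean interior and Lebesgue measure zero. Hence $X_{\max}$ is Euclidean dense and in fact has full, in particular positive, Lebesgue measure in $\mathbb R^{2ds}$.

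The only step with any subtlety is the density claim. It requires that the zero set of a nonzero complex polynomial has no interior, which follows because a polynomial vanishing on an open set vanishes identically (identity theorem / Taylor coefficients). I would cite this exactly as the corollary does.
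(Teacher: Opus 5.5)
Your proof is correct and takes the paper's route: you identify $X_{\max}$ with $X_{\ge r_{\max}}$ using maximality, and you invoke \Cref{cor:positive-measure-genericity} at $r=r_{\max}$, with openness coming from the minor description in \Cref{thm:determinantal-krylov-dim}. You are more complete than the paper's own proof, which records only positive Lebesgue measure: you also check that the union of principal opens avoids $\Psi=0$, and you obtain Euclidean density and full measure from the fact that the complement lies in the zero set of a nonzero minor.
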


\begin{proof}
    Clearly $r_{\operatorname{max}}$ exists since dimension only assumes integer values and $\dim \mathcal {K}^{(k)}_\Psi(\mathcal S) \leq d$, where $d$ is the dimension of our starting Hilbert space. Then by Corollary \ref{cor:positive-measure-genericity}, as at least one $\Psi$ must give a dimension of $r_{\operatorname{max}}$ for it to be the maximal dimension, we see that 
    \[
    X_\text{max} =\{\Psi\in(\mathbb C^d)^s \setminus\{0\} \, \vert \,\dim \mathcal K^{(k)}_\Psi(\mathcal S) \geq r_{\max}\} = \{\Psi\in(\mathbb C^d)^s \setminus\{0\} \, \vert \,\dim \mathcal K^{(k)}_\Psi(\mathcal S)=r_{\max}\}
    \]
    has positive Lebesgue measure.
\end{proof}

\begin{proposition}[Determinantal seed stratification]
\label{prop:seed-stratification}
Fix a depth \(k\), a generator set \(\mathcal S\subset \mathrm{End}(\mathbb C^d)\), and a block size
\(s\ge 1\) (see \Cref{def:krylov-lie-algebra} and \Cref{def:seeds-and-block-seeds}). Let \(\mathcal E_k(\Psi)\) be the polynomial evaluation matrix associated to the
depth-\(k\) Krylov-Lie subspace (\Cref{def:krylov-lie-subspace}). Then for every \(r\ge 0\),
\[
X_{=r}:=\{\Psi \in(\mathbb C^d)^s \setminus\{0\} \, \vert \, \dim \mathcal K_\Psi^{(k)}(\mathcal S)=r\}
\]
is locally closed.
\end{proposition}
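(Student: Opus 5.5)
The plan is to write the stratum as the intersection of a Zariski closed set with a Zariski open set, using \Cref{thm:determinantal-krylov-dim} twice. Since $\dim \mathcal K^{(k)}_\Psi(\mathcal S)$ takes integer values, we have the set identity
\[
X_{=r}=X_{\le r}\cap X_{\ge r},
\]
where $X_{\le r}$ and $X_{\ge r}$ are exactly as in \Cref{thm:determinantal-krylov-dim}. That theorem gives what we need on each factor. The set $X_{\le r}$ is the common vanishing locus of the $(r+1)\times(r+1)$ minors of $\mathcal E_k(\Psi)$, hence Zariski closed. The set $X_{\ge r}=X_{\le r-1}^{c}$ is the complement of the vanishing locus of the $r\times r$ minors, hence Zariski open.

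For the edge case $r=0$, we take $X_{\ge 0}$ to be the whole seed space, which is trivially open. Then $X_{=0}=X_{\le 0}$ is closed, and so locally closed. (In fact $X_{=0}$ is empty, because the seeds themselves lie in $\mathcal K^{(k)}_\Psi$ via $\mathcal W_0=I$.)

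By definition, a subset is locally closed when it is the intersection of an open set and a closed set; equivalently, it is open in its closure. So $X_{=r}$ is locally closed in the Zariski topology on the ambient space $(\mathbb C^d)^s\setminus\{0\}$. This ambient space is itself Zariski open in $(\mathbb C^d)^s$, so locally closed subsets of it are locally closed in $(\mathbb C^d)^s$ as well. The Zariski topology is coarser than the Euclidean one, so the same conclusion holds in the Euclidean topology.

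The only real obstacle is bookkeeping. We must check that the minors used in \Cref{thm:determinantal-krylov-dim} are genuinely polynomial in the real and complex coordinates of $\Psi$. This holds because each column $w(\psi_a)$ is a fixed matrix applied linearly to $\psi_a$. We must also handle the removed origin consistently: intersecting with the open set $(\mathbb C^d)^s\setminus\{0\}$ preserves local closedness.
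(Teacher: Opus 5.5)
Your proof is correct and is essentially the paper's argument. The paper writes $X_{=r}=X_{\le r}\setminus X_{\le r-1}$, with $X_{\le -1}=\emptyset$, as a difference of two Zariski-closed sets from \Cref{thm:determinantal-krylov-dim}, which is the same set as your $X_{\le r}\cap X_{\ge r}$; your added remarks on the removed origin and on $X_{=0}$ being empty are correct details that the paper leaves implicit.
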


\begin{proof}
As before,
\[
\dim \mathcal K_\Psi^{(k)}(\mathcal S)=\operatorname{rank}\mathcal E_k(\Psi).
\]

Fix \(r\ge 0\). Then define
\[
X_{=r}
=
\{\Psi \in(\mathbb C^d)^s \setminus\{0\} \, \vert \,\dim \mathcal K_\Psi^{(k)}(\mathcal S)=r\}
=
\{\Psi \in(\mathbb C^d)^s \setminus\{0\} \, \vert \,\operatorname{rank}\mathcal E_k(\Psi)=r\}.
\]
By Theorem \ref{thm:determinantal-krylov-dim} \(X_{\le r-1}\) is Zariski closed, where for \(r=0\) we interpret \(X_{\le -1}=\varnothing\). Therefore
\[
X_{=r}
=
X_{\le r}\setminus X_{\le r-1}.
\]
Thus \(X_{=r}\) is the difference of two Zariski-closed sets and is locally closed \cite[Lecture~3]{Harris1992}.
\end{proof}

\begin{corollary}[Genericity inside a rank-drop closure] \label{cor:generic-on-support}
If \(X_{=r}\) is has at least one element, then \(X_{=r}\) contains a nonempty Zariski open subset of its
closure in the Zariski topology \(\overline{X_{=r}}^{\,\mathrm{Zar}}\). Equivalently, rank \(r\) is generic on its support \cite[Lecture~3]{Harris1992}.
\end{corollary}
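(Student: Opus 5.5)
The plan is to combine the locally closed description from \Cref{prop:seed-stratification} with the determinantal description of \Cref{thm:determinantal-krylov-dim}. First recall $X_{=r}=X_{\le r}\setminus X_{\le r-1}$, where $X_{\le r}$ is cut out by the $(r+1)\times(r+1)$ minors of $\mathcal E_k(\Psi)$. Set $Z:=\overline{X_{=r}}^{\,\mathrm{Zar}}$. Since $X_{=r}\subset X_{\le r}$ and $X_{\le r}$ is Zariski closed, we get $Z\subset X_{\le r}$. (Here we work in the ambient affine space $(\mathbb C^d)^s$ and note that removing the origin only deletes a closed point, so it does not affect the argument.)

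The key step is to write $X_{=r}=Z\cap X_{\ge r}$. The inclusion $\subseteq$ is immediate. For $\supseteq$, a point of $Z$ has rank at most $r$ because $Z\subset X_{\le r}$; if it also has rank at least $r$, its rank is exactly $r$. Since $X_{\ge r}$ is Zariski open by \Cref{thm:determinantal-krylov-dim}, $X_{=r}$ is a Zariski open subset of $Z$ in the subspace topology. It is nonempty by hypothesis.

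So $X_{=r}$ is itself a nonempty Zariski open subset of its closure, which is the desired conclusion. To phrase it as ``rank $r$ is generic on its support,'' note that $X_{=r}$ is dense in $Z$ by construction. In particular $X_{=r}$ meets every irreducible component of $Z$, since a component it missed could not lie in the closure of $X_{=r}$. Hence on each such component the complement $Z\setminus X_{=r}$ is a proper closed subset, and so has strictly smaller dimension. This is the sense of genericity in Harris, Lecture~3.

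The only mild obstacle is the last point: making precise that $X_{=r}$ meets every irreducible component of $Z$. I would argue it directly. Write $Z=\bigcup_i Z_i$ with the $Z_i$ irreducible and irredundant. If some $Z_j\cap X_{=r}=\varnothing$, then $X_{=r}\subset\bigcup_{i\ne j}Z_i$. That union is closed, so it contains $Z$, which contradicts irredundancy. Everything else is a direct reading of the earlier determinantal results.
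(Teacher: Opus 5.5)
Your proof is correct and takes essentially the paper's route. The paper cites the general fact that a locally closed set, here $X_{=r}=X_{\le r}\setminus X_{\le r-1}$ from \Cref{prop:seed-stratification}, is open in its Zariski closure; you prove that openness directly through the identity $X_{=r}=\overline{X_{=r}}^{\,\mathrm{Zar}}\cap X_{\ge r}$, and your extra check that $X_{=r}$ meets every irreducible component of its closure is a valid strengthening the paper does not spell out at this point.
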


\begin{proof}
By \Cref{prop:seed-stratification}, the rank-\(r\) stratum
\[
X_{=r}
=
\{\Psi\in(\mathbb C^d)^s \setminus\{0\} \, \vert \, \dim \mathcal K_\Psi^{(k)}(\mathcal S)=r\}
\]
is locally closed.

A well-known characterization of locally closed subsets states that a subset is locally closed if and only if it is open in its Zariski closure \cite[Lecture~3]{Harris1992}. Therefore \(X_{=r}\) is Zariski open in \(\overline{X_{=r}}^{\,\mathrm{Zar}}\). 

If \(X_{=r}\neq\varnothing\), then this open subset is nonempty. Hence \(X_{=r}\) contains a nonempty Zariski open subset of its closure, namely itself viewed as an open subset of \(\overline{X_{=r}}^{\,\mathrm{Zar}}\). That is to say, rank \(r\) is generic on its support, in the sense that it holds on a nonempty Zariski open subset of \(\overline{X_{=r}}^{\,\mathrm{Zar}}\). 
\end{proof}

\begin{remark}
Corollary~\ref{cor:generic-on-support} does \emph{not} imply that
\(X_{=r}\) has positive Lebesgue measure in the full ambient space \(\mathcal H^s \setminus \{ 0\}\).
If \(\overline{X_{=r}}^{\,\mathrm{Zar}}\) is a proper algebraic subset of \(\mathcal H^s \setminus \{ 0\}\), then it has
Lebesgue measure zero, and so does \(X_{=r}\) \cite[§2.8]{bochnak1998realalgebraicgeometry}\cite{sottile2016realalgebraicgeometry}. What the corollary asserts is instead that
rank-\(r\) is Zariski-generic after restricting to the natural algebraic locus on which that
rank can occur \cite[Lecture~3]{Harris1992}\cite[§2.8]{bochnak1998realalgebraicgeometry}.
\end{remark}

\begin{lemma}[Local algebraicity of the KLA dimension]
\label{lem:kla-local-algebraicity}
Fix \(k\), \(s\), and \(r\). Let \(U\subset (\mathbb C^d)^s \setminus\{0\}\) be a nonempty Zariski-open subset
on which
\[
\dim \mathcal K_\Psi^{(k)}(\mathcal  S)=r
\]
is constant, where $\mathcal K_\Psi^{(k)}$ is the Krylov-Lie subspace of \Cref{def:krylov-lie-subspace}, and on which a basis matrix \(B_\Psi\in \mathrm{Mat}_{d\times r}(\mathbb C)\) for
\(\mathcal K_\Psi^{(k)}(\mathcal  S)\) is chosen algebraically by a nonvanishing \(r\times r\) minor of the
evaluation matrix of \Cref{thm:determinantal-krylov-dim}.

For \(j=1,\dots,n\), define the compressed generators on $\mathcal{K}_\Psi^{(k)}(\mathcal{S})$ as
\[
\widetilde H_j(\Psi):=B_\Psi B_\Psi^+ H_j  B_\Psi B_\Psi^+\vert_{\mathcal{K}_\Psi^{(k)}(\mathcal{S})} \in \mathfrak u(r).
\]
Then each entry of \(\widetilde H_j(\Psi)\) is a rational function on \(U\). Furthermore, there exists
a finite set \(\mathcal V_r\) of Lie words (\Cref{def:CircWords}), depending only on \(r\), such that if
\[
\mathcal L_U(\Psi)
:=
\left[\operatorname{vec}\left(w \left(\widetilde H_1(\Psi),\dots,\widetilde H_n(\Psi)\right)\right)\right]_{w\in\mathcal V_r},
\]
then
\[
\dim \mathfrak l_\Psi^{(k)}(\mathcal  S)=\operatorname{rank}\mathcal L_U(\Psi)
\qquad\text{for all }\Psi\in U,
\]
where $\mathfrak l_\Psi^{(k)}$ is the Krylov-Lie algebra of \Cref{def:krylov-lie-algebra}. After multiplying by a common denominator \(D(\Psi)\neq 0\) on \(U\), the matrix
\(D(\Psi)\mathcal L_U(\Psi)\) has polynomial entries and the same rank on \(U\). Consequently,
for every \(\ell\),
\[
U_{\le \ell}:=\{\Psi\in U \, \vert \,\dim \mathfrak l_\Psi^{(k)}(\mathcal  S)\le \ell\}
\]
is Zariski closed in \(U\), its complement is open in $U$, as is the locus where \(\dim \mathfrak l_\Psi^{(k)}(\mathcal  S)\) is maximal on
\(U\).
\end{lemma}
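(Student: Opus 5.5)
On $U$ a fixed $r\times r$ minor of $\mathcal E_k(\Psi)$ is nonvanishing. I would take $B_\Psi$ to be the $r$ columns of $\mathcal E_k(\Psi)$ indexing that minor. These are words applied to seeds, so they span $\mathcal K^{(k)}_\Psi(\mathcal S)$ by Theorem~\ref{thm:determinantal-krylov-dim}, and their entries are polynomial in $\Psi$. The rational dependence then comes from expressing compressed operators in this basis rather than an orthonormal one. The matrix of $\widetilde H_j(\Psi)$ with respect to $B_\Psi$ is
\[
(B_\Psi^\dagger B_\Psi)^{-1}B_\Psi^\dagger H_j B_\Psi .
\]
The Gram matrix $B_\Psi^\dagger B_\Psi$ is positive definite wherever $B_\Psi$ has full rank. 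By Cramer's rule its inverse is the adjugate divided by $\det(B_\Psi^\dagger B_\Psi)$. So each entry is a rational function of the real coordinates $(\Psi,\overline\Psi)$, with denominator $\det(B_\Psi^\dagger B_\Psi)\neq 0$ on $U$.

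One caveat: strictly these are polynomials in $\Psi$ and $\overline\Psi$. I would state openly that "Zariski" is read in the real-algebraic sense on $(\mathbb C^d)^s\cong\mathbb R^{2ds}$. The alternative is to work with $B_\Psi^+$ written with $\Psi$ and $\overline\Psi$ treated as independent variables.

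A change of basis on $\mathcal K_\Psi^{(k)}$ is a Lie algebra isomorphism. Hence $\dim\mathfrak l^{(k)}_\Psi$ can be computed from these $r\times r$ matrices, and it does not depend on whether the representation is unitary in the chosen frame. That is enough for rank purposes.

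For the finite word set $\mathcal V_r$, I would use that $\mathfrak l^{(k)}_\Psi\subset\mathfrak{gl}(r)$ has dimension at most $r^2$. Let $L_m$ be the span of Lie words of length $\le m$ in the $\widetilde H_j$. These spaces form an increasing chain, and $L_{m+1}=L_m+\sum_j[\widetilde H_j,L_m]$ because right-nested words span the Lie algebra. So once $L_{m+1}=L_m$ the chain is stationary, and it therefore stabilizes by $m=r^2$. I would set $\mathcal V_r$ to be all right-nested Lie words of length $\le r^2$. This set depends only on $r$ and $n$, with $n$ fixed throughout. The entries of each $w(\widetilde H_1(\Psi),\dots,\widetilde H_n(\Psi))$ are then rational with a power of $\det(B_\Psi^\dagger B_\Psi)$ as denominator, and
\[
\operatorname{rank}\mathcal L_U(\Psi)=\dim L_{r^2}=\dim\mathfrak l^{(k)}_\Psi .
\]

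Take $D(\Psi)=\det(B_\Psi^\dagger B_\Psi)^{r^2}$. Multiplying by this nonzero scalar preserves rank, so $D\mathcal L_U$ has polynomial entries. By the same minor argument as Theorem~\ref{thm:determinantal-krylov-dim}, $U_{\le\ell}$ is cut out in $U$ by the $(\ell+1)$-minors, so it is closed in $U$ and its complement is open. The maximal-dimension locus is the complement of $U_{\le \ell_{\max}-1}$, hence open as well.

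I expect the main obstacle to be the algebraic bookkeeping around conjugation. The orthogonal projector is not holomorphic in $\Psi$, so I would make the real-algebraic interpretation explicit before anything else. The stabilization bound is routine by comparison.
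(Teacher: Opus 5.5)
Your proposal is correct and follows the same route as the paper: the compressed generators are rational through $(B_\Psi^\dagger B_\Psi)^{-1}$, a $\Psi$-independent finite word set comes from the $r^2$ dimension bound, a common nonvanishing denominator is cleared, and the rank condition is cut out by $(\ell+1)\times(\ell+1)$ minors; your chain-stabilization argument and explicit choice $D(\Psi)=\det(B_\Psi^\dagger B_\Psi)^{r^2}$ fill in steps the paper only asserts. Your caveat about $\overline\Psi$ is also well taken: the paper's proof silently treats the entries of $B_\Psi^\dagger$ as polynomial in $\Psi$, so the resulting closedness statements are genuinely real-algebraic in $(\Psi,\overline\Psi)$, as you make explicit.
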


\begin{proof}
Because a fixed \(r\times r\) minor of the evaluation matrix is nonzero on \(U\), the chosen
basis matrix \(B_\Psi\) depends algebraically on \(\Psi\). Since \(B_\Psi\) has full column rank
\(r\) on \(U\),
\[
B_\Psi^+=(B_\Psi^\dagger B_\Psi)^{-1}B_\Psi^\dagger,
\]
so the entries of \(B_\Psi^+\), and hence of each \(\widetilde H_j(\Psi)\), are rational functions
on \(U\) due to the presence of the inverse. 

Since any KLA for which the Krylov subspace dimension is $r$ has an upper bound of $r^2$ on its dimension, there exists a finite collection \(\mathcal V_r\) of Lie
words of bounded depth, depending only on \(r\), whose evaluations span the Lie algebra
generated by \(\{\widetilde H_1(\Psi),\dots,\widetilde H_n(\Psi)\}\). To wit, $\mathcal{V}_r$ is independent of $\Psi$ for all choices of block seed on $U$. Therefore
\[
\dim \mathfrak l_\Psi^{(k)}(S)=\operatorname{rank}\mathcal L_U(\Psi).
\]

Let \(D(\Psi)\) be a common denominator for the entries of \(\mathcal L_U(\Psi)\). Since
\(D(\Psi)\neq 0\) on \(U\), multiplying by \(D(\Psi)\) does not change rank on \(U\). Hence
\[
\dim \mathfrak l_\Psi^{(k)}(S)=\operatorname{rank}(D(\Psi)\mathcal L_U(\Psi)),
\]
where \(D(\Psi)\mathcal L_U(\Psi)\) has polynomial entries. The rank-\(\le \ell\) locus is therefore
cut out on \(U\) by the vanishing of all \((\ell+1)\times(\ell+1)\) minors of this polynomial
matrix, so it is Zariski closed in \(U\). The final claim follows by taking complements and repetition of the argument in Theorem \ref{thm:generic-maximal-dimension}.
\end{proof}

\begin{corollary}[Local KLA genericity inside a rank-drop closure] \label{cor:KLA-generic-rank-drop}
    Take the setup of Lemma \ref{lem:kla-local-algebraicity}. Then
    \[
U_{=\ell}:=\{\Psi\in U \, \vert \, \dim \mathfrak l_\Psi^{(k)}(\mathcal S)= \ell\}
\]
is a locally closed set in the Zariski topology, where $\mathfrak l_\Psi^{(k)}$ is the Krylov-Lie algebra of \Cref{def:krylov-lie-algebra}. In addition, rank $\ell$ is generic on its own support.
\end{corollary}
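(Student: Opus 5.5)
The plan is to mirror the proofs of \Cref{prop:seed-stratification} and \Cref{cor:generic-on-support}, with $U$ in place of the ambient seed space and the polynomial matrix $D(\Psi)\mathcal L_U(\Psi)$ from \Cref{lem:kla-local-algebraicity} in place of the evaluation matrix $\mathcal E_k(\Psi)$. By that lemma, for every $\Psi\in U$ we have
\[
\dim \mathfrak l_\Psi^{(k)}(\mathcal S)=\operatorname{rank}\bigl(D(\Psi)\mathcal L_U(\Psi)\bigr),
\]
and for every $\ell$ the set $U_{\le \ell}$ is Zariski closed in $U$. It is cut out by the $(\ell+1)\times(\ell+1)$ minors of a matrix with polynomial entries.

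The first step is to write
\[
U_{=\ell}=U_{\le \ell}\setminus U_{\le \ell-1}=U_{\le\ell}\cap\bigl(U\setminus U_{\le \ell-1}\bigr),
\]
with the convention $U_{\le -1}=\varnothing$. This exhibits $U_{=\ell}$ as the intersection of a set closed in $U$ with a set open in $U$, so it is locally closed in $U$.

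The second step transfers this to the ambient space. $U$ is Zariski open in $(\mathbb C^d)^s$, and a set closed in $U$ has the form $Z\cap U$ with $Z$ Zariski closed in the ambient space. Concretely, one takes $Z$ to be the common zero locus of the polynomial minors together with the single polynomial condition whose nonvanishing defines $U$. A set open in $U$ is open in the ambient space. Hence $U_{=\ell}$ is an intersection of an ambient open set with an ambient closed set, i.e.\ locally closed in $(\mathbb C^d)^s$ as well.

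For genericity I will invoke the same characterization used in \Cref{cor:generic-on-support}: a locally closed set is open in its Zariski closure \cite[Lecture~3]{Harris1992}. So if $U_{=\ell}\neq\varnothing$, it is a nonempty Zariski open subset of $\overline{U_{=\ell}}^{\,\mathrm{Zar}}$, and rank $\ell$ is generic on its support. I would also add the analogue of the remark following \Cref{cor:generic-on-support}: in general this yields no positive Lebesgue measure. The exception is the maximal $\ell$ on $U$, where $U_{=\ell}$ is open in $U$ and hence has positive measure, by the argument of \Cref{thm:generic-maximal-dimension}.

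The only point needing care, and the step I expect to be the main obstacle, is the passage from ``closed in $U$'' to an ambient description. The minors are polynomial only after clearing the denominator $D$, and they define the rank correctly only on $U$, where $D\neq 0$. The closure $\overline{U_{=\ell}}^{\,\mathrm{Zar}}$ must therefore be taken in the ambient space, not just within $U$. To handle this, I would explicitly intersect with the open set $U$ throughout, so that the characterization above is applied to an honest locally closed subset of $(\mathbb C^d)^s$.
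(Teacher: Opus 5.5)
Your proposal is correct and follows the paper's own route. The paper likewise writes $U_{=\ell}=U_{\le \ell}\setminus U_{\le \ell-1}$, using the closedness supplied by Lemma~\ref{lem:kla-local-algebraicity} exactly as in Proposition~\ref{prop:seed-stratification}, and then invokes ``locally closed iff open in its Zariski closure'' as in Corollary~\ref{cor:generic-on-support}.

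Your extra care about passing from $U$ to $(\mathbb C^d)^s$ is a harmless elaboration that the paper leaves implicit, and it is not really an obstacle, since $U$ is open. Just take $Z=V(\text{minors})$, so that $U_{\le\ell}=Z\cap U$. Do not fold the defining condition of $U$ into the zero locus: that would give a set disjoint from $U$, and $U$ need not be principal in any case.
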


\begin{proof}
    By an identical argument to Corollary \ref{prop:seed-stratification}, $U_{=\ell}$ is locally closed. The claim then follows from Lemma \ref{lem:kla-local-algebraicity} and the reasoning of Corollary \ref{cor:generic-on-support} .
\end{proof}

\begin{theorem}[Global constructibility on a fixed Krylov stratum]\label{thm:global-krylov-constructability}
Fix \(r\ge 0\), and define
\[
Y_{r,\ell}:=\{\Psi\in (\mathbb C^d)^s \setminus\{0\} \, \vert \,
\dim \mathcal K_\Psi^{(k)}(\mathcal S)=r,\ 
\dim \mathfrak l_\Psi^{(k)}(\mathcal S)=\ell\},
\]
where $\mathcal K_\Psi^{(k)}$ is the Krylov-Lie subspace (\Cref{def:krylov-lie-subspace}) and $\mathfrak l_\Psi^{(k)}$ is the Krylov-Lie algebra (\Cref{def:krylov-lie-algebra}). Then each \(Y_{r,\ell}\) is constructible in \((\mathbb C^d)^s\). Moreover, if
\[
M_r:=\max_{\Psi\in X_{=r}} \dim \mathfrak l_\Psi^{(k)}(\mathcal S),
\]
then
\[
Y_{r,M_r}
=
\{\Psi\in X_{=r} \, \vert \,\dim \mathfrak l_\Psi^{(k)}(\mathcal S)=M_r\}
\]
is Zariski open in \(X_{=r}\) (see \Cref{cor:generic-on-support}). In particular, if \(X_{=r}\) is nonempty and irreducible in the sense that it cannot be written as the union of two proper closed subsets of itself in the Zariski topology,
then \(Y_{r,M_r}\) is a nonempty Zariski-open dense subset of \(X_{=r}\). If \(X_{=r}\) is nonempty and reducible, then we have genericity of each irreducible \(X_{=r}\) component's maximal KLA dimension (\Cref{def:krylov-lie-algebra}) on every irreducible component of \(X_{=r}\).
\end{theorem}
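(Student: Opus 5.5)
The plan is to cover the locally closed stratum \(X_{=r}\) (\Cref{prop:seed-stratification}) by finitely many Zariski-open pieces on which \Cref{lem:kla-local-algebraicity} applies, and then glue the local statements. For each choice \(m\) of \(r\times r\) minor of the evaluation matrix \(\mathcal E_k(\Psi)\) of \Cref{thm:determinantal-krylov-dim}, set
\[
U_m:=X_{=r}\cap\{\Psi \, \vert \, m(\Psi)\neq 0\}.
\]
Since there are finitely many minors and every \(\Psi\in X_{=r}\) has some nonvanishing \(r\times r\) minor, \(X_{=r}=\bigcup_m U_m\) is a finite union. On \(U_m\) the columns of \(\mathcal E_k\) picked out by \(m\) give an algebraic basis matrix \(B_\Psi\). \Cref{lem:kla-local-algebraicity} is phrased for a Zariski-open \(U\), but its proof only uses that the rank is constant and that \(m\neq 0\). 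It therefore carries over to the locally closed \(U_m\): the condition \(\dim\mathfrak l_\Psi^{(k)}\le \ell\) is cut out on \(U_m\) by the \((\ell+1)\times(\ell+1)\) minors of the polynomial matrix \(D\mathcal L_{U_m}\).

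A technical point is that \(B_\Psi^+\) involves \(B_\Psi^\dagger\), which is antiholomorphic. I would handle this in one of two ways. The first is to work in the real Zariski topology on \((\mathbb C^d)^s\cong\mathbb R^{2ds}\), where constructible means semialgebraic-constructible and everything below goes through unchanged. The second is to replace the orthogonal compression by the algebraic oblique one \(B_\Psi(B_\Psi^{\top}B_\Psi)^{-1}B_\Psi^{\top}\). I expect the first option to be the honest route and will flag it explicitly; this is the main obstacle.

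Granting that, \(Y_{r,\ell}\cap U_m\) is the difference of two closed subsets of \(U_m\), namely the locus \(\dim\le\ell\) minus the locus \(\dim\le\ell-1\). Hence it is locally closed in \(U_m\). Since \(U_m\) is locally closed in the ambient space, each \(Y_{r,\ell}\cap U_m\) is locally closed, and \(Y_{r,\ell}\) is a finite union of locally closed sets, hence constructible.

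For the openness claim, \(M_r\) exists because \(\dim\mathfrak l\le r^2\). On each \(U_m\), the set \(\{\dim\mathfrak l\ge M_r\}\) is the complement of the closed set \(\{\dim\le M_r-1\}\), so it is open in \(U_m\). By maximality of \(M_r\) this set equals \(Y_{r,M_r}\cap U_m\). The \(U_m\) are open in \(X_{=r}\), so \(Y_{r,M_r}\) is open in \(X_{=r}\). It is nonempty by the definition of \(M_r\) as an attained maximum.

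If \(X_{=r}\) is irreducible, every nonempty open subset is dense, which gives density of \(Y_{r,M_r}\). In the reducible case, write \(\overline{X_{=r}}=\bigcup_i Z_i\) as a union of irreducible components. Set \(M_r^{(i)}\) to be the maximal KLA dimension on \(Z_i\cap X_{=r}\). The same openness argument, applied with \(X_{=r}\) replaced by \(Z_i\cap X_{=r}\), shows that \(M_r^{(i)}\) is attained on a nonempty open, hence dense, subset of that component.
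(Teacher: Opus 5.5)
Your proposal is correct and follows the same route as the paper's proof. You cover \(X_{=r}\) by the finitely many pieces on which a fixed \(r\times r\) minor of \(\mathcal E_k\) is nonzero, apply the rank description of \Cref{lem:kla-local-algebraicity} on each piece, and glue to get constructibility of \(Y_{r,\ell}\) and openness of the maximal locus in \(X_{=r}\). Density then comes from irreducibility, componentwise in the reducible case. Your two extra observations are genuine improvements on points the paper glosses over:
\begin{itemize}
\item the pieces are only relatively open in \(X_{=r}\), not Zariski open in the ambient space, so the lemma has to be applied in relative form;
\item the dependence through \(B_\Psi^\dagger\) makes the defining minors real rather than complex polynomials, so what is actually proved, by you and tacitly by the paper, is constructibility and openness in the real Zariski topology of \((\mathbb C^d)^s\cong\mathbb R^{2ds}\). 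Irreducibility should then also be read in that topology; complex irreducibility implies real-Zariski irreducibility, so nothing is lost.
\end{itemize}
Keep your first option. The oblique projector \(B_\Psi(B_\Psi^{\top}B_\Psi)^{-1}B_\Psi^{\top}\) changes the compressed algebra, and \(B_\Psi^{\top}B_\Psi\) can be singular over \(\mathbb C\) even when \(B_\Psi\) has full column rank.
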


\begin{proof}
Cover \(X_{=r}\) by the many principal-open sets \(U_\alpha\) on which a fixed
nonvanishing \(r\times r\) minor of the evaluation matrix determines an algebraic basis
selection (principal opens form a basis for the Zariski topology \cite[Lecture~1]{Harris1992}). This can be done with finitely many \(U_\alpha\) because the evaluation matrix $\mathcal{L}(\Psi)$ of \Cref{lem:kla-local-algebraicity} has finitely many $r \times r$ minors. On each \(U_\alpha\), Lemma~\ref{lem:kla-local-algebraicity} and Corollary \ref{cor:KLA-generic-rank-drop} shows that the loci
\[
(U_\alpha)_{=\ell}
=
\{\Psi\in U_\alpha \, \vert \, \dim \mathfrak l_\Psi^{(k)}(\mathcal S)=\ell\}
\]
are locally closed in \(U_\alpha\), and that the maximal-rank locus on \(U_\alpha\) is open in
\(U_\alpha\).

Hence
\[
Y_{r,\ell}
=
\bigcup_\alpha \bigl(X_{=r}\cap (U_\alpha)_{=\ell}\bigr)
\]
is a finite union of locally closed subsets, and therefore is constructible. Likewise,
\[
Y_{r,M_r}
=
\bigcup_\alpha \bigl(X_{=r}\cap \{\Psi\in U_\alpha \, \mid \,
\dim \mathfrak l_\Psi^{(k)}(\mathcal S)\ge M_r\}\bigr)
\]
is open in \(X_{=r}\), since each term is open in \(X_{=r}\) and the set is constructible.

If \(X_{=r}\) is irreducible and nonempty, then every nonempty open subset of \(X_{=r}\)
is dense. Since \(M_r\) is attained somewhere on \(X_{=r}\), the set \(Y_{r,M_r}\) is
nonempty, hence dense in \(X_{=r}\). Else, if \(X_{=r}\) is reducible, then the same rationale holds on each irreducible component of \(X_{=r}\).
\end{proof}

\begin{remark}
    Since the exceptional locus is a proper algebraic subset of \((\mathbb C^d)^s\), it has
Lebesgue measure zero in the ambient real vector space \(\mathbb R^{2dm}\) \cite[§2.8]{bochnak1998realalgebraicgeometry}\cite{sottile2016realalgebraicgeometry}. Hence the maximal KLA dimension holds for almost every block seed with respect to the Lebesgue measure.
\end{remark}

These algebro-geometric results are quite informative: we have the clean statement that we should usually expect maximal dimension given fixed $k$ and $\mathcal{S}$. However, given that we wish to find dimension-matched sets, we must demonstrate flexibility of the Krylov-Lie construction in attaining other dimensions. Notice that while our theorems do show genericity of maximal KLA dimension, they very much do not preclude the possibility of smaller dimension. On the contrary, we should expect quite a wide variety of attainable dimensions since linear dependence of vectors (and therefore rank drop) can very much occur in the selection of block seeds, and we have the monotonic adjustment knob of commutator depth $k$ to aid in our construction of KLAs of desired dimension. That said, since it is quite difficult to generically prove existence of specific seeds granting particular dimensions based on attributes of DLAs, we instead resort to isolating upper bounds for chosen depths so that we at minimum are aware of worst-case estimates for the granularity of dimension of arbitrary Krylov-Lie structures.

\begin{proposition}[Universal upper bound from Witt's formula]
\label{prop:witt-upper-bound-krylov-subspace}
Let \(\mathcal H\cong \mathbb C^d\), let
\[
\mathcal S=\{H_1,\dots,H_n\}\subset \operatorname{End}(\mathcal H),
\]
let \(\Psi=(\psi_1,\dots,\psi_s)\in \mathcal H^s \setminus\{0\}\) be a block seed (\Cref{def:seeds-and-block-seeds}), and let \(k\ge 1\).
For each integer \(j\ge 1\), define
\[
\ell_n(j):=\frac{1}{j}\sum_{e\mid j}\mu(e)\,n^{j/e},
\]
where \(\mu:\mathbb N\to\{-1,0,1\}\) is the Möbius function, i.e.
\[
\mu(e)=
\begin{cases}
1 & e=1,\\
(-1)^r& e \text{ is a product of } r \text{ distinct primes},\\
0& e \text{ is divisible by the square of a prime}.
\end{cases}
\]
Then \(\ell_n(j)\) is the dimension of the homogeneous degree-\(j\) component of the free Lie
algebra on \(n\) generators, by Witt's formula \cite{Petrogradsky2000}. Consequently,
\[
\dim\mathcal K_\Psi^{(k)}(\mathcal S)
\le
\min\!\left\{
d,\;
s\sum_{j=1}^k \ell_n(j)
\right\},
\]
where $\mathcal K_\Psi^{(k)}$ is our Krylov-Lie subspace of \Cref{def:krylov-lie-subspace}.
\end{proposition}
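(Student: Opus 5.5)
The plan is to factor the Krylov--Lie subspace through the free Lie algebra. Let \(\mathcal F_n\) be the free Lie algebra over \(\mathbb C\) on letters \(x_1,\dots,x_n\), graded by degree, \(\mathcal F_n=\bigoplus_{j\ge1}\mathcal F_n^{(j)}\), with \(\dim\mathcal F_n^{(j)}=\ell_n(j)\) by Witt's formula. By the universal property there is a unique Lie algebra homomorphism
\[
\operatorname{ev}:\mathcal F_n\to \operatorname{End}(\mathcal H),\qquad x_i\mapsto H_i ,
\]
where \(\operatorname{End}(\mathcal H)\) carries the commutator bracket. The \(i\)-convention of \Cref{rem:VQAConventions} only rescales each word by a nonzero scalar, so it does not affect spans. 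The first step is to observe that every nested word \(w(i_1,\dots,i_r)\) of \Cref{def:CircWords} is the image under \(\operatorname{ev}\) of the left-normed bracket \([x_{i_r},[\cdots[x_{i_2},x_{i_1}]\cdots]]\). This bracket lies in \(\mathcal F_n^{(r)}\). Hence
\[
\operatorname{span}\mathcal W_{\le k}\subset \operatorname{ev}\Bigl(\textstyle\bigoplus_{j=1}^k\mathcal F_n^{(j)}\Bigr),
\qquad
\dim\operatorname{span}\mathcal W_{\le k}\le \sum_{j=1}^k \ell_n(j).
\]
It is in fact an equality of spans, because left-normed brackets span \(\mathcal F_n^{(j)}\), but only the inclusion is needed. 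The point of this step is that linear relations among words, such as antisymmetry and Jacobi, are already accounted for by the free Lie algebra. So we bound by \(\ell_n(j)\) rather than by the naive count \(n^j\).

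The second step passes from operators to vectors. Choose a basis \(A_1,\dots,A_m\) of \(\operatorname{span}\mathcal W_{\le k}\), with \(m\le\sum_{j\le k}\ell_n(j)\). Every vector \(w\psi_a\) is a linear combination of the \(A_t\psi_a\). So the non-identity part of \(\mathcal K^{(k)}_\Psi(\mathcal S)\) is spanned by the \(ms\) vectors \(A_t\psi_a\), which gives dimension at most \(s\sum_{j=1}^k\ell_n(j)\). In the language of \Cref{thm:determinantal-krylov-dim}, the rank of \(\mathcal E_k(\Psi)\) is at most that of the \(d\times sm\) matrix \([A_t\psi_a]\). Finally, \(\mathcal K^{(k)}_\Psi\subset\mathcal H\) gives the trivial bound \(d\), and the two bounds combine into the minimum.

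The main obstacle is the identity word \(\mathcal W_0=I\), which \Cref{def:KrylovSubspace} includes. It contributes the seeds \(\psi_1,\dots,\psi_s\) themselves, and these need not lie in \(\operatorname{span}\{w\psi_a : w\in\mathcal W_{\le k}\}\). A literal accounting therefore gives \(\min\{d,\ s(1+\sum_{j=1}^k\ell_n(j))\}\). One could try to absorb the identity by adjoining a degree-\(0\) generator to the free Lie algebra, but \(I\) is central and is not a bracket word, so this does not help.

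The plan is therefore to prove the bound as stated for the span of the words of depth \(1\) through \(k\) applied to \(\Psi\), which is the convention the earlier remark says gives the cleaner count. The identity correction then appears as the additive term \(s\), or \(\operatorname{rank}\Psi\le s\) if one counts more carefully. This is exactly the ``quite simple'' change that remark refers to when \(\Psi\) is linearly independent, and I will flag it explicitly so that the stated inequality is read with the seed term either excluded or added.
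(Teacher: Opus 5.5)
Your argument is the same as the paper's: Witt's count $\ell_n(j)$ of independent degree-$j$ Lie words, applied to each of the $s$ seeds and capped by $d$, with your evaluation homomorphism from the free Lie algebra into $\operatorname{End}(\mathcal H)$ making precise what the paper calls working ``modulo the universal Lie identities.'' Your flag about $\mathcal W_0=I$ is correct and is more than bookkeeping, since the displayed bound without the $1$ is false: for $k=s=1$, $d=n+1$, $\psi=e_0$ and $H_j=|e_j\rangle\langle e_0|+|e_0\rangle\langle e_j|$ one gets $\dim\mathcal K^{(1)}_\Psi=n+1>n=\ell_n(1)$, which is why the paper's own proof actually ends with $\min\{d,\,s(1+\sum_{j=1}^k\ell_n(j))\}$ (the form also used in \Cref{cor:witt-upper-bound-kla}), so proving that corrected bound, or your sharper $\operatorname{rank}\Psi+s\sum_{j=1}^k\ell_n(j)$, is exactly right.
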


\begin{proof}
The grade-\(k\) Krylov--Lie subspace is spanned by the seeds in $\Psi$ as well as the vectors \(w(\psi_a)\), where \(w\) ranges over all Lie words in the \(n\) generators of bracket depth at most \(k\), and
\(a=1,\dots,s\). Modulo the universal Lie identities, the space of formal Lie words of homogeneous degree \(j\) has dimension \(\ell_n(j)\) by Witt's formula \cite{Petrogradsky2000}. Therefore, for each
fixed seed vector \(\psi_a\), there are at most \(1+\sum_{j=1}^k \ell_n(j)\) linearly independent
word evaluations \(w(\psi_a)\). Summing over the \(s\) seed vectors and using that
\(\mathcal K_\Psi^{(k)}(\mathcal S)\subset \mathcal H\cong \mathbb C^d\), we obtain
\[
\dim \mathcal K_\Psi^{(k)}(\mathcal S)
\le
\min\!\left\{
d,\;
s \left(1+\sum_{j=1}^k \ell_n(j) \right)
\right\}.
\]
\end{proof}

\begin{corollary}
[Universal KLA upper bound from Witt's formula]
\label{cor:witt-upper-bound-kla}
Let \(\Psi\in \mathcal H^s \setminus\{0\}\), let \(k\ge 1\), and set
\[
r_\Psi^{(k)}:=\dim\mathcal K_\Psi^{(k)}(\mathcal S),
\]
where $\mathcal K_\Psi^{(k)}$ is the Krylov-Lie subspace of \Cref{def:krylov-lie-subspace}. Then
\[
\dim \mathfrak l_\Psi^{(k)}(\mathcal S)
\le (r_\Psi^{(k)})^2,
\]
where $\mathfrak l_\Psi^{(k)}$ is the Krylov-Lie algebra of \Cref{def:krylov-lie-algebra}, and more loosely,
\[
\dim \mathfrak l_\Psi^{(k)}(\mathcal S)
\le
\min\!\left\{
d^2,\;
\left(s \left( 1 + \sum_{j=1}^k \ell_n(j) \right)\right)^2
\right\},
\]
where
\[
\ell_n(j)=\frac{1}{j}\sum_{e\mid j}\mu(e)\,n^{j/e}
\]
is given by Witt's formula for the free Lie algebra on \(n\) generators \cite{Petrogradsky2000}.
\end{corollary}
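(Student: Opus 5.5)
The plan is to bound the dimension of the Krylov--Lie algebra by the dimension of the ambient operator space in which it lives, and then bound that using \Cref{prop:witt-upper-bound-krylov-subspace}.

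\textbf{Step 1: locate $\mathfrak l_\Psi^{(k)}(\mathcal S)$ inside $\mathfrak u(r_\Psi^{(k)})$.} By \Cref{def:krylov-lie-algebra} and the remark following \Cref{def:KrylovAlgebra}, every compressed generator $\widetilde H_j^{(k)}=P_\Psi^{(k)}H_jP_\Psi^{(k)}$ is determined by its restriction to $\mathcal K_\Psi^{(k)}(\mathcal S)$. Every nested commutator of such operators again satisfies $X=P_\Psi^{(k)} X P_\Psi^{(k)}$, since such operators are closed under products. Hence, with the factor-of-$i$ convention of \Cref{rem:VQAConventions}, $\mathfrak l_\Psi^{(k)}(\mathcal S)$ is a real subspace of the skew-Hermitian operators on an $r_\Psi^{(k)}$-dimensional complex space, namely $\mathfrak u(r_\Psi^{(k)})$. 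That space has real dimension $(r_\Psi^{(k)})^2$, which gives the first inequality. The only care needed is to count real dimension: $\mathrm{End}(\mathbb C^{r})$ has real dimension $2r^2$, but the skew-Hermitian part has real dimension exactly $r^2$. Because Lie spans are real spans, the bound is $r^2$ and not $2r^2$.

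\textbf{Step 2: the looser bound.} Insert the estimate
\[
r_\Psi^{(k)}\le \min\!\left\{d,\; s\left(1+\sum_{j=1}^k\ell_n(j)\right)\right\}
\]
from the proof of \Cref{prop:witt-upper-bound-krylov-subspace}, taking the version that includes the seeds themselves, as the proof there concludes. Squaring is monotone on nonnegative integers and commutes with taking the minimum, so $(r_\Psi^{(k)})^2$ is at most the stated minimum of $d^2$ and the squared Witt sum.

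\textbf{Main obstacle.} The only subtle point is making the identification in Step 1 rigorous when the KLA is written on the ambient space, where the compressed generators vanish on the orthogonal complement. I would handle this by observing that the map $X\mapsto X|_{\mathcal K_\Psi^{(k)}}$ is an injective real-linear map on operators of the form $P X P$. Dimensions therefore transfer directly, and no group-theoretic input is needed.
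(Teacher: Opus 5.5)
Your proposal is correct and follows the paper's argument. The paper declares the corollary immediate from \Cref{prop:witt-upper-bound-krylov-subspace}, relying (as in the proof of \Cref{lem:kla-local-algebraicity}) on the fact that a KLA on an $r$-dimensional Krylov subspace sits inside $\mathfrak u(r)$ and so has dimension at most $r^2$; your Step~1 makes that containment explicit. In Step~2 you were right to use the seed-inclusive bound $s\bigl(1+\sum_{j=1}^k\ell_n(j)\bigr)$ established in that proposition's proof, rather than the version without the $1$ that appears in its displayed statement.
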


\begin{proof}
Immediate from \Cref{prop:witt-upper-bound-krylov-subspace}.
\end{proof}

Because matrix multiplication is not as constrained as the Lie bracket, which are compelled to obey antisymmetry and the Jacobi identity, the number of independent directions available at a fixed commutator grade can be substantially smaller than the number of arbitrary prefixes of the same depth. Therefore we must derive separate dimension bounds for the Krylov prefix subspace and the prefix KLA.

\begin{proposition}[Universal upper bound for the Krylov prefix subspace]
\phantomsection
\label{prop:prefix-upper-bound-krylov-subspace}
Let \(\mathcal H\cong\mathbb C^d\), let
\[
\mathcal S=\{H_1,\dots,H_n\}
\subset
\operatorname{End}(\mathcal H),
\]
let $\Psi=(\psi_1,\dots,\psi_s) \in \mathcal H^s\setminus\{0\}$ be a block seed, and let \(k\geq 0\). Further, say that $\mathcal P_\Psi^{(k)}$ is the Krylov prefix subspace of \Cref{def:prefix-subspace-kla} with $\mathfrak p_\Psi^{(k)}$ as its prefix KLA. Then if $n>1$
\[
\dim\mathcal P_\Psi^{(k)}(\mathcal S)
\leq
\min\left\{
d,\;
s\sum_{j=0}^k n^j
\right\} = \min\left\{
d,\;
s\frac{n^{k+1}-1}{n-1}
\right\},
\]
and if \(n=1\), then
\[
\dim\mathcal P_\Psi^{(k)}(\mathcal S)
\leq
\min\left\{
d,\;
s(k+1)
\right\}.
\]
\end{proposition}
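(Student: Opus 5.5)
The plan is a direct spanning-set count, mirroring the proof of \Cref{prop:witt-upper-bound-krylov-subspace} but without any quotient by Lie identities. By \Cref{def:prefix-subspace-kla} and \Cref{def:KrylovSubspace}, the Krylov prefix subspace is
\[
\mathcal P_\Psi^{(k)}(\mathcal S)=\operatorname{span}\left\{H_{i_r}\cdots H_{i_1}\psi_a \,\middle|\, 0\le r\le k,\ 1\le i_1,\dots,i_r\le n,\ 1\le a\le s\right\},
\]
with the $r=0$ term being the seed itself via the convention $\mathcal W_0=I$ of \Cref{rmk:words-convention}. Its dimension is therefore at most the cardinality of this spanning family, and also at most $d$, because $\mathcal P_\Psi^{(k)}\subset\mathcal H\cong\mathbb C^d$.

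The count goes as follows. Fix a seed $\psi_a$. The number of ordered index sequences $(i_1,\dots,i_r)\in\{1,\dots,n\}^r$ of length exactly $r$ is $n^r$, with $n^0=1$ accounting for the seed. This gives at most $\sum_{j=0}^k n^j$ vectors per seed. Summing over the $s$ seeds bounds the size of the spanning family by $s\sum_{j=0}^k n^j$, and taking the minimum with $d$ gives the first inequality. For $n>1$, I will then evaluate the finite geometric series $\sum_{j=0}^k n^j=(n^{k+1}-1)/(n-1)$. For $n=1$, there is a single word of each length $0,\dots,k$, so the sum is $k+1$, which gives the second bound. The case $k=0$ is also covered, since then only the seeds remain and the bound reads $\min\{d,s\}$.

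No step here is genuinely hard. The one point that needs care is the bookkeeping of the identity word. The statement's index $j$ starts at $0$, consistent with \Cref{def:KrylovSubspace} including $I$, whereas the displayed statement of \Cref{prop:witt-upper-bound-krylov-subspace} omits the $+1$, which its proof supplies. I will state explicitly that length-$0$ prefixes contribute exactly $s$ vectors. I will also note that, unlike the Lie case, there is no reduction modulo universal identities: distinct prefixes of the free associative monoid are counted separately, so $n^j$ is the correct (and possibly non-sharp) count at each degree. This explains why the bound is weaker than the Witt-type bound.
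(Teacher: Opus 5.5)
Your proposal is correct and is essentially the paper's own argument. You write $\mathcal P_\Psi^{(k)}$ as the span of the prefix vectors $H_{i_j}\cdots H_{i_1}\psi_a$, count $n^j$ ordered prefixes of each length $0\le j\le k$ per seed, multiply by $s$, cap by $d$, and then either evaluate the geometric series or substitute $n=1$. Your side remark that the displayed Witt-type bound omits the $+1$ that its proof supplies is accurate, but it does not bear on this result.
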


\begin{proof}
By definition, taking a $j$ of $0$ to mean the identity operator,
\[
\mathcal P_\Psi^{(k)}(\mathcal S)
=
\operatorname{span}
\left\{
H_{i_j}\cdots H_{i_1}\psi_a
:
0\leq j\leq k,\;
1\leq i_1,\dots,i_j\leq n,\;
1\leq a\leq s
\right\}.
\]
By elementary combinatorics, for each fixed depth \(j\) there are exactly \(n^j\) ordered prefixes of length \(j\) in the \(n\) generators. Hence, for each fixed seed vector \(\psi_a\), the Krylov prefix subspace is spanned by at most $\sum_{j=0}^k n^j$ word evaluations. Summing over the \(s\) seed vectors gives
\[
\dim\mathcal P_\Psi^{(k)}(\mathcal S)
\leq
s\sum_{j=0}^k n^j.
\]
Since $\mathcal P_\Psi^{(k)}(\mathcal S) \subset \mathcal H$, we have the trivial upper bound of $d$ on the dimension, so combining inequalities yields
\[
\dim\mathcal P_\Psi^{(k)}(\mathcal S)
\leq
\min\left\{
d,\;
s\sum_{j=0}^k n^j
\right\}.
\]
For \(n>1\), the finite geometric series identity provides the formula for the sum, while the \(n=1\) case is handled by direct substitution.
\end{proof}

\begin{corollary}[Universal upper bound for the prefix KLA]
\phantomsection
\label{cor:prefix-upper-bound-kla}
Let $\mathcal P_\Psi^{(k)}(\mathcal S)$
be the prefix Krylov subspace of
\Cref{def:prefix-subspace-kla}. Then
\[
\dim\mathfrak p_\Psi^{(k)}(\mathcal S)
\leq
\left(
\dim \mathcal P_\Psi^{(k)}(\mathcal S)
\right)^2,
\]
where $\mathfrak p_\Psi^{(k)}(\mathcal S)$ is the prefix Krylov-Lie algebra of \Cref{def:prefix-subspace-kla}. Consequently, for $n>1$,
\[
\dim\mathfrak p_\Psi^{(k)}(\mathcal S)
\leq
\min\left\{
d^2,\;
\left(
s\sum_{j=0}^k n^j
\right)^2
\right\}
=
\min\left\{
d^2,\;
\left(
s\frac{n^{k+1}-1}{n-1}
\right)^2
\right\},
\]
and if $n=1$,
\[
\dim\mathfrak p_\Psi^{(k)}(\mathcal S) \leq \min\left\{
d^2,\;
(s(k+1))^2
\right\}
\]

\end{corollary}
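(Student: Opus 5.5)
The plan is to mirror the proof of \Cref{cor:witt-upper-bound-kla}. First I bound the prefix KLA by the dimension of the space it lives in. Then I feed in \Cref{prop:prefix-upper-bound-krylov-subspace}.

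Set $r:=\dim\mathcal P_\Psi^{(k)}(\mathcal S)$. By \Cref{def:prefix-subspace-kla} and \Cref{def:KrylovAlgebra}, the compressed generators $\widetilde H_j = P_\Psi^{(k)} H_j P_\Psi^{(k)}$ all vanish on the orthogonal complement of $\mathcal P_\Psi^{(k)}$ and map into $\mathcal P_\Psi^{(k)}$. The same holds for every nested commutator of them, since a product $AB$ of two such operators again satisfies both properties.

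It follows that $\mathfrak p_\Psi^{(k)}(\mathcal S)$ is a real subspace of the image of $\operatorname{End}(\mathcal P_\Psi^{(k)})$ under $X\mapsto P_\Psi^{(k)} X P_\Psi^{(k)}$. More precisely, under the convention of \Cref{rem:VQAConventions}, the compressed generators are Hermitian, so $i\widetilde H_j$ is anti-Hermitian. The real Lie algebra they generate therefore sits inside $\mathfrak u(\mathcal P_\Psi^{(k)})\cong \mathfrak u(r)$, whose real dimension is $r^2$. This gives
\[
\dim\mathfrak p_\Psi^{(k)}(\mathcal S)\le r^2=\left(\dim\mathcal P_\Psi^{(k)}(\mathcal S)\right)^2 .
\]

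Next I substitute the bound of \Cref{prop:prefix-upper-bound-krylov-subspace}. For $n>1$ this is $r\le\min\{d,\, s(n^{k+1}-1)/(n-1)\}$, and for $n=1$ it is $r\le \min\{d,\,s(k+1)\}$. Squaring is monotone on nonnegative reals and commutes with $\min$ there. This yields both displayed inequalities, including the geometric-series form of the $n>1$ case.

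The only point needing care is the first step, namely which field the dimension is taken over. If the span in \Cref{def:krylov-lie-algebra} is complex rather than real, the ambient space becomes $\operatorname{End}(\mathcal P_\Psi^{(k)})\cong\mathfrak{gl}(r,\mathbb C)$. That space also has dimension $r^2$ over $\mathbb C$, so the bound holds under either reading. I would state this explicitly and cite \Cref{rem:VQAConventions} for the real-span convention.
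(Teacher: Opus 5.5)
Your proposal is correct and follows the paper's route. The paper's proof is only the remark that the result is immediate from \Cref{prop:prefix-upper-bound-krylov-subspace}: the prefix KLA consists of compressions $P_\Psi^{(k)} X P_\Psi^{(k)}$, so it sits inside $\mathfrak u(r)$ of real dimension $r^2$, and one then squares the prefix-subspace bound. Your check that nested commutators of compressions remain compressions, and your observation that the $r^2$ bound holds under both the real span (anti-Hermitian convention of \Cref{rem:VQAConventions}) and the complex span, supply details the paper leaves implicit.
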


\begin{proof}
Immediate from \Cref{prop:prefix-upper-bound-krylov-subspace}.
\end{proof}
\chapter{Approximation of the Reachable Manifold}
\label{chap:krylov-lie-approximations}

At last, we finally have erected all the requisite architecture to characterize the quality of KLG approximations to the reachable manifolds, especially for the dimension-matched case. However, we must take care to not prematurely disqualify of lower-dimensional KLG approximations, as they may give valuable insights of their own: when a lower-dimensional KLG is measure-faithful, the corresponding comparison map will generically be a submersion rather than a local diffeomorphism, and therefore the reachable manifold contains local fiber directions invisible to the effective Lie-group geometry. This is suggestive of a notion of geometric redundancy that may be relevant for ansatz design and trainability. With that acknowledgement complete, we will begin by proving some helpful lemmas to investigate the properties of dimension-reduced and dimension-matched Krylov-Lie approximations to a reachable manifold.

\section{Preparatory Lemmas}
\begin{lemma}[Analytic Jacobian rank criterion]
\label{lem:jacobian-rank-criterion}
Let \( M\) be the \(m\)-dimensional smooth reachable manifold of a VQA (Equation \ref{eq:reachable-manifold}), equipped with
\[
\mu_{ M}=q\,d\mathrm{vol}_{ M},
\qquad q\in L^1( M), \quad q\ge 0.
\]
Let \(K^{(k)}_{\Psi}\) be a depth-\(k\) represented Krylov-Lie group (\Cref{def:krylov-lie-group}) of dimension
\[
m_{\mathfrak l}:=\dim K^{(k)}_{\Psi}=\dim \mathfrak l^{(k)}_{\Psi}\le m,
\]
associated to the depth-\(k\) Krylov-Lie algebra \(\mathfrak l^{(k)}_{\Psi}\) (\Cref{def:krylov-lie-algebra}), and realized in \(U(d)\) by projectors onto our subspace (\Cref{def:KrylovAlgebra}) as follows:
\[
\iota^{(k)}_{\Psi}(U)=P^{(k)}_{\Psi}UP^{(k)}_{\Psi}+I-P^{(k)}_{\Psi}.
\]
Let
\[
\Phi^{(k)}_{\Psi}: M\to K^{(k)}_{\Psi}
\]
be a comparison map. Fix a smooth chart \((U,x^1,\dots,x^m)\) on \( M\), and further choose a basis
\(X_1,\dots,X_{m_{\mathfrak l}}\) of \(\mathfrak l^{(k)}_{\Psi}\). Define the projected Jacobian matrix
\[
A(x)=(A_{\alpha\beta}(x))\in \mathrm{Mat}_{m_{\mathfrak l}\times m}(\mathbb R)
\]
by
\[
\left(L_{{\Phi^{(k)}_{\Psi}}(x)^{-1}}\right)_*
\!\left(\frac{\partial \Phi^{(k)}_{\Psi}}{\partial x^\beta}(x)\right)
=
\sum_{\alpha=1}^{m_{\mathfrak l}} A_{\alpha \beta}(x)\,d\iota^{(k)}_{\Psi}(X_\alpha),
\qquad \beta=1,\dots,m,
\]
where $\left(L_{{\Phi^{(k)}_{\Psi}}(x)^{-1}}\right)_* (\cdot)$ denotes the pushforward of the left-translation map by the group element ${\Phi^{(k)}_{\Psi}}(x)^{-1}$ . Assume that the entries of \(A(x)\) are real-analytic on \(U\). If there exists \(x_0\in U\)
such that
\[
\operatorname{rank}A(x_0)=m_{\mathfrak l},
\]
then the full-rank locus
\[
U_{\mathrm{full}}
:=
\{x\in U \, \vert \, \operatorname{rank}d\Phi_{\Psi,x}^{(k)}=m_{\mathfrak l}\}
\]
is a nonempty open subset of \(U\), and \(U\setminus U_{\mathrm{full}}\) has Lebesgue measure
zero in local coordinates. In particular, \(\operatorname{rank}d\Phi_{\Psi,x}^{(k)}=m_{\mathfrak l}\) for
\(\mu_{ M}\)-almost every \(x\in U\).

Hence:
\begin{enumerate}
    \item if \(m_{\mathfrak l}<m\), then \(\Phi^{(k)}_{\Psi}\) is a submersion on \(U_{\mathrm{full}}\);
    \item if \(m_{\mathfrak l}=m\), then \(\Phi^{(k)}_{\Psi}\) is a local diffeomorphism on \(U_{\mathrm{full}}\).
\end{enumerate}

Consequently, if \(M\) admits a finite atlas (\(M\) is compact) by such charts, for which the corresponding
rank-drop loci are \(\mu_{M}\)-null, then
\[
\operatorname{rank}d{\Phi^{(k)}_{\Psi, x}}=\dim K^{(k)}_{\Psi}
\qquad\text{for }\mu_{M}\text{-a.e. }x\in  M.
\]
Therefore, by Theorem \ref{thm:DimensionMatchingJacobian}, the singular component in the Lebesgue--Radon--Nikodym
decomposition of \((\Phi^{(k)}_{\Psi})_*\mu_{ M}\) with respect to Haar measure on \(K^{(k)}_{\Psi}\) vanishes.
\end{lemma}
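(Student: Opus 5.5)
We first reduce the rank of the differential to the rank of the projected Jacobian $A(x)$. Left translation $L_{\Phi(x)^{-1}}$ is a diffeomorphism of $K^{(k)}_\Psi$. The map $d\iota^{(k)}_\Psi$ is injective on $\mathfrak l^{(k)}_\Psi$, because $\iota^{(k)}_\Psi$ is an injective group homomorphism into $U(d)$. Hence $X_1,\dots,X_{m_{\mathfrak l}}$ push forward to a basis of $T_eK^{(k)}_\Psi$. In these bases, the matrix of $d\Phi^{(k)}_{\Psi,x}$ is exactly $A(x)$, after composing with the isomorphism $(L_{\Phi(x)^{-1}})_*$. This gives $\operatorname{rank} d\Phi^{(k)}_{\Psi,x}=\operatorname{rank}A(x)$ for every $x\in U$. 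Moreover, since $A(x)$ has $m_{\mathfrak l}$ rows, the rank is at most $m_{\mathfrak l}$. Therefore $U_{\mathrm{full}}=\{x\in U \mid \operatorname{rank}A(x)=m_{\mathfrak l}\}$.

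Next we describe $U_{\mathrm{full}}$ via minors. It is the union, over all $m_{\mathfrak l}$-element column subsets $J\subset\{1,\dots,m\}$, of the nonvanishing loci of the minors $\Delta_J(x)=\det A_{\cdot J}(x)$. Each $\Delta_J$ is real-analytic, being a polynomial in analytic entries. By continuity, $U_{\mathrm{full}}$ is open. By hypothesis some $\Delta_{J_0}(x_0)\neq0$, so $U_{\mathrm{full}}$ is nonempty.

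The main step is the measure-zero claim. We show that $Z(\Delta_{J_0})$ has Lebesgue measure zero in the chart, and then use $U\setminus U_{\mathrm{full}}\subset Z(\Delta_{J_0})$. We invoke the standard fact that the zero set of a real-analytic function on a \emph{connected} open subset of $\mathbb R^m$ is Lebesgue-null unless the function vanishes identically. This fact is proved by induction on $m$ with Fubini, using that a nonzero one-variable analytic function has isolated zeros.

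\textbf{Expected obstacle.} Connectedness is the subtle point. If $U$ has several components, $\Delta_{J_0}$ is only guaranteed nonzero on the component containing $x_0$. We will take the chart domain $U$ to be connected, which is a harmless convention since charts may be shrunk to connected balls. Equivalently, we can read the hypothesis componentwise. Under this convention the null-set conclusion holds. Because $\mu_M=q\,d\mathrm{vol}_M$ and $d\mathrm{vol}_M$ is a smooth positive density times Lebesgue measure in coordinates, we get $\mu_M(U\setminus U_{\mathrm{full}})=0$.

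The dichotomy follows from the constant rank theorems. If $m_{\mathfrak l}<m$, full rank $m_{\mathfrak l}=\dim K^{(k)}_\Psi$ means $d\Phi$ is surjective, so $\Phi$ is a submersion on $U_{\mathrm{full}}$. If $m_{\mathfrak l}=m$, then $d\Phi$ is bijective, and the inverse function theorem gives a local diffeomorphism.

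For the global statement, take a finite atlas $\{U_i\}$ with $\mu_M$-null rank-drop loci. The global rank-drop set is contained in $\bigcup_i (U_i\setminus U_{i,\mathrm{full}})$, a finite union of null sets, and is therefore null. The full-rank condition $\operatorname{rank} d\Phi_x=\dim K^{(k)}_\Psi$ thus holds $\mu_M$-a.e. We then apply Theorem~\ref{thm:DimensionMatchingJacobian} with $\mathcal G^*=K^{(k)}_\Psi$, $n=m_{\mathfrak l}\le m$, and open set the union of the $U_{i,\mathrm{full}}$ (or all of $M$ where $\Phi$ is $C^1$). This yields the vanishing of the singular part of $(\Phi^{(k)}_\Psi)_*\mu_M$ with respect to Haar measure. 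We implicitly assume here that $\Phi^{(k)}_\Psi$ is $C^1$, which is already needed for $A(x)$ to be defined.
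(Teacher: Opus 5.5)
Your proposal is correct and follows the paper's proof essentially step for step. Like the paper, you identify $U_{\mathrm{full}}$ with $\{x\in U \mid \operatorname{rank}A(x)=m_{\mathfrak l}\}$ via left translation and the basis $d\iota^{(k)}_\Psi(X_\alpha)$ of $T_eK^{(k)}_\Psi$, use a minor nonvanishing at $x_0$ for nonemptiness and openness, use the null zero set of that analytic minor for the measure claim, and then apply the submersion and inverse function theorems and \Cref{thm:DimensionMatchingJacobian} over a finite atlas. Your insistence that the chart domain $U$ be connected is a genuine improvement: the paper uses the null-zero-set property of real-analytic functions without that hypothesis, and on a disconnected $U$ the minor could vanish identically on a component not containing $x_0$.
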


\begin{proof}
Because \(K^{(k)}_{\Psi}\) is realized in the ambient unitary group through
\[
\iota^{(k)}_{\Psi}(U)=P^{(k)}_{\Psi}UP^{(k)}_{\Psi}+I-P^{(k)}_{\Psi},
\]
its tangent space at the identity is represented by
\[
d\iota^{(k)}_{\Psi}(\mathfrak l^{(k)}_{\Psi})\subset \mathfrak u(d).
\]
Since \(d\iota^{(k)}_{\Psi}:\mathfrak l^{(k)}_{\Psi}\to T_eK^{(k)}_{\Psi}\) is the differential of the representing
homomorphism, the vectors
\[
d\iota^{(k)}_{\Psi}(X_1),\dots,d\iota^{(k)}_{\Psi}(X_{m_{\mathfrak l}})
\]
form a basis of \(T_eK^{(k)}_{\Psi}\).

For each \(x\in U\), left translation by \(\Phi^{(k)}_{\Psi}(x)^{-1}\) identifies \(T_{\Phi^{(k)}_{\Psi}(x)}K^{(k)}_{\Psi}\)
with \(T_eK^{(k)}_{\Psi}\). Thus \(d{\Phi^{(k)}_{\Psi, x}}:T_x M\to T_{\Phi^{(k)}_{\Psi}(x)}K^{(k)}_{\Psi}\) has rank \(m_{\mathfrak l}\) if and only if
the vectors
\[
\left(L_{\Phi^{(k)}_{\Psi}(x)^{-1}}\right)_*
\!\left(\frac{\partial \Phi^{(k)}_{\Psi}}{\partial x^\beta}(x)\right),
\qquad \beta=1,\dots,m,
\]
span \(T_eK^{(k)}_{\Psi}\). By definition of \(A(x)\), this is equivalent to
\[
\operatorname{rank}A(x)=m_{\mathfrak l}.
\]
Hence
\[
U_{\mathrm{full}}
=
\{x\in U \, \vert \,\operatorname{rank}A(x)=m_{\mathfrak l}\}.
\]

Since \(\operatorname{rank}A(x_0)=m_{\mathfrak l}\), some \(m_{\mathfrak l}\times m_{\mathfrak l}\) minor \(\Delta(x)\) of \(A(x)\) satisfies
\[
\Delta(x_0)\neq 0.
\]
Because the entries of \(A(x)\) are nontrivial real-analytic and \(\Delta(x)\) is not identically zero, \(\Delta(x)\) too is nontrivial real-analytic. Therefore \(\Delta\) is
not identically zero on \(U\), and its nonvanishing locus
\[
U_\Delta:=\{x\in U \, \vert \,\Delta(x)\neq 0\}
\]
is a nonempty open subset of \(U\). Moreover, \(U_\Delta\subset U_{\mathrm{full}}\), so
\(U_{\mathrm{full}}\) is nonempty. To see that $U_\text{full}$ is open, writing $Z$ as the finite index set of all $m_{\mathfrak l}  \times m_{\mathfrak l} $ minors of $A(x)$ associated with some size-$m_{\mathfrak l} $ subset of columns, and for $z\in Z$ letting $\Delta_{z}$ denote the corresponding minor, we have 
\[
U \setminus U_\text{full} = \bigcap_{z \in Z} \{x \in U \, \vert \, \Delta_z(x) = 0\},
\]
which is closed as a finite intersection of closed sets (each $\Delta_z$ is real-analytic, thus continuous, and in turn the preimage of the closed singleton $\{0\}$ for any such $\Delta_z$ must be closed). Hence $U_\text{full}$ is open as the complement of a closed set in $U$.

On the other hand,
\[
U\setminus U_{\mathrm{full}}
\subset
\{x\in U \, \vert \,\Delta(x)=0\}.
\]
Since the zero set of a nontrivial real-analytic function has Lebesgue measure zero in local
coordinates, \(U\setminus U_{\mathrm{full}}\) is coordinate-Lebesgue null. Because
\(\mu_M=q\,d\mathrm{vol}_M\) is absolutely continuous with respect to the smooth volume measure,
it follows that \(U\setminus U_{\mathrm{full}}\) is also \(\mu_{ M}\)-null. Thus
\[
\operatorname{rank}d{\Phi^{(k)}_{\Psi, x}}=m_{\mathfrak l}
\qquad\text{for }\mu_M\text{-a.e. }x\in U.
\]

If \(m_{\mathfrak l}<m\), surjectivity of \(d{\Phi^{(k)}_{\Psi, x}}\) means exactly that \(\Phi^{(k)}_{\Psi}\) is a submersion at
\(x\) \cite[Ch.~4]{Lee2013SmoothManifolds}. If \(m_{\mathfrak l}=m\), then \(d\Phi^{(k)}_{\Psi,x}:T_x  M\to T_{\Phi^{(k)}_{\Psi}(x)}K^{(k)}_{\Psi}\) is a linear map between
vector spaces of equal dimension, so rank \(m\) means invertibility; by the inverse function
theorem, \(\Phi^{(k)}_{\Psi}\) is then a local diffeomorphism at \(x\). This proves the two cases.

Finally, if \(M\) is covered by a finite atlas (again, \( M\) is compact) of such charts and the union of the
rank-drop loci is \(\mu_{ M}\)-null, then
\[
\operatorname{rank}d{\Phi^{(k)}_{\Psi, x}}=\dim K^{(k)}_{\Psi}
\qquad\text{for }\mu_{ M}\text{-a.e. }x\in  M.
\]
The last claim is then an immediate application of Theorem \ref{thm:DimensionMatchingJacobian}.
\end{proof}

\begin{proposition}[Canonical circuit lift satisfies the analytic Jacobian hypothesis]
\phantomsection
\label{prop:canonical-circuit-lift}
Let \(M\) be the \(m\)-dimensional smooth reachable manifold of a VQA, equipped with
\[
\mu_M = q\,d\mathrm{vol}_M,
\qquad q\in L^1(M), \quad q\ge 0.
\]
Let \(K^{(k)}_{\Psi}\) be a depth-\(k\) represented Krylov-Lie group (\Cref{def:krylov-lie-group}) of dimension
\[
m_{\mathfrak l}:=\dim K^{(k)}_{\Psi}=\dim\mathfrak l^{(k)}_{\Psi}\le m,
\]
associated to the depth-\(k\) Krylov--Lie algebra \(\mathfrak l^{(k)}_{\Psi}\) (\Cref{def:krylov-lie-algebra}), and realized in \(U(d)\) by by projectors onto our subspace (\Cref{def:KrylovAlgebra}):
\[
\iota^{(k)}_{\Psi}(U)=P^{(k)}_{\Psi}UP^{(k)}_{\Psi}+I-P^{(k)}_{\Psi}.
\]
Fix a chart \((U,x^1,\dots,x^m)\) on \(M\). Assume that on \(U\) the reachable manifold is
described by a circuit with fixed Hermitian generators \(H_1,\dots,H_n\) and real-analytic
parameter functions \(\theta_1,\dots,\theta_p:U\to\mathbb R\), so that the corresponding circuit
unitary is, for $i_l \in \{1, \dots, n \}$ and $1 \leq l \leq p$,
\[
\mathcal U(x)=e^{-i\theta_p(x)H_{i_p}}\cdots e^{-i\theta_1(x)H_{i_1}}.
\]
Let
\[
\widetilde H_{j}^{(k)}:=P^{(k)}_{\Psi}H_jP^{(k)}_{\Psi}\big|_{\mathcal K^{(k)}_{\Psi}}\in \mathrm{End}(\mathcal K^{(k)}_{\Psi}),
\qquad j=1,\dots,n,
\]
and assume that
\[
\widetilde H_j^{(k)}\in \mathfrak l^{(k)}_{\Psi}
\qquad\text{for all }j=1,\dots,n.
\]
Define the
canonical depth-\(k\) comparison map
\[
\kappa^{(k)}_{\Psi}:U\to K^{(k)}_{\Psi}
\]
by
\[
\kappa^{(k)}_{\Psi}(x)
:=
e^{-i\theta_p(x) \widetilde H_{i_p}^{(k)}}\cdots e^{-i\theta_1(x) \widetilde H_{i_1}^{(k)}}.
\]

Then \(\kappa^{(k)}_{\Psi}\) is well-defined and real-analytic. Moreover, if \(X_1,\dots,X_{m_{\mathfrak l}}\) is any basis of
\(\mathfrak l^{(k)}_{\Psi}\), and if \(A(x)=(A_{\alpha \beta}(x))\in \mathrm{Mat}_{m_{\mathfrak l}\times m}(\mathbb R)\) (\Cref{lem:jacobian-rank-criterion}) is defined by
\[
\left(L_{\kappa^{(k)}_{\Psi}(x)^{-1}}\right)_*
\!\left(\frac{\partial\kappa^{(k)}_{\Psi}}{\partial x^\beta}(x)\right)
=
\sum_{\alpha=1}^{m_{\mathfrak l}} A_{\alpha \beta}(x)\,d\iota^{(k)}_{\Psi}(X_\alpha),
\qquad \beta=1,\dots,m,
\]
then every entry \(A_{\alpha \beta}(x)\) is real-analytic on \(U\).

Consequently, if there exists \(x_0\in U\) such that
\[
\operatorname{rank}d\kappa_{\Psi,x_0}^{(k)}=m_{\mathfrak l},
\]
then the full-rank locus
\[
U_{\mathrm{full}}
:=
\{x\in U \, \vert \,\operatorname{rank}d\kappa_{\Psi,x}^{(k)}=m_{\mathfrak l}\}
\]
is a nonempty open subset of \(U\), and \(U\setminus U_{\mathrm{full}}\) has \(\mu_M\)-measure zero.
Hence:
\begin{enumerate}
    \item if \(m_{\mathfrak l}<m\), then \(\kappa^{(k)}_{\Psi}\) is a submersion on \(U_{\mathrm{full}}\);
    \item if \(m_{\mathfrak l}=m\), then \(\kappa^{(k)}_{\Psi}\) is a local diffeomorphism on \(U_{\mathrm{full}}\);
    \item if \(m_{\mathfrak l}=m\) and \(\kappa^{(k)}_{\Psi}\) is injective on \(U_{\mathrm{full}}\), then
    \[
    \kappa^{(k)}_{\Psi}:U_{\mathrm{full}}\to \kappa^{(k)}_{\Psi}(U_{\mathrm{full}})
    \]
    is a diffeomorphism;
    \item if in addition \(\kappa^{(k)}_{\Psi}(U_{\mathrm{full}})=K^{(k)}_{\Psi}\), then \(\kappa^{(k)}_{\Psi}\) is a diffeomorphism
    from \(U_{\mathrm{full}}\) onto \(K^{(k)}_{\Psi}\).
\end{enumerate}
\end{proposition}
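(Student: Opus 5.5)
The proof has two parts: first show that $\kappa^{(k)}_{\Psi}$ and the projected Jacobian $A(x)$ are real-analytic, and then reduce everything else to \Cref{lem:jacobian-rank-criterion}.

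\textbf{Well-definedness and analyticity of $\kappa^{(k)}_{\Psi}$.} Each $\widetilde H_j^{(k)}$ is Hermitian on $\mathcal K^{(k)}_{\Psi}$, since it is the compression of a Hermitian operator. Under the convention of \Cref{rem:VQAConventions}, $-i\widetilde H_j^{(k)}$ lies in $\mathfrak l^{(k)}_{\Psi}$ by hypothesis. Hence every factor $e^{-i\theta_l(x)\widetilde H^{(k)}_{i_l}}$ is the exponential of an element of the KLA. Such an exponential lies in $\Pi^{(k)}_{\Psi}(\widetilde K^{(k)}_{\Psi})\subset K^{(k)}_{\Psi}$, and the finite product stays in the group, so $\kappa^{(k)}_{\Psi}(x)\in K^{(k)}_{\Psi}$. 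The matrix exponential is an entire power series, and the $\theta_l$ are real-analytic. Therefore each factor is real-analytic in $x$, and so is the product. Composing with the analytic embedding $\iota^{(k)}_{\Psi}$ keeps this analytic into $U(d)$.

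\textbf{Explicit form of the Jacobian.} I would write the left-trivialized derivative explicitly using the product rule. Set
\[
V_l(x):=e^{-i\theta_l(x)\widetilde H^{(k)}_{i_l}}\cdots e^{-i\theta_1(x)\widetilde H^{(k)}_{i_1}}.
\]
Differentiating term by term gives
\[
\kappa^{(k)}_{\Psi}(x)^{-1}\frac{\partial \kappa^{(k)}_{\Psi}}{\partial x^\beta}(x)
=\sum_{l=1}^{p}\frac{\partial\theta_l}{\partial x^\beta}(x)\,\Ad_{V_{l-1}(x)^{-1}}\bigl(-i\widetilde H^{(k)}_{i_l}\bigr).
\]
Each summand lies in $\mathfrak l^{(k)}_{\Psi}$, because $\Ad$ of a group element preserves the Lie algebra of that group. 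The inverse $V_{l-1}(x)^{-1}$ is again a product of analytic exponentials, so the summand is analytic in $x$. The derivatives $\partial\theta_l/\partial x^\beta$ are analytic because the $\theta_l$ are.

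To extract the coordinates $A_{\alpha\beta}(x)$, fix a linear left inverse of the injective map $d\iota^{(k)}_{\Psi}$ restricted to $\mathfrak l^{(k)}_{\Psi}$. One concrete choice is to apply the Hilbert--Schmidt Gram inverse to the inner products with $d\iota^{(k)}_{\Psi}(X_\alpha)$. The resulting coordinates are fixed linear functionals of analytic matrix-valued functions, so every $A_{\alpha\beta}$ is real-analytic on $U$.

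This is the step I expect to need the most care. The key points are that left translation corresponds to multiplication by $\kappa^{-1}$ inside the matrix group, and that the conjugated generators never leave $\mathfrak l^{(k)}_{\Psi}$. The latter is exactly where the hypothesis $\widetilde H^{(k)}_j\in\mathfrak l^{(k)}_{\Psi}$ is used.

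\textbf{Consequences.} As in the proof of \Cref{lem:jacobian-rank-criterion}, $\operatorname{rank}d\kappa^{(k)}_{\Psi,x}=\operatorname{rank}A(x)$. So the hypothesis $\operatorname{rank}d\kappa^{(k)}_{\Psi,x_0}=m_{\mathfrak l}$ is precisely the hypothesis of that lemma. The lemma then gives that $U_{\mathrm{full}}$ is nonempty and open, that $U\setminus U_{\mathrm{full}}$ is $\mu_M$-null, and items (1) and (2).

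For (3), an injective local diffeomorphism is a bijection onto its image, and that image is open by the local diffeomorphism property. Its inverse is smooth locally by the inverse function theorem, so the map is a diffeomorphism onto its image.

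For (4), substitute $\kappa^{(k)}_{\Psi}(U_{\mathrm{full}})=K^{(k)}_{\Psi}$ into (3).
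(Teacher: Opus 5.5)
Your proposal is correct and follows the paper's proof essentially step for step: each factor lies in $K^{(k)}_{\Psi}$ as a one-parameter subgroup of $\mathfrak l^{(k)}_{\Psi}$, the product rule gives the same formula $\kappa^{-1}\partial_\beta\kappa=\sum_l \partial_\beta\theta_l\,\Ad_{V_{l-1}^{-1}}(-i\widetilde H^{(k)}_{i_l})$ (the hypothesis $\widetilde H^{(k)}_j\in\mathfrak l^{(k)}_{\Psi}$ keeps every term in the algebra), \Cref{lem:jacobian-rank-criterion} then gives the rank locus and items (1)--(2), and the fact that an injective local diffeomorphism is a diffeomorphism onto its image gives (3)--(4). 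Your Gram-inverse extraction of the $A_{\alpha\beta}$ and your remark that the image in (3) is open only spell out steps the paper leaves terse; note also that, like the paper, you inherit from that lemma the unstated requirement that the chart domain $U$ be connected, without which a single full-rank point $x_0$ does not force $U\setminus U_{\mathrm{full}}$ to be $\mu_M$-null.
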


\begin{proof}
Because \(\widetilde H_{i_j}^{(k)}\in\mathfrak l^{(k)}_{\Psi}\) for every \(j\) ($i_j$ represents an index from the generator set), each one-parameter subgroup
\[
t\mapsto e^{-it \widetilde H_{i_j}^{(k)}}
\]
lies in \(K^{(k)}_{\Psi}\). Since \(K^{(k)}_{\Psi}\) is a group, the product
\[
\kappa^{(k)}_{\Psi}(x)
=
e^{-i\theta_p(x) \widetilde H_{i_p}^{(k)}}\cdots e^{-i\theta_1(x) \widetilde H_{i_1}^{(k)}}
\]
also lies in \(K^{(k)}_{\Psi}\) for every \(x\in U\). Thus \(\kappa^{(k)}_{\Psi}\) is well-defined.

Each entry of \(e^{-i\theta_j(x) \widetilde H_{i_j}^{(k)}}\) is real-analytic in \(x\), because the matrix
exponential is real-analytic and \(\theta_j(x)\) is real-analytic. Therefore \(\kappa^{(k)}_{\Psi}\), being a
finite product of such matrix-valued real-analytic maps, is itself real-analytic.

Write
\[
G_j(x):=e^{-i\theta_j(x) \widetilde H_{i_j}^{(k)}},
\qquad
\kappa^{(k)}_{\Psi}(x)=G_p(x)\cdots G_1(x).
\]
Since \( \widetilde H_{i_j}^{(k)}\) commutes with its own exponential,
\[
\frac{\partial G_j}{\partial x^\beta}(x)
=
-\bigl(\partial_\beta\theta_j(x)\bigr)\,i \widetilde H_{i_j}^{(k)}\,G_j(x).
\]
Differentiating the product gives
\[
\frac{\partial\kappa^{(k)}_{\Psi}}{\partial x^\beta}(x)
=
-\sum_{r=1}^p
G_p(x)\cdots G_{r+1}(x)\,
\bigl(\partial_\beta\theta_r(x)\bigr)i \widetilde H_{i_r}^{(k)}G_r(x)\cdots G_1(x).
\]
Left-translating to the identity yields
\[
\left(L_{\kappa^{(k)}_{\Psi}(x)^{-1}}\right)_*
\!\left(\frac{\partial\kappa^{(k)}_{\Psi}}{\partial x^\beta}(x)\right)
=
-\sum_{r=1}^p
\bigl(\partial_\beta\theta_r(x)\bigr)
\! \operatorname{Ad}_{(G_{r-1}(x)\cdots G_1(x))^{-1}}
\!\bigl(i \widetilde H_{i_r}^{(k)}\bigr).
\]
Each term on the right-hand side depends real-analytically on \(x\), since it is obtained from
the real-analytic maps \(G_j(x)\) by matrix multiplication and inversion inside the unitary
group. Hence the left-translated tangent vector is a real-analytic \(\mathfrak l^{(k)}_{\Psi}\)-valued map.
Expanding in the fixed basis \(X_1,\dots,X_{m_{\mathfrak l}}\) (each term of the sum is in the Lie algebra since the adjoint action of Lie group elements preserves the Lie algebra), we conclude that every coefficient
\(A_{\alpha\beta}(x)\) is real-analytic on \(U\).

Thus \(\kappa^{(k)}_{\Psi}\) satisfies the hypotheses of \Cref{lem:jacobian-rank-criterion}.
If there exists \(x_0\in U\) with
\[
\operatorname{rank}d\kappa_{\Psi,x_0}^{(k)}=m_{\mathfrak l},
\]
then the full-rank locus \(U_{\mathrm{full}}\) is nonempty and open, and its complement is
\(\mu_M\)-null. On \(U_{\mathrm{full}}\), the map is therefore a submersion when \(m_{\mathfrak l}<m\), and a
local diffeomorphism when \(m_{\mathfrak l}=m\) \cite[Thms.~4.5 and 4.12]{Lee2013SmoothManifolds}.

If \(m_{\mathfrak l}=m\) and \(\kappa^{(k)}_{\Psi}\) is injective on \(U_{\mathrm{full}}\), then a bijective local
diffeomorphism onto its image is a diffeomorphism onto that image \cite[Ch.~4]{Lee2013SmoothManifolds}. If, moreover,
\(\kappa^{(k)}_{\Psi}(U_{\mathrm{full}})=K^{(k)}_{\Psi}\), then \(\kappa^{(k)}_{\Psi}\) is a diffeomorphism from \(U_{\mathrm{full}}\)
onto \(K^{(k)}_{\Psi}\) \cite[Ch.~4]{Lee2013SmoothManifolds}.
\end{proof}

\section{The Krylov-Lie Approximation Theorem and Related Results}
\begin{theorem}[Compact, Full-Rank Krylov--Lie Approximation of Reachable Manifolds]
\label{thm:main_klg_approximation}

Let \(M\) be the \(m\)-dimensional smooth reachable manifold of a variational quantum
algorithm, equipped with the finite measure
\[
\mu_M = q\, d\mathrm{vol}_M,
\qquad
q \in C^1(M), \qquad q \ge 0.
\]
Fix a commutator depth \(k \ge 1\), a block size \(s \ge 1\), and a finite Hermitian
generator set \(S = \{H_1,\dots,H_n\} \subset \End(\mathcal H)\), where
\(\mathcal H \cong \mathbb C^d\) (see \Cref{def:seeds-and-block-seeds} and \Cref{def:krylov-lie-algebra}).
Let
\[
\Sigma_s := \bigl \{(\psi_1,\dots,\psi_s)\in \mathcal H^s \, \big \vert \, \|\psi_j\|=1, \, \forall \, j \text{ s.t. } 1\leq j \leq s\bigr \}
\]
denote the compact unit-norm block-seed space.

For each \(\Psi \in \Sigma_s\), let \(\mathcal K^{(k)}_{\Psi}\) denote the associated depth-\(k\)
Krylov--Lie subspace of \Cref{def:krylov-lie-subspace}, let \(\mathfrak l^{(k)}_{\Psi}\) denote the associated depth-\(k\)
Krylov--Lie algebra of \Cref{def:krylov-lie-algebra}, and let \(K^{(k)}_{\Psi}\) denote the corresponding represented
compact Krylov--Lie group of \Cref{def:krylov-lie-group}. Set
\[
r(\Psi):=\dim \mathcal K^{(k)}_{\Psi}(S),
\qquad
\ell(\Psi):=\dim \mathfrak l^{(k)}_{\Psi}(S).
\]
Fix integers \(r,\ell\) and suppose the seed stratum
\[
Y_{r,\ell}
:=
\{\Psi\in \Sigma_s \, \vert \, r(\Psi)=r,\ \ell(\Psi)=\ell\}
\]
is nonempty. Assume moreover that \(\ell \le m\), and that on \(Y_{r,\ell}\) there is given,
for each \(\Psi\), a canonical comparison map
\[
\kappa_\Psi^{(k)} : M \longrightarrow K^{(k)}_{\Psi}
\]
of the form constructed in \Cref{prop:canonical-circuit-lift}, so that in local coordinates its Jacobian
coefficients are real-analytic. Additionally, assume there exists \(\Psi_0 \in Y_{r,\ell}\) such that, on each chart of some finite atlas
of \(M\), the corresponding rank test from \Cref{lem:jacobian-rank-criterion} is passed at least once for
\(\kappa_{\Psi_0}^{(k)}\).

Let \(D_1,\dots,D_N\) denote the finite family of determinantal minors from \S3.4 that cut out the
constant-dimension seed stratum \(Y_{r,\ell}\). Then there exists \(\delta>0\) such that the
\(\delta\)-nondegenerate substratum
\[
\mathcal G_{r,\ell}^{\delta}
:=
\bigl \{\Psi\in Y_{r,\ell} \,  \big \vert\, |D_j(\Psi)|\ge \delta \text{ for all } j=1,\dots,N \bigr \}
\]
is nonempty and compact. Every \(\Psi \in \mathcal G_{r,\ell}^{\delta}\) has the same dimension data
\((r,\ell)\), and for every such \(\Psi\) the following hold.

\begin{enumerate}
\item Define the full-rank locus
\[
M_{\mathrm{full}}(\Psi)
:=
\{x\in M \, \vert \, \operatorname{rank}(d\kappa_{\Psi,x}^{(k)}) = \ell\}.
\]
Then \(M_{\mathrm{full}}(\Psi)\) is open and \(\mu_M(M\setminus M_{\mathrm{full}}(\Psi))=0\).

\item For every compact subset
\[
C \subset M_{\mathrm{full}}(\Psi),
\]
the rank condition is uniform on \(C\): after possibly shrinking each chart intersecting \(C\),
some \(\ell\times \ell\) Jacobian minor of \(\kappa_\Psi^{(k)}\) is continuous and uniformly bounded away from
zero on the corresponding portion of \(C\).

\item If \(\ell < m\), then \(\kappa_\Psi^{(k)}|_C\) is a submersion at every point of \(C\).

\item If \(\ell = m\), then \(\kappa_\Psi^{(k)}|_C\) is a local diffeomorphism at every point of
\(C\). If, in addition, \(\kappa_\Psi^{(k)}|_C\) is injective, then
\[
\kappa_\Psi^{(k)}|_C : C \longrightarrow \kappa_\Psi^{(k)}(C)
\]
is a diffeomorphism onto its image.

\item Let \(\mu_{K_\Psi^{(k)}}\) denote Haar probability measure on
\(K^{(k)}_{\Psi}\). Then the pushforward measure
\[
\nu_{\Psi,C}^{(k)}:=(\kappa_\Psi^{(k)})_*(\mu_M|_C)
\]
is absolutely continuous with respect to \(\mu_{K_\Psi^{(k)}}\), and admits a unique
Radon--Nikodym derivative
\[
\rho_{\Psi,C}^{(k)}
:=
\frac{d\nu_{\Psi,C}^{(k)}}{d\mu_{K_\Psi^{(k)}}}
\in L^1(K^{(k)}_{\Psi},\mu_{K_\Psi^{(k)}}).
\]
Moreover, \(\rho_{\Psi,C}^{(k)}\) has a representative given locally by the sheet-sum/coarea formula
which is differentiable on the open set
\[
\kappa_\Psi^{(k)}(C^\circ)\setminus \kappa_\Psi^{(k)}(\partial C),
\]
where \(C^\circ\) and \(\partial C\) denote the interior and boundary of \(C\) relative to \(M\),
and bounded on \(\kappa_\Psi^{(k)}(C)\). If, in addition, \(C\) is saturated with respect
to \(\kappa_\Psi^{(k)}\) in the sense that
\[
C = \bigl(\kappa_\Psi^{(k)}\bigr)^{-1}(D)
\text{ for some compact } D\subset K_\Psi^{(k)},
\]
so that $C$ is a union of whole fibers of $\kappa_\Psi^{(k)}$, then \(\rho_{\Psi,C}^{(k)}\) is differentiable, bounded, and Lipschitz on all of
\(\kappa_\Psi^{(k)}(C)\).

\item Let \(\mathcal U:M\to U(d)\) denote the original circuit map as in \Cref{prop:canonical-circuit-lift}, and let
\[
\iota_\Psi^{(k)}: K^{(k)}_{\Psi}\hookrightarrow U(d)
\]
denote the represented embedding. Define the compact-subset uniform approximation error by
\[
\mathcal E_C(k \, ; \Psi)
:=
\sup_{x\in C}
\bigl\|\mathcal U(x)-\iota_\Psi^{(k)}(\kappa_\Psi^{(k)}(x))\bigr\|_{\mathrm{HS}}.
\]
Then \(\mathcal E_C\) is continuous on \(\mathcal G_{r,\ell}^{\delta}\). Consequently, since
\(\mathcal G_{r,\ell}^{\delta}\) is compact, \(\mathcal E_C\) attains its minimum on
\(\mathcal G_{r,\ell}^{\delta}\). In particular, there exists an optimal seed
\[
\Psi_C^\ast \in \mathcal G_{r,\ell}^{\delta}
\]
minimizing \(\mathcal E_C\) among all seeds in that admissible set.

\item For every bounded Lipschitz observable \(F:U(d)\to\mathbb R\),
\[
\left|
\int_C F(\mathcal U(x))\, d\mu_M(x)
-
\int_{K^{(k)}_{\Psi}}
F(\iota_\Psi^{(k)}(g))\, \rho_{\Psi,C}^{(k)}(g)\, d\mu_{K_\Psi^{(k)}}(g)
\right|
\le
\operatorname{Lip}(F)\,\mu_M(C)\,\mathcal E_C(k \, ; \Psi).
\]
Hence, on every compact subset of the full-rank locus, the represented Krylov--Lie group
gives a geometrically faithful and measure-faithful approximation to the original
reachable-manifold integrals.
\end{enumerate}
\end{theorem}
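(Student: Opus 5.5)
The proof splits into a seed-space part (nonemptiness and compactness of \(\mathcal G^{\delta}_{r,\ell}\), and continuity of \(\mathcal E_C\)) and a per-seed geometric part, items (1)–(7). Item (1) should be the only place where real work happens; the rest follows from earlier results and standard facts.

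\textbf{Seed space.} Compactness of \(\mathcal G^{\delta}_{r,\ell}\) is immediate. It is the intersection of the compact set \(\Sigma_s\) with the closed sets \(\{|D_j|\ge\delta\}\), and the strict lower inequalities defining \(Y_{r,\ell}\) are implied by \(|D_j|\ge\delta>0\). So I would take \(D_1,\dots,D_N\) to be the nonvanishing \(r\times r\) minor of \(\mathcal E_k\) together with the (cleared-denominator) \(\ell\times\ell\) minor of \(D(\Psi)\mathcal L_U(\Psi)\) from \Cref{lem:kla-local-algebraicity}. The upper-rank conditions are handled by restricting the definition to \(Y_{r,\ell}\).

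Nonemptiness comes from choosing \(\delta\le\min_j|D_j(\Psi_0)|\), which is positive because \(\Psi_0\in Y_{r,\ell}\) and the minors are chosen nonvanishing at \(\Psi_0\). This requires that \(\mathcal G^\delta_{r,\ell}\) contain a seed that also passes the rank test; \(\Psi_0\) does.

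I would actually need item (1) for every \(\Psi\in\mathcal G^\delta_{r,\ell}\), not just \(\Psi_0\). I would argue this by continuity. On \(\mathcal G^\delta_{r,\ell}\) the basis \(B_\Psi\), the projector \(P_\Psi\), and the compressed generators \(\widetilde H_j(\Psi)\) are rational with denominators bounded away from zero, so they are continuous (\Cref{lem:kla-local-algebraicity}). Hence the Jacobian coefficients \(A(x;\Psi)\) of \Cref{prop:canonical-circuit-lift} are jointly continuous in \((x,\Psi)\). A nonzero minor at \((x_0,\Psi_0)\) therefore persists on a neighbourhood, and shrinking \(\delta\)'s complement region gives the rank test on every chart for all admissible seeds. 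This uniformity is the step I expect to be the main obstacle. The honest fix, if needed, is to shrink \(\mathcal G^\delta_{r,\ell}\) to a compact neighbourhood of \(\Psi_0\) on which the rank test still passes, by intersecting with a small closed ball.

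\textbf{Items (1)–(4).} Given the rank test for \(\Psi\), \Cref{prop:canonical-circuit-lift} and \Cref{lem:jacobian-rank-criterion} on a finite atlas give that \(M_{\mathrm{full}}(\Psi)\) is open and \(\mu_M\)-conull, which is (1).

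For (2), each point of \(C\) has a chart neighbourhood on which some \(\ell\times\ell\) minor is nonzero. Continuity lets me shrink to a closed sub-neighbourhood where the minor is at least half its value. Compactness extracts a finite subcover, and the minimum over finitely many positive bounds is positive.

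Items (3) and (4) are then the submersion and local-diffeomorphism statements of \Cref{prop:canonical-circuit-lift}. For (4), an injective local diffeomorphism on the compact set \(C\) is a homeomorphism onto its image, since it is a continuous bijection from a compact space to a Hausdorff space. Smoothness of the inverse follows from the inverse function theorem.

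\textbf{Item (5).} Absolute continuity is \Cref{thm:DimensionMatchingJacobian} applied to \(\mu_M|_C\). The density is given by the coarea formula \eqref{eq:corea-formula} with \(\tilde q=q\mathbf 1_C\).

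On compact \(C\), the Jacobian \(J_\kappa\) is bounded below by (2), \(q\) is bounded and \(C^1\), and the fibers \(\kappa^{-1}(g)\cap C\) are compact submanifolds of uniformly bounded \(\mathcal H^{m-n}\)-measure. Together these give boundedness of the density.

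Away from \(\kappa(\partial C)\), local submersion coordinates let me write the density as a fiber integral of a \(C^1\) integrand over a smoothly varying compact domain. Differentiating under the integral gives differentiability. In the saturated case the indicator \(\mathbf 1_C\) is constant along fibers, so no boundary term appears. The density is then \(C^1\) on a neighbourhood of the compact set \(\kappa(C)\), hence Lipschitz there.

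\textbf{Items (6) and (7).} For (6), the map \((x,\Psi)\mapsto\|\mathcal U(x)-\iota_\Psi(\kappa_\Psi(x))\|_{\mathrm{HS}}\) is jointly continuous by continuity of \(P_\Psi\) and \(\widetilde H_j(\Psi)\). A supremum over the compact set \(C\) of a jointly continuous function is continuous in \(\Psi\), and the Weierstrass extreme value theorem then gives \(\Psi^*_C\).

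For (7), the change of variables \(\int_C F(\iota\kappa(x))\,d\mu_M=\int F(\iota(g))\rho\,d\mu_K\) holds by the definition of pushforward and Radon–Nikodym derivative. The remaining difference is \(\int_C[F(\mathcal U(x))-F(\iota\kappa(x))]\,d\mu_M\), which is bounded by \(\operatorname{Lip}(F)\,\mu_M(C)\,\mathcal E_C\).
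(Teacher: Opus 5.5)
\textbf{Gap: the rank test is only given at $\Psi_0$.} You flagged this step as the main obstacle, and it is a genuine gap that your continuity argument does not close. Item (1) is asserted for \emph{every} $\Psi\in\mathcal G^{\delta}_{r,\ell}$, but the hypothesis supplies the rank test only at $\Psi_0$.

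Joint continuity of the Jacobian coefficients in $(x,\Psi)$ carries a nonzero $\ell\times\ell$ minor from $(x_0,\Psi_0)$ only to some neighbourhood $\mathcal N$ of $\Psi_0$. The constraints $|D_j(\Psi)|\ge\delta$ bound minors from below; they do not bound the distance to $\Psi_0$. So no choice of $\delta$ forces $\mathcal G^{\delta}_{r,\ell}\subset\mathcal N$: decreasing $\delta$ enlarges the set, and increasing it need not remove seeds far from $\Psi_0$.

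Consider the set of seeds for which all $\ell\times\ell$ Jacobian minors vanish identically on some chart. It is closed in $\mathcal G^{\delta}_{r,\ell}$, being cut out by real-algebraic conditions in $\Psi$. The hypotheses say nothing about which seeds of $\mathcal G^{\delta}_{r,\ell}$ lie in it, except that $\Psi_0$ does not. For any seed that does lie in it, item (1) fails as soon as $q\not\equiv 0$ on the offending chart.

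Your fallback, replacing $\mathcal G^{\delta}_{r,\ell}$ by $\mathcal G^{\delta}_{r,\ell}\cap\overline{B}_\epsilon(\Psi_0)$, is sound. That set is still compact and nonempty, and (1)--(7) then go through verbatim. But it proves a modified theorem, not the one stated.

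\textbf{The paper has the same hole.} Its proof fixes an arbitrary $\Psi\in\mathcal G^{\delta}_{r,\ell}$ and says that ``by assumption, the finite rank test is passed on each chart of a finite atlas.'' That tacitly uses the rank-test hypothesis for every seed of the substratum. Its nonemptiness witness is also a generic $\Psi_\circ$ unrelated to $\Psi_0$, so the paper does not even guarantee that one seed in $\mathcal G^{\delta}_{r,\ell}$ passes the test.

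The honest repair is either to state the stronger hypothesis (rank test for every seed in the substratum) explicitly, or to adopt your ball-restricted substratum. In the second case, your choice of the $D_j$ as minors nonvanishing at $\Psi_0$ is exactly what makes it work, and it improves on the paper's choice.

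\textbf{Everything else.} Your outline of items (2)--(7) coincides with the paper's arguments.

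One small fix for compactness: ``restricting to $Y_{r,\ell}$'' does not by itself give closedness, since $Y_{r,\ell}$ is only locally closed. Say explicitly that $\operatorname{rank}\le r$ is the vanishing of the $(r+1)\times(r+1)$ minors. Also say that, on the set where your selected $r\times r$ minor has modulus at least $\delta$, the condition $\dim\mathfrak l^{(k)}_\Psi\le\ell$ is closed by \Cref{lem:kla-local-algebraicity}.
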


\begin{proof}
The proof is a synthesis of the seed-stratification results from
\Cref{sec:dim-matching-seeds}, the analytic Jacobian criterion of \Cref{lem:jacobian-rank-criterion}, the canonical circuit-lift
construction of \Cref{prop:canonical-circuit-lift}, and the Radon--Nikodym framework from \Cref{thm:DimensionMatchingJacobian}.

First, because \(Y_{r,\ell}\neq\varnothing\), the constructibility and genericity results
of \Cref{thm:global-krylov-constructability} show that the locus of seeds having fixed Krylov-subspace dimension \(r\)
and fixed Krylov--Lie algebra dimension \(\ell\) contains a nonempty relatively
Zariski-open subset of its natural support. Equivalently, if
\(D_1,\dots,D_N\) are the finitely many determinantal minors from \S3.4 cutting out the
constant-dimension stratum, then there exists \(\Psi_\circ\in Y_{r,\ell}\) such that
\(D_j(\Psi_\circ)\neq 0\) for all \(j\). Set
\[
\delta:=\frac12\min_{1\le j\le N}|D_j(\Psi_\circ)|>0.
\]
Then \(\Psi_\circ\in \mathcal G_{r,\ell}^{\delta}\), so \(\mathcal G_{r,\ell}^{\delta}\) is nonempty.
Since each \(D_j\) is continuous on the compact unit-norm seed domain \(\Sigma_s\), the set
\[
\mathcal G_{r,\ell}^{\delta}
=
\bigl\{\Psi\in Y_{r,\ell} \, \big \vert \, |D_j(\Psi)|\ge \delta \text{ for all } j \bigr\}
\]
is closed in \(\Sigma_s\), hence compact.

Now fix \(\Psi\in\mathcal G_{r,\ell}^{\delta}\). By \Cref{prop:canonical-circuit-lift}, the comparison map
\(\kappa_\Psi^{(k)}\) is real-analytic in local coordinates and satisfies the analytic Jacobian
hypothesis of \Cref{lem:jacobian-rank-criterion}. By assumption, the finite rank test is passed on each chart of
a finite atlas, so \Cref{lem:jacobian-rank-criterion} implies that the full-rank locus
\[
M_{\mathrm{full}}(\Psi)=\{x\in M \, \vert \,\operatorname{rank}(d\kappa_{\Psi,x}^{(k)})=\ell\}
\]
is open and has complement of \(\mu_M\)-measure zero. This proves (1).

Next, let \(C\subset M_{\mathrm{full}}(\Psi)\) be compact. For each \(x\in C\), choose a chart
\(V_x\) containing \(x\) and an \(\ell\times \ell\) Jacobian minor of \(\kappa_\Psi^{(k)}\) that is
nonzero at \(x\). By continuity, after shrinking \(V_x\) if necessary, the same minor remains
nonzero on \(V_x\cap C\). Since \(C\) is compact, finitely many such chart neighborhoods cover
\(C\). On each of these finitely many chart pieces, the corresponding minor is continuous on a
compact set and therefore its absolute value attains a positive minimum there. This proves the
uniform rank condition in (2).

Statement (3) is immediate: if \(\ell<m\), then full rank of \(d\kappa_{\Psi,x}^{(k)}\) means
surjectivity onto the \(\ell\)-dimensional tangent space of \(K^{(k)}_{\Psi}\), so
\(\kappa_\Psi^{(k)}|_C\) is a submersion at every point of \(C\). Likewise, if \(\ell=m\), then
\(d\kappa_{\Psi,x}^{(k)}\) is an isomorphism at every point of \(C\), so the inverse function
theorem implies that \(\kappa_\Psi^{(k)}|_C\) is a local diffeomorphism at every point of \(C\),
proving the first part of (4). If in addition \(\kappa_\Psi^{(k)}|_C\) is injective, then the
restriction
\[
\kappa_\Psi^{(k)}|_C : C \to \kappa_\Psi^{(k)}(C)
\]
is a continuous bijection from a compact space to a Hausdorff space, hence a homeomorphism. Since it is already a local diffeomorphism, its inverse is continuously differentiable in local coordinates; on overlaps, by injectivity (the global inverse must be well-defined), these local inverses must agree, so they glue to a single continuously differentiable inverse on the image. Therefore \(\kappa_\Psi^{(k)}|_C\) is a diffeomorphism onto its image. This proves (4).

For (5), since \(\mu_M(M\setminus M_{\mathrm{full}}(\Psi))=0\), \Cref{thm:DimensionMatchingJacobian} implies that
the singular component in the Lebesgue--Radon--Nikodym decomposition of
\((\kappa_\Psi^{(k)})_*\mu_M\) with respect to Haar measure on \(K^{(k)}_{\Psi}\) vanishes.
Restricting to \(C\) preserves absolute continuity, so
\[
\nu_{\Psi,C}^{(k)}:=(\kappa_\Psi^{(k)})_*(\mu_M|_C)\ll \mu_{K_\Psi^{(k)}}.
\]
Hence the Radon--Nikodym derivative \(\rho_{\Psi,C}^{(k)}\) exists and is unique
\(\mu_{K_\Psi^{(k)}}\)-almost everywhere.

To obtain a regular representative, use the finite chart cover from (2). On each chart piece,
the relevant \(\ell\times\ell\) Jacobian minor is bounded below away from zero. If \(\ell=m\), the
inverse function theorem gives the usual finite sheet-sum formula for the pushforward density;
if \(\ell<m\), the corresponding coarea formula gives the density by integration over the fibers.
On the open set
\[
\kappa_\Psi^{(k)}(C^\circ)\setminus \kappa_\Psi^{(k)}(\partial C),
\]
only interior sheets contribute, and the local fiber multiplicity is locally constant. Since
\(q\in C^1(M)\) and the Jacobian minor is uniformly bounded below on each chart piece,
differentiation under the integral is legitimate there. It follows that \(\rho_{\Psi,C}^{(k)}\)
has a representative that is differentiable on
\(\kappa_\Psi^{(k)}(C^\circ)\setminus \kappa_\Psi^{(k)}(\partial C)\). The same local formulas,
together with compactness of \(C\), show that this representative is bounded on
\(\kappa_\Psi^{(k)}(C)\).

Next, if \(C\) is saturated with respect to \(\kappa_\Psi^{(k)}\), say $C = \bigl(\kappa_\Psi^{(k)}\bigr)^{-1}(D)$ for some compact set $D\subset K_\Psi^{(k)}$,
then no fiber is truncated by the source-side boundary: for every \(g\in\kappa_\Psi^{(k)}(C)\),
\[
\bigl(\kappa_\Psi^{(k)}\bigr)^{-1}(g)\cap C = \bigl(\kappa_\Psi^{(k)}\bigr)^{-1}(g).
\]
Thus, on \(\kappa_\Psi^{(k)}(C)\), the sheet-sum/coarea formula involves complete fibers only,
rather than fibers cut off by \(\partial C\). Because \(M_{\mathrm{full}}(\Psi)\) is open and \(C\subset M_{\mathrm{full}}(\Psi)\) is compact,
there exists an open neighborhood \(W\subset M_{\mathrm{full}}(\Psi)\) of \(C\) with compact
closure \(\overline W\subset M_{\mathrm{full}}(\Psi)\). Applying the argument above to
\(\overline W\) in place of \(C\), and shrinking charts if necessary, we obtain an atlas on
which the relevant \(\ell\times\ell\) Jacobian minor of \(\kappa_\Psi^{(k)}\) is continuous and
uniformly bounded away from zero on each chart piece meeting \(\overline W\). On each such
chart, the same sheet-sum/coarea formulas, with integration over entire fibers, define a
function \(\widetilde\rho\) that is \(C^1\) on an open neighborhood \(V\) of
\(\kappa_\Psi^{(k)}(C)\) in \(K^{(k)}_\Psi\). By construction,
\(\widetilde\rho = \rho_{\Psi,C}^{(k)}\) almost everywhere on \(\kappa_\Psi^{(k)}(C)\).

Because \(\widetilde\rho\) is \(C^1\) on \(V\), its derivative is continuous on \(V\), and
since \(\kappa_\Psi^{(k)}(C)\) is compact, the derivative is bounded on
\(\kappa_\Psi^{(k)}(C)\). It follows that \(\widetilde\rho\) is bounded and Lipschitz on
\(\kappa_\Psi^{(k)}(C)\). Thus \(\rho_{\Psi,C}^{(k)}\) admits a representative that is
differentiable, bounded, and Lipschitz on all of \(\kappa_\Psi^{(k)}(C)\), proving (5).

For (6), on any constant-dimension seed stratum the basis-selection and compressed-generator
constructions vary algebraically or rationally with the seed by \Cref{lem:kla-local-algebraicity}, and the
represented circuit lift from \Cref{prop:canonical-circuit-lift} therefore depends continuously on the seed.
Hence
\[
\mathcal E_C(k \, ; \Psi)
=
\sup_{x\in C}
\|\mathcal U(x)-\iota_\Psi^{(k)}(\kappa_\Psi^{(k)}(x))\|_{\mathrm{HS}}
\]
is continuous on \(\mathcal G_{r,\ell}^{\delta}\). Because \(\mathcal G_{r,\ell}^{\delta}\) is compact,
the extreme-value theorem yields existence of a minimizer
\[
\Psi_C^\ast\in \mathcal G_{r,\ell}^{\delta}.
\]
This proves (6).

Finally, for (7), let \(F:U(d)\to\mathbb R\) be bounded and Lipschitz. Then
\[
\left|
\int_C F(\mathcal U(x)) \, d\mu_M(x)
\! - \!
\int_C F(\iota_\Psi^{(k)}(\kappa_\Psi^{(k)}(x))) \, d\mu_M(x)
\right|
\le
\int_C
\left|
F(\mathcal U(x)) \! - \! F(\iota_\Psi^{(k)}(\kappa_\Psi^{(k)}(x)))
\right| d\mu_M(x).
\]
By the Lipschitz property of \(F\),
\[
\left|
F(\mathcal U(x)) - F(\iota_\Psi^{(k)}(\kappa_\Psi^{(k)}(x)))
\right|
\le
\operatorname{Lip}(F)\,
\|\mathcal U(x)-\iota_\Psi^{(k)}(\kappa_\Psi^{(k)}(x))\|_{\mathrm{HS}}
\le
\operatorname{Lip}(F)\,\mathcal E_C(k \, ; \Psi).
\]
Integrating over \(C\) gives
\[
\left|
\int_C F(\mathcal U(x))\, d\mu_M(x)
-
\int_C F(\iota_\Psi^{(k)}(\kappa_\Psi^{(k)}(x)))\, d\mu_M(x)
\right|
\le
\operatorname{Lip}(F)\,\mu_M(C)\,\mathcal E_C(k \, ; \Psi).
\]
By definition of the pushforward measure and extending by zero on $K^{(k)}_{\Psi} \setminus \kappa^{(k)}_{\Psi}(C)$,
\[
\int_C F(\iota_\Psi^{(k)}(\kappa_\Psi^{(k)}(x)))\, d\mu_M(x)
=
\int_{K^{(k)}_{\Psi}} F(\iota_\Psi^{(k)}(g))\, d\nu_{\Psi,C}^{(k)}(g),
\]
and since \(\nu_{\Psi,C}^{(k)}\ll \mu_{K_\Psi^{(k)}}\),
\[
\int_{K^{(k)}_{\Psi}} F(\iota_\Psi^{(k)}(g))\, d\nu_{\Psi,C}^{(k)}(g)
=
\int_{K^{(k)}_{\Psi}}
F(\iota_\Psi^{(k)}(g))\, \rho_{\Psi,C}^{(k)}(g)\, d\mu_{K_\Psi^{(k)}}(g).
\]
Combining these identities proves (7), and therefore the theorem.
\end{proof}

\begin{remark}[Interpretation and uses of the Krylov--Lie approximation theorem]
\label{rem:klg_approx_discussion}
\Cref{thm:main_klg_approximation} is the main structural result linking the
geometric Krylov--Lie construction to the statistical behaviour of finite-depth
variational quantum algorithms. It says, roughly, that on suitable compact
regions of parameter space, the original circuit ensemble can be \emph{approximated}
by a finite-dimensional represented Krylov--Lie group equipped with Haar measure
and a well-behaved Radon--Nikodym density. The approximation enters only through
the replacement of the true circuit map by its Krylov--Lie surrogate; all
measure-theoretic statements on the Krylov--Lie group are exact for that surrogate.

There are several layers to this statement.

\smallskip

\noindent
\emph{(1) Depth-aware Lie-group surrogate.}
For a fixed commutator depth \(k\) and seed block size \(s\), the construction of
\(\mathfrak l^{(k)}_\Psi\) and \(K^{(k)}_\Psi\) attaches to each seed block
\(\Psi\) a finite-dimensional Lie algebra and a compact Lie group that encode the
order-\(k\) expressive content of the ansatz. On the nondegenerate seed substratum
\(\mathcal G_{r,\ell}^\delta\), the dimension data \((r,\ell)\) are constant and
the comparison map
\[
\kappa_\Psi^{(k)} : M \to K^{(k)}_\Psi
\]
has real-analytic Jacobian. Items~(1)--(4) of the theorem say that, for such
seeds, there is an open full-measure locus \(M_{\mathrm{full}}(\Psi)\subset M\)
on which \(\kappa_\Psi^{(k)}\) has constant full rank \(\ell\), and that on any
compact subset \(C\subset M_{\mathrm{full}}(\Psi)\) the map behaves like a
submersion (or a local diffeomorphism when \(\ell=m\)). Thus, locally on
\(M_{\mathrm{full}}(\Psi)\), the reachable manifold admits a depth-\(k\) Lie-group
model of controlled dimension.

\smallskip

\noindent
\emph{(2) Separating geometric approximation from sampling.}
The theorem cleanly separates \emph{geometric} approximation from \emph{sampling}
effects. The geometric error is quantified by
\[
\mathcal E_C(k;\Psi)
=
\sup_{x\in C}
\bigl\|\mathcal U(x)-\iota_\Psi^{(k)}\bigl(\kappa_\Psi^{(k)}(x)\bigr)\bigr\|_{\mathrm{HS}},
\]
which measures how well the represented Krylov--Lie circuit
\(\iota_\Psi^{(k)}\circ\kappa_\Psi^{(k)}\) reproduces the original circuit map
\(\mathcal U\) on the chosen compact set \(C\). By compactness of the nondegenerate seed
substratum, this error can be minimized over admissible seeds, yielding an
``optimal'' Krylov--Lie surrogate for the ansatz on \(C\).

Sampling effects are encoded in item~(5): the reachable-manifold measure
\(\mu_M|_C\) is pushed forward by the comparison map \(\kappa_\Psi^{(k)}\) to a
measure on \(K_\Psi^{(k)}\) that is absolutely continuous with respect to Haar
measure and admits a Radon--Nikodym density \(\rho_{\Psi,C}^{(k)}\). By
definition of the pushforward and the Radon--Nikodym derivative, for any integrable \(F\colon U(d)\to\mathbb R\) one has the \emph{exact} identity
\[
\int_{K_\Psi^{(k)}}
F\bigl(\iota_\Psi^{(k)}(g)\bigr)\,\rho_{\Psi,C}^{(k)}(g)\,d\mu_{K_\Psi^{(k)}}(g)
=
\int_C
F\bigl(\iota_\Psi^{(k)}(\kappa_\Psi^{(k)}(x))\bigr)\,d\mu_M(x).
\]
Thus all deviations from Haar randomness at finite depth are captured in a
single scalar density on the represented Krylov--Lie group, and all moment and
variance formulas on \(K_\Psi^{(k)}\) are \emph{exact} for the surrogate circuit
\(\iota_\Psi^{(k)}\circ\kappa_\Psi^{(k)}\). When these formulas are used to
describe the original circuit \(U\), one must additionally account for the
geometric approximation error \(\mathcal E_C(k;\Psi)\), as quantified in
item~(7) of the theorem.

\smallskip

\noindent
\emph{(3) Locality on the regular part of the reachable manifold.}
A key feature is that the theorem is \emph{local} on the regular part of the
reachable manifold. One does not require \(\kappa_\Psi^{(k)}\) to have full rank
everywhere on \(M\); it is enough that the full-rank locus
\(M_{\mathrm{full}}(\Psi)\) be open and of full \(\mu_M\)-measure. The theorem
then applies on any compact subset \(C\subset M_{\mathrm{full}}(\Psi)\), even if
the global geometry or singularity structure of \(M\) is complicated. In
practice, this means that the analysis can focus on large, well-behaved regions
of parameter space without needing global control. This is important because it
allows one to make useful, quantitative statements about the global
measure-theoretic behaviour of the ansatz by working on compact regions where
the Krylov--Lie surrogate is well behaved and the approximation error
\(\mathcal E_C(k;\Psi)\) is small.

\smallskip

\noindent
\emph{(4) Saturated compact sets and regularity of the density.}
For general compact \(C\subset M_{\mathrm{full}}(\Psi)\), the density
\(\rho_{\Psi,C}^{(k)}\) is obtained locally from the sheet-sum or coarea formulas
and is differentiable away from the image of the boundary
\(\kappa_\Psi^{(k)}(\partial C)\). In many applications, however, one would like
a density that is as regular as possible on all of \(\kappa_\Psi^{(k)}(C)\). The
theorem isolates an additional hypothesis on \(C\) under which this holds: if
\(C\) is saturated with respect to \(\kappa_\Psi^{(k)}\), i.e.
\[
C = \bigl(\kappa_\Psi^{(k)}\bigr)^{-1}\bigl(\kappa_\Psi^{(k)}(C)\bigr),
\]
equivalently if \(C\) is a union of whole fibers of \(\kappa_\Psi^{(k)}\), then
no fiber is truncated by the boundary of \(C\). In this situation the same
sheet-sum/coarea formulas extend across the entire image of \(C\), and the
Radon--Nikodym density admits a representative that is differentiable, bounded,
and Lipschitz on \(\kappa_\Psi^{(k)}(C)\). This yields a particularly
well-conditioned finite-dimensional surrogate: a compact region of the
represented Krylov--Lie group equipped with Haar measure and a Lipschitz weight.
In many realistic ansatz families the entire KLG may not be as regular as such
compact saturated regions, so depending on context one may actually prefer to
work on these regions and accept a potentially worse approximation error as a price for improved conditioning.

\smallskip

\noindent
\emph{(5) Applications and choice of comparison map.}
Taken together, these properties make \Cref{thm:main_klg_approximation} the main
technical bridge between the original VQA ensemble and the practical application
of the Krylov--Lie framework developed later in the thesis. It justifies
replacing the reachable manifold by a depth-aware compact Lie-group surrogate on
large regular regions, optimizing over seeds to minimize the geometric error
\(\mathcal E_C(k;\Psi)\), and performing subsequent moment, variance, and
convergence analyses directly on \(K_\Psi^{(k)}\) with respect to Haar measure
reweighted by \(\rho_{\Psi,C}^{(k)}\). For the purposes of this work, we often
focus on situations where a single, globally faithful Krylov--Lie model is
available, but the theorem also makes clear that one may in principle consider
alternative comparison maps \(\kappa\) or even families of maps (arising from
different seeds or depths) to obtain better pushforward surrogates for a given
observable or region. In all cases, the resulting KLG-based formulas should be
understood as exact for the chosen surrogate ensemble and as approximations to
the original ansatz whose quality is governed explicitly by
\(\mathcal E_C(k;\Psi)\).
\end{remark}

\begin{remark}[Multiple Krylov--Lie charts and reconstruction of the ansatz measure]
\label{rem:multiple_KLG_charts}
The comparison map $\kappa_\Psi^{(k)}\colon M\to K_\Psi^{(k)}$ and represented
Krylov--Lie group $K_\Psi^{(k)}$ provide a single ``chart'' on the reachable
manifold, together with a pushforward measure whose Radon--Nikodym density
$\rho_{\Psi,C}^{(k)}$ records the sampling distortion of the surrogate circuit
on $C$. In general, this chart is only an approximation in the sense discussed
in \Cref{rem:klg_approx_discussion}.

In the ideal case where $\dim M = \dim K_\Psi^{(k)}$ and
$\kappa_\Psi^{(k)}|_C\colon C\to \kappa_\Psi^{(k)}(C)$ is a diffeomorphism onto
its image, the situation is substantially better. Treating $d\mu_{K_\Psi^{(k)}}$ as the volume
measure on $K_\Psi^{(k)}$, the Radon--Nikodym derivative simplifies to the
standard change-of-variables form
\[
\rho_{\Psi,C}^{(k)}(g)
=
\frac{d(\kappa_\Psi^{(k)})_*(\mu_M|_C)}{d\mu_{K_\Psi^{(k)}}}(g)
=
q \! \left((\kappa_\Psi^{(k)})^{-1}(g)\right) \!
\bigl|\det d(\kappa_\Psi^{(k)})^{-1}_g\bigr|,
\qquad g\in \kappa_\Psi^{(k)}(C),
\]
and we extend $\rho_{\Psi,C}^{(k)}$ by zero outside $\kappa_\Psi^{(k)}(C)$.
With this convention, for every integrable $F\colon M\to\mathbb R$ we obtain
the exact formula
\[
\int_C F(U)\,d\mu_M(U)
=
\int_{K_\Psi^{(k)}} F\bigl((\kappa_\Psi^{(k)})^{-1}(g)\bigr)\,
\rho_{\Psi,C}^{(k)}(g)\,d\mu_{K_\Psi^{(k)}}(g).
\]
Thus, in this dimension-matched, diffeomorphism-onto-its-image setting, the comparison map genuinely transfers integrals from the reachable manifold to the Krylov--Lie group via a Radon--Nikodym density that is explicitly computable from the Jacobian of $\kappa_\Psi^{(k)}$ and the original sampling density $q$ on $M$.

A natural question is whether one can use \emph{multiple} seeds and depths to
obtain a family of comparison maps
\[
\kappa_{\Psi_j}^{(k_j)} \colon M \to K_{\Psi_j}^{(k_j)}, \qquad j \in J,
\]
where $J$ is an index set, whose associated pushforward measures collectively determine the original reachable-manifold measure $\mu_M$ (or the law of the circuit map $\mathcal U$). Such a family would play the role of an atlas of Krylov--Lie charts:
each map probes the geometry and sampling law along a different finite-depth
projection, and one might hope to reconstruct global information about
$(M,\mu_M)$ from the collection of pushforwards, in analogy with how
tomographic methods reconstruct a function from its integrals along many
directions. In the dimension-matched, diffeomorphic regime above, this
tomographic picture becomes particularly concrete, because each chart admits an exact Radon--Nikodym formula for integrals over its image, so one can in
principle reconstruct the restriction of the ansatz measure to $C$ from an
integral equation on $\kappa_\Psi^{(k)}(C)\subset K_\Psi^{(k)}$.

Making this idea rigorous would require quantitative invertibility of the joint
forward operator
\[
x \longmapsto \bigl(\kappa_{\Psi_j}^{(k_j)}(x)\bigr)_{j\in J},
\]
for some family of comparison maps indexed by $J$, together with control of
the associated Radon--Nikodym densities. In favorable cases, one could imagine
that an appropriate family of Krylov--Lie groups suffices to recover the original ansatz measure (or at least its low-order moments) by solving an inverse problem on the corresponding product space $\prod_{j\in J} K_{\Psi_j}^{(k_j)}$. However, we have not shown that one can construct enough such dimension-matched diffeomorphic charts in typical VQA settings to compute integrals over the entire reachable manifold in this way. For this reason, we do not use these exact formulas in the remainder of the thesis, and we treat the tomography interpretation as a speculative but nonetheless intriguing direction for future work,
analogous in spirit to using multiple coordinate charts or tomographic projections to recover a global geometric or probabilistic object.
\end{remark}

\begin{remark}[Krylov-Lie groups and Lietic spaces]
\phantomsection
\label{rem:lietic-spaces}
The Krylov-Lie charts and atlases discussed in \Cref{rem:multiple_KLG_charts} propound a broader notion of a manifold-like space that is modeled locally on Lie groups rather than Euclidean geometry. Informally, one may say that a second-countable Hausdorff space
\(\mathcal L\) is a \emph{Lietic space} if it is locally diffeomorphic to
finite-dimensional Lie groups: that is, if it admits an open cover
\(\{U_\alpha\}_{\alpha\in A}\) and diffeomorphisms
\[
\phi_\alpha:U_\alpha\to V_\alpha\subset G_\alpha,
\]
where each \(G_\alpha\) is a finite-dimensional Lie group, such that the
transition maps
\[
\phi_\beta\circ\phi_\alpha^{-1}:
\phi_\alpha(U_\alpha\cap U_\beta)\to\phi_\beta(U_\alpha\cap U_\beta)
\]
are smooth diffeomorphisms between open subsets of Lie groups.

Every smooth manifold is automatically Lietic through Euclidean charts regarded as charts into the additive Lie group \((\mathbb R^n,+)\). Additionally, every Lietic space is a smooth manifold, since being locally diffeomorphic to a Lie group implies being locally diffeomorphic to \(\mathbb R^n\) by composing the local Lie-group charts with Euclidean charts on the Lie groups themselves. The point of this terminology is not to enlarge the class of underlying spaces, but to emphasize a preferred local group structure and the role of Haar measures and Lie-algebraic data in local analysis.

From this perspective, the present Krylov--Lie program can respectably be thought of as an attempt to approximate objects (VQA circuits) that are essentially layer-indexed Lie semigroupoids of admissible partial circuits---concretely, subsemigroupoids of the construction $\operatorname{Free}(0 \rightarrow 1 \rightarrow \dots \rightarrow L) \times S$, the product of the free semigroupoid (or free category, if we include units) on the linear quiver of layer boundaries with an ambient Lie semigroup $S$, whose morphisms are the layer-tagged partial circuits and where composition is inherited from the multiplication of $S$, defined whenever the layer data are compatible (the more classical name for such an object, emphasizing the presence of units, would be a Lie category \cite{Grad_2024})---with such local Lie group charts. The comparison maps \(\kappa_\Psi^{(k)}\) provide Lietic charts on the reachable manifold, and the associated pushforward measures describe how the ansatz measure appears in these local coordinates. Under auspicious circumstances (for example, on compact, dimension-matched, full-rank subsets where \(\kappa_\Psi^{(k)}\) is a diffeomorphism onto its image), the corresponding Radon--Nikodym densities admit exact change-of-variables formulas, so integrals over the reachable manifold can be computed directly on Krylov--Lie groups with Haar measures. One can then view this Lietic picture as a means of avoiding many of the technical difficulties of integrating directly over Lie semigroups or semigroupoids, especially in an invariant fashion \cite{Grad_2024, Argabright1966, Granirer1965}: integration and long-time behavior can be analyzed in convenient local Lie group models with their Lie algebras, with the semigroup/semigroupoid dynamics entering through the choice of charts and their overlaps rather than through a single global Haar-like structure. Conversely, one may ask whether this picture bears on Lie semigroup and semigroupoid theory itself: whether linear Lie semigroups and semigroupoids admit relatively canonical atlases of this type, with charts and transition data determined by generator and seed data rather than ad hoc choices, in a way that extends constructs like the global polar-decomposition charts available for Ol'shanskii semigroups to more general wedges and algebraic structures \cite{Lawson1994}. 

Most generally, one need not restrict the local models to Krylov-Lie groups; any suitable family of compatible Lie group charts (e.g. automorphism groups of Jordan algebras) would fit into this picture, and the global measure-theoretic and dynamical questions reduce to gluing local Haar-based descriptions across overlaps. These gluing
and chart-choice issues at face appear manageable in the compact, finite-dimensional regimes considered here, especially when Krylov-Lie groups provide natural charts tied to the generator set and seed, but we do not develop a general theory of Lietic spaces or Lietic integration in this thesis. The term is primarily introduced for posterity and as a concise way to describe the geometric role played by local Lie-group charts in the Krylov--Lie approximation program.
\end{remark}

\Cref{thm:main_klg_approximation} controls the uniform Hilbert–Schmidt error between the original circuit map and its Krylov–Lie surrogate on compact subsets of the full-rank locus, and shows how this error propagates to expectations of bounded Lipschitz observables on $U(d)$. In many variational settings, however, one is interested in observables that depend only on the action of the circuit on a fixed initial state or on a chosen block seed. In that case, it is natural to refine the approximation theory to a seed-based setting, where the error is measured directly on the orbit of the seed and the observable is Lipschitz with respect to the Hilbert-space norm. The following proposition records this refinement in a form aligned with the notation of \Cref{thm:main_klg_approximation}.

\begin{proposition}[Seed-based Krylov--Lie observable approximation]
\phantomsection
\label{prop:seed-KLG-observable}
Let $M$ be the reachable manifold of a fixed variational ansatz architecture,
realized as a smooth subset of $U(d)$, and let $\mu_M = q\,d\mathrm{vol}_M$ be
the parameter sampling measure on $M$, with $q\in C^1(M)$ and $q\ge 0$.
Fix a normalized initial state $\psi\in\mathcal H$, which we assume to be an element of our block seed $\Psi\in\Sigma_s$ (\Cref{rem:SurjectiveParamBlockSeeds}), and let
\[
\mathcal U\colon M\to U(d),
\qquad
\iota_\Psi^{(k)}\circ\kappa_\Psi^{(k)}\colon M\to U(d)
\]
denote the original circuit map and the represented depth-$k$ Krylov--Lie
surrogate associated to the block seed $\Psi$ as in
\Cref{thm:main_klg_approximation}. For a compact subset $C\subset M$ contained
in the full-rank locus $M_{\mathrm{full}}(\Psi)$, define the seed-based uniform
error
\[
\mathcal E_C^{(\psi)}(k;\Psi)
:=
\sup_{x\in C}
\bigl\|\mathcal U(x)\psi
-
\iota_\Psi^{(k)}\bigl(\kappa_\Psi^{(k)}(x)\bigr)\psi\bigr\|.
\]
Now let $\Phi\colon\mathcal H\to\mathbb R$ be a bounded Lipschitz observable with
respect to the Hilbert-space norm on $\mathcal H$, with Lipschitz constant
$\mathrm{Lip}_\psi(\Phi)$, i.e.
\[
\bigl|\Phi(v)-\Phi(w)\bigr|
\le
\mathrm{Lip}_\psi(\Phi)\,\|v-w\|
\qquad
\text{for all }v,w\in\mathcal H.
\]
Define the induced circuit observable
\[
F_{\psi}\colon U(d)\to\mathbb R,
\qquad
F_{\psi}(U):=\Phi\bigl(U\psi\bigr).
\]
Then, for every compact subset $C\subset M_{\mathrm{full}}(\Psi)$, one has
\begin{equation}
\label{eq:seed-KLG-observable-bound}
\left|
\int_C F_{\psi}(\mathcal U(x))\, d\mu_M(x)
-
\int_{K^{(k)}_{\Psi}}
F_{\psi}(\iota_\Psi^{(k)}(g))\, d\nu_{\Psi,C}^{(k)}(g)
\right|
\le
\mathrm{Lip}_\psi(\Phi)\,\mu_M(C)\,\mathcal E_C^{(\psi)}(k;\Psi),
\end{equation}
where $\nu_{\Psi,C}^{(k)}(E)=\int_E\rho_{\Psi,C}^{(k)}\,d\mu_{K_\Psi^{(k)}}$ is the pushforward measure of \Cref{thm:main_klg_approximation}, represented as a Radon-Nikodym derivative integrated against the Haar measure over the KLG.
\end{proposition}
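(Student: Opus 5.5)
The argument mirrors item (7) of \Cref{thm:main_klg_approximation}, with the Hilbert--Schmidt operator error replaced by the error on the seed orbit. First we rewrite the Krylov--Lie integral as an integral over $C$. By item (5) of \Cref{thm:main_klg_approximation}, $\nu_{\Psi,C}^{(k)}=(\kappa_\Psi^{(k)})_*(\mu_M|_C)$ is absolutely continuous with respect to $\mu_{K_\Psi^{(k)}}$, with density $\rho_{\Psi,C}^{(k)}$. The change-of-variables formula for pushforward measures, applied to the bounded Borel function $g\mapsto F_\psi(\iota_\Psi^{(k)}(g))$, then gives
\[
\int_{K^{(k)}_{\Psi}} F_{\psi}(\iota_\Psi^{(k)}(g))\, d\nu_{\Psi,C}^{(k)}(g)
=
\int_C F_{\psi}\bigl(\iota_\Psi^{(k)}(\kappa_\Psi^{(k)}(x))\bigr)\, d\mu_M(x).
\]
This function is bounded because $\Phi$ is bounded. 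It is Borel because $\Phi$ is Lipschitz, hence continuous, and $U\mapsto U\psi$ is continuous. The identity holds whether the right-hand measure is written as $\nu_{\Psi,C}^{(k)}$ or as $\rho_{\Psi,C}^{(k)}\,d\mu_{K_\Psi^{(k)}}$.

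With both integrals over the same finite measure space $(C,\mu_M|_C)$, the difference to be bounded is at most
\[
\int_C \bigl|\Phi(\mathcal U(x)\psi)-\Phi(\iota_\Psi^{(k)}(\kappa_\Psi^{(k)}(x))\psi)\bigr|\,d\mu_M(x).
\]
The Lipschitz hypothesis on $\Phi$ bounds the integrand pointwise by $\mathrm{Lip}_\psi(\Phi)\,\|\mathcal U(x)\psi-\iota_\Psi^{(k)}(\kappa_\Psi^{(k)}(x))\psi\|$. By definition of the supremum, this is at most $\mathrm{Lip}_\psi(\Phi)\,\mathcal E_C^{(\psi)}(k;\Psi)$. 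Integrating over $C$ yields the factor $\mu_M(C)$, which is finite because $\mu_M$ is finite. This is exactly \eqref{eq:seed-KLG-observable-bound}.

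The only step that needs real care is measurability and integrability, together with the finiteness of $\mathcal E_C^{(\psi)}$. Finiteness follows because $x\mapsto \mathcal U(x)\psi-\iota_\Psi^{(k)}(\kappa_\Psi^{(k)}(x))\psi$ is continuous on the compact set $C$: the circuit map and the real-analytic lift of \Cref{prop:canonical-circuit-lift} are both continuous, and $\iota_\Psi^{(k)}$ is continuous. Alternatively, both vectors are unit vectors, so $\mathcal E_C^{(\psi)}\le 2$. No property of the density beyond its existence is used, so the bound holds even without the saturation hypothesis of item (5).

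Finally, the seed-based error is no larger than the operator error. Since $\|\psi\|=1$, we have $\|(A-B)\psi\|\le\|A-B\|_{\op}\le\|A-B\|_{HS}$, and hence $\mathcal E_C^{(\psi)}(k;\Psi)\le\mathcal E_C(k;\Psi)$. The proposition is therefore a genuine refinement of item (7).
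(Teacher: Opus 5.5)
Your proof is correct and is essentially the paper's argument: the paper also bounds the integrand pointwise by $\mathrm{Lip}_\psi(\Phi)\,\mathcal E_C^{(\psi)}(k;\Psi)$ and then repeats the pushforward and integration steps of item (7) of the approximation theorem. Your extra remarks on measurability, on finiteness of the seed error, and on the comparison $\mathcal E_C^{(\psi)}\le\mathcal E_C$ (which the paper states in a separate remark) are all sound.
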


\begin{proof}
For each $x\in C$, by definition of $F_{\psi}$ and $\Phi$,
\[
\bigl|F_{\psi}(\mathcal U(x))-F_{\psi}(\iota_\Psi^{(k)}(\kappa_\Psi^{(k)}(x)))\bigr|
\le
\mathrm{Lip}_\psi(\Phi)\,
\bigl\|\mathcal U(x)\psi
-
\iota_\Psi^{(k)}\bigl(\kappa_\Psi^{(k)}(x)\bigr)\psi\bigr\|
\le
\mathrm{Lip}_\psi(\Phi)\,\mathcal E_C^{(\psi)}(k;\Psi).
\]
Repeating the steps of the proof of (7) from \Cref{thm:main_klg_approximation} with this bound then yields the desired inequality of \cref{eq:seed-KLG-observable-bound}.
\end{proof}

\begin{remark}[Seed-based versus Hilbert--Schmidt error]
\label{rem:seed-vs-HS}
In \Cref{thm:main_klg_approximation} the uniform approximation error appears in
the Hilbert--Schmidt norm on $U(d)$, e.g.
\[
\mathcal E_C(k;\Psi)
=
\sup_{x\in C}
\bigl\|\mathcal U(x)
-
\iota_\Psi^{(k)}\bigl(\kappa_\Psi^{(k)}(x)\bigr)\bigr\|_{\mathrm{HS}}.
\]
By norm compatibility in finite dimensions one has
\[
\bigl\|\mathcal U(x)\psi
-
\iota_\Psi^{(k)}\bigl(\kappa_\Psi^{(k)}(x)\bigr)\psi\bigr\|
\le
\bigl\|\mathcal U(x)
-
\iota_\Psi^{(k)}\bigl(\kappa_\Psi^{(k)}(x)\bigr)\bigr\|_{\mathrm{op}}
\le
\bigl\|\mathcal U(x)
-
\iota_\Psi^{(k)}\bigl(\kappa_\Psi^{(k)}(x)\bigr)\bigr\|_{\mathrm{HS}}
\]
for each $x\in C$ and unit seed $\boldsymbol\psi$. Hence
\[
\mathcal E_C^{(\psi)}(k;\Psi)
\le
\mathcal E_C(k;\Psi),
\]
so the seed-based bound in \Cref{prop:seed-KLG-observable} is always controlled
by the Hilbert--Schmidt error appearing in item~(7) of
\Cref{thm:main_klg_approximation}.

Additionally, observables of the form $F_{\psi}(U)=\Phi(U\psi)$ are
particularly natural in variational quantum algorithms, where one typically
fixes an initial state $\psi$ and studies expectation values or cost
functions built from the parameterized circuit $\mathcal U(x)$ acting on that
state. In such settings, the seed-based error $\mathcal E_C^{(\psi)}(k;\Psi)$
directly measures the approximation quality on the physically relevant orbit of
$\psi$, and the corresponding observable bound
\eqref{eq:seed-KLG-observable-bound} exploits the intrinsic Krylov subspace structure of the ansatz.
\end{remark}

\subsection{Krylov--Lie Error Bounds from Baker--Campbell--Hausdorff}

In this subsection we record a seed-based factorial tail estimate for the
canonical Krylov--Lie comparison map on a compact subset of the reachable
manifold. The argument proceeds by choosing a finite family of ambient BCH
charts on the compact set, bounding the homogeneous BCH components by explicit
norm estimates, and propagating these bounds to the seed-based remainder.

Throughout, we fix $\|\cdot\|$ to be the operator norm on $\mathfrak u(d)$
induced by the Hilbert space norm on $\mathcal H$.

\begin{lemma}[Local BCH charts with bounded exponents]
\phantomsection
\label{lem:BCH-charts}
Let $M\subset U(d)$ be the reachable manifold of the ansatz, and
let $C\subset M$ be compact. Then there exist open sets
$\{U_\alpha\}_{\alpha=1}^N$ with $N$ finite and $C\subset\bigcup_{\alpha=1}^N U_\alpha$ as well as
maps
\[
\Omega^{(\alpha)}:U_\alpha\to\mathfrak u(d)
\]
such that:
\begin{enumerate}
\item For each $\alpha$ and each $x\in U_\alpha$,
\[
\mathcal U(x)=\exp\bigl(\Omega^{(\alpha)}(x)\bigr).
\]

\item Each $\Omega^{(\alpha)}(x)$ lies in a fixed ball
\[
\bigl\|\Omega^{(\alpha)}(x)\bigr\|\le R_\alpha
\qquad\text{for all }x\in U_\alpha\cap C,
\]
with $0<R_\alpha<R_0$, where $R_0>0$ is a BCH convergence radius for
$\mathfrak u(d)$ in the norm $\|\cdot\|$.

\item On $U_\alpha\cap C$, the map $\Omega^{(\alpha)}$ is continuous and admits
a graded BCH expansion
\[
\Omega^{(\alpha)}(x)
=
\sum_{r\ge 1}\Omega_r^{(\alpha)}(x),
\]
where each $\Omega_r^{(\alpha)}(x)$ is a finite real-linear combination of nested
commutators in the Hermitian generators $H_j$ of commutator depth
$r$ (again using, for instance, the transported bracket; see \Cref{rem:VQAConventions}), and the series converges absolutely in operator norm.
\end{enumerate}
\end{lemma}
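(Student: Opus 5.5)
The plan is to build the charts directly from the circuit structure, \(\mathcal U(x)=e^{-i\theta_p(x)H_{i_p}}\cdots e^{-i\theta_1(x)H_{i_1}}\), and then use compactness of \(C\) to pass to a finite cover. Write \(X_j(x):=-i\theta_j(x)H_{i_j}\in\mathfrak u(d)\). The natural candidate for \(\Omega^{(\alpha)}(x)\) is the multi-factor BCH (Dynkin) series of the word \(e^{X_p(x)}\cdots e^{X_1(x)}\). Grouping by total degree in the \(X_j\), this series is automatically of the form required in (3): a sum of components \(\Omega_r\), each a finite real-linear combination of depth-\(r\) nested commutators of the \(H_j\), with coefficients that are polynomials of degree \(r\) in the angles \(\theta_j(x)\). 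By the standard absolute-convergence criterion for the Dynkin series, there is an explicit constant (one may take \(R_0\) of order \(\log 2\) in the operator norm) such that the series converges absolutely and \(\exp(\sum_r\Omega_r)=\prod_j e^{X_j}\) whenever \(\sum_j\|X_j\|<R_0\). The homogeneous estimates \(\|\Omega_r\|\le c_r\bigl(\sum_j\|X_j\|\bigr)^r\) with summable \(c_r\) then give the bound \(\|\Omega(x)\|\le R_\alpha<R_0\) required in (2). Continuity of \(\Omega\) follows from uniform convergence of a series of continuous (indeed polynomial in \(\theta\)) terms.

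The local step is as follows. For each \(x_0\in C\), choose an open neighborhood \(U_{x_0}\) on which the total parameter weight \(\sum_j|\theta_j(x)|\,\|H_{i_j}\|\) stays within a radius strictly below \(R_0\). This is possible because the \(\theta_j\) are continuous on the chart (they are even real-analytic, as in \Cref{prop:canonical-circuit-lift}). Take \(R_{x_0}\) to be the supremum of the resulting bound on \(\overline{U_{x_0}}\cap C\), which is strictly below \(R_0\) after shrinking. Compactness of \(C\) then extracts finitely many such neighborhoods \(U_1,\dots,U_N\). Items (1)--(3) hold on each of them by the properties listed above. The identity \(\mathcal U(x)=\exp(\Omega^{(\alpha)}(x))\) in (1) holds on all of \(U_\alpha\), not only on \(U_\alpha\cap C\), since the convergence condition was imposed on the open set.

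The main obstacle is item (2). The circuit angles need not be small on \(C\), and a BCH exponent of norm below \(R_0\) can only represent unitaries near the identity. My first remedy is to exploit the quotienting of periodic structure in \(\boldsymbol\Theta\). Each \(\theta_j\) can be shifted by multiples of the period of \(e^{-i\theta H_{i_j}}\), and this is exact when the spectrum of \(H_{i_j}\) is commensurate, as for Pauli-type generators. This shift puts each \(\theta_j\) in a minimal-modulus representative and reduces \(\sum_j\|X_j\|\). If the reduced weight still exceeds \(R_0\), no graded BCH chart in the generators exists with that radius. In that case I would prove the lemma under the explicit standing assumption that \(C\) lies in the small-parameter region \(\{\sum_j|\theta_j|\,\|H_{i_j}\|<R_0\}\), which is the regime in which the subsequent factorial \(B_C^{k+1}/(k+1)!\) estimates are meaningful. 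The statement would be read with \(R_0\) chosen to accommodate that region.

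As a fallback on the larger region, I would define \(\Omega^{(\alpha)}\) through the principal logarithm wherever \(-1\notin\operatorname{spec}\mathcal U(x)\). On that set \(\Omega^{(\alpha)}\) is analytic and \(\|\Omega^{(\alpha)}\|<\pi\), which gives (1), continuity, and a uniform bound. I would then check whether the generator-graded expansion in (3) persists by analytic continuation in the angles from the small-parameter region. I expect this to fail in general, so I would not rely on it.
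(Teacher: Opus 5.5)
Your route differs from the paper's. The obstacle you isolate in item (2) is a genuine defect of the statement, not of your method, and the paper's proof does not get past it.

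\textbf{What the paper does, and why it fails.} The paper covers $C$ by finitely many open sets carrying continuous branches $\Omega^{(\alpha)}$ of the matrix logarithm. It then asserts that shrinking the $U_\alpha$ gives $R_\alpha<R_0$, and infers that the generator-graded BCH expansion of $\Omega^{(\alpha)}(x)$ converges because $\|\Omega^{(\alpha)}(x)\|<R_0$. Neither step holds.
\begin{itemize}
\item Shrinking cannot push $R_\alpha$ below $\|\Omega^{(\alpha)}(x_0)\|$ for $x_0\in U_\alpha\cap C$. Moreover, every $\Omega\in\mathfrak u(d)$ with $e^{\Omega}=\mathcal U(x_0)$ satisfies $\|\Omega\|\ge\max_i|\phi_i|$, where $e^{i\phi_i}$, $\phi_i\in(-\pi,\pi]$, are the eigenvalues of $\mathcal U(x_0)$. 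So (1)--(2) force every eigenvalue of every $\mathcal U(x)$, $x\in C$, to have argument of modulus below $R_0$. The compact reachable circle $\{e^{-i\theta Z}\}$ contains $-I$ and already violates this for $R_0=\log 2$, indeed for any $R_0\le\pi$.
\item Convergence of a BCH series is governed by the sizes of the factors $-i\theta_jH_{i_j}$, not by the size of the resulting logarithm. Smallness of a branch of $\log\mathcal U(x)$ therefore says nothing about (3), as your fallback paragraph anticipates.
\end{itemize}
The paper's route buys only (1) and continuity without extra hypotheses, since continuous local logarithms exist near every unitary.

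\textbf{Why your construction is the right repair.} You take $\Omega^{(\alpha)}$ to be the multi-factor Dynkin series in $X_j=-i\theta_jH_{i_j}$. Then (3) holds by construction, in the form the later lemmas actually use: homogeneous BCH polynomials evaluated on the circuit exponents. Convergence is tied to $s=\sum_j\|X_j\|$. The price is your small-parameter hypothesis, which the spectral obstruction above shows is unavoidable in some form. It should enter the statement as an explicit bound $\sup_{x\in C}\sum_j|\theta_j(x)|\,\|H_{i_j}\|<s_0$, with $s_0$ determined by $R_0$, rather than by ``choosing $R_0$ to accommodate the region.''

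\textbf{Two smaller corrections.}
\begin{enumerate}
\item You claim that a reduced weight exceeding $R_0$ rules out any graded chart; this overstates things. The condition $s<R_0$ is sufficient but not necessary: two consecutive gates $e^{X}e^{-X}$ give a convergent (identically zero) series for every $\|X\|$. The sharp obstruction is the eigenvalue condition above. Period reduction can lower $s$ but cannot remove that obstruction, since it leaves $\mathcal U(x)$ unchanged.
\item The majorant gives $\|\Omega_r\|\le c_rs^r$ with $\sum_rc_rs^r=-\log(2-e^{s})$, which diverges as $s\to\log 2$. So $s<\log 2$ secures convergence but not $\|\Omega^{(\alpha)}(x)\|<R_0=\log 2$. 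To obtain (2), choose $s_0$ with $-\log(2-e^{s_0})\le R_0$, e.g.\ $s_0=\log\tfrac32$, keeping the hypothesis $\sup_C s<s_0$ strict.
\end{enumerate}
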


\begin{proof}
Since $M\subset U(d)$ is a finite-dimensional submanifold of the unitary group,
and $C\subset M$ is compact, we may cover $C$ by finitely many open sets
$U_\alpha\subset M$ on which there exists a continuous branch of the matrix
logarithm,
\[
\Omega^{(\alpha)}:U_\alpha\to\mathfrak u(d),
\qquad \mathcal U(x)=\exp\bigl(\Omega^{(\alpha)}(x)\bigr).
\]
Because $C$ is compact and $\Omega^{(\alpha)}$ is continuous, each
$\|\Omega^{(\alpha)}(x)\|$ attains a maximum on $U_\alpha\cap C$, which we
denote $R_\alpha$. By shrinking $U_\alpha$ if necessary, we may assume
$R_\alpha<R_0$, where $R_0>0$ is any radius of convergence for the BCH series
in $\mathfrak u(d)$ with respect to $\|\cdot\|$; existence of such an $R_0$ is
standard for matrix Lie algebras \cite{BCHDynkin}. The BCH
expansion for $\Omega^{(\alpha)}(x)$ in terms of nested commutators of
$H_j$ is then valid for all $x\in U_\alpha\cap C$, and converges
absolutely in operator norm because $\|\Omega^{(\alpha)}(x)\|<R_0$.
\end{proof}

We now record norm estimates for the homogeneous BCH components on
each chart. 

\begin{lemma}[Homogeneous BCH component bounds]
\phantomsection
\label{lem:BCH-coeff-bounds}
In the setting of \Cref{lem:BCH-charts}, fix $\alpha$ and consider
$\Omega^{(\alpha)}(x)$ with $\|\Omega^{(\alpha)}(x)\|\le R_\alpha<R_0$.
Write
\[
\Omega^{(\alpha)}(x)
=
\sum_{r\ge 1}\Omega_r^{(\alpha)}(x),
\]
with $\Omega_r^{(\alpha)}(x)$ homogeneous of total degree $r$ in the
generators $H_j$. Then there exist constants $K_\star>0$ and $\gamma>0$,
depending only on $\mathfrak u(d)$ and the norm $\|\cdot\|$, such that
\begin{equation}
\label{eq:BCH-coeff-bound-sharp}
\bigl\|\Omega_r^{(\alpha)}(x)\bigr\|
\le
K_\star\,\frac{(\gamma R_\alpha)^r}{r!}
\qquad\text{for all }r\ge 1\text{ and }x\in U_\alpha\cap C.
\end{equation}
\end{lemma}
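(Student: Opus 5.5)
The plan is to obtain \eqref{eq:BCH-coeff-bound-sharp} from a scaling argument in which the homogeneous components $\Omega_r^{(\alpha)}(x)$ appear as Taylor coefficients of a single analytic function of one complex variable. On $U_\alpha\cap C$ we write $\mathcal U(x)=e^{-i\theta_p(x)H_{i_p}}\cdots e^{-i\theta_1(x)H_{i_1}}$ as in \Cref{prop:canonical-circuit-lift}. We then introduce the rescaled product
\[
\mathcal U_t(x):=e^{-it\theta_p(x)H_{i_p}}\cdots e^{-it\theta_1(x)H_{i_1}},\qquad t\in\mathbb C,
\]
and set $\Omega^{(\alpha)}(x;t):=\log\mathcal U_t(x)$, using the branch fixed in \Cref{lem:BCH-charts}. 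Multiplying every generator by $t$ multiplies the degree-$r$ part by $t^r$. Hence
\[
\Omega^{(\alpha)}(x;t)=\sum_{r\ge1}t^r\,\Omega_r^{(\alpha)}(x),
\]
and the task becomes a coefficient estimate for an operator-valued power series in $t$.

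I would proceed in four steps.

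\textbf{Step 1: analyticity.} Since $\|\Omega^{(\alpha)}(x)\|\le R_\alpha<R_0$, the map $t\mapsto\Omega^{(\alpha)}(x;t)$ is holomorphic on a disc $|t|<\rho_\alpha$ with $\rho_\alpha>1$. This follows from absolute convergence of the Dynkin series.

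\textbf{Step 2: uniformity.} Compactness of $U_\alpha\cap C$ and continuity of the $\theta_j$ give a uniform bound $\sup_{|t|\le\rho}\|\Omega^{(\alpha)}(x;t)\|$ on any slightly smaller disc.

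\textbf{Step 3: factorial gain from Dynkin's formula.} Cauchy's estimates alone give only $\|\Omega_r^{(\alpha)}\|\le c\,\rho^{-r}$, which is geometric and not enough. To gain the $1/r!$ I would use Dynkin's explicit formula. The degree-$r$ part is
\[
\frac1r\sum_{n\le r}\frac{(-1)^{n-1}}{n}\sum\frac{[X^{a_1}Y^{b_1}\cdots]}{\prod a_i!\,b_i!},
\]
with $r$ nested brackets, and each bracket of $r$ letters has norm at most $2^{r-1}\prod\|X\|$. Grouping by the multinomial identity bounds the inner sum by $(2\sum_j|\theta_j|\,\|H_{i_j}\|)^r\,N_r/r!$, where $N_r$ is the number of ordered set partitions of $\{1,\dots,r\}$, namely the Fubini number. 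The remaining job is to show that the factor $1/n$, together with the restriction to words with nonvanishing brackets, reduces the combinatorial sum to $K_\star\gamma^r$. One would then absorb the total exponent $\sum_j|\theta_j|\,\|H_{i_j}\|$ into $R_\alpha$ by comparing it with $\|\Omega^{(\alpha)}\|$ on the chart. This is possible after shrinking $U_\alpha$ so that the parameters, and not only $\Omega$, are small.

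\textbf{Step 4: constants.} Fix $\gamma$ to be $2$ times the ratio relating parameter size to $R_\alpha$. Take $K_\star$ to be the supremum of the resulting normalized combinatorial factors.

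I expect Step 3 to be the main obstacle. The ordered-partition count grows like $r!/(\log 2)^{r}$, so the naive Dynkin estimate cancels the factorial exactly and returns only a geometric bound. The first thing I would try is to show that the sign cancellation in the alternating sum over $n$ removes this growth. If that fails, I would fall back on the finite dimensionality of $\mathfrak u(d)$: for $r$ beyond $\dim\mathfrak u(d)$, the degree-$r$ nested commutators lie in a fixed finite-dimensional span, and the Cayley--Hamilton relations for the finitely many generators could then be used to re-express high-degree components with factorially small coefficients. It is plausible that this step cannot be carried out in full generality. In that case the honest statement is the geometric bound $K_\star(\gamma R_\alpha)^r$, with the factorial recovered only after truncating to depth $k$ and absorbing $k$-dependent constants.
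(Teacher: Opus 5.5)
Your Step~3 cannot be completed, and neither sign cancellation in Dynkin's alternating sum nor a Cayley--Hamilton reduction can rescue it. The reason is that the factorial bound is false, even at a single point $x$ and even if $K_\star,\gamma$ may depend on $x$.

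\textbf{Why the factorial bound fails.} As you note, rescaling all angles by $t\in\mathbb C$ multiplies $\Omega_r^{(\alpha)}(x)$ by $t^r$. So a bound $\|\Omega_r^{(\alpha)}(x)\|\le K(\gamma R)^r/r!$ would make $F(t):=\sum_{r\ge1}t^r\Omega_r^{(\alpha)}(x)$ entire, with $\exp F(t)=\mathcal U_t(x)$ for all $t\in\mathbb C$ by the identity theorem. Now take one qubit with $H_1=\sigma_x$, $H_2=\sigma_z$, and $\mathcal U(x)=e^{-is\sigma_x}e^{-is\sigma_z}$ for some $s>0$ as small as you like, so every smallness hypothesis holds. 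Then
\[ \mathcal U_t=\cos^2(st)\,I-i\sin(st)\cos(st)\,(\sigma_x+\sigma_z)+i\sin^2(st)\,\sigma_y . \]
Each $\Omega_r^{(\alpha)}(x)$ is a Lie polynomial, so $F(t)$ is traceless. Cayley--Hamilton then gives $\exp F=\cosh\mu\,I+\frac{\sinh\mu}{\mu}F$ with $\mu^2=-\det F$. At $st_*=\tfrac{\pi}{2}-i\operatorname{arcsinh}1$ we have $\cos^2(st_*)=-1$. Hence $\cosh\mu(t_*)=-1$, so $\sinh\mu(t_*)=0$ and $\exp F(t_*)=-I$. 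But $\mathcal U_{t_*}$ has $\sigma_y$-coefficient $i\sin^2(st_*)=2i\neq0$, a contradiction. Therefore the homogeneous series has radius of convergence at most $|t_*|=s^{-1}\bigl|\tfrac{\pi}{2}-i\operatorname{arcsinh}1\bigr|\approx1.80/s$. Equivalently, $\limsup_r\|\Omega_r^{(\alpha)}(x)\|^{1/r}\ge s/1.80>0$, whereas the claimed estimate forces $\|\Omega_r^{(\alpha)}(x)\|^{1/r}\to0$.

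\textbf{Relation to the paper's proof.} The paper obtains the estimate by applying ``Cauchy-type estimates'' to the analytic BCH map on a polydisc. That is exactly the step you flagged. On a polydisc of analyticity radius $\rho>R$, Cauchy's estimates give only $\|Z_r(X,Y)\|\le M(R/\rho)^r$, and the example shows nothing better holds. So your diagnosis is right: the statement as written is not provable by either route. What is provable is a geometric bound $\|\Omega_r^{(\alpha)}(x)\|\le Mq^r$ with $q<1$.

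Two corrections to your fallback are still needed.
\begin{enumerate}
\item That geometric bound needs the \emph{inputs} to be small, i.e.\ $\sum_j|\theta_j(x)|\,\|H_{i_j}\|$ uniformly below the BCH radius. Your Step~1, like \Cref{lem:BCH-charts}, infers convergence from smallness of the output $\|\Omega^{(\alpha)}(x)\|$, and that inference fails. For example, $e^{-i\pi\sigma_x}e^{-i\pi\sigma_z}=I$ has logarithm $0$, yet $t=\pi$ lies outside the radius ($\le1.80$) of the unit-angle series above, so its Dynkin series diverges.
\item The factorial cannot be ``recovered after truncating to depth $k$.'' Enlarging constants absorbs only finitely many $r$. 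The tail $\sum_{r>k}\|\Omega_r^{(\alpha)}(x)\|$ is then only $O(q^{k+1})$, so the $B^{k+1}/(k+1)!$ rates built on this lemma degrade to exponential decay in $k$.
\end{enumerate}
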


\begin{proof}
We work in the finite-dimensional matrix Lie algebra
$\mathfrak u(d)\subset M_d(\mathbb C)$ endowed with the fixed operator norm
$\|\cdot\|$. For the purposes of this lemma, it suffices to regard the BCH
map as an analytic function on a neighborhood of the origin and apply
standard bounds on the coefficients of its power series.

Consider the BCH map
\[
\mathrm{BCH}\colon \mathfrak u(d)\times\mathfrak u(d)\to \mathfrak u(d),
\qquad
(X,Y)\longmapsto \log\bigl(\exp X\,\exp Y\bigr).
\]
It is a classical fact that $\mathrm{BCH}$ is analytic on a neighborhood of
$(0,0)$ in $\mathfrak u(d)\times\mathfrak u(d)$ and admits a convergent
power series expansion
\[
\mathrm{BCH}(X,Y)
=
\sum_{r\ge 1} Z_r(X,Y),
\]
where each $Z_r$ is a homogeneous Lie polynomial of total degree $r$ in
$X$ and $Y$ with rational coefficients; see, for example, Dynkin's
calculation of BCH coefficients \cite{BCHDynkin} and Hall's exposition
for matrix groups \cite[Ch.~5]{HallLieGroups2015}. In particular,
there exists $R_0>0$ such that the series converges in operator norm
whenever $\|X\|,\|Y\|<R_0$.

Fix any $0<R<R_0$, and consider the polydisc
\[
\mathcal D_R
:=
\bigl\{(X,Y)\in\mathfrak u(d)\times\mathfrak u(d) \, \big \vert \,
\|X\|\le R,\ \|Y\|\le R\bigr\}.
\]
Because $\mathrm{BCH}$ is analytic and $\mathfrak u(d)$ is finite
dimensional, Cauchy-type estimates for analytic maps on Banach spaces
imply that there exist constants $K_\star>0$ and $\gamma>0$, depending only on
$\mathfrak u(d)$ and $\|\cdot\|$, such that for all $r\ge 1$ and all
$(X,Y)\in\mathcal D_R$,
\begin{equation}
\label{eq:BCH-XY-bound}
\bigl\|Z_r(X,Y)\bigr\|
\le
K_\star\,\frac{(\gamma R)^r}{r!}.
\end{equation}
Intuitively, $K_\star$ and $\gamma$ arise from bounding the supremum of
$\|\mathrm{BCH}(X,Y)\|$ on $\mathcal D_R$ and applying Cauchy estimates to
the Taylor coefficients of $\mathrm{BCH}$ viewed as a holomorphic map on
$\mathfrak u(d)\times\mathfrak u(d)$; we refer to
Dynkin and Hall for detailed
derivations of such bounds \cite{BCHDynkin}\cite[Ch.~5]{HallLieGroups2015}.

The homogeneous components $\Omega_r^{(\alpha)}(x)$ appearing in the
expansion of $\Omega^{(\alpha)}(x)$ are specializations of the general
homogeneous BCH polynomials $Z_r$ in two variables to the specific
parametric family of generators $H_j$ used to define the circuit and the
local logarithm $\Omega^{(\alpha)}(x)$. In particular, for each
$x\in U_\alpha\cap C$ with $\|\Omega^{(\alpha)}(x)\|\le R_\alpha<R_0$,
one can realize $\Omega^{(\alpha)}(x)$ as the value of $\mathrm{BCH}(X,Y)$
for some pair $(X,Y)$ with $\|X\|,\|Y\|\le R_\alpha$ obtained from the
underlying Hermitian generators. Since $R_\alpha<R_0$, we may choose
$R:=R_\alpha$ in \eqref{eq:BCH-XY-bound}.

Specializing \eqref{eq:BCH-XY-bound} to $(X,Y)$ associated with
$\Omega^{(\alpha)}(x)$ then yields
\[
\bigl\|\Omega_r^{(\alpha)}(x)\bigr\|
\le
K_\star\,\frac{(\gamma R_\alpha)^r}{r!}
\qquad\text{for all }r\ge 1\text{ and }x\in U_\alpha\cap C,
\]
as claimed.
\end{proof}

\begin{remark}[On the size of $K_\star$ and low-order errors]
\label{rem:BCH-constant-size}
The two constants $K_\star$ and $\gamma$ in \Cref{lem:BCH-coeff-bounds} are
dimension- and norm-dependent quantities arising from the analytic Taylor
coefficients of the BCH map. Because BCH expansions have a fundamentally
combinatorial structure---each homogeneous term is a finite linear
combination of iterated commutators with rational coefficients and a
number of Lie words that grows like a Catalan-type sequence in the
degree \cite{BCHDynkin}\cite[Ch.~5]{HallLieGroups2015}---it is natural to expect that $K_\star$ may be fairly large. In
particular, $K_\star$ implicitly encodes the cumulative effect of this
combinatorics together with the choice of operator norm.

For the purposes of the present work, we do not attempt to optimize
$K_\star$ or $\gamma$. The key qualitative feature of \eqref{eq:BCH-coeff-bound-sharp}
is the factorial decay in $r$ with respect to the radius $R_\alpha$: all dependence on the BCH degree $r$ is carried by $(\gamma R_\alpha)^r/r!$, while $K_\star$ and $\gamma$ remain fixed once the algebra and the norm are specified. In applications to Krylov-Lie error bounds, this factorial
structure is what drives the super-exponential decay of the BCH tail in the depth parameter $k$, and it dominates any fixed exponential factor in $K_\star$ and $\gamma R_\alpha$ asymptotically. We emphasize, however, that the specific values of $K_\star$ and $\gamma$ can matter for low-order error bounds and concrete numerical estimates. A sharper analysis of BCH coefficients tailored to a particular generator
set or norm could, in principle, reduce these constants and improve finite-$k$ bounds. Such optimizations lie beyond the scope of this thesis; our use of $K_\star$ and $\gamma$ is intended to exhibit a uniform factorial bound consistent with standard BCH theory rather than to capture the best
possible constants.
\end{remark}

We now bring the seed and the Krylov--Lie structure into our analysis.
Fix a block seed \(\Psi\in\Sigma_s\) (see
\Cref{rem:SurjectiveParamBlockSeeds}), and let \(\psi\in\mathcal H\)
be a normalized initial state which we assume to be one of the
components of \(\Psi\). Let
\[
\kappa_\Psi^{(k)}\colon M\to K^{(k)}_\Psi,
\qquad
\iota_\Psi^{(k)}\colon K^{(k)}_\Psi\hookrightarrow U(d)
\]
be the depth-\(k\) comparison map and represented embedding from
\Cref{thm:main_klg_approximation}.

\begin{definition}[Seed-level BCH matching up to depth \(k\)]
\phantomsection
\label{def:seed-KLG-matching}
We say that the depth-\(k\) Krylov--Lie algebra
\(\mathfrak l^{(k)}_\Psi\) (\Cref{def:krylov-lie-algebra}) is
\emph{BCH-matched} if, for any commutator words
\[
w(i_1,\ldots,i_r)\in\mathcal W_{\leq k}
\qquad\text{and}\qquad
\widetilde w(i_1,\ldots,i_r)\in\widetilde{\mathcal W}_{\leq k}
\]
of grade \(r\leq k\) (see \Cref{def:CircWords},
\Cref{def:KrylovSubspace}, and \Cref{def:KrylovAlgebra}), one has
\[
w(i_1,\ldots,i_r)\psi
=
\widetilde w(i_1,\ldots,i_r)\psi
\]
for every seed \(\psi\in\Psi\).
\end{definition}

Under this matching, the seed-level difference in exponents has no
homogeneous contributions through grade \(k\), and is therefore purely
a tail of higher-depth BCH differences.

\begin{lemma}[Seed-level BCH remainder]
\label{lem:seed-BCH-remainder}
Let \(C\subset M\) be compact, and let \(\{U_\alpha\}\),
\(\Omega^{(\alpha)}\), and \(\Omega_r^{(\alpha)}\) be as in
\Cref{lem:BCH-charts}. In the setting of
\Cref{def:seed-KLG-matching}, define the ambient BCH remainder on
\(U_\alpha\) by
\[
R_k^{(\alpha)}(x)
:=
\Omega^{(\alpha)}(x)-\widetilde\Omega_k^{(\alpha)}(x),
\]
where \(\widetilde\Omega_k^{(\alpha)}(x)\in\mathfrak u(d)\) is the
ambient exponent of
\(\iota_\Psi^{(k)}\bigl(\kappa_\Psi^{(k)}(x)\bigr)\), with block form
\[
\widetilde\Omega_k^{(\alpha)}(x)
=
\begin{pmatrix}
\Omega_k^{(\alpha)}(x) & 0\\
0 & 0
\end{pmatrix},
\]
and \(\Omega_k^{(\alpha)}(x)\in\mathfrak l^{(k)}_\Psi\) is the BCH
exponent of \(\kappa_\Psi^{(k)}(x)\) on \(K^{(k)}_\Psi\). Then, for
each \(x\in U_\alpha\cap C\) and each \(\psi\in\Psi\),
\[
R_k^{(\alpha)}(x)\psi
=
\sum_{r>k}
\bigl(
\Omega_r^{(\alpha)}(x)\psi
-
\widetilde\Omega_{k,r}^{(\alpha)}(x)\psi
\bigr),
\]
where \(\widetilde\Omega_{k,r}^{(\alpha)}(x)\) denotes the ambient
embedding of \(\Omega_{k,r}^{(\alpha)}(x)\).
\end{lemma}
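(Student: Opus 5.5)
The argument is a direct bookkeeping one: expand both exponents into graded series and cancel the components of grade at most \(k\) using the BCH matching of \Cref{def:seed-KLG-matching}. Fix \(x\in U_\alpha\cap C\) and \(\psi\in\Psi\). By \Cref{lem:BCH-charts}, the ambient exponent has the absolutely convergent graded expansion \(\Omega^{(\alpha)}(x)=\sum_{r\ge1}\Omega^{(\alpha)}_r(x)\), where each \(\Omega^{(\alpha)}_r(x)\) is a finite real-linear combination of depth-\(r\) commutator words in the \(H_j\). I will first produce the analogous expansion on the Krylov--Lie side. The map \(\kappa^{(k)}_\Psi(x)\) is the same ordered product as \(\mathcal U(x)\), with each \(H_{i_j}\) replaced by \(\widetilde H^{(k)}_{i_j}\) (\Cref{prop:canonical-circuit-lift}). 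Applying the same BCH combination therefore gives \(\Omega^{(\alpha)}_k(x)=\sum_{r\ge1}\Omega^{(\alpha)}_{k,r}(x)\), where \(\Omega^{(\alpha)}_{k,r}(x)\) is obtained from \(\Omega^{(\alpha)}_r(x)\) by the formal substitution \(H_j\mapsto\widetilde H_j\). This uses the same rational coefficients and the same parameter dependence.

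Convergence of this second series needs a check. Compression by \(P_\Psi\) does not increase operator norms, so \(\|\widetilde H_j\|\le\|H_j\|\). The coefficients are identical, so the second series is majorized by the first and converges in the same radius \(R_\alpha<R_0\). Its ambient embedding \(\widetilde\Omega^{(\alpha)}_{k,r}(x)\) is the block-diagonal extension by zero. This makes \(\widetilde\Omega^{(\alpha)}_k(x)\) a legitimate exponent for \(\iota^{(k)}_\Psi(\kappa^{(k)}_\Psi(x))\) on the Krylov block, consistent with the block form in the statement.

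The main step is the grade-by-grade cancellation. Both series converge absolutely, so they can be subtracted termwise and applied to \(\psi\):
\[
R^{(\alpha)}_k(x)\psi=\sum_{r\ge1}\bigl(\Omega^{(\alpha)}_r(x)\psi-\widetilde\Omega^{(\alpha)}_{k,r}(x)\psi\bigr).
\]
For \(r\le k\), \(\Omega^{(\alpha)}_r(x)\) is a finite combination \(\sum c_w\,w\) of commutator words of grade \(r\). The element \(\widetilde\Omega^{(\alpha)}_{k,r}(x)\) is the same combination \(\sum c_w\,\widetilde w\), viewed in \(\mathcal H\) via the embedding. Since \(\widetilde w\psi\in\mathcal K_\Psi\), the embedding acts on it as \(\widetilde w\) itself. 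BCH matching gives \(w\psi=\widetilde w\psi\) for every such word, and by linearity the \(r\)-th summand vanishes. Only the terms with \(r>k\) survive, which is the claim. When \(\mathfrak l^{(k)}_\Psi\) is a prefix KLA, the matching hypothesis is supplied by \Cref{thm:word-preservation-prefix}.

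I expect the main obstacle to be justifying the identification of the Krylov-side exponent with the termwise-substituted series. A priori, \(\Omega^{(\alpha)}_k(x)\) is just some logarithm of \(\kappa^{(k)}_\Psi(x)\), and it need not be the branch produced by substitution. I would resolve this by \emph{defining} \(\Omega^{(\alpha)}_k(x)\) to be the substituted series. The majorization above shows it converges and exponentiates to \(\kappa^{(k)}_\Psi(x)\), because BCH is a formal identity valid in any Banach algebra within the radius of convergence. It also lies in \(\mathfrak l^{(k)}_\Psi\), being a convergent sum of brackets of compressed generators. A related subtlety is the block convention, since \(\iota\) sets the complement to the identity. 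The exponent on \(\mathcal K_\Psi^{\perp}\) is zero, so this is consistent, and it plays no role because every term that acts nontrivially is evaluated on \(\mathcal K_\Psi\).
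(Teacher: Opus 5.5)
Your proof is correct and follows the paper's own argument: expand both exponents with identical BCH coefficients, subtract termwise, and kill every summand of grade $r\le k$ on $\psi$ via the matching hypothesis of \Cref{def:seed-KLG-matching}, leaving only the $r>k$ tail. Your extra steps (identifying $\Omega_k^{(\alpha)}(x)$ with the substituted series and checking that it converges) are left implicit in the paper. One correction there: the convergence must be dominated by the absolute scalar majorant in the input norms, using $|\theta_j|\,\|\widetilde H_{i_j}\|\le|\theta_j|\,\|H_{i_j}\|$, and not by the first series itself, because a compressed Lie word can be nonzero even when the uncompressed word vanishes.
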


\begin{proof}
Writing
\[
\Omega^{(\alpha)}(x)
=
\sum_{r\geq 1}\Omega_r^{(\alpha)}(x),
\qquad
\widetilde\Omega_k^{(\alpha)}(x)
=
\sum_{r\geq 1}\widetilde\Omega_{k,r}^{(\alpha)}(x),
\]
one has
\[
R_k^{(\alpha)}(x)
=
\sum_{r\geq 1}
\bigl(
\Omega_r^{(\alpha)}(x)
-
\widetilde\Omega_{k,r}^{(\alpha)}(x)
\bigr).
\]
For \(1\leq r\leq k\), the homogeneous degree-\(r\) terms in the two
BCH expansions are linear combinations of corresponding commutator
words of grade \(r\), with the same BCH coefficients. Thus
\Cref{def:seed-KLG-matching} yields
\[
\Omega_r^{(\alpha)}(x)\psi
=
\widetilde\Omega_{k,r}^{(\alpha)}(x)\psi.
\]
Acting on \(\psi\) therefore eliminates the contributions through
grade \(k\), giving
\[
R_k^{(\alpha)}(x)\psi
=
\sum_{r>k}
\bigl(
\Omega_r^{(\alpha)}(x)\psi
-
\widetilde\Omega_{k,r}^{(\alpha)}(x)\psi
\bigr).
\]
\end{proof}

Standard Krylov--Lie algebras need not satisfy
\Cref{def:seed-KLG-matching} in general, although the failure is more
delicate than one might first expect. Already at grade three the property need not hold. In a compressed
word of grade two, every interior projector is adjacent either to the
seed or to a grade-one vector \(H_j\psi\), both of which lie in
\(\mathcal K^{(k)}_\Psi\). At grade three the expansion instead produces
interior configurations of the form \(P_\Psi H_b P_\Psi\,(H_aH_c\psi)\), and the
intermediate vector
\(H_aH_c\psi=\tfrac12[H_a,H_c]\psi+\tfrac12\{H_a,H_c\}\psi\)
carries a symmetric component that generically escapes the
commutator-spanned subspace \(\mathcal K^{(k)}_\Psi\). A direct computation,
using \(w\psi\in \mathcal K^{(k)}_\Psi\) for all grades \(r\leq k\) as well as the identity $P_\Psi XP\psi-XP_\Psi \psi=-X(I-P_\Psi)\psi-(I-P_\Psi)X\psi$, yields for
\(k\geq 3\) the exact seed-level defect
\[
\widetilde w(i_1,i_2,i_3)\psi-w(i_1,i_2,i_3)\psi
=\tfrac12\,P_\Psi H_{i_2}(I-P_\Psi)\{H_{i_1},H_{i_3}\}\psi
-\tfrac12\,P_\Psi H_{i_1}(I-P_\Psi)\{H_{i_2},H_{i_3}\}\psi,
\]
which does not generally vanish. Higher grades
impose analogous conditions on progressively longer intermediate
products, so seed-level matching fails generically for standard
Krylov--Lie algebras at every depth \(k\geq 3\).

\begin{remark}[When standard Krylov--Lie algebras are BCH-matched]
The defect above vanishes whenever every \emph{interior} product
remains in the subspace. Precisely, if
\(\mathcal P^{(k-1)}_\Psi\subset \mathcal K^{(k)}_\Psi\),
where \(\mathcal P^{(k-1)}_\Psi\) is the depth-\((k-1)\) Krylov prefix
subspace of \Cref{def:prefix-subspace-kla}, 
then the induction of \Cref{thm:word-preservation-prefix}
applies verbatim to the standard Krylov--Lie algebra and grants
seed-level matching through grade \(k\): the terminal depth-\(k\)
monomials need not lie in \(\mathcal K^{(k)}_\Psi\) individually, as only their
antisymmetrized combinations do. This absorption condition is strictly
weaker than being a prefix KLA and is realized nontrivially. For
pairwise anticommuting involutive generators (for instance mutually
anticommuting Pauli strings, \(\{\sigma_a,\sigma_b\}=2\delta_{ab}I\)), every
length-two monomial satisfies
\(H_aH_b\psi=\tfrac12[H_a,H_b]\psi+\delta_{ab}\psi\in \mathcal K^{(2)}_\Psi\),
so the grade-three defect vanishes identically and matching extends
through grade three for every seed; it need not extend to grade four,
since the length-three interior products
\((I-P_\Psi)H_aH_bH_c\psi\) generically escape \(\mathcal K^{(k)}_\Psi\).
More generally, the BCH-matched locus of standard Krylov--Lie algebras
is cut out by the vanishing of the graded defect tensors and is a
proper constructible subvariety of generator--seed space, in the
spirit of the stratifications of \Cref{sec:dim-matching-seeds}. 
Matching at all grades \(\leq k\) is therefore non-generic and
unstable under perturbations of \((\mathcal S,\Psi)\).
\end{remark}

Fortuitously, the prefix Krylov-Lie model with the canonical comparison map has the
property of \Cref{def:seed-KLG-matching}, and so \Cref{lem:seed-BCH-remainder} is material.

\begin{corollary}[Prefix KLA is BCH-matched]
\label{cor:prefix-KLA-BCH-matched}
The depth-\(k\) prefix Krylov--Lie algebra associated to \(\Psi\)
satisfies \Cref{def:seed-KLG-matching} and the result of \Cref{lem:seed-BCH-remainder} applies.
\end{corollary}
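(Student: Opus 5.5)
The plan is to reduce the statement to \Cref{thm:word-preservation-prefix} and then repeat the argument of \Cref{lem:seed-BCH-remainder} verbatim. \Cref{def:seed-KLG-matching} asks that, for every seed \(\psi\in\Psi\) and every commutator word \(w(i_1,\dots,i_r)\) of grade \(r\le k\), its compressed counterpart \(\widetilde w(i_1,\dots,i_r)\) satisfy \(w\psi=\widetilde w\psi\). The compressed generators here are \(\widetilde H_j=P^{(k)}_\Psi H_jP^{(k)}_\Psi\), with \(P^{(k)}_\Psi\) the projector onto the prefix subspace \(\mathcal P^{(k)}_\Psi\).

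First, I expand a grade-\(r\) nested commutator \(w(i_1,\dots,i_r)=[H_{i_r},[\cdots,[H_{i_2},H_{i_1}]\cdots]]\) via \([A,B]=AB-BA\). This gives a signed sum of \(2^{r-1}\) associative monomials \(H_{\sigma(r)}\cdots H_{\sigma(1)}\), each of length exactly \(r\), where \(\sigma\) ranges over certain orderings of the multiset \(\{i_1,\dots,i_r\}\). The expansion of \(\widetilde w\) has the same combinatorial form: the same signs, the same orderings, with every \(H_j\) replaced by \(\widetilde H_j\). This holds because the expansion uses only bilinearity and the definition of the bracket, never any property of the particular operators. (The factors of \(i\) from \Cref{rem:VQAConventions} also appear identically on both sides, so they are harmless.)

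Next, I apply \Cref{thm:word-preservation-prefix} to each monomial. Since each monomial has length \(r\le k\), the prefix-vector induction gives \(H_{\sigma(r)}\cdots H_{\sigma(1)}\psi=\widetilde H_{\sigma(r)}\cdots\widetilde H_{\sigma(1)}\psi\). The key point is that every intermediate vector \(H_{\sigma(a)}\cdots H_{\sigma(1)}\psi\) with \(a\le r\) lies in \(\mathcal P^{(k)}_\Psi\) by definition, so the projectors act trivially. This holds for arbitrary orderings, not just the original index order, and that is exactly what makes the prefix construction succeed where the standard KLA fails (compare the grade-three defect computed above). Summing over monomials with their common signs then yields \(w\psi=\widetilde w\psi\), which is the matching property.

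Finally, I check that the hypotheses of \Cref{lem:seed-BCH-remainder} hold: the canonical comparison map \(\kappa^{(k)}_\Psi\) and the local BCH charts of \Cref{lem:BCH-charts} are used with \(\mathfrak p^{(k)}_\Psi\) in place of \(\mathfrak l^{(k)}_\Psi\), as licensed by the standing remark that KLA properties transfer to prefix KLAs. The homogeneous components \(\Omega_r^{(\alpha)}\) and \(\widetilde\Omega^{(\alpha)}_{k,r}\) are the same rational linear combinations of corresponding words, so they agree on \(\psi\) for \(r\le k\), and the remainder formula follows as in that lemma. The main obstacle is this last step, which needs the BCH expansion of the surrogate exponent to be a word-by-word compression of the ambient expansion with identical coefficients. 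I would justify it by noting that \(\kappa^{(k)}_\Psi\) is the same product of exponentials with \(H_j\mapsto\widetilde H_j\), so the Dynkin series for both is given by one universal formula in the generators and parameters.
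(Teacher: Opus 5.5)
Your proposal is correct and follows the paper's route: the paper's proof is a one-line appeal to \Cref{thm:word-preservation-prefix}. That theorem's closing clause on commutator words is exactly your expansion of each grade-$r\le k$ bracket into length-$r$ monomials, each preserved on $\psi$ because all its intermediate prefixes lie in $\mathcal P^{(k)}_\Psi$. Your final check that the BCH components correspond word by word is not an extra hypothesis. That step is already carried out inside the proof of \Cref{lem:seed-BCH-remainder}, so once \Cref{def:seed-KLG-matching} is verified that lemma applies as stated.
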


\begin{proof}
This follows immediately from \Cref{thm:word-preservation-prefix}.
\end{proof}

We may now bound the seed-based remainder on each chart using the BCH
component estimates.

\begin{lemma}[Factorial seed-based BCH remainder on a chart]
\label{lem:seed-BCH-factorial-chart}
In the setting of the Lemmas \ref{lem:BCH-charts}--\ref{lem:seed-BCH-remainder},
for each chart $U_\alpha$ and each $x\in U_\alpha\cap C$ one has
\[
\bigl\|R_k^{(\alpha)}(x)\psi\bigr\|
\le
\sum_{r>k}
\bigl\|\Omega_r^{(\alpha)}(x)\psi\bigr\|
+
\sum_{r>k}
\bigl\|\widetilde\Omega_{k,r}^{(\alpha)}(x)\psi\bigr\|.
\]
Moreover, there exist explicit constants $K_\alpha^{(\psi)}>0$ and
$B_\alpha>0$ such that for all $x\in U_\alpha\cap C$ and all $k$ sufficiently
large (by enlarging $K_\alpha^{(\psi)}$ if necessary, one may ensure that this
bound holds for all $k$ above any prescribed depth),
\begin{equation}
\label{eq:seed-BCH-factorial-chart}
\bigl\|R_k^{(\alpha)}(x)\psi\bigr\|
\le
K_\alpha^{(\psi)}\,\frac{B_\alpha^{k+1}}{(k+1)!},
\end{equation}
with
\[
K_\alpha^{(\psi)}
=
K_\star c_\alpha + \widetilde K_\star \widetilde c_\alpha,
\qquad
B_\alpha
=
\max\{\gamma R_\alpha,\tilde \gamma \widetilde R_\alpha\},
\]
where $K_\star$ is as in \Cref{lem:BCH-coeff-bounds}, $c_\alpha$ and
$\widetilde c_\alpha$ are scalar tail constants depending on $R_\alpha$ and
$\widetilde R_\alpha$, and $\widetilde K_\star,\widetilde R_\alpha$ are
compression-dependent analogues of $K_\star,R_\alpha$ for the exponent of
$\kappa_\Psi^{(k)}$ (\Cref{prop:canonical-circuit-lift}) on $K^{(k)}_\Psi$ (\Cref{def:krylov-lie-group}).
\end{lemma}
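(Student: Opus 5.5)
The plan is to start from the tail representation of \Cref{lem:seed-BCH-remainder}, which applies because \Cref{cor:prefix-KLA-BCH-matched} guarantees BCH-matching for the canonical comparison map. For each $x\in U_\alpha\cap C$ this gives
\[
R_k^{(\alpha)}(x)\psi=\sum_{r>k}\bigl(\Omega_r^{(\alpha)}(x)\psi-\widetilde\Omega_{k,r}^{(\alpha)}(x)\psi\bigr).
\]
Both series converge absolutely, by item (3) of \Cref{lem:BCH-charts} and by its analogue for the compressed exponent. The first displayed inequality of the statement then follows from the triangle inequality, applied termwise to this absolutely convergent sum.

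For the factorial bound, I will estimate the two tails separately. Since $\|\psi\|=1$, we have $\|\Omega_r^{(\alpha)}(x)\psi\|\le\|\Omega_r^{(\alpha)}(x)\|\le K_\star(\gamma R_\alpha)^r/r!$ by \Cref{lem:BCH-coeff-bounds}. For the compressed side, I apply the same argument to $\kappa_\Psi^{(k)}$. Its generators are $\widetilde H_j=P_\Psi H_jP_\Psi$, so $\|\widetilde H_j\|\le\|H_j\|$, and the compressed exponent on the compact set $\kappa_\Psi^{(k)}(U_\alpha\cap C)$ is bounded by some $\widetilde R_\alpha$. After shrinking charts if needed we may take $\widetilde R_\alpha<R_0$, using the finite cover and continuity of the compressed logarithm. \Cref{lem:BCH-coeff-bounds} applied in $\mathfrak u(\mathcal K_\Psi^{(k)})\subset\mathfrak u(d)$ then gives constants $\widetilde K_\star,\tilde\gamma$ with $\|\widetilde\Omega_{k,r}^{(\alpha)}(x)\|\le\widetilde K_\star(\tilde\gamma\widetilde R_\alpha)^r/r!$. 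Here I use that the ambient embedding $\widetilde\Omega\mapsto\operatorname{diag}(\widetilde\Omega,0)$ preserves the operator norm.

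It then remains to bound the scalar exponential tail $T_k(B):=\sum_{r>k}B^r/r!$. The standard estimate
\[
T_k(B)=\frac{B^{k+1}}{(k+1)!}\sum_{j\ge0}\frac{B^j(k+1)!}{(k+1+j)!}\le\frac{B^{k+1}}{(k+1)!}\sum_{j\ge0}\frac{B^j}{j!}=\frac{B^{k+1}}{(k+1)!}e^{B}
\]
gives the result. Alternatively, for $k+2>2B$ one can use the geometric bound $\sum_j(B/(k+2))^j\le2$; this is where the phrase ``$k$ sufficiently large'' enters. I then set $c_\alpha:=e^{\gamma R_\alpha}$ (or $2$), $\widetilde c_\alpha:=e^{\tilde\gamma\widetilde R_\alpha}$, and replace each base by $B_\alpha=\max\{\gamma R_\alpha,\tilde\gamma\widetilde R_\alpha\}$, using that $T_k$ is monotone in $B$. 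Summing the two tails yields $K_\alpha^{(\psi)}=K_\star c_\alpha+\widetilde K_\star\widetilde c_\alpha$. For finitely many small $k$ below a prescribed depth, the left-hand side is bounded uniformly by $2\sup(\|\Omega\|+\|\widetilde\Omega\|)$, so enlarging $K_\alpha^{(\psi)}$ covers them.

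The main obstacle is the compressed side. I need the exponent of $\kappa_\Psi^{(k)}(x)$ to admit a BCH expansion whose degree-$r$ part uses the same coefficients as the ambient one, so that the matching in \Cref{lem:seed-BCH-remainder} is legitimate. I also need its radius $\widetilde R_\alpha$ to sit inside the convergence ball uniformly on $U_\alpha\cap C$. I would handle this by defining $\widetilde\Omega^{(\alpha)}$ as the same iterated BCH composition of the factors $-i\theta_j(x)\widetilde H_{i_j}$ that defines $\Omega^{(\alpha)}$ from the factors $-i\theta_j(x)H_{i_j}$, rather than as an independent matrix logarithm. Its norm is then dominated, via $\|\widetilde H_j\|\le\|H_j\|$, by the same majorant series, and one may even take $\widetilde R_\alpha\le R_\alpha$.
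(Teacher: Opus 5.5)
Your proposal is correct and follows the paper's proof step for step. You apply the triangle inequality to the BCH-matched tail from \Cref{lem:seed-BCH-remainder}, bound each of the two tails with \Cref{lem:BCH-coeff-bounds}, estimate the scalar exponential remainder, and absorb both bases into $B_\alpha=\max\{\gamma R_\alpha,\tilde\gamma\widetilde R_\alpha\}$. Two of your steps sharpen points the paper only asserts. Your estimate $\sum_{r>k}B^r/r!\le e^{B}B^{k+1}/(k+1)!$ holds for every $k$, so no threshold $k_0$ is needed. You also define the compressed exponent as the same BCH composition of the inputs $-i\theta_j\widetilde H_{i_j}$; this secures the coefficient matching and, via $\|P_\Psi H_jP_\Psi\|\le\|H_j\|$ and the norm-preserving embedding into $\mathfrak u(d)$, gives constants independent of $k$. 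One caveat: $\widetilde R_\alpha\le R_\alpha$ only follows if $R_\alpha$ is read as a bound on the input norms rather than on $\|\Omega^{(\alpha)}(x)\|$. Likewise, shrinking charts cannot by itself force $\widetilde R_\alpha<R_0$, since it leaves the pointwise norms unchanged.
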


\begin{proof}
The inequality with the two tails is immediate from
\Cref{lem:seed-BCH-remainder}. For the first tail, using
$\|\Omega_r^{(\alpha)}(x)\psi\|\le\|\Omega_r^{(\alpha)}(x)\|$ and
\Cref{lem:BCH-coeff-bounds} gives
\[
\sum_{r>k}
\bigl\|\Omega_r^{(\alpha)}(x)\psi\bigr\|
\le
K_\star\sum_{r>k}\frac{(\gamma R_\alpha)^r}{r!}.
\]
For each fixed $R_\alpha>0$, the scalar tail
\[
\sum_{r>k}\frac{(\gamma R_\alpha)^r}{r!}
\]
is the remainder of the Taylor series for $e^{\gamma R_\alpha}$ at order $k$, and
hence satisfies
\[
\sum_{r>k}\frac{(\gamma R_\alpha)^r}{r!}
=
\frac{(\gamma R_\alpha)^{k+1}}{(k+1)!}
\sum_{m=0}^\infty
\frac{( \gamma R_\alpha)^m}{\prod_{j=2}^{m+1}(k+j)}.
\]
For each fixed $R_\alpha$, the inner series tends to $1$ as $k\to\infty$,
so there exists a threshold $k_0(\gamma R_\alpha)$ such that
\[
\sum_{r>k}\frac{(\gamma R_\alpha)^r}{r!}
\le
c_\alpha\,\frac{(\gamma R_\alpha)^{k+1}}{(k+1)!}
\quad\text{for all }k\ge k_0(\gamma R_\alpha),
\]
where $c_\alpha\ge 1$ depends only on $\gamma R_\alpha$ (for instance, one may
take $c_\alpha=2$ once $k$ is beyond a fixed threshold; by adjusting $c_\alpha$, we may perform this step no matter what our initial grade is). Thus, for all
$k$ sufficiently large we obtain
\[
\sum_{r>k}
\bigl\|\Omega_r^{(\alpha)}(x)\psi\bigr\|
\le
K_{\star} c_\alpha\,\frac{(\gamma R_\alpha)^{k+1}}{(k+1)!}.
\]

For the compressed tail, the exponent $\Omega_k^{(\alpha)}(x)$ on
$K^{(k)}_\Psi$ is a matrix Lie algebra element whose homogeneous components in
the compressed generators satisfy the same type of estimate as in
\Cref{lem:BCH-coeff-bounds}, but with radii $\widetilde R_\alpha$ and
constant $\widetilde K_\star$ depending on the compression and the depth-$k$
KLA. Thus
\[
\sum_{r>k}
\bigl\|\widetilde\Omega_{k,r}^{(\alpha)}(x)\psi\bigr\|
\le
\widetilde K_\star\sum_{r>k}\frac{(\tilde \gamma \widetilde R_\alpha)^r}{r!}
\le
\widetilde K_\star \widetilde c_\alpha\,\frac{(\tilde \gamma \widetilde R_\alpha)^{k+1}}{(k+1)!}
\]
for all $k$ sufficiently large. Combining the two tail bounds and setting
\[
K_\alpha^{(\psi)}
:=
K_\star c_\alpha + \widetilde K_\star \widetilde c_\alpha,
\qquad
B_\alpha:=\max\{\gamma R_\alpha,\tilde \gamma \widetilde R_\alpha\},
\]
yields \eqref{eq:seed-BCH-factorial-chart}.
\end{proof}

We can now pass from charts to the compact subset $C$ and obtain the desired
global seed-based factorial bound.

\begin{theorem}[Global seed-based factorial BCH/Krylov--Lie error bound]
\label{thm:seed-BCH-KLG-factorial}
Let $C\subset M$ be a compact subset. Assume
that \Cref{lem:BCH-charts} and \Cref{def:seed-KLG-matching}
hold on $C$. Define the seed-based uniform approximation error
\[
\mathcal E_C^{(\psi)}(k;\Psi)
:=
\sup_{x\in C}
\bigl\|
\mathcal U(x)\psi
-
\iota_\Psi^{(k)}
\bigl(\kappa_\Psi^{(k)}(x)\bigr)\psi
\bigr\|.
\]
Then there exist explicit constants $A_C^{(\psi)}>0$ and $B_C>0$ depending only
on $C$, the generators, and the architecture such that for all $k$
sufficiently large,
\begin{equation}
\label{eq:seed-BCH-KLG-factorial}
\mathcal E_C^{(\psi)}(k;\Psi)
\le
A_C^{(\psi)}\,\frac{B_C^{k+1}}{(k+1)!},
\end{equation}
where one may take
\[
B_C:=\max_{\alpha}B_\alpha,
\qquad
A_C^{(\psi)}:=\max_{\alpha}K_\alpha^{(\psi)}.
\]
\end{theorem}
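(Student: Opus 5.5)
The plan is to pass from the exponent-level remainder of \Cref{lem:seed-BCH-factorial-chart} to the group-level seed error on each chart, and then take a maximum over the finitely many charts covering $C$. Fix $x\in C$ and choose $\alpha$ with $x\in U_\alpha$; this is possible because \Cref{lem:BCH-charts} gives a finite cover $\{U_\alpha\}_{\alpha=1}^N$ of $C$. On that chart we have
\[
\mathcal U(x)=\exp\bigl(\Omega^{(\alpha)}(x)\bigr),
\qquad
\iota_\Psi^{(k)}\bigl(\kappa_\Psi^{(k)}(x)\bigr)=\exp\bigl(\widetilde\Omega_k^{(\alpha)}(x)\bigr),
\]
with $\Omega^{(\alpha)}(x)-\widetilde\Omega_k^{(\alpha)}(x)=R_k^{(\alpha)}(x)$. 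By \Cref{def:seed-KLG-matching}, together with \Cref{cor:prefix-KLA-BCH-matched} when the prefix model is used, \Cref{lem:seed-BCH-remainder} applies. So the only part of the remainder that survives on the seed is the tail $\sum_{r>k}$, and \Cref{lem:seed-BCH-factorial-chart} bounds this tail by $K_\alpha^{(\psi)}B_\alpha^{k+1}/(k+1)!$ once $k$ exceeds a chart-dependent threshold $k_0(\alpha)$.

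The main step is to convert the difference of exponents acting on $\psi$ into a difference of exponentials acting on $\psi$. I would use the Duhamel formula
\[
e^{A}-e^{B}=\int_0^1 e^{sA}(A-B)e^{(1-s)B}\,ds,
\]
with $A=\Omega^{(\alpha)}(x)$ and $B=\widetilde\Omega_k^{(\alpha)}(x)$. Both exponents lie in $\mathfrak u(d)$, so $e^{sA}$ is unitary and its norm is $1$. The obstruction is that $A-B$ is applied to $e^{(1-s)B}\psi$, not to $\psi$ itself, and seed-level matching only controls $R_k^{(\alpha)}$ on the seed. I would try two routes. The first is to expand $e^{(1-s)B}\psi$ in words of the compressed generators and reuse \Cref{thm:word-preservation-prefix}: the vectors $B^j\psi$ stay in the prefix subspace, so the matching argument applies to them as well. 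The high-order terms are then absorbed into the same factorial tail, at the cost of enlarging $K_\alpha^{(\psi)}$ by a factor of the form $e^{\widetilde R_\alpha}$. The fallback is to regard the graded BCH tail of \Cref{lem:seed-BCH-factorial-chart} as the seed-level statement for the whole one-parameter family $e^{tA}\psi$ versus $e^{tB}\psi$, and to differentiate $t\mapsto e^{-tB}e^{tA}\psi$. Either route should give, for $k\ge k_0(\alpha)$,
\[
\bigl\|\mathcal U(x)\psi-\iota_\Psi^{(k)}(\kappa_\Psi^{(k)}(x))\psi\bigr\|\le K_\alpha^{(\psi)}\frac{B_\alpha^{k+1}}{(k+1)!},
\]
possibly with $K_\alpha^{(\psi)}$ enlarged by a constant that depends only on the chart radii. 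I expect this step, that is, controlling how the remainder propagates through the non-seed factor, to be the main obstacle.

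Finally, I would take $k\ge\max_\alpha k_0(\alpha)$, which is finite because there are finitely many charts. Since $t\mapsto t^{k+1}$ is monotone, we have $B_\alpha^{k+1}\le B_C^{k+1}$ with $B_C:=\max_\alpha B_\alpha$, and we also have $K_\alpha^{(\psi)}\le A_C^{(\psi)}:=\max_\alpha K_\alpha^{(\psi)}$. These bounds are uniform in $x\in C$, so taking the supremum over $x\in C$ yields \eqref{eq:seed-BCH-KLG-factorial}. The constants depend only on the chart data $R_\alpha,\widetilde R_\alpha$, on $K_\star,\gamma$ and their compressed analogues, and on the architecture, as the statement claims.
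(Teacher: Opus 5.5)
Your outer argument is exactly the paper's proof: a finite cover from \Cref{lem:BCH-charts}, a uniform threshold $\max_\alpha k_0(\alpha)$, monotonicity of $t\mapsto t^{k+1}$, and a supremum over $x\in C$. The paper passes from exponents to group elements in one unjustified line, $\|\mathcal U(x)\psi-\iota_\Psi^{(k)}(\kappa_\Psi^{(k)}(x))\psi\|\le\|R_k^{(\alpha)}(x)\psi\|$. You are right that this step carries all the content, but your proposal leaves it open, and neither route works as sketched. In Route~1, seed-level matching does not transfer to the vectors $B^j\psi$. \Cref{thm:word-preservation-prefix} only gives $(\Omega_r^{(\alpha)}-\widetilde\Omega_{k,r}^{(\alpha)})\widetilde w_m\psi=0$ when $r+m\le k$, so cross terms with $r\le k<r+m$ survive. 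Summing them convolves two factorial tails, and that changes the base: $\sum_{r+m=n}a^rb^m/(r!\,m!)=(a+b)^n/n!$, while $((a+b)/a)^{k+1}$ is unbounded. So you cannot keep $B_\alpha$ and only inflate the prefactor by $e^{\widetilde R_\alpha}$. Route~2 has the same defect. Since $\frac{d}{dt}\bigl(e^{-tB}e^{tA}\psi\bigr)=e^{-tB}(A-B)e^{tA}\psi$, the difference $A-B$ is again applied to $e^{tA}\psi$ rather than to $\psi$. Also, \Cref{lem:seed-BCH-factorial-chart} concerns the exponent at the single point $x$, not a flow.

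More seriously, under \Cref{def:seed-KLG-matching} alone (commutator words only) the step is false, so no route can close it. Take a single generator $H$. Every commutator word of grade $\ge 2$ vanishes, so $\mathcal K^{(k)}_\psi=\operatorname{span}\{\psi,H\psi\}$ for all $k\ge 1$ and matching holds trivially. With $P$ the projector onto this span, $R_k\psi=-i\theta(H-PHP)\psi=0$. Yet $e^{-i\theta H}\psi-e^{-i\theta PHP}\psi=-\tfrac{\theta^2}{2}(I-P)H^2\psi+O(\theta^3)$, which is generically nonzero and independent of $k$.

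What does work is the prefix model, and there you can drop BCH entirely. Expand $\prod_j e^{-i\theta_jH_{i_j}}\psi$ and $\prod_j e^{-i\theta_j\widetilde H_{i_j}}\psi$ into ordered monomials; both expansions have the same coefficients $\prod_j(-i\theta_j)^{n_j}/n_j!$. \Cref{thm:word-preservation-prefix} cancels every monomial of length $\le k$. Using $\|\widetilde H_j\|\le\|H_j\|$ and the multinomial theorem, the remainder is at most $2\sum_{n>k}\Theta^n/n!\le 2e^{\Theta}\Theta^{k+1}/(k+1)!$, where $\Theta:=\sup_{x\in C}\sum_j|\theta_j(x)|\,\|H_{i_j}\|$ is finite by compactness. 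This proves the factorial bound for every $k$, with $B_C=\Theta$ and $A_C^{(\psi)}=2e^{\Theta}$ in place of $\max_\alpha B_\alpha$ and $\max_\alpha K_\alpha^{(\psi)}$. It also avoids \Cref{lem:BCH-coeff-bounds}, whose claimed $1/r!$ decay does not follow from Cauchy estimates, which give only geometric decay.
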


\begin{proof}
For each $x\in C$ choose $\alpha$ with $x\in U_\alpha$. By
\Cref{lem:seed-BCH-factorial-chart},
\[
\bigl\|\mathcal U(x)\psi-\iota_\Psi^{(k)}\bigl(\kappa_\Psi^{(k)}(x)\bigr)\psi\bigr\|
\le
\bigl\|R_k^{(\alpha)}(x)\psi\bigr\|
\le
K_\alpha^{(\psi)}\,\frac{B_\alpha^{k+1}}{(k+1)!}
\]
for all $k$ sufficiently large. Taking the supremum over $x\in C$ and over a finite number of charts $\alpha$ covering $C$ ($C$ is compact) yields
\[
\mathcal E_C^{(\psi)}(k;\Psi)
\le
\max_{\alpha}K_\alpha^{(\psi)}\,\frac{\bigl(\max_\alpha B_\alpha\bigr)^{k+1}}{(k+1)!},
\]
which is \eqref{eq:seed-BCH-KLG-factorial} with the stated choices of
$A_C^{(\psi)}$ and $B_C$.
\end{proof}
\begin{remark}
In the setting of the canonical Krylov--Lie comparison map
$\kappa_\Psi^{(k)}$ associated with a BCH-matched depth-$k$ Krylov--Lie algebra
generated by the circuit generators and the seed block, the constants
$A_C^{(\psi)}$ and $B_C$ in \Cref{thm:seed-BCH-KLG-factorial} arise from the
BCH convergence radii on charts covering $C$ together with the compressed
BCH radii for the exponent of $\kappa_\Psi^{(k)}$ on $K^{(k)}_\Psi$ and the corresponding analytic BCH constants (e.g. $\gamma$ or $\tilde \gamma$). For a
fixed circuit architecture and generator set, these constants depend only
on the ambient chart geometry and on the choice of seed, and the theorem
applies directly to the canonical prefix comparison map by \Cref{cor:prefix-KLA-BCH-matched}. The
depth dependence of the approximation is entirely captured by the
factorial tail $(B_C^{k+1}/(k+1)!)$, whose decay in $k$ dominates any
purely exponential decay in the seed-based metric.
\end{remark}
\subsection{Commutator Budget and Krylov--Lie Approximation Quality}

We now formalize the intuitive picture that higher-order BCH commutator
directions live in a quotient of the depth-$k$ Krylov--Lie algebra by the
span of the compressed Hermitian generators, with a dimension budget that
constrains how much genuinely new geometry remains beyond first order. This
quotient, together with the factorial seed-based Krylov--Lie error
bound for BCH-matched KLAs (\Cref{def:seed-KLG-matching}), controls the quality of approximation on the reachable manifold.

\begin{lemma}[Higher-commutator BCH sector modulo the generator span]
\label{lem:BCH-commutator-sector-quotient}
Let $M\subset U(d)$ be a reachable manifold of dimension $m$, and let
$\mathcal S=\{H_1,\dots,H_n\}$ be a fixed Hermitian generator set for a variational ansatz in the sense of \Cref{rem:VQAConventions}. Fix a normalized block
seed $\Psi\in\Sigma_s$ (\Cref{rem:SurjectiveParamBlockSeeds}), and let
\[
\mathcal K_k:=\mathcal K_{\Psi}^{(k)}(\mathcal S)\subset\mathcal H
\]
denote the depth-$k$ Krylov--Lie subspace generated by $\mathcal S$ and $\Psi$, as in
\Cref{def:krylov-lie-subspace}. Let
\[
\mathfrak l_k:=\mathfrak l_{\Psi}^{(k)}(\mathcal S)\subset\End(\mathcal K_k)
\]
be the corresponding depth-$k$ Krylov--Lie algebra, as in
Definition~3.2.8. Assume we are in the dimension-matched regime
\[
\dim\mathfrak l_k = m.
\]

On a neighborhood of a compact subset $C\subset M$, suppose that the
canonical depth-$k$ Krylov--Lie comparison map
\[
\kappa_\Psi^{(k)}:C\to U(\mathcal K_k)
\]
admits a BCH exponent
\[
\kappa_\Psi^{(k)}(x)
=
\exp\bigl(\Omega_k^{(\alpha)}(x)\bigr),
\qquad
\Omega_k^{(\alpha)}(x)
=
\sum_{r\ge 1}\Omega_{k,r}^{(\alpha)}(x),
\]
on each chart $U_\alpha$ meeting $C$, where each homogeneous component
$\Omega_{k,r}^{(\alpha)}(x)\in\mathfrak l_k$ has commutator depth $r$ in the
compressed Hermitian generators
\[
\widetilde H_j^{(k)}:=P^{(k)}_\Psi H_j P^{(k)}_\Psi \big|_{\mathcal K_k},
\qquad j \in \{1,\dots,n\},
\]
with $P^{(k)}_\Psi$ the orthogonal projector onto $\mathcal K_k$.

Define the real generator subspace
\[
G_k
:=
\mathrm{span}_{\mathbb R}\{\widetilde H_1^{(k)},\dots,\widetilde H_n^{(k)}\}
\subset\mathfrak l_k,
\qquad
n_k:=\dim G_k,
\]
and let
\[
\pi_k:\mathfrak l_k\to \mathfrak l_k/G_k
\]
be the quotient map. Define the higher-commutator BCH sector over $C$ by
\[
\mathcal C_k(C)
:=
\mathrm{span}_{\mathbb R}
\Bigl\{
\pi_k\bigl(\Omega_{k,r}^{(\alpha)}(x)\bigr)
\, \Big \vert \,\;
x\in C,\ r\ge 2,\ U_\alpha\ni x
\Bigr\}
\subset \mathfrak l_k/G_k.
\]
Then:
\begin{enumerate}
\item The degree-one BCH term lies in the generator subspace:
\[
\Omega_{k,1}^{(\alpha)}(x)\in G_k
\qquad\text{for all }x\in C\text{ and all charts }U_\alpha\ni x,
\]
and therefore $\pi_k\bigl(\Omega_{k,1}^{(\alpha)}(x)\bigr)=0$.

\item All genuinely new BCH directions beyond the generators are encoded by
$\mathcal C_k(C)$, in the sense that any class
$\pi_k\bigl(\Omega_{k,r}^{(\alpha)}(x)\bigr)\neq 0$ with $r\ge 2$ represents
a genuinely new commutator direction modulo $G_k$.

\item One has the dimension bound
\begin{equation}
\label{eq:BCH-commutator-budget-quotient}
\dim \mathcal C_k(C)
\le
\dim\bigl(\mathfrak l_k/G_k\bigr) =
m-n_k.
\end{equation}
In particular, the total number of independent directions that higher-order
BCH commutators of the canonical comparison map can contribute modulo the
generator span is at most $m-n_k$.
\end{enumerate}
\end{lemma}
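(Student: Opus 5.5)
The three claims are essentially linear-algebraic consequences of the graded structure of the BCH expansion, so I would prove them in order.

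\textbf{Claim (1).} I would first make explicit how the exponent $\Omega_k^{(\alpha)}(x)$ of the canonical comparison map of \Cref{prop:canonical-circuit-lift} is organized. Writing $\kappa_\Psi^{(k)}(x)=e^{-i\theta_p(x)\widetilde H^{(k)}_{i_p}}\cdots e^{-i\theta_1(x)\widetilde H^{(k)}_{i_1}}$, iterated (multivariable) BCH expresses $\Omega_k^{(\alpha)}(x)$ as a convergent sum of homogeneous Lie polynomials in the exponents $-i\theta_j(x)\widetilde H^{(k)}_{i_j}$. This uses the chart and radius set-up of \Cref{lem:BCH-charts}, transported to $\mathfrak l_k$. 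The degree-one component is the sum of the exponents, namely
\[
\Omega_{k,1}^{(\alpha)}(x)=-i\sum_{j=1}^p\theta_j(x)\,\widetilde H^{(k)}_{i_j},
\]
which under the identification of \Cref{rem:VQAConventions} is a real linear combination of compressed generators. Hence it lies in $G_k$, and so $\pi_k(\Omega^{(\alpha)}_{k,1}(x))=0$. A branch-of-logarithm subtlety could add a central or $2\pi$-lattice shift to the degree-one term. I would dispose of it by working on charts small enough, as in \Cref{lem:BCH-charts}, that the principal logarithm agrees with the BCH series, so no such shift occurs.

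\textbf{Claim (2).} This is essentially definitional. Each $\Omega^{(\alpha)}_{k,r}(x)$ with $r\ge2$ is a real combination of nested commutators of the $\widetilde H^{(k)}_j$ of depth $r$, so it lies in $\mathfrak l_k$, because the KLA is closed under brackets. Its class modulo $G_k$ is therefore well defined. A nonzero class means precisely that the component is not a combination of generators. By (1), the full class of the exponent is
\[
\pi_k\bigl(\Omega_k^{(\alpha)}(x)\bigr)=\sum_{r\ge2}\pi_k\bigl(\Omega^{(\alpha)}_{k,r}(x)\bigr).
\]
Since $\pi_k$ is a continuous linear map between finite-dimensional spaces, it commutes with the convergent series. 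This shows that every direction of the exponent beyond $G_k$ is captured by $\mathcal C_k(C)$.

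\textbf{Claim (3).} The definition of $\mathcal C_k(C)$ takes a span over infinitely many $x$ and $r$. However, every vector in that span lies in the finite-dimensional quotient $\mathfrak l_k/G_k$, so $\mathcal C_k(C)$ is a subspace of it. Therefore
\[
\dim\mathcal C_k(C)\le\dim\mathfrak l_k-\dim G_k=m-n_k,
\]
using the dimension-matching hypothesis $\dim\mathfrak l_k=m$ and the fact that $G_k\subset\mathfrak l_k$. The inclusion $G_k\subset\mathfrak l_k$ holds because the compressed generators are among the words spanning the KLA in \Cref{def:krylov-lie-algebra}.

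\textbf{Main obstacle.} I expect the only delicate point to be the justification of the graded BCH decomposition in (1). Specifically, one must confirm that the homogeneous degree-$r$ components in the compressed generators are well defined and genuinely lie in $\mathfrak l_k$ and not merely in $\mathfrak u(\mathcal K_k)$. For the first part, I would take the grading by total degree in the parameters $\theta_j$: the $\theta_j$ are formal scalars, so the decomposition into degrees is canonical even though the underlying vectors may be linearly dependent. For the second part, I would use closure of $\mathfrak l_k$ under brackets, applied term by term.
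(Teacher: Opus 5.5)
Your proposal is correct and follows the paper's route: you identify the degree-one term of the BCH expansion of the canonical product as $-i\sum_j\theta_j(x)\widetilde H^{(k)}_{i_j}\in G_k$, note that the higher components lie in $\mathfrak l_k$ so their classes modulo $G_k$ are well defined, and bound $\dim\mathcal C_k(C)$ by $\dim\mathfrak l_k-\dim G_k=m-n_k$. Your extra care (grading by total $\theta$-degree, continuity of $\pi_k$, and $G_k\subset\mathfrak l_k$) is sound but not required by the paper's argument; the branch-of-logarithm remark is also unnecessary, since the graded BCH exponent is a hypothesis of the lemma, and shrinking charts would not by itself force BCH convergence when the $\theta_j(x)$ are not small.
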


\begin{proof}
By construction of the canonical comparison map on each chart $U_\alpha$, one
has
\[
\kappa_\Psi^{(k)}(x)
=
\exp\bigl(-i\theta_{\alpha,p}(x)\widetilde H_{i_{p}}^{(k)}\bigr)\cdots
\exp\bigl(-i\theta_{\alpha,1}(x)\widetilde H_{i_{1}}^{(k)}\bigr),
\]
for suitable real-analytic functions $\theta_{\alpha,j}$, so the degree-one
BCH term is the linear combination
\[
\Omega_{k,1}^{(\alpha)}(x)
=
-\sum_{j=1}^p i\theta_{\alpha,j}(x)\widetilde H_{i_j}^{(k)}\in G_k.
\]
Hence $\pi_k\bigl(\Omega_{k,1}^{(\alpha)}(x)\bigr)=0$ for all
$x\in C$ and all charts $U_\alpha\ni x$, proving (1).

Every BCH term $\Omega_{k,r}^{(\alpha)}(x)$ lies in $\mathfrak l_k$, so
passing to the quotient $\mathfrak l_k/G_k$ removes precisely the directions
already visible at first order, namely the generator span
\[
G_k=\mathrm{span}_{\mathbb R}\bigl\{\widetilde H_j^{(k)} \, \big \vert \, 1 \leq j \leq n\bigr\}.
\]
Thus any nonzero class $\pi_k\bigl(\Omega_{k,r}^{(\alpha)}(x)\bigr)$ with
$r\ge 2$ represents a genuinely new commutator direction modulo $G_k$, and
the span of all such classes over $x\in C$, $r\ge 2$, and charts $U_\alpha$
meeting $C$ is exactly $\mathcal C_k(C)$. This proves (2).

Finally, $\mathcal C_k(C)$ is a real subspace of $\mathfrak l_k/G_k$, so
\[
\dim \mathcal C_k(C)
\le
\dim\bigl(\mathfrak l_k/G_k\bigr)
=
\dim\mathfrak l_k-\dim G_k
=
m-n_k,
\]
which is \eqref{eq:BCH-commutator-budget-quotient}.
\end{proof}
\begin{proposition}[Layered generator growth consumes commutator budget]
\label{prop:generator-growth-budget}
In the setting of Lemma~\ref{lem:BCH-commutator-sector-quotient}, consider a
family of ansatz architectures indexed by an external layer parameter $L$,
each with generator set $\mathcal S_L=\{H_{L,1},\dots,H_{L,n_L}\}$, reachable
manifold $M_L\subset U(d)$ of dimension $m_L$, depth-$k$ Krylov--Lie
subspace
\[
\mathcal K_k^{(L)}:=\mathcal K_\Psi^{(k)}(\mathcal S_L)\subset\mathcal H,
\]
and depth-$k$ Krylov--Lie algebra
\[
\mathfrak l_k^{(L)}:=\mathfrak l_\Psi^{(k)}(\mathcal S_L)
\subset\End\bigl(\mathcal K_k^{(L)}\bigr)
\]
in the dimension-matched regime
\[
\dim\mathfrak l_k^{(L)} = m_L.
\]
Let
\[
G_k^{(L)}
:=
\mathrm{span}_{\mathbb R}\{\widetilde H_{L,1}^{(k)},\dots,
\widetilde H_{L,n_L}^{(k)}\}
\subset\mathfrak l_k^{(L)},
\qquad
n_k^{(L)}:=\dim G_k^{(L)},
\]
denote the generator subspace, where
$\widetilde H_{L,j}^{(k)}:=P_k^{(L)} H_{L,j} P_k^{(L)}\big|_{\mathcal K_k^{(L)}}$
are the compressed Hermitian generators on $\mathcal K_k^{(L)}$ and
$P_k^{(L)}$ is the corresponding orthogonal projector. For any compact subset
$C\subset M_L$ one has
\[
\dim \mathcal C_k^{(L)}(C)\le m_L - n_k^{(L)},
\]
where $\mathcal C_k^{(L)}(C)$ is defined as in
Lemma~\ref{lem:BCH-commutator-sector-quotient} for the architecture
$\mathcal S_L$.

If passing from layer $L$ to layer $L+1$ introduces $\delta_L\ge 1$ new
linearly independent generator directions at depth $k$, so that
\[
n_k^{(L+1)}\ge n_k^{(L)}+\delta_L,
\]
while the reachable manifold dimension $m_L$ remains fixed along this
family, then the available BCH commutator budget is reduced by at least
$\delta_L$:
\[
m_L-n_k^{(L+1)}
\le
m_L-n_k^{(L)}-\delta_L.
\]
In particular, if $n_k^{(L)}$ becomes close to $m_L$ for some architecture
$L$, then
\[
m_L-n_k^{(L)}\ll m_L,
\]
and the higher-commutator sector $\mathcal C_k^{(L)}(C)$ has very limited
dimension.
\end{proposition}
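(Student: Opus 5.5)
The plan is to reduce everything to \Cref{lem:BCH-commutator-sector-quotient} applied architecture by architecture, followed by elementary integer arithmetic. No new analytic input is needed. The content of the proposition is that the quotient budget $\dim(\mathfrak l_k^{(L)}/G_k^{(L)})$ is controlled by two quantities only: the dimension-matched value $m_L$ and the generator-span dimension $n_k^{(L)}$.

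First, fix $L$ and a compact $C\subset M_L$. The standing hypotheses of \Cref{lem:BCH-commutator-sector-quotient} hold for the architecture $\mathcal S_L$. These are the dimension-matched regime $\dim\mathfrak l_k^{(L)}=m_L$, the existence of BCH exponents of the canonical comparison map on charts meeting $C$, and the definitions of $G_k^{(L)}$ and $\pi_k$. Item (3) of that lemma then gives $\dim\mathcal C_k^{(L)}(C)\le \dim(\mathfrak l_k^{(L)}/G_k^{(L)})=m_L-n_k^{(L)}$ directly. I will spell out only the one ingredient behind this, namely $G_k^{(L)}\subset\mathfrak l_k^{(L)}$. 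It holds because the compressed generators are among the words generating the KLA (\Cref{def:krylov-lie-algebra}), so the quotient is well defined and its dimension is $\dim\mathfrak l_k^{(L)}-\dim G_k^{(L)}$.

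Second, for the layer-growth statement, I subtract the assumed inequality $n_k^{(L+1)}\ge n_k^{(L)}+\delta_L$ from the fixed quantity $m_L$. This yields $m_L-n_k^{(L+1)}\le m_L-n_k^{(L)}-\delta_L$ immediately. Since $m_L=m_{L+1}$ by hypothesis and the $(L+1)$-architecture is also dimension-matched, the left-hand side is exactly the budget bound from the first step applied at level $L+1$. Hence $\dim\mathcal C_k^{(L+1)}(C)\le m_L-n_k^{(L)}-\delta_L$ for every compact $C\subset M_{L+1}$. The final "in particular" clause is just the first bound read with $m_L-n_k^{(L)}$ small, so $\mathcal C_k^{(L)}(C)$ is squeezed into a space of that small dimension. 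I will state it as a qualitative reformulation, since "$\ll$" carries no further content.

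The main obstacle is interpretive rather than technical. The proposition compares quotients living in different algebras, $\mathfrak l_k^{(L)}$ and $\mathfrak l_k^{(L+1)}$, so the claim "the budget is reduced" must be read as a comparison of the numerical bounds, not as an inclusion of the sectors $\mathcal C_k$. I will make this explicit. I will also note that nonnegativity of $m_L-n_k^{(L+1)}$, which follows from $G_k^{(L+1)}\subset\mathfrak l_k^{(L+1)}$, forces $\delta_L\le m_L-n_k^{(L)}$. So only finitely many such generator-adding steps are possible while $m_L$ stays fixed, and that is the precise sense in which layered generator growth consumes the commutator budget.
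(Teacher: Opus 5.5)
Your proposal is correct and follows essentially the same route as the paper: apply \Cref{lem:BCH-commutator-sector-quotient} to each architecture to get $\dim\mathcal C_k^{(L)}(C)\le m_L-n_k^{(L)}$, then subtract $n_k^{(L+1)}\ge n_k^{(L)}+\delta_L$ from the fixed value $m_L=m_{L+1}$, with the final clause being a restatement of the first bound. Your extra remarks are correct and make explicit what the paper leaves implicit: first, that $G_k^{(L)}\subset\mathfrak l_k^{(L)}$ makes the quotient well defined; second, that $n_k^{(L+1)}\le \dim\mathfrak l_k^{(L+1)}=m_L$ forces $\delta_L\le m_L-n_k^{(L)}$, so only finitely many such steps can occur.
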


\begin{proof}
Applying Lemma~\ref{lem:BCH-commutator-sector-quotient} to each architecture
$L$ gives
\[
\dim \mathcal C_k^{(L)}(C)\le m_L-n_k^{(L)}.
\]
If $n_k^{(L+1)}\ge n_k^{(L)}+\delta_L$ and $m_{L+1}=m_L$, then
\[
m_L-n_k^{(L+1)}
\le
m_L-n_k^{(L)}-\delta_L.
\]
The final claim is just a restatement of the fact that
$\dim \mathcal C_k^{(L)}(C)\le m_L-n_k^{(L)}$.
\end{proof}

\begin{remark}[Quotient budget and approximation quality]
\label{rem:quotient-budget-approximation}
In the regimes where the dimension-matched depth-$k$ Krylov--Lie algebra
$\mathfrak l_k^{(L)}$ has generator span $G_k^{(L)}$ well below full
dimension, the quotient budget $m_L-n_k^{(L)}$ is large and there is room for
higher-order BCH commutators of the canonical comparison map
$\kappa_\Psi^{(k)}$ to contribute genuinely new directions beyond the
compressed generators. In such cases, increasing $k$ can improve the
Krylov--Lie approximation by filling more of the quotient
$\mathfrak l_k^{(L)}/G_k^{(L)}$.

By contrast, when $n_k^{(L)}$ approaches $m_L$, the quotient dimension
$m_L-n_k^{(L)}$ shrinks and may become very small. From the point of view of
Krylov--Lie approximation, this is disadvantageous: the available commutator
budget is then largely consumed by keeping up with the growing generator
span, leaving little room for spending depth on genuinely new commutator
directions. In such architectures, improving approximation quality by
increasing $k$ is constrained by the fact that higher-commutator directions
cannot occupy more than $m_L-n_k^{(L)}$ dimensions.

When this quotient budget is small and commutator growth is tame, the
seed-based factorial BCH/Krylov--Lie error bound
\[
E_C^{(\Psi)}(k)
\le
A_C^{(\Psi)}\,\frac{B_C^{k+1}}{(k+1)!}
\]
for the canonical comparison map $\kappa_\Psi^{(k)}$ on KLAs satisfying \Cref{def:seed-KLG-matching} (such as the prefix KLA) shows that the remaining depth can be used to make the approximators geometrically faithful at finite $k$, with the quality of approximation governed jointly by the norm scale $B_C$ of the BCH exponent and the dimensional budget $m_L-n_k^{(L)}$ available to higher commutators.
\end{remark}

\section{Concentration of Measure for Krylov--Lie Approximators}

We close this chapter by recording two concentration of measure statements:
a general geometric concentration result on compact represented Krylov--Lie
groups, and a transport principle showing that, for seed-based
observables, concentration properties of the BCH-matched Krylov--Lie proxy (\Cref{def:seed-KLG-matching}) transfer to
the original circuit up to the factorially small approximation error from
Theorem~\ref{thm:seed-BCH-KLG-factorial}.

\begin{theorem}[Geometric concentration on compact Krylov--Lie groups]
\label{thm:KLG-concentration}
Let $K$ be a compact connected represented Krylov--Lie group equipped with a
bi-invariant Riemannian metric and normalized Haar measure $\mu_K$. Then
$\mu_K$ satisfies a logarithmic Sobolev inequality with some constant
$\alpha>0$:
\[
\Ent_{\mu_K}(f^2)
\le
2\alpha
\int_K |\nabla f|^2\,d\mu_K
\qquad\text{for all Lipschitz }f:K\to\mathbb R,
\]
where
\[
\Ent_{\mu_K}(f^2)
:=
\int_K f^2\log(f^2)\,d\mu_K
-
\left(\int_K f^2\,d\mu_K\right)
\log\left(\int_K f^2\,d\mu_K\right)
\]
denotes the entropy of $f^2$ with respect to $\mu_K$. In particular, if
$F:K\to\mathbb R$ is $\lambda$-Lipschitz with respect to the Riemannian
distance on $K$, then for every $\varepsilon>0$,
\begin{equation}
\label{eq:KLG-subgaussian}
\mu_K\left(
\bigl|F-\int_K F\,d\mu_K\bigr|\ge \varepsilon
\right)
\le
2\exp\left(
-\frac{\varepsilon^2}{2\alpha\lambda^2}
\right).
\end{equation}
\end{theorem}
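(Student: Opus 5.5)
The plan is to reduce the statement to two standard facts about compact Riemannian manifolds: the existence of a log-Sobolev inequality on any compact connected Riemannian manifold, and the Herbst argument that converts a log-Sobolev inequality into sub-Gaussian concentration of Lipschitz functions.

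\textbf{Setting up the manifold.} By \Cref{def:krylov-lie-group}, $K$ is a closed subgroup of $U(\mathcal K_\Psi^{(k)})$. It is therefore a compact, connected, smooth embedded Lie group, with no boundary. By \Cref{Ad-invariant-inner-product}, $\mathfrak l_\Psi$ carries an $\operatorname{Ad}(K)$-invariant inner product. Left-translating this inner product produces a bi-invariant Riemannian metric $g$. The Riemannian volume of $g$ is left-invariant, so after normalization it coincides with $\mu_K$ by the uniqueness in \Cref{KLGUnimodular}. Thus $(K,g,\mu_K)$ is a compact connected Riemannian manifold whose normalized volume measure is the Haar measure.

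\textbf{The log-Sobolev inequality.} I would invoke Rothaus's theorem: every compact connected Riemannian manifold with normalized volume satisfies a log-Sobolev inequality. The mechanism is to combine a Sobolev inequality with the spectral gap of the Laplace–Beltrami operator, $\lambda_1>0$, where positivity uses connectedness and compactness.

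A more explicit route uses curvature. For a bi-invariant metric, $\mathrm{Ric}(X,X)=\tfrac14\sum_i\|[X,E_i]\|^2\ge0$.
\begin{itemize}
\item When $K$ is semisimple, this Ricci form is bounded below by a positive constant, and Bakry–Émery gives $\alpha\le 1/\rho$.
\item In the general reductive case (\Cref{ReductiveKLA}), the universal cover of $K$ splits isometrically as the product of the center factor and the semisimple part. The Haar measure pushes forward accordingly, and $K$ is a quotient of a flat torus times the compact semisimple group.
\end{itemize}
The torus factor has a log-Sobolev inequality with constant of order $1/\lambda_1(\mathbb T)$. Tensorization of log-Sobolev inequalities then handles the product, and the inequality descends to the finite quotient by invariance.

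The main obstacle is this decomposition and descent. The simplest fallback is to cite Rothaus (or Saloff-Coste's result for compact manifolds) directly and not track the constant, since the statement only asserts existence of some $\alpha>0$. For Lipschitz rather than smooth $f$, approximate by smooth functions (Rademacher's theorem plus mollification on $K$) and pass to the limit.

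\textbf{Herbst argument.} Let $F$ be $\lambda$-Lipschitz, so $|\nabla F|\le\lambda$ almost everywhere. Apply the log-Sobolev inequality to $f=e^{tF/2}$, which gives $|\nabla f|^2\le \tfrac{t^2\lambda^2}{4}f^2$. Set $H(t)=\int e^{tF}d\mu_K$. The inequality becomes
\[
tH'(t)-H(t)\log H(t)\le \tfrac{\alpha\lambda^2t^2}{2}H(t).
\]
Equivalently, $\frac{d}{dt}\bigl(t^{-1}\log H(t)\bigr)\le \alpha\lambda^2/2$. Integrating from $0$, where the limit equals $\int F\,d\mu_K$, yields
\[
\int e^{t(F-\mathbb E F)}\,d\mu_K\le e^{\alpha\lambda^2t^2/2}.
\]
Markov's inequality with the choice $t=\varepsilon/(\alpha\lambda^2)$ gives the one-sided bound $e^{-\varepsilon^2/(2\alpha\lambda^2)}$. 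Applying the same argument to $-F$ gives the other tail, and summing the two produces the factor $2$ in \eqref{eq:KLG-subgaussian}.
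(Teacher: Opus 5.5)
Your argument is correct and follows the paper's skeleton: a log-Sobolev inequality for $\mu_K$, then the Herbst argument. The difference is in how you justify the log-Sobolev step.

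\textbf{How the two routes differ.} The paper settles that step in one line, by appealing to nonnegative Ricci curvature of the bi-invariant metric and citing general Riemannian theory. You offer two routes instead.
\begin{enumerate}
\item Rothaus's theorem. It uses only compactness, connectedness and the spectral gap, so it needs neither the group structure nor bi-invariance. The price is an unspecified constant.
\item An explicit curvature argument. This is the more careful form of the paper's claim. When $\mathfrak z\neq 0$, the Ricci form $\tfrac14\sum_i\|[X,E_i]\|^2$ vanishes on central directions. Bakry--\'Emery therefore does not apply to $K$ directly, and nonnegative curvature alone gives a log-Sobolev inequality only through compactness-type (diameter or spectral-gap) results.
\end{enumerate}
Your second route closes exactly that gap. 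You write $K=(T\times K_{ss})/F$ with $F$ finite, and use the orthogonality in \Cref{ReductiveKLA} to make the pulled-back metric a Riemannian product. You then apply Bakry--\'Emery on $K_{ss}$, tensorize with the torus, and descend through the finite covering. Both sides of the inequality are unchanged under descent, because Haar pushes forward to Haar and the covering is a local isometry. The result is the explicit constant $\alpha=\max\{\alpha_T,1/\rho\}$.

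You also make explicit two things the paper leaves implicit. First, normalized Riemannian volume coincides with $\mu_K$. Second, you carry out the Herbst computation, and your constants reproduce $2\exp(-\varepsilon^2/(2\alpha\lambda^2))$ exactly.

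\textbf{Two small repairs.}
\begin{enumerate}
\item Building the metric by left-translating an inner product on $\mathfrak l_\Psi$ is unnecessary, since the metric is a hypothesis. It is also not right in general. $K$ is defined as a closure in \Cref{def:krylov-lie-group}, and its Lie algebra can strictly contain $\mathfrak l_\Psi$; an irrational line dense in a $2$-torus is the standard example.
\item For a non-rectangular lattice, the flat torus $T$ is not a product of circles, so a bare appeal to $\lambda_1(T)$ does not supply its log-Sobolev inequality. Get it instead from a linear change of coordinates to the standard torus (the Dirichlet forms are comparable and normalized Haar measures correspond), or directly from Rothaus.
\end{enumerate}
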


\begin{proof}
Since $K$ is a compact connected Lie group, its bi-invariant Riemannian metric
has nonnegative Ricci curvature bounded below, and the associated heat kernel
satisfies a logarithmic Sobolev inequality with some finite constant
$\alpha>0$; see, for example, Naor for the general
Riemannian theory \cite{NaorConcentration}. The Herbst argument applied to the logarithmic Sobolev
inequality yields the sub-Gaussian concentration estimate
\eqref{eq:KLG-subgaussian} for every Lipschitz $F$, as claimed.
\end{proof}

\begin{remark}
The theorem above should be viewed as a geometric baseline: once the reachable
manifold is faithfully approximated by a compact dimension-matched or
dimension-reduced Krylov--Lie group $K$ on a compact seed orbit, Lipschitz
losses on $K$ already concentrate under the Haar measure at a scale governed by
the intrinsic geometry of $K$. The genuinely finite-depth, non-Haar structure
of a variational ansatz then shows up as a deviation from this geometric
baseline, rather than as a complete absence of concentration phenomena, as we shall now show.
\end{remark}

\begin{proposition}[Concentration for the pushforward measure on $K$]
\phantomsection
\label{prop:pushforward-KLG-concentration}
Let $K$ be a compact represented Krylov--Lie group equipped with a
bi-invariant Riemannian metric and normalized Haar measure $\mu_K$. Let
$\nu$ be a probability measure on $K$ absolutely continuous with respect
to $\mu_K$, with density $q=d\nu/d\mu_K$. Assume $\|q\|_\infty := \operatorname{ess\,sup}_{g\in K} q(g) < \infty$.

Let $F:K\to\mathbb R$ be $\lambda$-Lipschitz with respect to the Riemannian
distance on $K$, and let
\[
m_K:=\int_K F\,d\mu_K,
\qquad
m_\nu:=\int_K F\,d\nu.
\]
Set
\[
\Delta := |m_\nu-m_K|.
\]
Then, for every $\varepsilon>0$,
\begin{equation}
\label{eq:pushforward-concentration-shifted}
\nu\bigl(\{|F-m_\nu|\ge \varepsilon\}\bigr)
\le
\min\!\left\{
1,\,
2\|q\|_\infty
\exp\left(
-\frac{(\varepsilon-\Delta)_+^2}{2\alpha\lambda^2}
\right)
\right\},
\end{equation}
where $\alpha>0$ is the logarithmic Sobolev constant from
\Cref{thm:KLG-concentration} and $(x)_+ := \max\{x,0\}$.

In particular, if $\Delta<\varepsilon$, then
\[
\nu\bigl(\{|F-m_\nu|\ge \varepsilon\}\bigr)
\le
2\|q\|_\infty
\exp\left(
-\frac{(\varepsilon-\Delta)^2}{2\alpha\lambda^2}
\right).
\]
\end{proposition}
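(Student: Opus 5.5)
The plan is a change of measure from $\nu$ to $\mu_K$, followed by a recentering of the deviation event from $m_\nu$ to the Haar mean $m_K$. The sub-Gaussian tail \eqref{eq:KLG-subgaussian} of \Cref{thm:KLG-concentration} then applies directly.

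\textbf{Step 1 (absolute continuity with bounded density).} For any Borel set $E\subset K$, since $\nu\ll\mu_K$ with $q=d\nu/d\mu_K$ and $q\le \|q\|_\infty$ holds $\mu_K$-a.e., we have
\[
\nu(E)=\int_E q\,d\mu_K\le \|q\|_\infty\,\mu_K(E).
\]
We will apply this with $E=\{|F-m_\nu|\ge\varepsilon\}$. This set is Borel because $F$ is continuous, being Lipschitz. Also, $m_\nu$ and $m_K$ are finite, because $F$ is bounded on the compact group $K$.

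\textbf{Step 2 (recentering).} By the triangle inequality, $|F-m_K|\ge |F-m_\nu|-\Delta$. Hence, whenever $\Delta<\varepsilon$,
\[
\{|F-m_\nu|\ge\varepsilon\}\subset\{|F-m_K|\ge \varepsilon-\Delta\},
\]
and $\varepsilon-\Delta=(\varepsilon-\Delta)_+>0$. Applying \eqref{eq:KLG-subgaussian} with threshold $\varepsilon-\Delta$ then gives
\[
\mu_K(\{|F-m_K|\ge\varepsilon-\Delta\})\le 2\exp\!\bigl(-(\varepsilon-\Delta)^2/(2\alpha\lambda^2)\bigr).
\]
Combining this with Step 1 yields the displayed bound in the case $\Delta<\varepsilon$.

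\textbf{Step 3 (degenerate case and the minimum with $1$).} If $\Delta\ge\varepsilon$, then $(\varepsilon-\Delta)_+=0$ and the right-hand expression becomes $2\|q\|_\infty\ge 2$. Indeed, $\|q\|_\infty\ge\int q\,d\mu_K=1$ because $\mu_K$ is a probability measure. The bound therefore holds trivially. In every case $\nu(E)\le 1$ because $\nu$ is a probability measure, so we may take the minimum with $1$.

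\textbf{Main obstacle.} The argument is elementary, and the only delicate point is Step 2. Concentration on $K$ is available only around the Haar mean $m_K$, so the shift $\Delta$ between the two means must be absorbed into the threshold, and the case split on the sign of $\varepsilon-\Delta$ has to be handled carefully. One should also note that $\Delta\le \lambda\,\operatorname{diam}(K)$ is finite, although this is not needed for the proof.
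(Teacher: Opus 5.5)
Your proposal is correct and follows the paper's proof essentially step for step: recenter with $|F-m_\nu|\le|F-m_K|+\Delta$, pass from $\nu$ to $\mu_K$ via $\nu(E)\le\|q\|_\infty\,\mu_K(E)$, apply the sub-Gaussian tail of \Cref{thm:KLG-concentration} at threshold $\varepsilon-\Delta$, and use $\nu(E)\le 1$ when $\varepsilon\le\Delta$. Your observation that $\|q\|_\infty\ge 1$ is a small but useful addition, since it makes explicit why the degenerate case agrees with the stated minimum, which the paper leaves implicit.
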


\begin{proof}
Fix $\varepsilon>0$ and write $\Delta:=|m_\nu-m_K|$. By the triangle inequality,
\[
|F-m_\nu|
\le
|F-m_K|+\Delta.
\]
Hence, if $\varepsilon>\Delta$ and $|F-m_\nu|\ge\varepsilon$, then necessarily
\[
|F-m_K|\ge \varepsilon-\Delta.
\]
Therefore, because the above implies that
\[
\{|F-m_\nu|\ge\varepsilon\}
\subset
\{|F-m_K|\ge \varepsilon-\Delta\}
\]
we have the inequality
\[
\nu\bigl(\{|F-m_\nu|\ge\varepsilon\}\bigr)
\le
\nu\bigl(\{|F-m_K|\ge \varepsilon-\Delta\}\bigr).
\]
Using absolute continuity and the density bound,
\[
\nu\bigl(\{|F-m_K|\ge \varepsilon-\Delta\}\bigr)
\le
\|q\|_\infty\,
\mu_K\bigl(\{|F-m_K|\ge \varepsilon-\Delta\}\bigr).
\]
By \Cref{thm:KLG-concentration},
\[
\mu_K\bigl(\{|F-m_K|\ge t\}\bigr)
\le
2\exp\left(-\frac{t^2}{2\alpha\lambda^2}\right)
\qquad\text{for all }t>0.
\]
Applying this with $t=\varepsilon-\Delta$ yields
\[
\nu\bigl(\{|F-m_\nu|\ge\varepsilon\}\bigr)
\le
2\|q\|_\infty
\exp\left(
-\frac{(\varepsilon-\Delta)^2}{2\alpha\lambda^2}
\right)
\qquad (\varepsilon>\Delta).
\]
If $\varepsilon\le\Delta$, we use the trivial bound
\[
\nu\bigl(\{|F-m_\nu|\ge\varepsilon\}\bigr)\le 1.
\]
Combining the two cases gives \eqref{eq:pushforward-concentration-shifted}.
\end{proof}

\begin{remark}
The trivial bound for the case $\Delta \geq \varepsilon$ is not a technical failure of the argument. It marks the regime in which the pushforward mean has drifted too far from the Haar mean for Haar concentration on $K$ to determine concentration of $\nu$ about its own mean without added control on the density.
\end{remark}

We now show that, for observables depending on the seed orbit, concentration properties of the Krylov--Lie surrogate transfer directly to the original circuit up to the seed-based factorial error from \Cref{thm:seed-BCH-KLG-factorial}. This is the concentration-of-measure
analogue of the approximation theorem.
\begin{theorem}[Transport of concentration from Krylov--Lie proxy]
\label{prop:KLG-concentration-transport}
Let $C\subset M$ be a compact subset of the reachable manifold, and fix a
normalized block seed $\Psi\in\Sigma_s\subset\mathcal H$ (see \Cref{rem:SurjectiveParamBlockSeeds}) together with a
choice of initial state $\psi\in\Psi$. Let
\[
\mathcal U: M \to U(d),
\qquad
\iota_{\Psi}^{(k)} \circ \kappa_{\Psi}^{(k)}: M \to U(d)
\]
denote the original circuit and the ambient canonical depth-$k$
Krylov--Lie comparison unitary associated to $\Psi$, as in
\Cref{thm:seed-BCH-KLG-factorial}. Define the seed-based uniform
approximation error
\[
\mathcal E_C^{(\psi)}(k; \, \Psi)
:=
\sup_{x\in C}
\bigl\|\mathcal U(x)\psi-\iota_{\Psi}^{(k)}\bigl( \kappa_{\Psi}^{(k)}(x)\bigr)\psi\bigr\|.
\]
Let $\mu_C:=\mu_M|_C$ be a probability measure on $C$ obtained
by restricting a probability measure $\mu_M$ on $M$, and let
$\nu$ denote the pushforward of $\mu_C$ under $\kappa_\Psi^{(k)}$, transported
to $K_\Psi^{(k)}$ via the represented unitary action (so that observables
composed with $\iota_{\Psi}^{(k)} \, \circ \, \kappa_{\Psi}^{(k)}$ are naturally evaluated
against $\nu$). 

Let $\Phi:\mathcal H\to\mathbb R$ be bounded and Lipschitz with constant
$\mathrm{Lip}_\psi(\Phi)$ with respect to the Hilbert space norm on
$\mathcal H$, and define the seed-based circuit observables
\[
F_\psi(U):=\Phi\bigl(U\psi\bigr).
\]
Then for every $x\in C$ one has
\begin{equation}
\label{eq:seed-observable-pointwise}
\bigl|F_\psi \bigl(\mathcal U(x) \bigr)-F_\psi \bigl(\iota_{\Psi}^{(k)}\bigl( \kappa_{\Psi}^{(k)}(x)\bigr)\bigr)\bigr|
\le
\mathrm{Lip}_\psi(\Phi)\,\mathcal E_C^{(\psi)}(k; \, \Psi).
\end{equation}
Consequently,
\begin{align}
\bigl|\mathbb E_{\mu_C} \bigl [ F_\psi \bigl(\mathcal U(x) \bigr) \bigr] -\mathbb E_{\nu} \bigl [ F_\psi \bigl (\iota_{\Psi}^{(k)}(g)\bigr) \bigr ]\bigr|
&\le
\mathrm{Lip}_\psi(\Phi)\,\mathcal E_C^{(\psi)}(k; \, \Psi),
\\
\bigl|\mathrm{Var}_{\mu_C}\bigl[F_\psi \bigl(\mathcal U(x) \bigr)\bigr]-\mathrm{Var}_{\nu}\bigl[F_\psi \bigl (\iota_{\Psi}^{(k)}(g)\bigr) \bigr]\bigr|
&\le
4\|\Phi\|_\infty\,\mathrm{Lip}_\psi(\Phi)\,\mathcal E_C^{(\psi)}(k; \, \Psi),
\end{align}
where
\[
\|\Phi\|_\infty:=\operatorname{}\sup_{v\in\mathcal H}|\Phi(v)|.
\]

Moreover, if $F_\psi \bigl (\iota_{\Psi}^{(k)}\bigl( \kappa_{\Psi}^{(k)}(x)\bigr)\bigr)$ satisfies a concentration inequality
with respect to $\nu$,
\[
\nu\bigl(\bigl\{\bigl|F_\psi \bigl (\iota_{\Psi}^{(k)}\bigl( \kappa_{\Psi}^{(k)}(x)\bigr)\bigr)-\mathbb E_{\nu} \bigl[F_\psi \bigl (\iota_{\Psi}^{(k)}(g)\bigr) \bigr]\bigr|\ge \varepsilon \bigr\}\bigr)
\le
2\exp\left(-\frac{\varepsilon^2}{2\sigma_k^2}\right)
\quad\text{for all }\varepsilon>0
\]
for some variance proxy $\sigma_k^2>0$, then $F_\psi \bigl ({\mathcal U}(x)\bigr)$ satisfies
\begin{equation}
\label{eq:transport-concentration}
\mu_C\bigl( \bigl \{|F_\psi \bigl ({\mathcal U}(x)\bigr)-\mathbb E_{\mu_C} \bigl[ F_\psi \bigl ({\mathcal U}(x)\bigr)\bigr]|
\ge \varepsilon+2\mathrm{Lip}_\psi(\Phi)\,\mathcal E_C^{(\psi)}(k; \, \Psi) \bigr \}\bigr)
\le
2\exp\left(-\frac{\varepsilon^2}{2\sigma_k^2}\right)
\quad\forall \,\varepsilon>0.
\end{equation}
In particular, using the seed-based factorial error for BCH-matched KLAs (\Cref{def:seed-KLG-matching})
\[
\mathcal E_C^{(\psi)}(k; \, \Psi)
\le
A_C^{(\psi)}\,\frac{B_C^{k+1}}{(k+1)!}
\]
from \Cref{thm:seed-BCH-KLG-factorial}, then
\begin{align}
\bigl|\mathbb E_{\mu_C} \bigl [ F_\psi \bigl(\mathcal U(x) \bigr) \bigr] -\mathbb E_{\nu} \bigl [ F_\psi \bigl (\iota_{\Psi}^{(k)}(g)\bigr) \bigr ]\bigr|
&\le
\mathrm{Lip}_\psi(\Phi)\,A_C^{(\psi)}\,\frac{B_C^{k+1}}{(k+1)!},
\\
\bigl|\mathrm{Var}_{\mu_C}\bigl[F_\psi \bigl(\mathcal U(x) \bigr)\bigr]-\mathrm{Var}_{\nu}\bigl[F_\psi \bigl (\iota_{\Psi}^{(k)}(g)\bigr) \bigr]\bigr|
&\le
4\|\Phi\|_\infty\,\mathrm{Lip}_\psi(\Phi)\,
A_C^{(\psi)}\,\frac{B_C^{k+1}}{(k+1)!},
\end{align}
and the threshold shift in \eqref{eq:transport-concentration} is of order
\[
2\mathrm{Lip}_\psi(\Phi)\,A_C^{(\psi)}\,\frac{B_C^{k+1}}{(k+1)!}.
\]
\end{theorem}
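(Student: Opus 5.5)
The plan is to reduce everything to two random variables on the single probability space $(C,\mu_C)$:
\[
X(x):=F_\psi\bigl(\mathcal U(x)\bigr),\qquad Y(x):=F_\psi\bigl(\iota_\Psi^{(k)}(\kappa_\Psi^{(k)}(x))\bigr).
\]
Since $\nu=(\kappa_\Psi^{(k)})_*\mu_C$, the change-of-variables formula for pushforward measures (as in the proof of item (7) of \Cref{thm:main_klg_approximation}) identifies every $\nu$-quantity for $g\mapsto F_\psi(\iota_\Psi^{(k)}(g))$ with the corresponding $\mu_C$-quantity for $Y$. This covers expectation, variance and the measure of level sets. Concretely, $\mathbb E_\nu[F_\psi(\iota_\Psi^{(k)}(g))]=\mathbb E_{\mu_C}[Y]$, and
\[
\nu\bigl(\{g:|F_\psi(\iota_\Psi^{(k)}(g))-\mathbb E_\nu[\cdot]|\ge\varepsilon\}\bigr)=\mu_C\bigl(\{|Y-\mathbb E_{\mu_C}Y|\ge\varepsilon\}\bigr),
\]
because the preimage of a Borel set under $\kappa_\Psi^{(k)}$ is exactly the corresponding $Y$-level set. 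Measurability is not an issue: $X$ and $Y$ are continuous on $C$, since $\mathcal U$, $\kappa_\Psi^{(k)}$ and $\Phi$ are continuous. Everything then reduces to comparing $X$ and $Y$.

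\textbf{Pointwise bound.} The pointwise estimate \eqref{eq:seed-observable-pointwise} is exactly the first display in the proof of \Cref{prop:seed-KLG-observable}: apply the Lipschitz property of $\Phi$ to the vectors $\mathcal U(x)\psi$ and $\iota_\Psi^{(k)}(\kappa_\Psi^{(k)}(x))\psi$, then take the supremum over $C$. Write $\eta:=\mathrm{Lip}_\psi(\Phi)\,\mathcal E_C^{(\psi)}(k;\Psi)$, so that $|X-Y|\le\eta$ on $C$.

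\textbf{Expectation and variance.} Because $\mu_C$ is a probability measure, integrating $|X-Y|\le\eta$ gives $|\mathbb E X-\mathbb E Y|\le\eta$. For the variance I would use the factorization
\[
\mathrm{Var}X-\mathrm{Var}Y=\mathbb E\bigl[(X-Y)(X+Y)\bigr]-(\mathbb E X-\mathbb E Y)(\mathbb E X+\mathbb E Y).
\]
Since $|X+Y|\le 2\|\Phi\|_\infty$ and $|\mathbb E X+\mathbb E Y|\le 2\|\Phi\|_\infty$, each of the two terms is at most $2\|\Phi\|_\infty\eta$. Their sum gives the constant $4\|\Phi\|_\infty\eta$.

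\textbf{Concentration transfer.} By the triangle inequality,
\[
|X-\mathbb E X|\le|Y-\mathbb E Y|+|X-Y|+|\mathbb E Y-\mathbb E X|\le|Y-\mathbb E Y|+2\eta.
\]
Hence $\{|X-\mathbb E X|\ge\varepsilon+2\eta\}\subset\{|Y-\mathbb E Y|\ge\varepsilon\}$. Taking $\mu_C$-measures and using the level-set identification with $\nu$ from the first paragraph, the assumed sub-Gaussian bound for $\nu$ transfers directly and yields \eqref{eq:transport-concentration}.

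\textbf{Factorial corollaries.} These follow by substituting the bound of \Cref{thm:seed-BCH-KLG-factorial} for $\mathcal E_C^{(\psi)}(k;\Psi)$ into the three estimates. That substitution is legitimate only under its BCH-matching hypothesis, which holds for example for the prefix KLA by \Cref{cor:prefix-KLA-BCH-matched}, and only for $k$ sufficiently large.

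\textbf{Expected obstacle.} I expect the main point needing care to be the pushforward identification rather than any estimate. The statement writes the concentration hypothesis loosely in terms of $\kappa_\Psi^{(k)}(x)$ while measuring it with $\nu$. I would make explicit that $\nu$ is a measure on $K_\Psi^{(k)}$ and that the hypothesis concerns $g\mapsto F_\psi(\iota_\Psi^{(k)}(g))$. I would also state that $\mu_C$ is assumed normalized to a probability measure (otherwise $\mu_M(C)$ factors appear, as in \Cref{prop:seed-KLG-observable}). Once these conventions are fixed, the remaining steps are short inequalities.
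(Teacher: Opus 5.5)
Your proof is correct and follows the paper's argument essentially step for step. The shared steps are the pointwise Lipschitz bound, integration against the probability measure $\mu_C$ with $\nu$-quantities identified via the pushforward, a difference-of-squares estimate for the variance, and the triangle-inequality level-set inclusion for the concentration transfer.

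The only real difference is in the variance step. You factor the uncentered second moments as $\mathbb E[(X-Y)(X+Y)]-(\mathbb E X-\mathbb E Y)(\mathbb E X+\mathbb E Y)$, while the paper factors the centered squares $a^2-b^2$. The paper's displayed pointwise identity there drops the term $-(\mathbb E X-\mathbb E Y)(a+b)$, which vanishes only after integrating over $C$. Your version avoids that slip and gives the same constant $4\|\Phi\|_\infty$. Also, since you prove the expectation bound directly rather than via the Radon--Nikodym density of \Cref{prop:seed-KLG-observable}, your argument does not need $C\subset M_{\mathrm{full}}(\Psi)$.
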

\begin{proof}
The first inequality in the theorem is given directly by \Cref{prop:seed-KLG-observable}. For the variance comparison, write
\[
\mathrm{Var}_{\mu_C}\bigl[F_\psi\bigl(\mathcal U(x)\bigr)\bigr]
=
\int_C
\Bigl(
F_\psi\bigl(\mathcal U(x)\bigr)
-
\mathbb E_{\mu_C}F_\psi\bigl(\mathcal U(x)\bigr)
\Bigr)^2
\,d\mu_C(x),
\]
and
\[
\mathrm{Var}_{\nu}\bigl[F_\psi\bigl(\iota_{\Psi}^{(k)}(g)\bigr)\bigr]
=
\int_{K_\Psi^{(k)}}
\Bigl(
F_\psi\bigl(\iota_{\Psi}^{(k)}(g)\bigr)
-
\mathbb E_{\nu}F_\psi\bigl(\iota_{\Psi}^{(k)}(g)\bigr)
\Bigr)^2
\,d\nu(g).
\]
Using the pushforward relation, the second variance can be written as
\[
\mathrm{Var}_{\nu}\bigl[F_\psi\bigl(\iota_{\Psi}^{(k)}(g)\bigr)\bigr]
=
\int_C
\Bigl(
F_\psi\bigl(\iota_{\Psi}^{(k)}\bigl(\kappa_{\Psi}^{(k)}(x)\bigr)\bigr)
-
\mathbb E_{\nu}F_\psi\bigl(\iota_{\Psi}^{(k)}(g)\bigr)
\Bigr)^2
\,d\mu_C(x).
\]

Using the identity \(a^2-b^2=(a-b)(a+b)\) with
\[
a
=
F_\psi\bigl(\mathcal U(x)\bigr)
-
\mathbb E_{\mu_C}F_\psi\bigl(\mathcal U(x)\bigr),
\quad
b
=
F_\psi\bigl(\iota_{\Psi}^{(k)}\bigl(\kappa_{\Psi}^{(k)}(x)\bigr)\bigr)
-
\mathbb E_{\nu}F_\psi\bigl(\iota_{\Psi}^{(k)}(g)\bigr),
\]
we obtain
\begin{align*}
&\Bigl(
F_\psi\bigl(\mathcal U(x)\bigr)
-
\mathbb E_{\mu_C}F_\psi\bigl(\mathcal U(x)\bigr)
\Bigr)^2
-
\Bigl(
F_\psi\bigl(\iota_{\Psi}^{(k)}\bigl(\kappa_{\Psi}^{(k)}(x)\bigr)\bigr)
-
\mathbb E_{\nu}F_\psi\bigl(\iota_{\Psi}^{(k)}(g)\bigr)
\Bigr)^2=\\
&
\Bigl(\!
F_\psi\bigl(\mathcal U(x)\!\bigr)
\!-\!
F_\psi\bigl(\iota_{\Psi}^{(k)}\bigl(\kappa_{\Psi}^{(k)}(x)\bigr)\!\bigr)
\!\Bigr)
\Bigl(
\! F_\psi\bigl(\mathcal U(x)\!\bigr)
\!+\!
F_\psi\bigl(\iota_{\Psi}^{(k)}\bigl(\kappa_{\Psi}^{(k)}(x)\bigr)\!\bigr)
\!-\!
\mathbb E_{\mu_C}F_\psi\bigl(\mathcal U(x)\!\bigr)
\!-\!
\mathbb E_{\nu}F_\psi\bigl(\iota_{\Psi}^{(k)}(g)\bigr)
\!\Bigr).
\end{align*}
Since \(|\Phi|\le\|\Phi\|_\infty\) pointwise and both expectations are bounded
in absolute value by \(\|\Phi\|_\infty\), we have
\[
\Bigl|
F_\psi\bigl(\mathcal U(x)\bigr)
+
F_\psi\bigl(\iota_{\Psi}^{(k)}\bigl(\kappa_{\Psi}^{(k)}(x)\bigr)\bigr)
-
\mathbb E_{\mu_C}F_\psi\bigl(\mathcal U(x)\bigr)
-
\mathbb E_{\nu}F_\psi\bigl(\iota_{\Psi}^{(k)}(g)\bigr)
\Bigr|
\le
4\|\Phi\|_\infty.
\]
Combining this with \eqref{eq:seed-observable-pointwise} and integrating
over $C$ gives
\begin{align*}
&\bigl|\mathrm{Var}_{\mu_C}\bigl[F_\psi\bigl(\mathcal U(x)\bigr)\bigr]
-
\mathrm{Var}_{\nu}\bigl[F_\psi\bigl(\iota_{\Psi}^{(k)}(g)\bigr)\bigr]\bigr|\\
&\qquad\le
4\|\Phi\|_\infty
\int_C
\Bigl|
F_\psi\bigl(\mathcal U(x)\bigr)
-
F_\psi\bigl(\iota_{\Psi}^{(k)}\bigl(\kappa_{\Psi}^{(k)}(x)\bigr)\bigr)
\Bigr|
\,d\mu_C(x)\\
&\qquad\le
4\|\Phi\|_\infty\,\mathrm{Lip}_\psi(\Phi)\,\mathcal E_C^{(\psi)}(k;\Psi),
\end{align*}
which is the variance bound.

For the concentration transport, note that for all $x\in C$,
\begin{align*}
&\Bigl|
F_\psi\bigl(\mathcal U(x)\bigr)
-
\mathbb E_{\mu_C}F_\psi\bigl(\mathcal U(x)\bigr)
\Bigr|\\
&\qquad\le
\Bigl|
F_\psi\bigl(\iota_{\Psi}^{(k)}\bigl(\kappa_{\Psi}^{(k)}(x)\bigr)\bigr)
-
\mathbb E_{\nu}F_\psi\bigl(\iota_{\Psi}^{(k)}(g)\bigr)
\Bigr|
+
\Bigl|
F_\psi\bigl(\mathcal U(x)\bigr)
-
F_\psi\bigl(\iota_{\Psi}^{(k)}\bigl(\kappa_{\Psi}^{(k)}(x)\bigr)\bigr)
\Bigr|\\
&\qquad\quad \, +
\Bigl|
\mathbb E_{\mu_C}F_\psi\bigl(\mathcal U(x)\bigr)
\: \, - \, \:
\mathbb E_{\nu}F_\psi\bigl(\iota_{\Psi}^{(k)}(g)\bigr)
\Bigr|\\
&\qquad\le
\Bigl|
F_\psi\bigl(\iota_{\Psi}^{(k)}\bigl(\kappa_{\Psi}^{(k)}(x)\bigr)\bigr)
-
\mathbb E_{\nu}F_\psi\bigl(\iota_{\Psi}^{(k)}(g)\bigr)
\Bigr|
+
2\mathrm{Lip}_\psi(\Phi)\,\mathcal E_C^{(\psi)}(k;\Psi),
\end{align*}
using the pointwise and expectation bounds. Hence, if
\[
\Bigl|
F_\psi\bigl(\mathcal U(x)\bigr)
-
\mathbb E_{\mu_C}F_\psi\bigl(\mathcal U(x)\bigr)
\Bigr|
\ge
\varepsilon
+
2\mathrm{Lip}_\psi(\Phi)\,\mathcal E_C^{(\psi)}(k;\Psi),
\]
then necessarily
\[
\Bigl|
F_\psi\bigl(\iota_{\Psi}^{(k)}\bigl(\kappa_{\Psi}^{(k)}(x)\bigr)\bigr)
-
\mathbb E_{\nu}F_\psi\bigl(\iota_{\Psi}^{(k)}(g)\bigr)
\Bigr|
\ge
\varepsilon.
\]
Therefore
\begin{align*}
&\mu_C\Bigl(
\Bigl\{
\bigl|F_\psi\bigl(\mathcal U(x)\bigr)
-\mathbb E_{\mu_C}F_\psi\bigl(\mathcal U(x)\bigr)\bigr|
\ge
\varepsilon
+
2\mathrm{Lip}_\psi(\Phi)\,\mathcal E_C^{(\psi)}(k;\Psi)
\Bigr\}
\Bigr)\\
&\qquad\le
\nu\Bigl(
\Bigl\{
\bigl|F_\psi\bigl(\iota_{\Psi}^{(k)}\bigl(\kappa_{\Psi}^{(k)}(x)\bigr)\bigr)
-\mathbb E_{\nu}F_\psi\bigl(\iota_{\Psi}^{(k)}(g)\bigr)\bigr|
\ge
\varepsilon
\Bigr\}
\Bigr).
\end{align*}
Under the hypothesis that the right-hand side is bounded by
\(2\exp\bigl(-\varepsilon^2/(2\sigma_k^2)\bigr)\) for all \(\varepsilon>0\),
we obtain \eqref{eq:transport-concentration}.

Finally, substituting the seed-based factorial error bound
\[
\mathcal E_C^{(\psi)}(k;\Psi)
\le
A_C^{(\psi)}\,\frac{B_C^{k+1}}{(k+1)!}
\]
from \Cref{thm:seed-BCH-KLG-factorial} into the expectation, variance, and
threshold-shift estimates gives the explicit factorial bounds stated at the
end of the theorem.
\end{proof}
\begin{remark}
\Cref{prop:KLG-concentration-transport} shows that, for
seed-dependent observables, the concentration behavior of the original circuit
on a compact subset $C\subset M$ is controlled by that of the Krylov--Lie
proxy, with a threshold shift of order $B_C^{k+1}/(k+1)!$. In particular, if
the proxy ensemble on $K$ (for example under Haar on $K$ or under a structured
parameter distribution whose pushforward is supported on $K$) does not exhibit
the cost-concentration behavior characteristic of barren plateaus, then neither
does the original circuit, up to the factorially small correction given by the
seed-based BCH/Krylov--Lie error bound (if the KLA is BCH-matched). This provides a clean Lie-theoretic
bridge between the geometric concentration properties of Krylov--Lie groups
and the observable concentration properties of variational quantum circuits on
their reachable manifolds.

Separately, from the depth-scaling point of view, the seed-based error bound for BCH-matched Krylov--Lie algebras
\[
\mathcal E_C^{(\psi)}(k;\Psi)
\le
A_C^{(\psi)}\,\frac{B_C^{k+1}}{(k+1)!}
\]
implies that the original circuit converges to its Krylov--Lie approximator in depth $k$ at a super-exponential rate governed by $B_C^{k+1}/(k+1)!$. In contrast, design-based theories typically yield approximation rates that are at best exponential in depth \cite{ragone_lie_2024}. Asymptotically in $k$, the Krylov--Lie framework hence
provides a strictly stronger convergence guarantee to the surrogate ensemble than the exponential decay of errors available in design-based approaches.
\end{remark}
\chapter{Applications to VQA Landscape Theory}  
\label{chap:applications}

In this chapter we suppress the seed and commutator-depth labels, since the formulas are
structural and hold for any fixed represented Krylov-Lie group. In applications, the group,
representation, induced density, and other pertinent quantities will of course depend on those choices.

\section{Moment Formulas for Krylov-Lie Approximators}\label{sec:moment-formulas}

\subsection{Expectation Value of the Loss under a Weighted Sampling Measure}\label{subsec:weighted-expectation-value-reduced}
We now derive the expectation value of the reduced loss function—that is, the component of the loss function directly seen by the Krylov-Lie algebra. The reduced loss is the function for which the Krylov-Lie theory expresses most of its interpretative power. In practice, it is natural (though not strictly necessary) to choose a prefix KLA and block seed that includes the initial state on which the circuit acts, since in that case, by \Cref{prop:KLG-concentration-transport}, the restriction to the associated Krylov-Lie subspace yields approximation errors that decay factorially in the Krylov grade, while any remaining components of the ambient loss live in directions orthogonal to this subspace.

In addition, by analogy with the relationship between upgrading the DLA theory of Ragone et al. to Anschuetz’ JAWS framework \cite{ragone_lie_2024, anschuetz_2025}, it is reasonable to assume that a fully fledged Krylov–Jordan theory would produce algebraic moment formulas with significant qualitative similarities to those obtained here for the reduced loss function of the Krylov-Lie theory. For this reason, the reduced loss function provides the most informative first approximation to the VQA loss landscape, even when the ultimate object of interest is the ambient loss. One can certainly perform these calculations in an ambient setting, but they become much more unwieldy and difficult to interpret in a clean, algebraic fashion, which reinforces the need for a Jordan-centric theory \cite{anschuetz_2025}. This is amplified by the fact that the compression mechanism of Krylov algebras forces them to exist in a subspace of the ambient space in which our observables and density matrices tend to live. 

Let \(K\) be the compact Krylov-Lie group (\Cref{def:krylov-lie-group}), let \(\mu_K\) denote its normalized Haar probability measure, and let
\[
\pi : K \longrightarrow U(\mathcal K)
\]
be the finite-dimensional unitary representation on the Krylov-Lie subspace \(\mathcal K\) (\Cref{def:krylov-lie-subspace}). Define
\begin{equation} \label{eq:reduced-loss-and-adjoint}
\Ad_g(X) := \pi(g) X \pi(g)^\dagger,
\qquad
\ell(g) := \Tr\bigl[O\Ad_g(\rho)\bigr],
\end{equation}
where $\ell(g)$ denotes the reduced loss function that implicitly works only with the reduced observable $P_{\mathcal K} O P_{\mathcal{K}}$ and the reduced density matrix $P_{\mathcal K} \rho P_{\mathcal{K}}$. Here $P_\mathcal{K}$ is the projector onto $\mathcal K$. By \Cref{ReductiveKLA}, the represented Krylov-Lie algebra is reductive, with orthogonal decomposition
\[
\mathfrak l
=
\mathfrak z
\oplus
\bigoplus_{j=1}^m \mathfrak l_j,
\]
where \(\mathfrak z\) is the center and each \(\mathfrak l_j\) is a simple ideal. Let
\[
\Pi_{\mathfrak z} : \End(\mathcal K) \to \mathfrak z,
\qquad
\Pi_j : \End(\mathcal K) \to \mathfrak l_j
\]
denote the Hilbert-Schmidt orthogonal projections. Write
\begin{equation} \label{eq:KLA-projections}
\rho_{\mathfrak z} := \Pi_{\mathfrak z}(\rho),
\qquad
O_{\mathfrak z} := \Pi_{\mathfrak z}(O),
\qquad
\rho_j := \Pi_j(\rho),
\qquad
O_j := \Pi_j(O).
\end{equation}
Without loss of generality, let \(\nu\) be a probability measure on \(K\) absolutely continuous with respect to Haar measure, and write
\begin{equation} \label{eq:measures-for-calculations}
d\nu(g) = q(g)\,d\mu_K(g),
\qquad
q \in L^1(K,\mu_K),
\qquad
q \ge 0,
\qquad
\int_K q(g)\,d\mu_K(g)=1.
\end{equation}
Next, define the centered density
\begin{equation} \label{eq:centered-density}
f := q-1,
\qquad
\int_K f(g)\,d\mu_K(g)=0.
\end{equation}
For each simple component \(\mathfrak l_j\), define the visible first-moment correction operator
\begin{equation}  \label{eq:visible-first-moment-correction}
\mathcal{V}_j^{(1)}(f)
:=
\int_K d\mu_K(g) f(g)\Ad_g^{(j)}
\in \End(\mathfrak l_j),
\end{equation}
where \(\Ad_g^{(j)}\) denotes the restriction of \(\Ad_g\) to \(\mathfrak l_j\). On the center \(\mathfrak z\), define similarly
\begin{equation}  \label{eq:visible-first-moment-correction-center}
\mathcal V_{\mathfrak z}^{(1)}(f)
:=
\int_K \,d\mu_K(g) f(g)\Ad_g^{(\mathfrak z)}
\in \End(\mathfrak z).
\end{equation}
Since the center is pointwise fixed by the adjoint action, we have
\[
\Ad_g^{(\mathfrak z)} = I_{\mathfrak z}
\qquad
\text{for all } g \in K,
\]
and therefore
\[
\mathcal{V}^{(1)}_{\mathfrak z}(f)
=
\left(\int_K f(g)\,d\mu_K(g)\right) I_{\mathfrak z}
=
0.
\]

\begin{theorem}[Weighted expectation decomposition]
\label{thm:weighted_expectation_decomposition}
Assume that either \(\rho \in \mathfrak l\) or \(O \in \mathfrak l\). Then the expectation value of the reduced loss function (Equation \ref{eq:reduced-loss-and-adjoint}) with respect to \(\nu=q\,\mu_K\) (Equation \ref{eq:measures-for-calculations}) is given by
\begin{equation}
\label{eq:weighted-expectation-decomposition}
\mathbb E_{\nu}[\ell]
=
\Tr \bigl[\rho_{\mathfrak z} O_{\mathfrak z}\bigr]
+
\sum_{j=1}^m
\bigl\langle O_j, \mathcal{V}^{(1)}_j(f)\rho_j \bigr\rangle_{\mathrm{HS}},
\end{equation}
where the subscripts denote projection onto the simple Krylov-Lie algebra components (\ref{eq:KLA-projections}), and the $\mathcal{V}_j^{(1)}(f)$ are the visible first-moment correction operators (\ref{eq:visible-first-moment-correction}). In particular, if \(q \equiv 1\), equivalently \(f=0\) (\ref{eq:centered-density}), then
\begin{equation}
\label{eq:haar-expectation-recovered}
\mathbb E_{\mu_K}[\ell]
=
\Tr \bigl[\rho_{\mathfrak z} O_{\mathfrak z}\bigr].
\end{equation}
If, moreover, the Lie algebra is centerless, then
\[
\mathbb E_{\mu_K}[\ell] = 0.
\]
\end{theorem}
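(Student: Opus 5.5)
We will write $\mathbb E_\nu[\ell]=\int_K \ell\,d\mu_K+\int_K f\,\ell\,d\mu_K$ using $q=1+f$. The first term is the Haar expectation and the second collects all non-Haar effects. Since $\ell(g)=\Tr[O\,\Ad_g(\rho)]=\langle O,\Ad_g(\rho)\rangle_{\mathrm{HS}}$ is linear in whichever of $\rho,O$ lies in $\mathfrak l$, we first treat the case $\rho\in\mathfrak l$ (identifying $\mathfrak l$ with its Hermitian image per \Cref{rem:VQAConventions}). The case $O\in\mathfrak l$ then follows by cyclicity of the trace, since $\Tr[O\Ad_g(\rho)]=\Tr[\Ad_{g^{-1}}(O)\rho]$, combined with unimodularity (\Cref{KLGUnimodular}), which allows the substitution $g\mapsto g^{-1}$ inside $\mu_K$. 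For $f$ we simply redefine $\check f(g)=f(g^{-1})$. Matching the stated formula exactly would require $\mathcal V^{(1)}_j(\check f)^{*}=\mathcal V^{(1)}_j(f)$. This holds because $\Ad_g^{(j)}$ is HS-unitary, so $(\Ad^{(j)}_{g^{-1}})^*=\Ad^{(j)}_g$.

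With $\rho\in\mathfrak l$, we use the orthogonal reductive decomposition from \Cref{ReductiveKLA} to write $\rho=\rho_{\mathfrak z}+\sum_j\rho_j$. This requires checking that the HS projections $\Pi_{\mathfrak z},\Pi_j$ restricted to $\mathfrak l$ reproduce this decomposition. We will argue that the HS inner product restricted to $\mathfrak l$ is $\Ad(K)$-invariant, so the decomposition is HS-orthogonal and the projections agree. Each summand is $\Ad$-invariant: $\Ad_g$ fixes $\mathfrak z$ and preserves each ideal $\mathfrak l_j$. Therefore
\[
\ell(g)=\langle O,\rho_{\mathfrak z}\rangle_{\mathrm{HS}}+\sum_j\langle O,\Ad^{(j)}_g\rho_j\rangle_{\mathrm{HS}}
=\Tr[O_{\mathfrak z}\rho_{\mathfrak z}]+\sum_j\langle O_j,\Ad^{(j)}_g\rho_j\rangle_{\mathrm{HS}}.
\]
The second equality holds because $\Ad^{(j)}_g\rho_j\in\mathfrak l_j$, so only the $\Pi_j$-component of $O$ pairs with it, and likewise on the center. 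Integrating against $q\,d\mu_K$ and exchanging the integral with the finite-dimensional inner product by Fubini (boundedness of $\Ad_g$ and $q\in L^1$), the center term contributes $\Tr[\rho_{\mathfrak z}O_{\mathfrak z}]\int q=\Tr[\rho_{\mathfrak z}O_{\mathfrak z}]$. The $j$-th term becomes $\langle O_j,(\int q\,\Ad^{(j)}_g d\mu_K)\rho_j\rangle$.

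The key step, which I expect to be the main point requiring care, is showing that the Haar average $\int_K \Ad^{(j)}_g\,d\mu_K$ vanishes on $\mathfrak l_j$. By left invariance of $\mu_K$, this average is the orthogonal projector onto the $\Ad(K)$-fixed vectors in $\mathfrak l_j$. A fixed vector $X$ satisfies $[Y,X]=0$ for all $Y\in\mathfrak l$ (differentiate along one-parameter subgroups, using that $K$ is connected and generated by $\exp(\mathfrak l)$). Hence $X\in\mathfrak z\cap\mathfrak l_j=0$, because $\mathfrak l_j$ is simple and thus centerless. With this, $\int q\,\Ad^{(j)}=\int(1+f)\Ad^{(j)}=\mathcal V^{(1)}_j(f)$, which gives \eqref{eq:weighted-expectation-decomposition}.

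The two special cases are then immediate. When $f=0$, every $\mathcal V^{(1)}_j$ vanishes, giving \eqref{eq:haar-expectation-recovered}. When $\mathfrak z=0$, we have $\rho_{\mathfrak z}=O_{\mathfrak z}=0$ and the expectation is zero.
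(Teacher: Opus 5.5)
Your proposal is correct and follows essentially the same route as the paper: write $q=1+f$, reduce $\ell(g)$ through the reductive splitting so that the center contributes the constant $\Tr[\rho_{\mathfrak z}O_{\mathfrak z}]$, and use $\int_K \Ad^{(j)}_g\,d\mu_K=0$ on each simple ideal so that only $\mathcal V^{(1)}_j(f)$ survives. The differences are minor. The paper proves the case $O\in\mathfrak l$ (where $\rho_\perp$ drops out) and calls the other case analogous, whereas you prove $\rho\in\mathfrak l$ and obtain $O\in\mathfrak l$ via $g\mapsto g^{-1}$ and $\mathcal V^{(1)}_j(\check f)^*=\mathcal V^{(1)}_j(f)$. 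You also justify the vanishing Haar first moment directly (fixed vectors commute with $\mathfrak l$, so they lie in $\mathfrak z\cap\mathfrak l_j=0$), where the paper cites Ragone et al.
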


\begin{proof}
We give the proof in the case \(O \in \mathfrak l\); the case \(\rho \in \mathfrak l\) is analogous by symmetry.

Decompose
\[
O = O_{\mathfrak z} + \sum_{j=1}^m O_j,
\qquad
\rho = \rho_{\mathfrak z} + \sum_{j=1}^m \rho_j + \rho_\perp,
\]
where \(\rho_\perp\) is orthogonal to \(\mathfrak l\). Since \(O \in \mathfrak l\), the orthogonal component \(\rho_\perp\) does not contribute to the Hilbert--Schmidt pairing with \(O\). Thus
\[
\ell(g)
=
\Tr \bigl[O_{\mathfrak z}\,\Ad_g(\rho_{\mathfrak z})\bigr]
+
\sum_{j=1}^m
\bigl\langle O_j,\Ad_g^{(j)}(\rho_j)\bigr\rangle_{\mathrm{HS}}.
\]
Because \(\mathfrak z\) is central, the adjoint action fixes it pointwise, so
\[
\Ad_g(\rho_{\mathfrak z}) = \rho_{\mathfrak z}
\qquad
\text{for all } g \in K.
\]
Hence
\[
\ell(g)
=
\Tr\bigl[\rho_{\mathfrak z} O_{\mathfrak z}\bigr]
+
\sum_{j=1}^m
\bigl\langle O_j,\Ad_g^{(j)}(\rho_j)\bigr\rangle_{\mathrm{HS}}.
\]
Now integrate against \(d\nu = q\,d\mu_K = (1+f)\,d\mu_K\). We obtain
\[
\mathbb E_{\nu}[\ell]
=
\Tr \bigl[\rho_{\mathfrak z} O_{\mathfrak z}\bigr]
+
\sum_{j=1}^m
\int_K
\bigl\langle O_j,\Ad_g^{(j)}(\rho_j)\bigr\rangle_{\mathrm{HS}}
\,d\mu_K(g)
+
\sum_{j=1}^m
\int_K
\bigl\langle O_j,\Ad_g^{(j)}(\rho_j)\bigr\rangle_{\mathrm{HS}}
\,f(g)\,d\mu_K(g).
\]
For each simple ideal \(\mathfrak l_j\), the Haar first moment vanishes:
\[
\int_K \Ad_g^{(j)}\,d\mu_K(g)=0
\qquad
\text{on } \mathfrak l_j.
\]
Indeed, the Haar first moment is the orthogonal projection onto the \(K\)-invariant vectors in the adjoint representation, and these invariants are precisely the center, which is trivial on a simple Lie algebra \cite{ragone_lie_2024}. Therefore the middle term vanishes, and the final term becomes
\[
\sum_{j=1}^m
\bigl\langle O_j,\; \mathcal{V}^{(1)}_j(f)\rho_j \bigr\rangle_{\mathrm{HS}}.
\]
This proves \eqref{eq:weighted-expectation-decomposition}. The Haar formula \eqref{eq:haar-expectation-recovered} follows by setting \(f=0\). If the Lie algebra is centerless, then \(\mathfrak z = \{0\}\), and so \(\mathbb E_{\mu_K}[\ell]=0\).
\end{proof}

\begin{corollary}[Expectation gap relative to Haar]
\label{cor:expectation-gap-relative-to-haar}
Under the hypotheses of Theorem~\ref{thm:weighted_expectation_decomposition},
\begin{equation}
\label{eq:expectation-gap-relative-to-haar}
\mathbb E_{\nu}[\ell] - \mathbb E_{\mu_K}[\ell]
=
\sum_{j=1}^m
\bigl\langle O_j,\; \mathcal{V}^{(1)}_j(f)\rho_j \bigr\rangle_{\mathrm{HS}}.
\end{equation}
Thus every deviation of the expectation from its Haar value is carried by the centered density \(f=q-1\) through the visible first-moment correction operators \(\mathcal{V}^{(1)}_j(f)\).
\end{corollary}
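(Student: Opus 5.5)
The plan is to obtain the corollary directly from Theorem~\ref{thm:weighted_expectation_decomposition} by subtracting its two conclusions, with no new analysis required. First, apply \eqref{eq:weighted-expectation-decomposition} to the given weighted measure $\nu=q\,\mu_K$ with centered density $f=q-1$. This expresses $\mathbb E_{\nu}[\ell]$ as the central term $\Tr[\rho_{\mathfrak z}O_{\mathfrak z}]$ plus the sum of the visible first-moment pairings $\langle O_j,\mathcal V^{(1)}_j(f)\rho_j\rangle_{\mathrm{HS}}$. Second, apply the same theorem in the special case $q\equiv 1$, which is \eqref{eq:haar-expectation-recovered}, to get $\mathbb E_{\mu_K}[\ell]=\Tr[\rho_{\mathfrak z}O_{\mathfrak z}]$.

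The central terms in the two expressions are literally the same quantity. The projections $\rho_{\mathfrak z}$ and $O_{\mathfrak z}$ from \eqref{eq:KLA-projections} depend only on $\rho$, $O$ and the fixed reductive splitting of $\mathfrak l$ (\Cref{ReductiveKLA}), and not on the sampling measure. Subtracting the two formulas therefore cancels the central contribution exactly and leaves the sum over simple ideals, which is \eqref{eq:expectation-gap-relative-to-haar}. The closing interpretive sentence follows immediately: the right-hand side is linear in $f$ through the operators $\mathcal V^{(1)}_j(f)$ of \eqref{eq:visible-first-moment-correction}, and these vanish when $f=0$.

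The only point requiring care is that both expectations are finite, so that the subtraction is legitimate. This holds because $\ell$ is a continuous function on the compact group $K$, hence bounded, and $q\in L^1(K,\mu_K)$. With that observation the argument is a direct consequence of the theorem, and I expect no genuine obstacle.
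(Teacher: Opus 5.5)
Your proposal is correct and is the paper's own proof: it subtracts \eqref{eq:haar-expectation-recovered} from \eqref{eq:weighted-expectation-decomposition}, and the central term $\Tr[\rho_{\mathfrak z}O_{\mathfrak z}]$ cancels because it does not depend on the sampling measure. Your extra check that both expectations are finite (a bounded continuous $\ell$ on compact $K$, with $q\in L^1(K,\mu_K)$) is a harmless justification the paper leaves implicit.
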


\begin{proof}
Subtract \eqref{eq:haar-expectation-recovered} from \eqref{eq:weighted-expectation-decomposition}.
\end{proof}

\begin{remark}[Compatibility with the Lie-algebraic Haar formula]
Theorem~\ref{thm:weighted_expectation_decomposition} is the Krylov-Lie weighted first-moment analogue of the usual dynamical Lie group computation for the expectation values of our loss when our distribution is Haar \cite{ragone_lie_2024}. It is written so that compatibility with the Lie-algebraic Haar formula is immediate: the Haar expectation is exactly the unperturbed contribution, and every non-Haar effect enters only through the centered first-moment correction operators \(\mathcal{V}^{(1)}_j(f)\).
\end{remark}

\subsection{A Weighted Variance Formula Compatible with Haar Distributions}\label{subsec:weighted-variance-reduced}

We now derive a variance formula for sampling measures that are absolutely continuous with respect to the Haar measure on the Krylov--Lie group over the reduced loss (\ref{eq:reduced-loss-and-adjoint}). The resulting expression will again be written so as to make the Haar case manifest: when the density is constant, the correction terms vanish identically and one recovers the Lie-algebraic variance formula of Ragone et al. as a special case. As in \Cref{thm:weighted_expectation_decomposition}, these quantities have factorially small error in the Krylov grade when compared to the original moments over the reachable manifold for the corresponding reduced loss (cf. \Cref{rem:klg_approx_discussion} and \Cref{prop:KLG-concentration-transport}). 

Throughout this subsection, we shall employ the same conventions and variable names as in \Cref{subsec:weighted-expectation-value-reduced}. 
We also write
\begin{equation} \label{eq:purities}
\mathcal P_j(A) := \|\Pi_j(A)\|_{\mathrm{HS}}^2
\end{equation}
for the Hilbert--Schmidt purity of \(A\) in the \(j\)-th simple component. Moreover, we define the centered visible second-moment correction operator as
\begin{equation} \label{eq:visible-second-moment-correction}
\mathcal{V}^{(2)}_{jk}(f)
:=
\int_K d\mu_K(g) f(g)\,
\bigl(\Ad_g^{(j)} \otimes \Ad_g^{(k)}\bigr)
\in \End(\mathfrak l_j \otimes \mathfrak l_k).
\end{equation}

\begin{theorem}[Weighted variance decomposition over simple ideals]
\label{thm:weighted_ragone_compatible_variance}
Assume that either \(\rho \in \mathfrak l\) or \(O \in \mathfrak l\). Then the variance of the reduced loss function \(\ell(g)=\Tr[O\Ad_g(\rho)]\) (see \ref{eq:reduced-loss-and-adjoint}) with respect to the measure \(\nu=q\,\mu_K\) (see \ref{eq:measures-for-calculations}) is given exactly by
\begin{equation}
\label{eq:weighted-ragone-compatible-variance}
\Var_{\nu}(\ell)
=
\sum_{j=1}^m
\frac{\mathcal P_j(\rho)\,\mathcal P_j(O)}{\dim(\mathfrak l_j)}
+
\sum_{j,k=1}^m
\bigl\langle
O_j \otimes O_k,\,
\mathcal{V}^{(2)}_{jk}(f)(\rho_j \otimes \rho_k)
\bigr\rangle_{\mathrm{HS}}
-
\left(
\sum_{j=1}^m
\bigl\langle O_j,\; \mathcal{V}^{(1)}_j(f)\rho_j \bigr\rangle_{\mathrm{HS}}
\right)^2,
\end{equation}
where the $\mathcal{P}_j$ are the $\mathfrak l_j$-purities (\ref{eq:purities}), the subscripts denote projections onto the simple components of our Krylov-Lie algebra (\ref{eq:KLA-projections}),  $\mathcal{V}_j^{(1)}(f)$ is the visible first-moment correction operator (\ref{eq:visible-first-moment-correction}), and $\mathcal{V}_{jk}^{(2)}(f)$ is the visible second-moment correction operator (\ref{eq:visible-second-moment-correction}).

In particular, if \(q \equiv 1\), equivalently \(f=0\), then
\begin{equation}
\label{eq:haar-ragone-recovered}
\Var_{\mu_K}(\ell)
=
\sum_{j=1}^m
\frac{\mathcal P_j(\rho)\,\mathcal P_j(O)}{\dim(\mathfrak l_j)}.
\end{equation}
\end{theorem}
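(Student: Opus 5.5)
We will compute $\Var_\nu(\ell)=\mathbb E_\nu[\ell^2]-(\mathbb E_\nu[\ell])^2$ directly, reusing the pointwise decomposition established in the proof of \Cref{thm:weighted_expectation_decomposition}. As there, we treat the case $O\in\mathfrak l$; the case $\rho\in\mathfrak l$ follows by exchanging the roles of $\rho$ and $O$. The first step is to write
\[
\ell(g)=c+\sum_{j=1}^m \langle O_j,\Ad^{(j)}_g(\rho_j)\rangle_{\mathrm{HS}}, \qquad c:=\Tr[\rho_{\mathfrak z}O_{\mathfrak z}],
\]
using that the center is fixed pointwise by the adjoint action and that $\rho_\perp$ pairs trivially with $O$. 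Since $c$ is constant, it cancels in the variance. We may therefore set $\ell_0:=\ell-c$ and note that $\Var_\nu(\ell)=\mathbb E_\nu[\ell_0^2]-(\mathbb E_\nu[\ell_0])^2$. By \Cref{thm:weighted_expectation_decomposition}, $\mathbb E_\nu[\ell_0]=\sum_j\langle O_j,\mathcal V^{(1)}_j(f)\rho_j\rangle_{\mathrm{HS}}$, and this already produces the subtracted square in \eqref{eq:weighted-ragone-compatible-variance}.

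For the second moment, we expand the square bilinearly into a sum over pairs $(j,k)$ and use the tensor identity
\[
\langle O_j,\Ad^{(j)}_g\rho_j\rangle\langle O_k,\Ad^{(k)}_g\rho_k\rangle=\langle O_j\otimes O_k,(\Ad^{(j)}_g\otimes\Ad^{(k)}_g)(\rho_j\otimes\rho_k)\rangle_{\mathrm{HS}}.
\]
We then integrate against $d\nu=(1+f)\,d\mu_K$. The $f$-part is, by definition, the $\mathcal V^{(2)}_{jk}(f)$ term. What remains is the Haar part
\[
\sum_{j,k}\Bigl\langle O_j\otimes O_k,\Bigl(\int_K \Ad^{(j)}_g\otimes\Ad^{(k)}_g\,d\mu_K\Bigr)(\rho_j\otimes\rho_k)\Bigr\rangle,
\]
and this must be shown to reduce to $\sum_j\mathcal P_j(\rho)\mathcal P_j(O)/\dim\mathfrak l_j$.

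\textbf{Main obstacle: the Haar second-moment operator.} The integral $\int_K \Ad^{(j)}_g\otimes\Ad^{(k)}_g\,d\mu_K$ is the orthogonal projector onto the $K$-invariants in $\mathfrak l_j\otimes\mathfrak l_k\cong\Hom(\mathfrak l_k,\mathfrak l_j)$, using the Ad-invariant (HS) inner product from \Cref{Ad-invariant-inner-product}; these invariants are the intertwiners.

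For $j\neq k$ the invariants vanish. Elements of the form $\exp(\mathfrak l_k)$ act trivially on $\mathfrak l_j$, because distinct ideals commute. An intertwiner $T:\mathfrak l_k\to\mathfrak l_j$ therefore has image fixed by $\exp(\mathfrak l_j)$, hence lying in the center of $\mathfrak l_j$, which is zero. Thus all cross terms drop out.

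For $j=k$, $\mathfrak l_j$ is simple of compact type, so its complexification is a simple complex Lie algebra and the adjoint representation is absolutely irreducible. Schur's lemma then forces the commutant to be $\mathbb R\cdot\mathrm{id}$, and the invariant line is spanned by the Casimir-type tensor $\sum_a e_a\otimes e_a$ for an HS-orthonormal basis $\{e_a\}$ of $\mathfrak l_j$. Projecting $\rho_j\otimes\rho_j$ onto this line and pairing with $O_j\otimes O_j$ yields
\[
\frac{\langle \rho_j\otimes\rho_j,\sum_a e_a\otimes e_a\rangle\,\langle O_j\otimes O_j,\sum_a e_a\otimes e_a\rangle}{\|\sum_a e_a\otimes e_a\|^2}
=\frac{\|\rho_j\|^2\|O_j\|^2}{\dim\mathfrak l_j},
\]
which is the purity term. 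This agrees with the Weingarten/Schur computation of Ragone et al.

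One technical point needs care here. Projections of Hermitian operators are taken onto the complexified ideals, so we work with the complexified adjoint representation on $\mathfrak l_j^{\mathbb C}$. Irreducibility over $\mathbb C$ keeps the commutant one-dimensional, and the computation goes through with the HS inner product made sesquilinear.

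Finally, we assemble the three contributions: the Haar purity sum, the $\mathcal V^{(2)}$ correction, and minus the square of the first-moment correction. This gives \eqref{eq:weighted-ragone-compatible-variance}. Setting $f=0$ makes both $\mathcal V^{(1)}_j(f)$ and $\mathcal V^{(2)}_{jk}(f)$ vanish, so \eqref{eq:haar-ragone-recovered} follows immediately.
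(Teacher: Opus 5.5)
Your proposal is correct and follows the paper's proof step for step: you drop the constant central term, reuse the first-moment theorem for the mean, expand the square into the tensor pairing, and split $d\nu=(1+f)\,d\mu_K$ into a Haar part and the $\mathcal V^{(2)}_{jk}(f)$ correction. The only difference is that the paper cites Ragone et al.\ for the Haar second moment, whereas you derive it via Schur's lemma (no nonzero intertwiners for $j\neq k$, and a one-dimensional Casimir invariant for $j=k$), which makes your argument more self-contained.
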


\begin{proof}
We give the proof in the case \(O \in \mathfrak l\); the case \(\rho \in \mathfrak l\) follows by the same argument after exchanging the roles of \(\rho\) and \(O\).

Since \(\mathfrak l = \mathfrak z \, \oplus \, \bigoplus_{j=1}^m \mathfrak l_j\) is a reductive decomposition into pairwise orthogonal \(\Ad\)-invariant ideals, we may write
\[
O = \sum_{j=1}^m O_j,
\qquad
\rho = \rho_{\mathfrak z} + \sum_{j=1}^m \rho_j + \rho_\perp,
\]
where \(\rho_{\mathfrak z}\in \mathfrak z\), each \(\rho_j \in \mathfrak l_j\), and \(\rho_\perp\) is orthogonal to \(\mathfrak l\). Since \(O \in \mathfrak l\), the component \(\rho_\perp\) does not contribute to the Hilbert--Schmidt pairing with \(O\), and therefore
\[
\ell(g)
=
\sum_{j=1}^m \langle O_j,\Ad_g^{(j)}(\rho_j)\rangle_{\mathrm{HS}}
+
\langle O,\rho_{\mathfrak z}\rangle_{\mathrm{HS}}.
\]
The central term is independent of \(g\), hence contributes to the mean but not to the variance. It therefore suffices to compute the fluctuating part
\[
\ell_{\mathrm{fluc}}(g)
:=
\sum_{j=1}^m \langle O_j,\Ad_g^{(j)}(\rho_j)\rangle_{\mathrm{HS}}.
\]

We first compute the mean. Since \(q=1+f\), we have
\[
\mathbb E_{\nu}[\ell_{\mathrm{fluc}}]
=
\int_K \ell_{\mathrm{fluc}}(g)\,d\mu_K(g)
+
\int_K \ell_{\mathrm{fluc}}(g)\,f(g)\,d\mu_K(g).
\]
For each simple ideal \(\mathfrak l_j\), the Haar average of the adjoint representation vanishes, because the only invariant vectors in the adjoint representation would lie in the center, whereas \(\mathfrak l_j\) is simple and centerless. Hence
\[
\int_K \Ad_g^{(j)}\,d\mu_K(g)=0
\qquad
\text{on } \mathfrak l_j.
\]
Therefore
\[
\mathbb E_{\nu}[\ell_{\mathrm{fluc}}]
=
\sum_{j=1}^m
\bigl\langle O_j,\; \mathcal{V}^{(1)}_j(f)\rho_j \bigr\rangle_{\mathrm{HS}}.
\]

We next compute the second moment. Expanding the square gives
\[
\ell_{\mathrm{fluc}}(g)^2
=
\sum_{j,k=1}^m
\bigl\langle O_j,\Ad_g^{(j)}(\rho_j)\bigr\rangle_{\mathrm{HS}}
\bigl\langle O_k,\Ad_g^{(k)}(\rho_k)\bigr\rangle_{\mathrm{HS}}.
\]
Equivalently,
\[
\ell_{\mathrm{fluc}}(g)^2
=
\sum_{j,k=1}^m
\bigl\langle
O_j \otimes O_k,\,
(\Ad_g^{(j)} \otimes \Ad_g^{(k)})(\rho_j \otimes \rho_k)
\bigr\rangle_{\mathrm{HS}}.
\]
Integrating against \(q\,d\mu_K = (1+f)\,d\mu_K\), we obtain
\[
\mathbb E_{\nu}[\ell_{\mathrm{fluc}}^2]
=
\mathbb E_{\mu_K}[\ell_{\mathrm{fluc}}^2]
+
\sum_{j,k=1}^m
\bigl\langle
O_j \otimes O_k,\,
\mathcal{V}^{(2)}_{jk}(f)(\rho_j \otimes \rho_k)
\bigr\rangle_{\mathrm{HS}}.
\]
The Haar contribution will be given by the familiar Ragone expression \cite{ragone_lie_2024}:
\[
\mathbb E_{\mu_K}[\ell_{\mathrm{fluc}}^2]
=
\sum_{j=1}^m
\frac{\mathcal P_j(\rho)\,\mathcal P_j(O)}{\dim(\mathfrak l_j)}.
\]
Combining the preceding identities with the usual formula $\Var_{\nu}(\ell)
=
\mathbb E_{\nu}[\ell_{\mathrm{fluc}}^2]
-
\mathbb E_{\nu}[\ell_{\mathrm{fluc}}]^2$, we obtain
\[
\Var_{\nu}(\ell)
=
\sum_{j=1}^m
\frac{\mathcal P_j(\rho)\,\mathcal P_j(O)}{\dim(\mathfrak l_j)}
+
\sum_{j,k=1}^m
\bigl\langle
O_j \otimes O_k,\,
\mathcal{V}^{(2)}_{jk}(f)(\rho_j \otimes \rho_k)
\bigr\rangle_{\mathrm{HS}}
-
\left(
\sum_{j=1}^m
\bigl\langle O_j,\; \mathcal{V}^{(1)}_j(f)\rho_j \bigr\rangle_{\mathrm{HS}}
\right)^2,
\]
which is \eqref{eq:weighted-ragone-compatible-variance}. If \(q\equiv 1\), then \(f=0\), so \(\mathcal{V}^{(1)}_j(f)=0\) and \(\mathcal{V}^{(2)}_{jk}(f)=0\) for all \(j,k\), and \eqref{eq:haar-ragone-recovered} follows immediately.
\end{proof}

\begin{remark}[Relation to the Lie-algebraic Haar formula]
Equation \eqref{eq:weighted-ragone-compatible-variance} is written so that the Haar contribution appears as the leading term and every deviation from Haar sampling is isolated into the centered visible correction operators \(\mathcal{V}^{(1)}_j(f)\) and \(\mathcal{V}^{(2)}_{jk}(f)\). The operator \(\mathcal{V}^{(1)}_j(f)\) records the first visible non-Haar bias of the sampling measure on the \(j\)-th simple ideal, while \(\mathcal{V}^{(2)}_{jk}(f)\) records the corresponding second-moment bias coupling the \(j\)-th and \(k\)-th simple sectors. In particular, the formula is manifestly compatible with the Lie-algebraic variance formula of Ragone et al.: the latter is recovered by the single substitution \(q\equiv 1\).
\end{remark}

\begin{corollary}[Variance gap relative to Haar]
\label{cor:variance-gap-relative-to-haar}
Under the hypotheses of Theorem~\ref{thm:weighted_ragone_compatible_variance},
\begin{equation}
\label{eq:variance-gap-relative-to-haar}
\Var_{\nu}(\ell)-\Var_{\mu_K}(\ell)
=
\sum_{j,k=1}^m
\bigl\langle
O_j \otimes O_k,\,
\mathcal{V}^{(2)}_{jk}(f)(\rho_j \otimes \rho_k)
\bigr\rangle_{\mathrm{HS}}
-
\left(
\sum_{j=1}^m
\bigl\langle O_j,\; \mathcal{V}^{(1)}_j(f)\rho_j \bigr\rangle_{\mathrm{HS}}
\right)^2.
\end{equation}
In particular, all deviation from the Haar variance is carried by the centered density \(f=q-1\).
\end{corollary}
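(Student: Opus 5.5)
The identity is obtained by subtracting the Haar variance from the weighted variance. Under the hypotheses of Theorem~\ref{thm:weighted_ragone_compatible_variance} (either \(\rho\in\mathfrak l\) or \(O\in\mathfrak l\), and \(\nu=q\,\mu_K\) with \(q\in L^1\), \(q\ge 0\), unit mass), equation \eqref{eq:weighted-ragone-compatible-variance} gives \(\Var_\nu(\ell)\) as three terms: the Haar sector sum \(\sum_j \mathcal P_j(\rho)\mathcal P_j(O)/\dim(\mathfrak l_j)\), the second-moment correction built from \(\mathcal V^{(2)}_{jk}(f)\), and minus the squared first-moment correction built from \(\mathcal V^{(1)}_j(f)\).

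Next, specialize the same theorem to \(q\equiv 1\). This is a legitimate choice of density, since Haar measure is normalized. It gives \eqref{eq:haar-ragone-recovered}: the Haar variance \(\Var_{\mu_K}(\ell)\) equals exactly the first term of the general formula. The point to verify is that both correction operators are linear in \(f\), by their definitions \eqref{eq:visible-first-moment-correction} and \eqref{eq:visible-second-moment-correction}. Hence at \(f=0\) they vanish identically and contribute nothing, not even through the squared term.

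Subtracting \eqref{eq:haar-ragone-recovered} from \eqref{eq:weighted-ragone-compatible-variance} cancels the purity sum and leaves exactly the right-hand side of \eqref{eq:variance-gap-relative-to-haar}.

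The only thing to check is well-definedness. Both variances must be computed for the same reduced loss \(\ell\), with the same projections \(O_j,\rho_j\) and the same decomposition of \(\mathfrak l\). This holds because these depend only on \(K\), \(\rho\) and \(O\), not on \(q\). The integrals defining \(\mathcal V^{(1)}_j(f)\) and \(\mathcal V^{(2)}_{jk}(f)\) converge because \(f\in L^1(K,\mu_K)\) and \(\Ad_g\) is bounded and continuous on the compact group \(K\).

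The final assertion, that all deviation is carried by \(f=q-1\), is then read off the formula. Every remaining term is an integral of \(f\) against a bounded operator-valued function of \(g\), so the gap vanishes whenever \(f=0\). I expect no real obstacle: this is a direct consequence of Theorem~\ref{thm:weighted_ragone_compatible_variance}, just as Corollary~\ref{cor:expectation-gap-relative-to-haar} follows from Theorem~\ref{thm:weighted_expectation_decomposition}.
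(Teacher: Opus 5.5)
Your proposal is correct and takes the same approach as the paper: subtract the Haar specialization \eqref{eq:haar-ragone-recovered} (the case $q\equiv 1$, where both correction operators vanish because they are linear in $f$) from the general weighted formula \eqref{eq:weighted-ragone-compatible-variance}. Your extra checks are harmless additions the paper leaves implicit: the projections $O_j,\rho_j$ are the same for both measures, and the $\mathcal{V}$-integrals converge because $f\in L^1(K,\mu_K)$ and $\Ad_g$ is bounded on the compact group $K$.
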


\begin{proof}
Subtract \eqref{eq:haar-ragone-recovered} from \eqref{eq:weighted-ragone-compatible-variance}.
\end{proof}


\begin{remark}[Main implications]
\label{rem:main-implications-nonhaar}
The preceding expectation and variance formulas have several noteworthy conceptual ramifications.

First, they provide a canonical framework for analyzing non-Haar loss landscapes on compact Krylov-Lie groups. Since the Krylov-Lie group is compact by \Cref{thm:DimensionMatchingJacobian}, it carries a unique normalized Haar probability measure \(\mu_K\). Hence any probability measure \(\nu\) on the group that is absolutely continuous with respect to \(\mu_K\) admits a Radon--Nikodym density
\[
q = \frac{d\nu}{d\mu_K},
\]
which is uniquely determined up to \(\mu_K\)-null sets. Consequently, the centered density
\[
f := q-1
\]
is likewise canonical up to \(\mu_K\)-null sets, and the expectation and variance decompositions depend only on the induced group-level sampling law \(\nu\), not on any auxiliary choice of coordinates or parameterization.

Second, these formulas extend the Haar-analysis of Ragone et al. to a genuinely non-Haar setting. The Haar terms are recovered by the single substitution \(q \equiv 1\), while every non-Haar effect is isolated into explicit first- and second-moment correction operators built from the centered density \(f\). In this sense, the non-Haar theory is not an ad-hoc perturbation of the Haar case, but its natural measure-theoretic extension.

Third, the formulas show that loss-landscape statistics can be studied intrinsically on the compact Krylov--Lie group itself. Once a physical or algorithmic sampling procedure induces a pushforward probability measure \(\nu\) on the group, the analytically relevant object is precisely this induced measure. Different sampling protocols that yield the same pushforward law on the group therefore lead to the same expectation values, variances, and higher moment formulas. This makes the framework coordinate-free at the level of the observable statistics.

Fourth, as the Krylov-Lie algebra is intentionally defined so that its dimension will correspond to the manifold dimension of the VQA (or potentially be even smaller), often compressing to work over a vector space with significantly smaller dimension than the ambient Hilbert space in the shallow-circuit regime, the dimension of the simple sectors of the KLA are far more constrained than those of the DLA. This means that the physically relevant Haar sector variance of any given variational quantum circuit is inversely proportional to effective dimensions dependent on the number of unique parameters in the circuit and the composition of the corresponding KLA. As a consequence, informally speaking, for noiseless shallow circuits whose effective dimension does not grow exponentially with \(N\), Haar-induced barren plateaus are not expected to arise from expressivity alone; rather, under the variance formula, exponential concentration can only occur through sufficiently small generalized purity of the initial state or observable relative to the relevant simple sectors. This is compatible with the polynomially increasing variance behavior (cragged terrains) observed in the separate analysis of shallow QAOA on Max Independent Set found in joint work \cite{EHMQAOAPaper}. More interestingly, inclusion of the centered density terms predicts that certain choices of initial parameter distributions---those that yield a variance gap (\Cref{cor:variance-gap-relative-to-haar}) asymptotically larger than the Haar part as a function of $N$---could at least in principle enhance the variance actually encountered during training by reweighting the visible sectors of the landscape in a manner loosely analogous to importance sampling. We stress that this observation does not contradict the standard claim that no optimizer can efficiently escape a barren plateau once optimization is taking place inside a region of exponential concentration \cite{larocca_barren_2025}. Indeed, in such a region, both gradient information and finite cost differences are exponentially suppressed, so feasible shot budgets do not suffice to distinguish neighboring points in the landscape with the fidelity required for reliable local updates \cite{larocca_barren_2025}. However, this local obstruction does not preclude a global change in the distribution of parameters visited during training. In the present framework, a nonuniform optimizer-induced sampling density may strongly downweight barren-plateau sectors and upweight sectors with larger visible variance, so that the optimization process need not exhibit the same concentration properties as the fixed Haar reference ensemble. We leave the question of proving rigorously whether this strategy can be performed without a priori knowledge of the landscape to future work.

Finally, for fixed finite moments, only finitely many representation-theoretic sectors of the density are visible to the corresponding moment operators. Thus the exact non-Haar corrections are controlled by finite-dimensional data extracted from \(q\). This makes the problem well-posed in principle and structurally amenable to explicit harmonic and Lie-theoretic computation.
\end{remark}
\singlespacing
\section{Deviation from Haar and Variance Bounds on the Krylov-Lie Group}
\phantomsection
\label{sec:haar-deviation-variance}
\onehalfspacing

We shall now use the reduced loss to analyze the effects of the Haar deviation terms on our moments. The analogous ambient formulas too follow from straightforward calculations, but they again lack the insight and algebraic organization of the reduced formulas and so we omit them from our discussion. 

Let \(K\) be a compact represented Krylov--Lie group (\Cref{def:krylov-lie-group}) acting unitarily on a finite-dimensional
Krylov subspace \(\mathcal K\) (\Cref{def:krylov-lie-subspace}) through a unitary representation
\[
\kappa:K\to U(\mathcal K),
\]
and let \(\mu_K\) denote Haar probability measure on \(K\). Without loss of generality, say that
\[
\nu\ll \mu_K,
\qquad
q:=\frac{d\nu}{d\mu_K},
\qquad
f:=q-1.
\]
Then
\[
\int_K f(g)\,d\mu_K(g)=0.
\]
For \(t\in\mathbb N\) and $\eta$ some distribution of unitaries over $K$, define the \(t\)-th moment superoperator
\[
\mathcal M^{(t)}_\eta(X)
:=
\int_K \kappa(g)^{\otimes t}\,X\,(\kappa(g)^{\dagger})^{\otimes t}\,d\eta(g),
\qquad
X\in \End(\mathcal K^{\otimes t}),
\]
and the corresponding moment-error operator
\[
\Delta^{(t)}
:=
\mathcal M^{(t)}_{\nu}-\mathcal M^{(t)}_{\mu_K}.
\]
Equivalently,
\[
\Delta^{(t)}(X)
=
\int_K f(g)\,\kappa(g)^{\otimes t}X(\kappa(g)^{\dagger})^{\otimes t}\,d\mu_K(g).
\]
\begin{lemma}[Vectorized form and Peter-Weyl block decomposition]
\label{lem:vectorized-moment-error}
Let
\[
\Delta^{(t)}
=
\mathcal M^{(t)}_{\nu}-\mathcal M^{(t)}_{\mu_K}
\]
be the \(t\)-th moment-error superoperator, so that
\[
\Delta^{(t)}(X)
=
\int_K f(g)\,\kappa(g)^{\otimes t}X(\kappa(g)^{\dagger})^{\otimes t}\,d\mu_K(g),
\qquad
f:=q-1.
\]
Let
\[
V^{(t)}:=\mathcal K^{\otimes t}\otimes \overline{\mathcal K}^{\otimes t},
\]
and let
\[
\mathrm{vec}:\End(\mathcal K^{\otimes t})\to V^{(t)}
\]
denote the standard vectorization isomorphism. Define the \(t\)-moment representation (where the overline denotes the entrywise complex conjugate)
\[
R^{(t)}(g):=\kappa(g)^{\otimes t}\otimes \overline{\kappa(g)}^{\otimes t}
\in \End(V^{(t)}).
\]
Then the vectorized error operator
\[
\widetilde\Delta^{(t)}
:=
\mathrm{vec}\circ \Delta^{(t)}\circ \mathrm{vec}^{-1}
\]
is given by
\[
\widetilde\Delta^{(t)}
=
\int_K d\mu_K(g)f(g)R^{(t)}(g).
\]
Moreover, if
\[
V^{(t)}
\cong
\bigoplus_{\lambda\in \mathrm{Vis}_t(K)\cup\{\mathbf 1\}}
\mathbb C^{m_\lambda}\otimes V_\lambda
\]
is the decomposition of \(R^{(t)}\) into irreducibles \cite[Thm.~5.2]{Folland1995Harmonic}, then
\[
\widetilde\Delta^{(t)}
\cong
\bigoplus_{\lambda\in \mathrm{Vis}_t(K)\cup\{\mathbf 1\}}
I_{m_\lambda}\otimes \widehat f(\lambda),
\qquad
\widehat f(\lambda):=\int_K f(g)\,\pi_\lambda(g)\,d\mu_K(g),
\]
and the trivial block vanishes:
\[
\widehat f(\mathbf 1)=0.
\]
Consequently,
\[
\|\widetilde\Delta^{(t)}\|_{\op}= \alpha_t
=
\max_{\lambda\in \mathrm{Vis}_t(K)}
\sigma_{\max}\bigl(\widehat f(\lambda)\bigr).
\]
\end{lemma}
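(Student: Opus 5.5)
The proof has three steps: vectorize the moment-error operator, block-diagonalize it with Peter--Weyl, then compute the operator norm blockwise.

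\emph{Vectorization.} I will use the identity $\mathrm{vec}(AXB^{\top}) = (A\otimes B)\,\mathrm{vec}(X)$. With $A=\kappa(g)^{\otimes t}$ and $B^{\top}=(\kappa(g)^\dagger)^{\otimes t}$, we have $B=\overline{\kappa(g)}^{\otimes t}$. For each $g$ this gives
\[
\mathrm{vec}\bigl(\kappa(g)^{\otimes t}X(\kappa(g)^\dagger)^{\otimes t}\bigr)=R^{(t)}(g)\,\mathrm{vec}(X).
\]
The integrand lives in a finite-dimensional space, $g\mapsto R^{(t)}(g)$ is continuous, and $f\in L^1(K,\mu_K)$. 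Hence the Bochner integral commutes with the linear isomorphism $\mathrm{vec}$, and so $\widetilde\Delta^{(t)}=\int_K f(g)R^{(t)}(g)\,d\mu_K(g)$. As a side check, $R^{(t)}(g)$ is unitary because $\kappa(g)$ is, so $R^{(t)}$ is a genuine finite-dimensional unitary representation of the compact group $K$.

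\emph{Block decomposition.} By complete reducibility and Peter--Weyl \cite[Thm.~5.2]{Folland1995Harmonic}, there is a unitary intertwiner $W$ with
\[
W R^{(t)}(g)W^{-1}=\bigoplus_\lambda I_{m_\lambda}\otimes\pi_\lambda(g),
\]
where $\lambda$ runs over the irreducibles occurring in $R^{(t)}$. This index set is the trivial representation together with $\mathrm{Vis}_t(K)$, which I read as the nontrivial irreducibles visible at order $t$. Since $W$ does not depend on $g$, it passes through the integral, giving the blocks $I_{m_\lambda}\otimes\widehat f(\lambda)$. On the trivial block $\pi_{\mathbf 1}\equiv 1$, so $\widehat f(\mathbf 1)=\int_K f\,d\mu_K=0$ by the centering of $f$.

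\emph{Operator norm.} Conjugation by the unitary $W$ preserves the operator norm, which rests on choosing an orthonormal decomposition (equivalently, an $R^{(t)}$-invariant inner product). A direct sum has operator norm equal to the maximum of the blockwise norms, and $\|I_m\otimes A\|_{\op}=\|A\|_{\op}=\sigma_{\max}(A)$. The trivial block contributes $0$, so the maximum reduces to one over $\mathrm{Vis}_t(K)$. If no nontrivial irreducible occurs, the norm is $0$, which is consistent with that maximum taken over an empty set.

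\emph{Main obstacle.} The steps above are routine. The delicate part is the bookkeeping: checking that the vectorization convention actually produces $\overline{\kappa(g)}$ rather than $\kappa(g)^\dagger$ in the second tensor factor, and making sure the decomposition is orthogonal, so that the operator norm is genuinely preserved rather than only preserved up to similarity.
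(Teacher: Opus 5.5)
Your proposal is correct and follows the paper's proof essentially step for step: the same vectorization identity (your $\mathrm{vec}(AXB^{\top})=(A\otimes B)\,\mathrm{vec}(X)$ with $B=\overline{\kappa(g)}^{\otimes t}$ is the paper's $\mathrm{vec}(AXB^{\dagger})=(A\otimes\overline{B})\,\mathrm{vec}(X)$ written differently), then blockwise integration against the irreducible decomposition of $R^{(t)}$ with $\widehat f(\mathbf 1)=\int_K f\,d\mu_K=0$, then the max-over-blocks norm computation. Your extra care about Bochner integrability of $fR^{(t)}$ and about the intertwiner being unitary is harmless; the unitarity is automatic because $R^{(t)}$ is already unitary for the standard (Hilbert--Schmidt) inner product on $V^{(t)}$, which is the inner product defining the operator norm, so you should take the decomposition orthogonal with respect to that inner product rather than some other invariant one.
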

\begin{proof}
Fix \(X\in \End(\mathcal K^{\otimes t})\). We use the standard identity
\[
\mathrm{vec}(AXB^\dagger)
=
(A\otimes \overline B)\,\mathrm{vec}(X),
\]
valid for all operators \(A,B\) on \(\mathcal K^{\otimes t}\), where the overline denotes the entrywise complex conjugate. Applying this with
\[
A=B=\kappa(g)^{\otimes t}
\]
gives
\[
\mathrm{vec}\bigl(\kappa(g)^{\otimes t}X(\kappa(g)^{\dagger})^{\otimes t}\bigr)
=
\bigl(\kappa(g)^{\otimes t}\otimes \overline{\kappa(g)}^{\otimes t}\bigr)\mathrm{vec}(X)
=
R^{(t)}(g)\,\mathrm{vec}(X).
\]
Therefore,
\[
\mathrm{vec}\bigl(\Delta^{(t)}(X)\bigr)
=
\int_K d\mu_K(g)
f(g)\,
\mathrm{vec}\bigl(\kappa(g)^{\otimes t}X(\kappa(g)^{\dagger})^{\otimes t}\bigr)
\]
\[
=
\int_K d\mu_K(g)
f(g)
R^{(t)}(g)\,\mathrm{vec}(X).
\]
Since \(V^{(t)}\) is finite-dimensional, the vector \(\mathrm{vec}(X)\) may be pulled outside the integral, yielding
\[
\mathrm{vec}\bigl(\Delta^{(t)}(X)\bigr)
=
\left(
\int_K d\mu_K(g) f(g) R^{(t)}(g)
\right)\mathrm{vec}(X).
\]
As this holds for every \(X\), we conclude that
\[
\widetilde\Delta^{(t)}
=
\int_K d\mu_K(g)f(g)R^{(t)}(g).
\]
Now decompose the unitary representation \(R^{(t)}\) into irreducibles \cite[Thm.~5.2]{Folland1995Harmonic}:
\[
V^{(t)}
\cong
\bigoplus_{\lambda}
\mathbb C^{m_\lambda}\otimes V_\lambda,
\qquad
R^{(t)}(g)
\cong
\bigoplus_{\lambda}
I_{m_\lambda}\otimes \pi_\lambda(g).
\]
Substituting this into the preceding integral and integrating blockwise gives
\[
\widetilde\Delta^{(t)}
\cong
\bigoplus_{\lambda}
I_{m_\lambda}\otimes
\left(
\int_K f(g)\,\pi_\lambda(g)\,d\mu_K(g)
\right)
=
\bigoplus_{\lambda}
I_{m_\lambda}\otimes \widehat f(\lambda).
\]
For the trivial representation \(\mathbf 1\), one has
\[
\widehat f(\mathbf 1)
=
\int_K f(g)\,d\mu_K(g)
=
\int_K (q(g)-1)\,d\mu_K(g)
=
1-1=0.
\]
Hence only the nontrivial visible sectors contribute.

Finally, the operator norm of a block diagonal operator is the maximum of the operator norms of its blocks, and tensoring with an identity does not change singular values. Therefore
\[
\|\widetilde\Delta^{(t)}\|_{\op}
=
\max_{\lambda\in \mathrm{Vis}_t(K)}
\|I_{m_\lambda}\otimes \widehat f(\lambda)\|_{\op}
=
\max_{\lambda\in \mathrm{Vis}_t(K)}
\|\widehat f(\lambda)\|_{\op}
=
\max_{\lambda\in \mathrm{Vis}_t(K)}
\sigma_{\max}\bigl(\widehat f(\lambda)\bigr).
\]
\end{proof}

Thus the moment-error superoperator is nothing but the noncommutative Fourier transform of the
centered density \(f=q-1\), evaluated on the \(t\)-moment representation. The quantity
\(\|\widetilde\Delta^{(t)}\|_{\op}\) therefore measures the largest non-Haar Fourier block of
\(f\) that is visible to \(t\)-copy observables.

For a density matrix \(\rho\) and Hermitian observable \(O\), define
\[
\ell_g(\rho,O):=\Tr\bigl[\kappa(g)\rho\kappa(g)^\dagger O\bigr],
\qquad
m^{(t)}_\eta(\rho,O):=\int_K \ell_g(\rho,O)^t\,d\eta(g).
\]
Then because the moment superoperator is Hermitian \cite{ragone_lie_2024},
\[
m^{(t)}_\eta(\rho,O)
=
\Tr\bigl[\rho^{\otimes t}\,\mathcal M^{(t)}_{\eta}(O^{\otimes t})\bigr],
\]
so the scalar \(t\)-moment non-Haar for the distribution $\eta$ over $K$ is
\[
\varepsilon^{(t)}_{\eta}(\rho,O)
:=
m^{(t)}_\eta(\rho,O)-m^{(t)}_{\mu_K}(\rho,O)
=
\Tr \bigl[\rho^{\otimes t}\,\Delta^{(t)}(O^{\otimes t})\bigr].
\]

\begin{proposition}
For every \(t\ge 1\),
\[
|\varepsilon^{(t)}_{\eta}(\rho,O)|
\le
\alpha_t\,\|O\|_\infty^t.
\]
Moreover,
\[
|\varepsilon^{(t)}_{\eta}(\rho,O)|
\le
\|q-1\|_{L^1(\mu_K)}\,\|O\|_\infty^t.
\]
\end{proposition}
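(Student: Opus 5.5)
The plan is to prove the second inequality first, since it is elementary. We then reduce the first inequality to a norm estimate on the superoperator $\Delta^{(t)}$ and use \Cref{lem:vectorized-moment-error} to identify that norm with $\alpha_t$.

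\textbf{Second inequality.} We write the scalar defect directly as a weighted integral,
\[
\varepsilon^{(t)}_{\eta}(\rho,O)=\int_K f(g)\,\ell_g(\rho,O)^t\,d\mu_K(g),\qquad f=q-1.
\]
This follows from $m^{(t)}_\eta-m^{(t)}_{\mu_K}$ and $d\nu=(1+f)\,d\mu_K$ (here $\eta=\nu$). Because $\rho$ is a density matrix and $\kappa(g)$ is unitary, $\kappa(g)\rho\kappa(g)^\dagger$ is again a density matrix. Hölder's inequality for Schatten norms then gives $|\ell_g(\rho,O)|\le\|\kappa(g)\rho\kappa(g)^\dagger\|_1\|O\|_\infty=\|O\|_\infty$ for every $g$. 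Pulling absolute values inside the integral yields $|\varepsilon^{(t)}_\eta|\le\|O\|_\infty^t\int_K|f|\,d\mu_K=\|q-1\|_{L^1(\mu_K)}\|O\|_\infty^t$.

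\textbf{First inequality, reduction.} We start from the trace representation $\varepsilon^{(t)}_\eta=\Tr[\rho^{\otimes t}\Delta^{(t)}(O^{\otimes t})]$. Since $\|\rho^{\otimes t}\|_1=1$, Hölder gives
\[
|\varepsilon^{(t)}_\eta|\le\|\Delta^{(t)}(O^{\otimes t})\|_\infty .
\]
It therefore suffices to show $\|\Delta^{(t)}(X)\|_\infty\le\alpha_t\|X\|_\infty$ for $X=O^{\otimes t}$, and then use $\|O^{\otimes t}\|_\infty=\|O\|_\infty^t$.

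\textbf{First inequality, norm estimate.} To get this estimate we pass through \Cref{lem:vectorized-moment-error}, which writes $\widetilde\Delta^{(t)}=\bigoplus_\lambda I_{m_\lambda}\otimes\widehat f(\lambda)$ with $\widehat f(\mathbf 1)=0$, so that only nontrivial visible Fourier blocks act. Two facts help here.

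\emph{Invariant component.} $\Delta^{(t)}$ annihilates the $K$-invariant part of $X$, so we may replace $X$ by $X-\mathcal M^{(t)}_{\mu_K}(X)$. We then bound $\Delta^{(t)}(X)$ blockwise by $\sigma_{\max}(\widehat f(\lambda))\le\alpha_t$.

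\emph{Transfer to the operator norm.} The blockwise bound naturally controls $\Delta^{(t)}$ in the Hilbert--Schmidt-induced norm. To transfer it to the operator norm we use the structure of $\Delta^{(t)}$ as a difference of two unital, trace-preserving, Hermiticity-preserving averages. Such maps are contractive for both $\|\cdot\|_\infty$ and $\|\cdot\|_2$. The idea is to interpolate between a trace-norm bound and an operator-norm bound for $\Delta^{(t)}$, each controlled by the same Fourier blocks $\widehat f(\lambda)$.

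\textbf{Main obstacle.} This last transfer is the step I expect to be hardest. The quantity $\alpha_t=\|\widetilde\Delta^{(t)}\|_{\op}$ is the induced norm on $V^{(t)}$ with its Hilbert--Schmidt structure. The induced $\infty\to\infty$ norm of $\Delta^{(t)}$ can in general exceed it by dimension-dependent factors.

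My first attempt will be to exploit the covariance of $\Delta^{(t)}$ under conjugation by $\kappa(g)^{\otimes t}$ together with the isotypic block form, aiming to show that on each block the $\infty\to\infty$ norm coincides with $\sigma_{\max}(\widehat f(\lambda))$.

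If that fails, the fallback is to read $\alpha_t$ as the induced operator-norm-to-operator-norm size of $\Delta^{(t)}$, consistent with \Cref{def:t-design}, where the moment-error norm is "a chosen operator norm". Under that reading the first inequality follows immediately from the Hölder reduction above.
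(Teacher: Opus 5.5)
Your proof of the second inequality is correct. It is the paper's argument with slightly different bookkeeping: the paper bounds $\|\Delta^{(t)}(X)\|_\infty\le\|q-1\|_{L^1(\mu_K)}\|X\|_\infty$ and pairs with $\|\rho^{\otimes t}\|_1=1$, while you bound $|\ell_g(\rho,O)|\le\|O\|_\infty$ pointwise. Your H\"older reduction of the first inequality is also exactly the paper's.

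\textbf{The gap is the last step, and you have located it precisely.} The paper closes it by writing $\|\Delta^{(t)}(O^{\otimes t})\|_\infty\le\|\widetilde\Delta^{(t)}\|_{\op}\|O^{\otimes t}\|_\infty$, and later asserts $\alpha_t=\|\widetilde\Delta^{(t)}\|_{\op}=\|\Delta^{(t)}\|_{\infty\to\infty}$ in the monotonicity proof. In other words, it silently adopts your fallback. But $\|\widetilde\Delta^{(t)}\|_{\op}$ is the Hilbert--Schmidt-induced norm $\|\Delta^{(t)}\|_{2\to2}$, and it does not dominate $\|\Delta^{(t)}\|_{\infty\to\infty}$. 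Each of your proposed routes therefore fails:
\begin{itemize}
\item Your first attempt cannot succeed.
\item Contractivity of the two averages only gives $\|\Delta^{(t)}\|_{\infty\to\infty}\le\|q-1\|_{L^1(\mu_K)}$, which is the second inequality again.
\item The interpolation idea runs the wrong way: Riesz--Thorin bounds the $2\to2$ norm by the $1\to1$ and $\infty\to\infty$ norms, never conversely.
\end{itemize}

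\textbf{Counterexample.} Let $K=\mathbb T$ act on $\mathcal K=\mathbb C^n$ by $\kappa(\theta)=\mathrm{diag}(1,e^{i\theta},\dots,e^{i(n-1)\theta})$. This is the prefix KLG of the single generator $\mathrm{diag}(0,1,\dots,n-1)$ with a seed whose coordinates are all nonzero. Take $t=1$ and $f(\theta)=\epsilon\sum_{m=1}^{n-1}2\sin(m\theta)$ with $\epsilon=1/(2n)$, so that $q=1+f\ge0$.
\begin{itemize}
\item Then $\Delta^{(1)}(X)_{jk}=i\epsilon\,\mathrm{sgn}(j-k)X_{jk}$, so $\alpha_1=\epsilon$.
\item Also $\Delta^{(1)}=i\epsilon(T_>-T_<)$, where $T_>,T_<$ are the strict lower and upper triangular truncations. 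Since $T_>-T_<=2T_>+\mathcal D-\mathrm{id}$ with $\mathcal D$ the diagonal projection, and triangular truncation has operator-norm growth of order $\log n$, one gets $\|\Delta^{(1)}\|_{\infty\to\infty}\ge c\,\epsilon\log n$ for large $n$.
\item Hence some Hermitian $O$ with $\|O\|_\infty=1$ has $\|\Delta^{(1)}(O)\|_\infty\ge\tfrac c2\epsilon\log n$, and $\Delta^{(1)}(O)$ is Hermitian.
\item Let $\rho$ be the projector onto an eigenvector of its largest-modulus eigenvalue. Since $\Delta^{(1)\dagger}=-\Delta^{(1)}$ here, this gives $|\varepsilon^{(1)}_\nu(\rho,O)|=\|\Delta^{(1)}(O)\|_\infty\gg\alpha_1\|O\|_\infty$.
\end{itemize}
Each isotypic block (the band $j-k=p$) is indeed scaled by exactly $\widehat f(p)$. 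The failure is that the isotypic decomposition is orthogonal only for the Hilbert--Schmidt structure, so blockwise bounds do not assemble into an operator-norm bound.

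\textbf{Consequence.} The first inequality is false for $\alpha_t$ as defined in \Cref{lem:vectorized-moment-error}. Under your fallback it becomes a tautology about a different constant, which no longer equals $\max_\lambda\sigma_{\max}(\widehat f(\lambda))$. The $\alpha$-version of \Cref{cor:VarianceIneq} inherits the same problem.

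\textbf{What can be salvaged.} The Peter--Weyl constant does control the Hilbert--Schmidt pairing. Use the identity $\varepsilon^{(t)}_\nu(\rho,O)=\Tr[\Delta^{(t)}(\rho^{\otimes t})\,O^{\otimes t}]$. The form $\Tr[\rho^{\otimes t}\Delta^{(t)}(O^{\otimes t})]$ you borrowed requires $\nu$ to be inversion-symmetric. That difference is harmless here, because the Hilbert--Schmidt adjoint of $\Delta^{(t)}$ has the same $2\to2$ norm and the same $L^1$ bound. Cauchy--Schwarz then gives
\[
|\varepsilon^{(t)}_\nu(\rho,O)|\le\alpha_t\,\|\rho\|_2^{\,t}\,\|O\|_2^{\,t}\le\alpha_t\,(\dim\mathcal K)^{t/2}\,\|O\|_\infty^{\,t}.
\]
This is the correct replacement for the first inequality, while your $L^1$ bound remains the dimension-free statement.
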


\begin{proof}
Since \(\rho\) is a density matrix, \(\|\rho^{\otimes t}\|_1=1\). Hence
\[
|\varepsilon^{(t)}_{\eta}(\rho,O)|
\le
\|\Delta^{(t)}(O^{\otimes t})\|_\infty
\le
\|\widetilde{\Delta}^{(t)}\|_{\op}\,\|O^{\otimes t}\|_\infty
=
\alpha_t\,\|O\|_\infty^t.
\]
The \(L^1\)-bound follows from
\[
\|\Delta^{(t)}(X)\|_\infty
\le
\int_K |f(g)|\,\|\kappa(g)^{\otimes t}X\kappa(g)^{\dagger\otimes t}\|_\infty\,d\mu_K(g)
=
\|q-1\|_{L^1(\mu_K)}\,\|X\|_\infty.
\]
\end{proof}

The constants \(\alpha_t\) are monotone in \(t\), which yields a useful uniform estimate for all
moments up to a prescribed order.

\begin{proposition}[Monotonicity of the moment-error constants]
For every \(t\ge 1\),
\[
\alpha_t\le \alpha_{t+1}.
\]
Consequently, for every \(1\le t\le T\),
\[
|\varepsilon^{(t)}_{\eta}(\rho,O)|
\le
\alpha_T\,\|O\|_\infty^t.
\]
\end{proposition}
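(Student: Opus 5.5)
The plan is to show that the vectorized moment-error operator at order \(t\) embeds isometrically, up to a common scaling, into the one at order \(t+1\). This works because the identity operator on \(\mathcal K\) is fixed by conjugation. By \Cref{lem:vectorized-moment-error}, \(\alpha_t=\|\widetilde\Delta^{(t)}\|_{\op}\). Since \(\mathrm{vec}\) is an isometry from Hilbert--Schmidt space onto \(V^{(t)}\), this is the norm of \(\Delta^{(t)}\) as a superoperator on \(\End(\mathcal K^{\otimes t})\) with the Hilbert--Schmidt norm:
\[
\alpha_t=\sup_{X\neq 0}\frac{\|\Delta^{(t)}(X)\|_{\mathrm{HS}}}{\|X\|_{\mathrm{HS}}}.
\]

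\textbf{Step 1: the key identity.} For any \(X\in\End(\mathcal K^{\otimes t})\), I use \(\kappa(g)I_{\mathcal K}\kappa(g)^\dagger=I_{\mathcal K}\) to write
\[
\kappa(g)^{\otimes (t+1)}(X\otimes I_{\mathcal K})(\kappa(g)^\dagger)^{\otimes(t+1)}
=\bigl(\kappa(g)^{\otimes t}X(\kappa(g)^\dagger)^{\otimes t}\bigr)\otimes I_{\mathcal K}.
\]
Integrating against \(f\,d\mu_K\) then gives
\[
\Delta^{(t+1)}(X\otimes I_{\mathcal K})=\Delta^{(t)}(X)\otimes I_{\mathcal K}.
\]

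\textbf{Step 2: compare norms.} Set \(r=\dim\mathcal K\). Since \(\|A\otimes I_{\mathcal K}\|_{\mathrm{HS}}=\sqrt r\,\|A\|_{\mathrm{HS}}\), for every \(X\neq0\) we get
\[
\frac{\|\Delta^{(t)}(X)\|_{\mathrm{HS}}}{\|X\|_{\mathrm{HS}}}
=\frac{\|\Delta^{(t+1)}(X\otimes I)\|_{\mathrm{HS}}}{\|X\otimes I\|_{\mathrm{HS}}}\le\alpha_{t+1}.
\]
Taking the supremum over \(X\) yields \(\alpha_t\le\alpha_{t+1}\).

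\textbf{Alternative route via Peter--Weyl.} The same conclusion follows from the block decomposition. Up to reordering of tensor factors, \(R^{(t+1)}\cong R^{(t)}\otimes R^{(1)}\). The factor \(R^{(1)}\) contains the trivial representation, spanned by \(\mathrm{vec}(I)\). Hence every irreducible \(\lambda\) occurring in \(R^{(t)}\) also occurs in \(R^{(t+1)}\), so \(\mathrm{Vis}_t(K)\subset\mathrm{Vis}_{t+1}(K)\). A maximum of \(\sigma_{\max}(\widehat f(\lambda))\) over a larger index set cannot decrease.

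\textbf{Consequence.} Induction on Step 2 gives \(\alpha_t\le\alpha_T\) for all \(1\le t\le T\). Combining this with the preceding proposition, \(|\varepsilon^{(t)}_\eta(\rho,O)|\le\alpha_t\|O\|_\infty^t\), yields the uniform bound \(|\varepsilon^{(t)}_\eta(\rho,O)|\le\alpha_T\|O\|_\infty^t\).

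\textbf{Expected obstacle.} The only delicate point is bookkeeping: which norm \(\|\cdot\|_{\op}\) on \(\widetilde\Delta^{(t)}\) corresponds to on superoperators. I take the Euclidean norm on \(V^{(t)}\), which under \(\mathrm{vec}\) is the Hilbert--Schmidt norm on \(\End(\mathcal K^{\otimes t})\). The \(\sqrt r\) factors then cancel exactly in Step 2.
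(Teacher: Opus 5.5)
Your proof of $\alpha_t\le\alpha_{t+1}$ is correct and uses the same mechanism as the paper: the embedding $X\mapsto X\otimes I$ intertwines $\Delta^{(t)}$ with $\Delta^{(t+1)}$. The difference is the norm. The paper works in $\|\cdot\|_\infty$, where $X\mapsto X\otimes I$ is an isometry. It obtains $\|\Delta^{(t)}\|_{\infty\to\infty}\le\|\Delta^{(t+1)}\|_{\infty\to\infty}$ and then asserts $\alpha_t=\|\widetilde\Delta^{(t)}\|_{\op}=\|\Delta^{(t)}\|_{\infty\to\infty}$. You use the Hilbert--Schmidt norm instead, which is the norm that $\mathrm{vec}$ actually carries to the Euclidean norm on $V^{(t)}$. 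So your $\alpha_t$ is exactly the one of \Cref{lem:vectorized-moment-error}. Your Peter--Weyl variant ($\mathrm{Vis}_t(K)\subset\mathrm{Vis}_{t+1}(K)$ via $\mathrm{vec}(I)$) confirms this directly from the block formula. On this point your bookkeeping is sounder than the paper's, because $\|\Delta^{(t)}\|_{\mathrm{HS}\to\mathrm{HS}}$ and $\|\Delta^{(t)}\|_{\infty\to\infty}$ are different quantities in general.

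The gap is in your ``Consequence.'' The preceding proposition's bound $|\varepsilon^{(t)}_\eta(\rho,O)|\le\alpha_t\|O\|_\infty^t$ rests on $\|\Delta^{(t)}(X)\|_\infty\le\alpha_t\|X\|_\infty$. That requires $\alpha_t$ to dominate the $\infty\to\infty$ norm, and the Hilbert--Schmidt quantity you have just pinned down does not.

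Here is a counterexample. Let $K=U(1)$ act on $\mathcal K=\mathbb C^3$ by $\kappa(\theta)=\operatorname{diag}(1,e^{i\theta},e^{2i\theta})$. This arises, e.g., as the prefix Krylov--Lie group of the single generator $\operatorname{diag}(0,1,2)$ with seed $(1,1,1)/\sqrt3$ and $k\ge 2$. Take $q=(1-c)+c\,\phi_\epsilon$ with $0<c\le1$ and $\phi_\epsilon\ge0$ a symmetric approximate identity.
\begin{itemize}
\item $\Delta^{(1)}$ kills the diagonal of $X$ and multiplies the $(a,b)$ entry, $a\neq b$, by $c\,\widehat{\phi_\epsilon}(a-b)$. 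Hence $\alpha_1\le c$, while $\Delta^{(1)}(X)\to c\,(X-\operatorname{diag}X)$ as $\epsilon\to0$.
\item Set $\phi=(1,1,1)/\sqrt3$, $\rho=|\phi\rangle\langle\phi|$ and $O=2|\phi\rangle\langle\phi|-I$, so $\|O\|_\infty=1$. Then $\varepsilon^{(1)}_\nu(\rho,O)\to\tfrac43 c>\alpha_1\|O\|_\infty$.
\end{itemize}
So for the $\alpha_T$ your Step 2 controls, the uniform bound does not follow, and it already fails at $t=T=1$. The same example shows that the paper's identification $\|\widetilde\Delta^{(1)}\|_{\op}=\|\Delta^{(1)}\|_{\infty\to\infty}$ is false.

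To close the argument you have two options:
\begin{itemize}
\item Run your Step 1 in operator norm to get monotonicity of $\beta_t:=\|\Delta^{(t)}\|_{\infty\to\infty}$. Then conclude $|\varepsilon^{(t)}_\eta|\le\beta_t\|O\|_\infty^t\le\beta_T\|O\|_\infty^t$. This is what the paper's argument genuinely proves.
\item Keep your $\alpha_t$ and use Cauchy--Schwarz in Hilbert--Schmidt norm: $|\varepsilon^{(t)}_\eta|\le\|\rho\|_{\mathrm{HS}}^t\,\alpha_t\,\|O\|_{\mathrm{HS}}^t\le r^{t/2}\alpha_T\|O\|_\infty^t$, accepting the dimension factor.
\end{itemize}
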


\begin{proof}
Define
\[
J_t:\End(\mathcal K^{\otimes t})\to \End(\mathcal K^{\otimes (t+1)}),
\qquad
J_t(X):=X\otimes I.
\]
Then \(J_t\) intertwines the \(t\)- and \((t+1)\)-moment actions:
\[
\mathcal M^{(t+1)}_\eta\circ J_t
=
J_t\circ \mathcal M^{(t)}_{\eta}
\]
for every probability measure \(\eta\) on \(K\), because by the mixed-product property of the tensor product
\[
(\kappa(g)^{\otimes t} \otimes \kappa(g))(X\otimes I)({(\kappa(g)^{\dagger})}^{\otimes t} \otimes\kappa(g)^{\dagger} )
=
\bigl(\kappa(g)^{\otimes t}X{(\kappa(g)^{\dagger})}^{\otimes t}\bigr)\otimes I.
\]
Subtracting the Haar and ansatz moment operators gives
\[
\Delta^{(t+1)}\circ J_t
=
J_t\circ \Delta^{(t)}.
\]

Now \(J_t\) is an isometric embedding for the operator norm, since
\[
\|X\otimes I\|_\infty=\|X\|_\infty\|I\|_\infty=\|X\|_\infty.
\]
Therefore, for every \(X\neq 0\),
\[
\frac{\|\Delta^{(t)}(X)\|_\infty}{\|X\|_\infty}
=
\frac{\|J_t(\Delta^{(t)}(X))\|_\infty}{\|J_t(X)\|_\infty}
=
\frac{\|\Delta^{(t+1)}(J_t(X))\|_\infty}{\|J_t(X)\|_\infty}
\le
\|\Delta^{(t+1)}\|_{\infty\to\infty}.
\]
Taking the supremum over \(X\neq 0\) yields
\[
\|\Delta^{(t)}\|_{\infty\to\infty}\le \|\Delta^{(t+1)}\|_{\infty\to\infty}.
\]
Since \(\alpha_t=\|\widetilde{\Delta}^{(t)}\|_{\op}=\|\Delta^{(t)}\|_{\infty\to\infty}\), we obtain
\[
\alpha_t\le \alpha_{t+1}.
\]
The final estimate follows immediately.
\end{proof}

Since
\[
\Var_\eta(\ell_{\rho,O})
=
m^{(2)}_\eta(\rho,O)-\bigl(m^{(1)}_\eta(\rho,O)\bigr)^2,
\]
we obtain the following comparison with the Haar variance.

\begin{corollary} \label{cor:VarianceIneq}
One has
\[
\bigl|\Var_{\nu}(\ell_g(\rho,O))-\Var_{\mu_K}(\ell_g(\rho,O))\bigr|
\le
\bigl(\alpha_2+2\alpha_1\bigr)\|O\|_\infty^2
\le
3\alpha_2\,\|O\|_\infty^2.
\]
In addition,
\[
\bigl|\Var_{\nu}(\ell_g(\rho,O))-\Var_{\mu_K}(\ell_g(\rho,O))\bigr|
\le
3\,\|q-1\|_{L^1(\mu_K)}\,\|O\|_\infty^2.
\]
\end{corollary}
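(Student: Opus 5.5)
We will prove Corollary~\ref{cor:VarianceIneq} by writing both variances as second moment minus squared first moment and estimating the differences with the two preceding propositions. Abbreviate $m_\eta^{(t)}:=m^{(t)}_\eta(\rho,O)$ and $\varepsilon^{(t)}:=\varepsilon^{(t)}_\nu(\rho,O)$. Then
\[
\Var_\nu-\Var_{\mu_K}
=
\varepsilon^{(2)}
-\bigl((m^{(1)}_\nu)^2-(m^{(1)}_{\mu_K})^2\bigr)
=
\varepsilon^{(2)}-\varepsilon^{(1)}\bigl(m^{(1)}_\nu+m^{(1)}_{\mu_K}\bigr).
\]
This difference-of-squares factorization is the only step that needs any care.

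To bound the terms, we apply the moment proposition with $\eta=\nu$, which gives $|\varepsilon^{(2)}|\le\alpha_2\|O\|_\infty^2$ and $|\varepsilon^{(1)}|\le\alpha_1\|O\|_\infty$. Since $\rho$ is a density matrix and $\kappa(g)$ is unitary, $|\ell_g(\rho,O)|\le\|O\|_\infty$ pointwise. Because $\nu$ and $\mu_K$ are probability measures, each first moment is therefore bounded by $\|O\|_\infty$, so $|m^{(1)}_\nu+m^{(1)}_{\mu_K}|\le 2\|O\|_\infty$. The triangle inequality then yields the first bound, $(\alpha_2+2\alpha_1)\|O\|_\infty^2$. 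By the monotonicity proposition $\alpha_1\le\alpha_2$, so this is at most $3\alpha_2\|O\|_\infty^2$.

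For the $L^1$ version we repeat the same decomposition, this time using the second estimate of the moment proposition. It gives $|\varepsilon^{(t)}|\le\|q-1\|_{L^1(\mu_K)}\|O\|_\infty^t$ for $t=1,2$, so the total bound is $(1+2)\|q-1\|_{L^1(\mu_K)}\|O\|_\infty^2$.

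We do not expect a real obstacle. The one point to check is that the loss here is the reduced loss, in which $\rho$ is compressed to the Krylov subspace. If $P_{\mathcal K}\rho P_{\mathcal K}$ is not a normalized density matrix, it is still positive with trace at most $1$, so $\|\rho\|_1\le 1$ and both the pointwise bound and the moment bounds continue to hold.
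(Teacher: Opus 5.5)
Your proposal is correct and follows the paper's own proof essentially step for step. It uses the same factorization $\Var_\nu-\Var_{\mu_K}=\varepsilon^{(2)}-\varepsilon^{(1)}\bigl(m^{(1)}_\nu+m^{(1)}_{\mu_K}\bigr)$, the pointwise bound $|\ell_g(\rho,O)|\le\|O\|_\infty$, the two moment-error estimates, and the monotonicity $\alpha_1\le\alpha_2$. Your added remark that the compressed state $P_{\mathcal K}\rho P_{\mathcal K}$ still has trace norm at most one is a point the paper leaves implicit, and it does not change the argument.
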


\begin{proof}
We have
\[
\Var_{\nu}(\ell_{\rho,O})-\Var_{\mu_K}(\ell_{\rho,O})
=
\varepsilon^{(2)}_{\nu}(\rho,O)
-
\bigl(m^{(1)}_\nu(\rho,O)-m^{(1)}_{\mu_K}(\rho,O)\bigr)
\bigl(m^{(1)}_\nu(\rho,O)+m^{(1)}_{\mu_K}(\rho,O)\bigr).
\]
Also, because $\rho$ is a density matrix, by the von Neumann-H\"older inequalities for the trace,
\[
|\ell_g(\rho,O)|\le \|O\|_\infty,
\qquad
|m^{(1)}_\nu(\rho,O)|\le \|O\|_\infty,
\qquad
|m^{(1)}_{\mu_K}(\rho,O)|\le \|O\|_\infty.
\]
Hence
\[
\bigl|\Var_{\nu}(\ell_g(\rho,O))-\Var_{\mu_K}(\ell_g(\rho,O))\bigr|
\le
|\varepsilon^{(2)}_{\nu}(\rho,O)|+2\|O\|_\infty |\varepsilon^{(1)}_{\nu}(\rho,O)|.
\]
Applying the previous propositions gives
\[
\bigl|\Var_{\nu}(\ell_g(\rho,O))-\Var_{\mu_K}(\ell_g(\rho,O))\bigr|
\le
(\alpha_2+2\alpha_1)\|O\|_\infty^2
\le
3\alpha_2\|O\|_\infty^2,
\]
and likewise for the \(L^1\)-bound.
\end{proof}

\subsection{Comparison with Ragone et al.}

This inequality plays the same role as the approximate-design correction in Ragone et al.: the Haar
variance is computed exactly first, and one then bounds the difference between the actual variance
and the Haar variance by controlling lower-order moment errors. In the present setting, the
correction is governed exactly by the visible Peter-Weyl blocks of the centered density \(q-1\) of the ansatz measure expressed relative to Haar on the represented
Krylov-Lie group,
through the constants
\[
\alpha_t
=
\max_{\lambda\in \mathrm{Vis}_t(K)}
\sigma_{\max}\bigl(\widehat f(\lambda)\bigr),
\]
or more coarsely by \(\|q-1\|_{L^1(\mu_K)}\). To reemphasize, however, there is of course a factorially small error of these calculations when compared to moments over the original circuits (see \Cref{prop:KLG-concentration-transport}); by ``exact", we mean that non-Haar terms are directly computed by our theory and the omission of their data is not a part of our approximation scheme. 

\singlespacing
\section{Variance Inequalities, Non-Haar Asymptotics, and the Failure of Lemma~6}
\phantomsection
\label{sec:variance-inequalities-non-haar}
\onehalfspacing

At first sight, the error inequalities derived in \Cref{cor:VarianceIneq} suggest a possibility that may seem to run against the standard intuition in the barren-plateau literature. Indeed, our bounds are naturally expressed in terms of the visible Fourier content of the sampling error, and therefore they leave open the possibility that certain choices of parameter distribution and dynamical Lie algebra may systematically retain nontrivial visible components, even in the large-depth limit, rather than washing out to the Haar moment operator. Put differently, the structure of our estimates already suggests that there may exist ans\"atze whose asymptotic sampling measure is not driven to Haar on the represented group, because some visible sector is not mixed at all.

This appears, at least superficially, to conflict with a common heuristic assumption, made particularly explicit in Ragone et al.'s Theorem~2, that sufficiently increasing the depth should force the circuit ensemble toward an approximate $G$ \(t\)-design. In their notation, if \(\mathcal E_L\) is the ensemble generated by an \(L\)-layer circuit and
\[
\mathcal A_{\mathcal E_L}^{(t)}
=
\mathcal M_{\mathcal E_L}^{(t)}-\mathcal M_G^{(t)},
\]
then Theorem~2 asserts that \(\mathcal E_L\) becomes an \(\varepsilon\)-approximate $G$ \(t\)-design once
\[
L
\ge
\frac{\log(1/\varepsilon)}
{\log\!\bigl(1/\|\mathcal A_{\mathcal E_1}^{(t)}\|_\infty\bigr)}.
\]
Thus the philosophy is that depth alone should eventually force convergence to the Haar \(t\)-moment operator.

However, this conclusion is not valid in full generality. The issue is with the strict-contraction input used to make useful the formal algebraic identity
\[
\mathcal A_{\mathcal E_L}^{(t)}
=
\bigl(\mathcal A_{\mathcal E_1}^{(t)}\bigr)^L.
\]
More precisely, the proof of Theorem~2 relies on Lemma~6 of Ragone et al., which claims that every eigenvalue of the single-layer moment operator on the orthogonal complement \(\mathfrak p\) of the \(G\)-invariant sector has modulus strictly less than \(1\). Once that statement fails, one no longer obtains
\(\|\mathcal A_{\mathcal E_1}^{(t)}\|_\infty<1\), and the claimed universal exponential convergence to Haar breaks down.

The cycle graph QAOA on Max Cut example provides a pedagogical counterexample to precisely this step. In that setting, one finds a nontrivial visible sector outside the \(G\)-invariants on which the one-layer averaging operator may have eigenvalue \(1\) under certain choices of measure on the ensemble. Consequently, the associated component of the \(t\)-moment error is not damped by increasing depth, so the ensemble does not converge to the Haar \(t\)-moment operator on the full represented group. This shows that Lemma~6 of Ragone et al.---and therefore their Theorem~2---cannot hold without additional hypotheses excluding such unmixed visible sectors. The rigorous details of this counterexample are available in \Cref{appendix:cycle-graph-qaoa}.

Conceptually, the underlying obstruction is very simple: the averaging measure may annihilate many visible modes while leaving others untouched. Even if the sampling measure has positive density almost everywhere with respect to Haar, convergence to Haar still fails whenever there exists a nontrivial visible representation sector on which the moment operator is not strictly contractive. Since Haar convergence requires decay of every nontrivial Fourier mode, the survival of a single such mode is sufficient to obstruct convergence. In the especially transparent abelian situation, this happens through characters. If \(\chi\) is a one-dimensional unitary character of a visible block and \(\nu\) is the sampling measure, then the corresponding averaged eigenvalue is
\[
\lambda_\nu=\int \chi(g)\,d\nu(g),
\qquad
|\lambda_\nu|\le 1.
\]
The equality case is rigid: \(\lambda_\nu=1\) holds if and only if \(\chi(g)=1\) for \(\nu\)-almost every \(g\). Hence there is no requirement that \(\nu\) be supported on a discrete set. It is enough that the support of \(\nu\) lie inside the kernel of the visible character. 

This is important for interpreting the counterexample correctly. The phenomenon is not an artifact of finitely supported parameter distributions (as is seen). One may equally well choose measures with continuous support on a nontrivial interval and still obtain eigenvalue \(1\), provided that the relevant visible character restricts to \(1\) throughout that interval. More generally, if \(\pi\) is a unitary representation and
\[
\left(\int \pi(g)\,d\nu(g)\right)v=v,
\]
then equality in the norm-convexity argument forces \(\pi(g)v=v\) for \(\nu\)-almost every \(g\). Thus the real obstruction is not discreteness but support inside a stabilizer, or equivalently inside the kernel of a visible representation sector. This of particular relevance to semisimple sectors, which could still in principle fail to contract when the sampling measure is supported in the stabilizer of a nontrivial visible subspace, despite the fact that such sectors only admit trivial characters. To be clear, this would not mean that the vector $v$ is a $G$-invariant, but that it has a nontrivial stabilizer in $G$ from which $\nu$ samples almost all of its elements. 

From the perspective of the present theory, this possibility is not surprising at all. Our operator-level error formulas isolate exactly the visible Fourier blocks present in
\[
\Delta^{(t)}
=
\mathcal M_{\nu}^{(t)}-\mathcal M_{\mu_K}^{(t)},
\]
so they immediately suggest that asymptotic non-Haar behavior can occur whenever the sampling rule fails to suppress some visible block. In other words, the error inequalities do not merely permit such counterexamples; they almost advertise them. A visible character, or more generally a visible irreducible block, that remains fixed under the sampling measure is precisely the sort of mechanism our framework is designed to detect. More generally, the same representation-theoretic mechanism suggests analogous obstructions for singular sampling laws, although that regime lies outside the Radon–Nikodym formalism developed in \Cref{sec:moment-formulas}.

This also clarifies the sense in which the present framework is sharper than design-only criteria. A design-centric inequality for a particular loss might reveal that some scalar quantity fails to concentrate as predicted, but it does not directly identify the representation-theoretic mechanism behind this phenomenon and may in fact neglect convergence behavior of higher-order moments. This is because a design-centric inequality necessarily forgets the representation-theoretic structure inherent to $\nu$ expressed as a measure relative to $\mu_K$. By contrast, the moment-error superoperator with the appropriate Krylov-Lie reductions exhibits the offending sector explicitly. In the cycle-graph QAOA counterexample, the failure of approximate-design convergence is therefore seen not as an accidental pathology, but as the natural consequence of an undamped visible block, made transparent by the Peter-Weyl form of our error bounds.

The significance of this obstruction  cannot be understated. Without depth-dependent Haar convergence, much of contemporary VQA theory cannot be unconditionally applied to the overwhelming majority of realistic circuits (nor can said theory be properly understood as working within universal asymptotic regimes), especially Ragone et al. and Anschuetz \cite{anschuetz_2025, ragone_lie_2024}, who require $\varepsilon$-approximate $t$-design convergence to an asymptotic structure for their formulas to be applicable. In contrast, the Krylov-Lie construction explicitly allows one to find groups that inherently approximate the initial manifold and its parameter distribution in a canonical fashion, and then relate the difference via the Radon-Nikodym derivative, thereby overcoming these limitations.

\chapter{An Observable Kawada--It\^o Theorem}
\label{subsec:observable-kawada-ito}

As it turns out, there is a classical result from the ergodic theory of random
walks on compact groups that allows us to recover the claimed convergence of
Ragone et al.\ under the right conditions: the Kawada--It\^o theorem
\cite{KawadaIto1940}, in the sharpened form due to Stromberg
\cite{Stromberg1960} (see also \cite{Applebaum2014} for a modern textbook
treatment, and \cite{NoteKawadaIto2022} for the Ces\`aro-averaged variant).  Some care
is required, however, in identifying the group on which the ergodic hypotheses must be imposed.  The $t$-th moment operators of an ensemble only
probe the matrix coefficients of the \emph{balanced} tensor-power
representations $g \mapsto g^{\otimes t}\otimes \overline{g}^{\otimes t}$;
irreducible representations of the dynamical Lie group that are nontrivial on
its subgroup of global phases occur in no balanced power and are therefore
invisible to moments of every order.  Consequently, no support condition on
the dynamical Lie group itself can be \emph{necessary} for moment
convergence, and the Peter--Weyl algebra of \emph{all} matrix coefficients is
strictly larger than what $t$-copy experiments can see.  The correct
formulation lives on a canonical quotient, the \emph{observable group}, on
which the visible coefficients are uniformly dense and the Kawada--It\^o
conditions become necessary \emph{and} sufficient.  We develop this
formulation in detail in the present chapter.

Throughout this chapter, $G \subset U(d)$ denotes a compact subgroup
(automatically a closed Lie subgroup, by Cartan's closed-subgroup theorem)
and
\[
\rho : G \longrightarrow U(\mathcal H), \qquad \dim_{\mathbb C}\mathcal H =
d_\rho < \infty,
\]
a continuous unitary representation. In our applications, $G$ is the compact
dynamical Lie group generated by the DLA $\mathfrak g \subset
\mathfrak u(d)$, $\mathcal H$ is the $N$-qubit
Hilbert space, and $\rho$ is the defining representation.  We write $\overline{\mathcal H}$ for the
conjugate Hilbert space and $\overline{A}$ for the entrywise complex
conjugate of an operator or vector with respect to this basis, consistent
with the conventions of \Cref{sec:haar-deviation-variance}.

For each integer $t \ge 0$ define the \emph{balanced $t$-th tensor-power
representation}
\begin{equation}
\label{eq:balanced-rep}
\rho^{(t)}(g) \;:=\; \rho(g)^{\otimes t} \otimes
\overline{\rho(g)}^{\otimes t}
\;\in\; U\!\big(\mathcal H_t\big),
\qquad
\mathcal H_t := \mathcal H^{\otimes t} \otimes
\overline{\mathcal H}^{\otimes t},
\end{equation}
with the convention that $\rho^{(0)}$ is the trivial representation on
$\mathcal H_0 = \mathbb C$.  Under the vectorization isomorphism
$\operatorname{vec} : \operatorname{End}(\mathcal H^{\otimes t}) \to
\mathcal H_t$ of \Cref{sec:haar-deviation-variance}, the representation $\rho^{(t)}$ is precisely
the matrix of the $t$-fold conjugation superoperator
$X \mapsto \rho(g)^{\otimes t} X (\rho(g)^{\dagger})^{\otimes t}$, i.e.\ the
analogue for $G$ of the representation $R^{(t)}$ used there for the
Krylov--Lie group.

All measures below are Borel probability measures; since $G$ is a compact
metrizable group, every Borel probability measure on $G$ is automatically
Radon \cite[Ch.~7]{Folland1999RealAnalysis}, and its support $\operatorname{supp}\mu$
(the smallest closed set of full measure) is well defined and nonempty.  For
probability measures $\mu,\nu$ on $G$, the convolution $\mu * \nu$ is the
pushforward of $\mu \otimes \nu$ under the multiplication map, and we write
$\mu^{*L}$ for the $L$-fold convolution power.  Given a closed subset
$S \subset G$, we write $[S]$ for the smallest closed subgroup of $G$
containing $S$.  Finally, $\mu_G$ denotes the normalized Haar probability
measure of $G$.

Since $\dim \mathcal H_t = d_\rho^{2t} < \infty$, all norms on
our space are equivalent, with constants depending only on
$t$ and $d_\rho$; in particular, for fixed $t$, convergence of moment
operators in the operator norm $\|\cdot\|_\infty$, in any Schatten norm, or
in the induced Schatten $p\to p$ norms of the associated superoperators are
all equivalent notions (for $p = 2$ the vectorization
$\operatorname{vec}$ is a unitary between Hilbert--Schmidt space and
$\mathcal H^{\otimes t}\otimes\overline{\mathcal H}^{\otimes t}$, so the
identification is even isometric).  We therefore state convergence in
$\|\cdot\|_\infty$ and use the other norms freely.

\begin{definition}[Observable moment algebra]
\label{def:observable-algebra}
For $t \ge 0$ let
\[
\mathcal F_t \;:=\;
\Big\{\, g \mapsto \big\langle V, \rho^{(t)}(g)\, W \big\rangle
\; \, \Big \vert \,\; V, W \in \mathcal H_t \,\Big\}
\;\subset\; C(G)
\]
be the space of matrix coefficients of $\rho^{(t)}$, and set
\[
\mathcal F := \bigcup_{t = 0}^\infty \mathcal F_t,
\qquad
\mathcal A := \operatorname{span}_{\mathbb C} \mathcal F .
\]
We call $\mathcal A$ the \emph{observable moment algebra} of the pair
$(G,\rho)$. This title is motivated in \Cref{appendix:observable-quotient-proofs}, specifically \Cref{lem:algebra}.
\end{definition}

Two comments on this definition are in order.  First, since every
subrepresentation $\pi \subset \rho^{(t)}$ embeds isometrically into
$\mathcal H_t$, the matrix coefficients of subrepresentations of
$\rho^{(t)}$ already belong to $\mathcal F_t$ (cf. \Cref{appendix:observable-quotient-proofs} for greater detail); conversely, decomposing
$\rho^{(t)}$ into irreducibles exhibits every element of $\mathcal F_t$ as a
finite sum of coefficients of irreducible subrepresentations.  Hence
defining $\mathcal F_t$ via subrepresentations, as one might prefer, yields
the same span.  Second, the name is justified by the observation that
$\mathcal A$ contains exactly the $t$-copy loss statistics of the ansatz:
for any density matrix $\varrho$, observable $O$, and $t \ge 1$, the
function
\[
g \;\longmapsto\; \ell_g(\varrho, O)^t
= \operatorname{Tr}\!\big[\rho(g)\varrho\rho(g)^\dagger O\big]^t
= \Big\langle \operatorname{vec}\!\big(O^{\otimes t}\big),\;
\rho^{(t)}(g)\, \operatorname{vec}\!\big(\varrho^{\otimes t}\big)
\Big\rangle
\]
lies in $\mathcal F_t$, and conversely $\mathcal F_t$ is spanned by such
functions with $\varrho^{\otimes t}, O^{\otimes t}$ replaced by arbitrary
operators on $\mathcal H^{\otimes t}$.  Thus $\mathcal A$ is precisely the
function system on $G$ accessible to $t$-th order moment experiments,
uniformly in $t$.

We now identify the canonical compact group on which $\mathcal A$ becomes a full function algebra.

\begin{definition}[Observable group]
\label{def:observable-group}
Let
\[
\Phi := \rho^{(1)} :\; G \longrightarrow
U\big(\mathcal H \otimes \overline{\mathcal H}\big),
\qquad
\Phi(g) = \rho(g) \otimes \overline{\rho(g)},
\]
and define the \emph{observable group} of the pair $(G,\rho)$ as
\[
G_{\mathrm{obs}} := \Phi(G)
\;\subset\; U\big(\mathcal H \otimes \overline{\mathcal H}\big).
\]
Given a probability measure $\mu$ on $G$, its \emph{observable pushforward}
is the probability measure
$\mu_{\mathrm{obs}} := \Phi_*\mu$ on $G_{\mathrm{obs}}$, defined by
$\Phi_*\mu(B) = \mu(\Phi^{-1}(B))$ for Borel $B \subset G_{\mathrm{obs}}$.
\end{definition}

Since $\Phi$ is a continuous homomorphism, $\Phi(G)$ is a compact, hence closed, subgroup of the ambient unitary
group, and it is again a compact Lie group.  Under vectorization, $\Phi(g)$
is the matrix of the unitary channel $X \mapsto \rho(g) X \rho(g)^\dagger$,
so $G_{\mathrm{obs}}$ is canonically the image of $G$ in the projective
unitary group $PU(\mathcal H)$, i.e.\ the group of \emph{channels}
implemented by the ansatz.  The next lemma collects its basic properties.
\begin{lemma}[Structure of the observable group]
\phantomsection
\label{lem:structure}
Let
\[
Z := \rho^{-1}\big(\mathbb T \cdot I\big)
= \{ g \in G \, \vert \, \rho(g) \in \mathbb T\cdot I\},
\]
where \(\mathbb T = \bigl\{\lambda \in \mathbb C \, \big \vert \, |\lambda| = 1\bigr\}\).
Then:
\begin{enumerate}
\item
\(\ker \Phi = Z = \bigcap_{t = 1}^\infty \ker \rho^{(t)}\), and \(Z\) is a closed
normal subgroup of \(G\) containing the global phases
\(G \cap \mathbb T\cdot I\) of the defining representation (with equality
\(Z = G \cap \mathbb T \cdot  I\) when \(\rho\)
is the defining inclusion \(G \subset U(\mathcal H)\)).

\item
\(\Phi\) induces an isomorphism of compact groups
\[
G/Z \;\cong\; G_{\mathrm{obs}},
\]
and for every \(t \ge 0\) there is a unique continuous unitary representation
\[
\widetilde\rho^{(t)} : G_{\mathrm{obs}} \to U(\mathcal H_t)
\]
with
\[
\rho^{(t)} = \widetilde\rho^{(t)} \circ \Phi .
\]

\item
For probability measures \(\mu, \nu\) on \(G\):
\[
\Phi_*(\mu * \nu) = \Phi_*\mu * \Phi_*\nu,
\]
\[
\operatorname{supp}(\Phi_*\mu) = \Phi(\operatorname{supp}\mu),
\]
and
\[
\Phi([S]) = [\Phi(S)]
\qquad\text{for every closed }S \subset G.
\]

\item
\(\Phi_*\mu_G = \mu_{G_{\mathrm{obs}}}\), the normalized Haar measure of
\(G_{\mathrm{obs}}\).

\item
For every \(t \ge 0\) and every probability measure \(\mu\) on \(G\),
\[
\mathcal M^{(t)}_\mu
= \int_{G_{\mathrm{obs}}} \widetilde\rho^{(t)}(x)\, d(\Phi_*\mu)(x)
= \widehat{\Phi_*\mu}\big(\widetilde\rho^{(t)}\big),
\]
the noncommutative Fourier coefficient of the pushforward at
\(\widetilde\rho^{(t)}\).
\end{enumerate}
\end{lemma}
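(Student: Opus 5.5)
The plan is to prove the five items in order, starting from a single linear-algebra fact about the kernel of $A\mapsto A\otimes\overline A$ and then using only elementary properties of continuous homomorphisms of compact groups.

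\textbf{Item (1).} The key claim is that for a unitary $U$, $U\otimes\overline U=I$ holds iff $U=\lambda I$ with $|\lambda|=1$.
\begin{itemize}
\item Under vectorization, $U\otimes\overline U$ is the channel $X\mapsto UXU^\dagger$.
\item If this channel is the identity, then $U$ commutes with every $X\in\End(\mathcal H)$, so $U$ is scalar by Schur's lemma.
\item Conversely, a scalar $U=\lambda I$ gives $U\otimes\overline U=|\lambda|^2I=I$.
\end{itemize}
This shows $\ker\Phi=Z$. For the intersection, $\rho^{(t)}(g)=\Phi(g)^{\otimes t}$ up to a fixed reordering of tensor factors. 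Hence $\ker\Phi\subset\ker\rho^{(t)}$ for every $t$, and $\ker\rho^{(1)}=\ker\Phi$ gives the reverse inclusion. The remaining properties are immediate:
\begin{itemize}
\item $Z$ is closed and normal because it is the kernel of a continuous homomorphism.
\item $Z$ contains $G\cap\mathbb T\cdot I$, with equality when $\rho$ is the inclusion.
\end{itemize}

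\textbf{Item (2).} The first isomorphism theorem gives a continuous bijective homomorphism $G/Z\to G_{\mathrm{obs}}$. Since $G/Z$ is compact and $G_{\mathrm{obs}}$ is Hausdorff, this is a homeomorphism. Because $Z\subset\ker\rho^{(t)}$, each $\rho^{(t)}$ factors through $G/Z\cong G_{\mathrm{obs}}$. Continuity of the factored map $\widetilde\rho^{(t)}$ follows from the quotient topology. Uniqueness follows from surjectivity of $\Phi$.

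\textbf{Item (3).}
\begin{itemize}
\item \emph{Convolution.} Since $\Phi\circ m=m\circ(\Phi\times\Phi)$, pushing $\mu\otimes\nu$ forward both ways gives $\Phi_*(\mu*\nu)=\Phi_*\mu*\Phi_*\nu$.
\item \emph{Supports.} $\Phi(\operatorname{supp}\mu)$ is compact, hence closed, and has full $\Phi_*\mu$-measure. So $\operatorname{supp}\Phi_*\mu\subset\Phi(\operatorname{supp}\mu)$. For the reverse inclusion, suppose $y=\Phi(x)$ with $x\in\operatorname{supp}\mu$, and let $V$ be an open neighborhood of $y$. Then $\Phi^{-1}(V)$ is an open neighborhood of $x$, so it has positive $\mu$-mass, and therefore $V$ has positive $\Phi_*\mu$-mass.
\item \emph{Closed generated subgroups.} $\Phi([S])$ is a compact subgroup containing $\Phi(S)$, so $[\Phi(S)]\subset\Phi([S])$. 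Conversely, $\Phi^{-1}([\Phi(S)])$ is a closed subgroup containing $S$, so it contains $[S]$.
\end{itemize}

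\textbf{Items (4) and (5).} Item (4) is exactly \Cref{HaarPushforward} applied to the continuous surjective homomorphism $\Phi$. For item (5), substitute $\rho^{(t)}=\widetilde\rho^{(t)}\circ\Phi$ into the vectorized moment operator $\mathcal M_\mu^{(t)}=\int_G\rho^{(t)}\,d\mu$, as in \Cref{lem:vectorized-moment-error}. The change-of-variables formula for pushforwards then gives $\mathcal M_\mu^{(t)}=\int_{G_{\mathrm{obs}}}\widetilde\rho^{(t)}\,d\Phi_*\mu$.

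\textbf{Expected obstacle.} The only nonformal step is the Schur/commutant argument in item (1). There I must be careful that the identification of $\rho^{(t)}$ with $\Phi^{\otimes t}$ uses a fixed permutation of tensor factors. That permutation is a unitary intertwiner, so it preserves kernels.
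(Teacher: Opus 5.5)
Your proposal is correct and follows the paper's proof essentially step for step. You use the commutant argument for $\ker\Phi$, the compact-to-Hausdorff first isomorphism theorem with factorization through $G/Z$, the same support and closed-generated-subgroup arguments, and change of variables for (5). The only differences are cosmetic. For (4) you invoke \Cref{HaarPushforward}, where the paper reruns the invariance-plus-uniqueness argument. You also obtain $\ker\Phi\subset\ker\rho^{(t)}$ from $\rho^{(t)}$ being a fixed permutation-conjugate of $\Phi^{\otimes t}$, rather than computing $\lambda^t\bar\lambda^t=1$.

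One clause you call immediate is not, and it can fail for an arbitrary continuous unitary $\rho$: the inclusion $G\cap\mathbb T\cdot I\subset Z$. A scalar element $g=\lambda I_d\in G$ is central, so by Schur it acts by a scalar on each irreducible constituent of $\rho$, but those scalars may differ between constituents. For example, take $G=\mathbb T\cdot I_d$ and $\rho(\lambda I_d)=\operatorname{diag}(\lambda,\lambda^2)$. Then $Z=\{I_d\}$, while $G\cap\mathbb T\cdot I_d=G$. The containment does hold when $\rho$ is the defining inclusion, where it is the tautological equality and the only case the paper uses. It also holds when $\rho$ is irreducible. The paper's own proof passes over this clause silently as well, so state it under one of those hypotheses rather than in full generality.
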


\begin{proof}
(1). If \(\Phi(g) = I\) then, by vectorization,
\[
\rho(g) X \rho(g)^\dagger = X
\qquad\text{for all }X \in \operatorname{End}(\mathcal H),
\]
so \(\rho(g)\) lies in the commutant of the full matrix algebra and is
therefore a scalar unitary,
\[
\rho(g) = \lambda I
\qquad\text{with }|\lambda| = 1.
\]
Conversely, if \(\rho(g) = \lambda I\) then
\[
\rho^{(t)}(g)
=
\lambda^t \bar\lambda^t\, I
=
I
\qquad\text{for every }t \ge 1.
\]
This proves both equalities in (1). Normality and closedness are clear,
since \(Z\) is the kernel of a continuous homomorphism.

(2). The induced map
\[
G/Z \longrightarrow G_{\mathrm{obs}}
\]
is a continuous bijective homomorphism between compact Hausdorff groups,
hence a homeomorphic isomorphism.

Since \(\ker\Phi = Z \subset \ker\rho^{(t)}\) by part 1, the representation
\(\rho^{(t)}\) factors uniquely through \(G/Z \cong G_{\mathrm{obs}}\),
defining \(\widetilde\rho^{(t)}\). Continuity of \(\widetilde\rho^{(t)}\) follows because
\(G \to G/Z\) is an open quotient map (equivalently, because
\(\widetilde\rho^{(t)}\) is obtained by composing with the inverse of the
homeomorphism \(G/Z \cong G_{\mathrm{obs}}\)).

(3). The first identity holds because \(\Phi\) is a homomorphism: pushforward
along \(\Phi\) of the multiplication map of \(G\) is the multiplication map
of \(G_{\mathrm{obs}}\) applied to the pushforwards. Explicitly,
\[
\Phi_*(\mu * \nu)
=
\Phi_*\mu * \Phi_*\nu.
\]
For the support identity, \(\Phi(\operatorname{supp}\mu)\) is compact (hence
closed) and carries full \(\Phi_*\mu\)-measure. Conversely, if
\(U\subset G_{\mathrm{obs}}\) is open with
\(U \cap \Phi(\operatorname{supp}\mu) \neq \emptyset\), then
\(\Phi^{-1}(U)\) is open and meets \(\operatorname{supp}\mu\). Hence
\[
\Phi_*\mu(U)
\ge
\mu(\Phi^{-1}(U))
> 0.
\]
This shows
\[
\operatorname{supp}(\Phi_*\mu)
=
\Phi(\operatorname{supp}\mu).
\]

For the third claim, \(\Phi([S])\) is a compact (closed) subgroup containing
\(\Phi(S)\), so
\[
[\Phi(S)] \subset \Phi([S]).
\]
Conversely, \(\Phi^{-1}([\Phi(S)])\) is a closed subgroup containing \(S\),
hence containing \([S]\). Applying \(\Phi\) gives
\[
\Phi([S]) \subset [\Phi(S)].
\]
Thus \(\Phi([S]) = [\Phi(S)]\).

(4). The pushforward \(\Phi_*\mu_G\) is a probability measure on
\(G_{\mathrm{obs}}\) invariant under left translation by every element of
\(\Phi(G) = G_{\mathrm{obs}}\). Indeed, for any \(h\in G\), the translate
of \(\Phi_*\mu_G\) by \(\Phi(h)\) equals \(\Phi_*\) applied to the translate
of \(\mu_G\) by \(h\), which leaves \(\mu_G\) invariant. By uniqueness of normalized Haar measure on a compact group
\cite[Thm.~2.10, 2.20]{Folland1995Harmonic}, we must have $\Phi_*\mu_G = \mu_{G_{\mathrm{obs}}}$.

(5). Let \(t\ge0\) and let \(\mu\) be a probability measure on \(G\).

By part (2),
\[
\rho^{(t)} = \widetilde\rho^{(t)} \circ \Phi.
\]
Therefore
\[
\mathcal M^{(t)}_\mu
=
\int_G \rho^{(t)}(g)\, d\mu(g)
=
\int_G \widetilde\rho^{(t)}(\Phi(g))\, d\mu(g)
=
\int_{G_{\mathrm{obs}}} \widetilde\rho^{(t)}(x)\, d(\Phi_*\mu)(x),
\]
which is the noncommutative Fourier coefficient
\(\widehat{\Phi_*\mu}(\widetilde\rho^{(t)})\).

This proves (5).
\end{proof}

We next tackle ergodic hypotheses on the observable group.

\begin{definition}[Adaptedness and strict aperiodicity]
\label{def:adapted-aperiodic}
Let $K$ be a compact group and $\nu$ a probability measure on $K$.  We say
that $\nu$ is \emph{adapted} if $[\operatorname{supp}\nu] = K$;
$\nu$ is \emph{strictly aperiodic} if there exists no proper closed
normal subgroup $N \triangleleft K$, $N \neq K$, and element $x \in K$
such that $\operatorname{supp}\nu \subset xN$. A probability measure $\mu$ on $G$ is called \emph{observably adapted}
(resp.\ \emph{observably strictly aperiodic}) if its observable pushforward
$\Phi_*\mu$ is adapted (resp.\ strictly aperiodic) on $G_{\mathrm{obs}}$.
\end{definition}

We emphasize that for strict aperiodicity the subgroup $N$ is required to be \emph{proper},
but may perfectly well be trivial: the case $N = \{e\}$ excludes point
masses $\delta_x$ (whose convolution powers $\delta_{x^L}$ never converge to
Haar measure on a nontrivial group), while allowing $N = K$ would make the
condition vacuously false for every measure.  Strict aperiodicity neither
implies nor is implied by adaptedness; the classical Kawada--It\^o theorem
requires both \cite{KawadaIto1940, Stromberg1960, Applebaum2014}.

The hypotheses of adaptedness and strict aperiodicity at the level of $G$ are
strictly stronger than the observable ones:

\begin{lemma}[Descent of the ergodic hypotheses]
\label{lem:descent-ergodic}
If $\mu$ is adapted on $G$, then $\mu$ is observably adapted; if $\mu$ is
strictly aperiodic on $G$, then $\mu$ is observably strictly aperiodic.
Neither converse holds in general.
\end{lemma}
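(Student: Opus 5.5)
The plan is to prove both positive implications with \Cref{lem:structure}(3), and then refute the converses by examples in which $\Phi$ collapses the global phases.

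\emph{Adaptedness.} Suppose $[\operatorname{supp}\mu]=G$. Then
\[
[\operatorname{supp}\Phi_*\mu]=[\Phi(\operatorname{supp}\mu)]=\Phi([\operatorname{supp}\mu])=\Phi(G)=G_{\mathrm{obs}},
\]
using both identities of \Cref{lem:structure}(3).

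\emph{Strict aperiodicity.} I argue by contrapositive. Suppose $\operatorname{supp}\Phi_*\mu\subset xN$ with $N\triangleleft G_{\mathrm{obs}}$ a proper closed normal subgroup. Set $N'=\Phi^{-1}(N)$; it is closed (preimage of a closed set) and normal (preimage of a normal subgroup under a surjective homomorphism). It is proper because $\Phi$ is surjective and $N\neq G_{\mathrm{obs}}$. Choose $y\in G$ with $\Phi(y)=x$. Then
\[
\Phi(\operatorname{supp}\mu)=\operatorname{supp}\Phi_*\mu\subset xN,
\]
so
\[
\operatorname{supp}\mu\subset\Phi^{-1}(xN)=yN'.
\]
Hence $\mu$ is not strictly aperiodic on $G$.

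\emph{Counterexamples.} The natural source is $Z=\ker\Phi$. Take $G=\mathbb T\cdot I\subset U(d)$ acting by the defining representation. Then $Z=G$ and $G_{\mathrm{obs}}$ is trivial, so every $\mu$ is observably adapted. It is also observably strictly aperiodic, since on the trivial group the only proper subgroup condition is vacuous: there is no proper $N$.

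Now take $\mu=\delta_{e}$ on $G$. It is not adapted, since $[\{e\}]=\{e\}\neq\mathbb T$. It is not strictly aperiodic either, since its support lies in the proper closed normal subgroup $N=\{e\}$ with $x=e$. This single example refutes both converses.

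If a non-degenerate example is preferred, I would use $G=U(d)$ with $\mu$ supported on $SU(d)$, for instance Haar measure on $SU(d)$. It is not adapted on $G$ and sits inside the proper normal subgroup $SU(d)$. Yet $\Phi(SU(d))=\Phi(U(d))$ because $U(d)=\mathbb T\cdot SU(d)$, and $\Phi_*\mu$ equals Haar measure on $G_{\mathrm{obs}}$ by \Cref{HaarPushforward}. Haar measure is adapted. It is also strictly aperiodic, because its support is the whole group and so cannot lie in a coset of a proper closed subgroup.

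The main point needing care is verifying that $\Phi(SU(d))=G_{\mathrm{obs}}$. This holds because each $g\in U(d)$ factors as $g=\lambda s$ with $\lambda\in\mathbb T$ and $s\in SU(d)$, and $\Phi(\lambda s)=\Phi(s)$.
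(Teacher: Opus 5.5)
Your proof is correct. The two positive implications follow the paper's proof essentially verbatim. Adaptedness comes from the two identities in Lemma~\ref{lem:structure}(3). Strict aperiodicity is handled by contraposition, pulling the offending coset in $G_{\mathrm{obs}}$ back to a coset of its preimage subgroup. The only cosmetic difference is how you get $\operatorname{supp}\mu\subset\Phi^{-1}(xN)$. You use the support identity $\Phi(\operatorname{supp}\mu)=\operatorname{supp}\Phi_*\mu$, whereas the paper observes directly that $\Phi^{-1}(\operatorname{supp}\Phi_*\mu)$ is a closed set of full $\mu$-measure.

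You go beyond the paper on the last sentence of the statement. The paper's proof stops after the two implications and never exhibits a failure of either converse. Your examples supply this, and both are valid. Each uses the mechanism the chapter has in mind: $Z=\ker\Phi$ absorbs the global phases, so constraints visible on $G$ disappear on $G_{\mathrm{obs}}$.

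The $U(d)$ example with Haar measure on $SU(d)$ is the more convincing of the two. For $d\ge 2$ the observable group is nontrivial, and $\Phi|_{SU(d)}$ is a continuous surjective homomorphism onto $G_{\mathrm{obs}}$. Hence $\Phi_*\mu=\mu_{G_{\mathrm{obs}}}$ by Proposition~\ref{HaarPushforward}, which has full support and is therefore adapted and strictly aperiodic. For $d=1$ this example collapses to your first, degenerate one.
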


\begin{proof}
Write $\nu = \Phi_*\mu$.  If $[\operatorname{supp}\mu] = G$, then by
Lemma~\ref{lem:structure}(c),
\[
[\operatorname{supp}\nu]
= [\Phi(\operatorname{supp}\mu)]
= \Phi([\operatorname{supp}\mu])
= \Phi(G) = G_{\mathrm{obs}} .
\]
For the second claim we argue by contraposition.  Suppose
$\operatorname{supp}\nu \subset \bar x \bar N$ for some proper closed
normal $\bar N \triangleleft G_{\mathrm{obs}}$ and
$\bar x \in G_{\mathrm{obs}}$.  Put $N := \Phi^{-1}(\bar N)$, a closed
normal subgroup of $G$, and pick $x \in \Phi^{-1}(\{\bar x\})$.  Then $N$ is
proper (else $\bar N = \Phi(N) = G_{\mathrm{obs}}$ by surjectivity of
$\Phi$), and
\[
\operatorname{supp}\mu
\subset \Phi^{-1}\big(\operatorname{supp}\nu\big)
\subset \Phi^{-1}(\bar x \bar N) = xN,
\]
where the first inclusion holds because
$\mu\big(\Phi^{-1}(\operatorname{supp}\nu)\big) =
\nu(\operatorname{supp}\nu) = 1$ and
$\Phi^{-1}(\operatorname{supp}\nu)$ is closed.  Hence $\mu$ is not strictly
aperiodic on $G$.
\end{proof}

We first record the equivalence between moment convergence and weak-$*$
convergence relative to the observable algebra.  This part requires no
convolution structure whatsoever and holds for arbitrary sequences of
measures; it is the rigorous form of the
``moment-operator/observable correspondence.''

\begin{proposition}[Moment--observable correspondence]
\label{prop:moment-observable}
Let $(\mu_L)_{L \in \mathbb N}$ be any sequence of probability measures on
$G$ and let $\mu_\infty$ be a probability measure on $G$.  The following
are equivalent:
\begin{enumerate}
\item
For every $t \ge 1$, as $L \to \infty$,
$\big\| \mathcal M^{(t)}_{\mu_L} - \mathcal M^{(t)}_{\mu_\infty} \big\|_\infty
\longrightarrow 0$.
\item
for every $f \in \mathcal A$, as $L \to \infty$,
$\displaystyle\int_G f\, d\mu_L \longrightarrow \int_G f\, d\mu_\infty$.
\item
As $L \to \infty$, $\Phi_*\mu_L \longrightarrow \Phi_*\mu_\infty$ in the weak-$*$ topology of
$C(G_{\mathrm{obs}})^*$. That is to say, \
$\int_{G_{\mathrm{obs}}} h\, d(\Phi_*\mu_L) \to
\int_{G_{\mathrm{obs}}} h\, d(\Phi_*\mu_\infty)$ for every
$h \in C(G_{\mathrm{obs}})$.
\end{enumerate}
\end{proposition}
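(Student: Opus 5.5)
I will prove the cycle $(1)\Leftrightarrow(2)$ and $(2)\Leftrightarrow(3)$, working throughout on the observable group via \Cref{lem:structure}.

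\emph{$(1)\Leftrightarrow(2)$.} This is essentially finite-dimensional linear algebra.
\begin{itemize}
\item For $f(g)=\langle V,\rho^{(t)}(g)W\rangle\in\mathcal F_t$ we have
\[
\int_G f\,d\mu=\bigl\langle V,\mathcal M^{(t)}_\mu W\bigr\rangle .
\]
Hence operator-norm convergence of $\mathcal M^{(t)}_{\mu_L}$ gives convergence on each $\mathcal F_t$, and by linearity on $\mathcal A$. The case $t=0$ is trivial, since all measures are probability measures.
\item Conversely, convergence of all matrix coefficients $\langle e_i,\mathcal M^{(t)}_{\mu_L}e_j\rangle$ in a fixed basis of the finite-dimensional space $\mathcal H_t$ is entrywise convergence. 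This is equivalent to $\|\cdot\|_\infty$ convergence by the norm-equivalence remark preceding the definitions.
\end{itemize}

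\emph{$(3)\Rightarrow(2)$.} By \Cref{lem:structure}(2), every $f\in\mathcal F_t$ factors as $f=\tilde f\circ\Phi$ with $\tilde f(x)=\langle V,\widetilde\rho^{(t)}(x)W\rangle$ continuous on $G_{\mathrm{obs}}$. Then
\[
\int_G f\,d\mu_L=\int_{G_{\mathrm{obs}}}\tilde f\,d(\Phi_*\mu_L),
\]
and weak-$*$ convergence applies directly.

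\emph{$(2)\Rightarrow(3)$.} This is the main step. Let $\widetilde{\mathcal A}:=\{\tilde f : f\in\mathcal A\}\subset C(G_{\mathrm{obs}})$. I will show that $\widetilde{\mathcal A}$ is uniformly dense in $C(G_{\mathrm{obs}})$, using Stone--Weierstrass.
\begin{itemize}
\item \emph{Algebra.} $\widetilde{\mathcal A}$ is closed under products because $\rho^{(s)}\otimes\rho^{(t)}$ is unitarily equivalent, by reordering tensor factors, to $\rho^{(s+t)}$. Products of coefficients are therefore coefficients of $\rho^{(s+t)}$; this is presumably the content of \Cref{lem:algebra}.
\item \emph{Constants.} $\widetilde{\mathcal A}$ contains the constants via $t=0$.
\item \emph{Conjugation.} $\widetilde{\mathcal A}$ is closed under complex conjugation, since $\overline{\rho^{(t)}}$ is equivalent to $\rho^{(t)}$ by swapping the $\mathcal H$ and $\overline{\mathcal H}$ factors.
\item \emph{Points.} $\widetilde{\mathcal A}$ separates points of $G_{\mathrm{obs}}$. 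Indeed, $\widetilde\rho^{(1)}$ is the inclusion $G_{\mathrm{obs}}\subset U(\mathcal H\otimes\overline{\mathcal H})$, which is faithful, so its matrix entries separate points.
\end{itemize}
Stone--Weierstrass then gives density.

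Given $h\in C(G_{\mathrm{obs}})$ and $\varepsilon>0$, choose $\tilde f\in\widetilde{\mathcal A}$ with $\|h-\tilde f\|_\infty<\varepsilon$. A standard $3\varepsilon$ argument then applies: the error terms are controlled uniformly in $L$ because every measure involved is a probability measure, so integration is norm-one on $C(G_{\mathrm{obs}})$.

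I expect the main obstacle to be verifying the tensor-reordering equivalences carefully, especially the claim that conjugation preserves $\widetilde{\mathcal A}$. The conjugation check is what makes $\widetilde{\mathcal A}$ self-adjoint, and it relies on the balanced structure of $\rho^{(t)}$; this is exactly why one passes to the quotient $G_{\mathrm{obs}}$ rather than working on $G$.
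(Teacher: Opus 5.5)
Your proposal is correct and follows the paper's proof essentially step for step. It uses the same coefficient-pairing identity $\int_G f\,d\mu=\langle V,\mathcal M^{(t)}_\mu W\rangle$ for $(1)\Leftrightarrow(2)$, and for $(2)\Rightarrow(3)$ the same Stone--Weierstrass density of $\widetilde{\mathcal A}$ in $C(G_{\mathrm{obs}})$ (the paper's \Cref{lem:algebra} and \Cref{lem:density}) followed by an $\varepsilon$-approximation argument. One small interpretive correction to your closing remark: $\mathcal A$ is already closed under conjugation on $G$ itself, so what forces the passage to $G_{\mathrm{obs}}$ is point separation, because every $f\in\mathcal A$ is constant on cosets of $Z=\ker\Phi$.
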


\begin{proof}
(1) $\Rightarrow$ (2).  Let $f \in \mathcal F_t$, say
$f(g) = \langle V, \rho^{(t)}(g)W\rangle$.  Then for every probability
measure $\mu$ on $G$, interchanging the (entrywise, finite-dimensional)
integral with the inner product,
\begin{equation}
\label{eq:coefficient-pairing}
\int_G f\, d\mu = \big\langle V,\,\mathcal M^{(t)}_{\mu}\, W \big\rangle,
\end{equation}
whence
\[
\Big| \int_G f\, d\mu_L - \int_G f\, d\mu_\infty \Big|
\le \|V\|\,\|W\|\,
\big\|\mathcal M^{(t)}_{\mu_L} - \mathcal M^{(t)}_{\mu_\infty}\big\|_\infty
\longrightarrow 0 .
\]
The case of general $f \in \mathcal A$ follows by linearity, since
$\mathcal A = \operatorname{span}\bigcup_t \mathcal F_t$ (no products need
be considered separately, by Lemma~\ref{lem:algebra}).

(2) $\Rightarrow$ (1).  Fix $t$ and apply part (2) to the
$d_\rho^{2t} \times d_\rho^{2t}$ matrix coefficients
$g \mapsto \langle e_i, \rho^{(t)}(g) e_j\rangle \in \mathcal F_t$, where
$(e_i)$ is an orthonormal basis of $\mathcal H_t$.  By
\eqref{eq:coefficient-pairing} this gives entrywise convergence
$\mathcal M^{(t)}_{\mu_L} \to \mathcal M^{(t)}_{\mu_\infty}$, which on this finite-dimensional
space implies convergence in norm.

(2) $\Leftrightarrow$ (3).  By Lemma~\ref{lem:density} and the
change-of-variables formula, for $f = \widetilde f\circ\Phi \in \mathcal A$,
\[
\int_G f\, d\mu_L = \int_{G_{\mathrm{obs}}} \widetilde f\, d(\Phi_*\mu_L),
\qquad
\int_G f\, d\mu_\infty
= \int_{G_{\mathrm{obs}}} \widetilde f\, d(\Phi_*\mu_\infty),
\]
so (3) $\Rightarrow$ (2) is immediate since
$\widetilde f \in C(G_{\mathrm{obs}})$.  Conversely, assume (2), letting
$h \in C(G_{\mathrm{obs}})$ and $\varepsilon > 0$.  By
Lemma~\ref{lem:density}(a) choose $\widetilde f \in \widetilde{\mathcal A}$
with $\|h - \widetilde f\|_\infty \le \varepsilon$.  Since
$\Phi_*\mu_L$ and $\Phi_*\mu_\infty$ are probability measures,
\[
\Big| \int h\, d(\Phi_*\mu_L) - \int h\, d(\Phi_*\mu_\infty) \Big|
\le 2\varepsilon +
\Big| \int \widetilde f\, d(\Phi_*\mu_L)
- \int \widetilde f\, d(\Phi_*\mu_\infty) \Big|,
\]
and the last term tends to $0$ by (2) applied to
$f = \widetilde f \circ \Phi \in \mathcal A$.  As $\varepsilon$ was
arbitrary, (3) follows.
\end{proof}

We can now state and prove the observable Kawada--It\^o criterion for convergence to the
DLG $t$-design hierarchy.

\begin{theorem}[Observable Kawada--It\^o criterion for convergence to the
DLG \(t\)-design hierarchy]
\label{thm:observable-kawada-ito}
Let \(G \subset U(d)\) be the compact Lie group generated by the DLA
\(\mathfrak g\), let \(\rho\) be as above, and let \(\mathcal E_1\) be the
single-layer ensemble, viewed as a probability measure on \(G\). For \(L \ge 1\)
define
\[
\mathcal E_L := \mathcal E_1^{*L},
\qquad
\nu := \Phi_* \mathcal E_1,
\qquad
P_{t} := \mathcal M^{(t)}_{\mu_G}.
\]
The following are equivalent:
\begin{enumerate}
\item
\(\mathcal E_1\) is observably adapted and observably strictly aperiodic,
i.e.\ \(\nu\) is adapted and strictly aperiodic on \(G_{\mathrm{obs}}\);
\item
for every \(t \ge 1\), as $L\to\infty$
\[
\big\|\mathcal A^{(t)}_{\mathcal E_L}\big\|_\infty
=
\big\|\mathcal M^{(t)}_{\mathcal E_L}-P^{(t)}\big\|_\infty
\longrightarrow 0.
\]
\item
for every \(f\in\mathcal A\), as $L \to \infty$,
\[
\int_G f\,d\mathcal E_L
\longrightarrow
\int_G f\,d\mu_G.
\]
\item
\[
\Phi_*\mathcal E_L
=
\nu^{*L}
\longrightarrow
\mu_{G_{\mathrm{obs}}}
\]
in the weak-\(*\) topology of \(C(G_{\mathrm{obs}})^*\).
\end{enumerate}
Moreover, if these conditions hold, then for each \(t\ge1\),
\begin{equation}
\label{eq:spectral-radius}
r_t
:=
\max\Big\{
|\lambda|:\lambda\in
\operatorname{spec}\!\big(\mathcal M^{(t)}_{\mathcal E_1}\big)\setminus\{1\}
\Big\}
<1,
\end{equation}
with the convention \(r_t:=0\) if the set is empty, and
\[
\ker\!\big(\mathcal M^{(t)}_{\mathcal E_1}-I\big)
=
\operatorname{im}P_{t}.
\]
Hence, for every \(r\in(r_t,1)\), there exists \(C_{t,r}>0\) such that
\begin{equation}
\label{eq:gelfand-rate}
\big\|\mathcal A^{(t)}_{\mathcal E_L}\big\|_\infty
\le
C_{t,r}\,r^L
\qquad\text{for all }L\ge1,
\end{equation}
and therefore
\begin{equation}
\label{eq:gelfand-depth}
L
\;\ge\;
\frac{\log C_{t,r}+\log(1/\varepsilon)}{\log(1/r)}
\quad\Longrightarrow\quad
\big\|\mathcal A^{(t)}_{\mathcal E_L}\big\|_\infty
\le
\varepsilon.
\end{equation}
If, in addition, \(\mathcal E_1\) is symmetric in the sense that
\[
\mathcal E_1(B)=\mathcal E_1(B^{-1})
\qquad\text{for every Borel }B\subset G,
\]
then \(\mathcal M^{(t)}_{\mathcal E_1}\) is self-adjoint for the
Hilbert--Schmidt inner product, and hence
\begin{equation}
\label{eq:ragone-rate}
\big\|\mathcal A^{(t)}_{\mathcal E_L}\big\|_\infty
=
\big\|\mathcal A^{(t)}_{\mathcal E_1}\big\|_\infty^L
=
r_t^{\,L}
\qquad\text{for all }L\ge1.
\end{equation}
Equivalently,
\begin{equation}
\label{eq:ragone-depth}
L
\;\ge\;
\frac{\log(1/\varepsilon)}
{\log\!\big(1/\|\mathcal A^{(t)}_{\mathcal E_1}\|_\infty\big)}
\quad\Longrightarrow\quad
\big\|\mathcal A^{(t)}_{\mathcal E_L}\big\|_\infty
\le
\varepsilon.
\end{equation}
\end{theorem}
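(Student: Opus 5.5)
The plan is to reduce everything to the observable group $G_{\mathrm{obs}}$, where the classical theorem applies verbatim, and then extract the quantitative rates from finite-dimensional spectral theory.

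\textbf{Equivalence of (2), (3) and (4).} First I would set $\mu_L:=\mathcal E_L$ and $\mu_\infty:=\mu_G$ in Proposition~\ref{prop:moment-observable}. This immediately gives (2)$\Leftrightarrow$(3)$\Leftrightarrow$(3$'$), where (3$'$) is weak-$*$ convergence $\Phi_*\mathcal E_L\to\Phi_*\mu_G$. By Lemma~\ref{lem:structure}(4), $\Phi_*\mu_G=\mu_{G_{\mathrm{obs}}}$. By Lemma~\ref{lem:structure}(3), applied inductively, $\Phi_*\mathcal E_L=\Phi_*(\mathcal E_1^{*L})=\nu^{*L}$. Together these turn (3$'$) into (4).

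\textbf{Equivalence of (1) and (4).} This is then exactly the Kawada--It\^o theorem in Stromberg's sharpened form on the compact (metrizable) group $K=G_{\mathrm{obs}}$ with the measure $\nu$: convolution powers $\nu^{*L}$ converge weak-$*$ to Haar if and only if $\nu$ is adapted and strictly aperiodic \cite{KawadaIto1940,Stromberg1960,Applebaum2014}. Both directions are needed, so I will cite the necessity half carefully as well.

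\textbf{The spectral claims.} Write $M:=\mathcal M^{(t)}_{\mathcal E_1}$ on $\mathcal H_t$. I would proceed in the following order.
\begin{enumerate}
\item By Lemma~\ref{lem:structure}(5) and multiplicativity of Fourier coefficients under convolution, $\mathcal M^{(t)}_{\mathcal E_L}=M^L$.
\item Haar invariance gives $MP_t=P_tM=P_t$, and $P_t$ is the orthogonal projection onto the $\rho^{(t)}$-invariant vectors. Hence $M^L-P_t=(M-P_t)^L$, so $\mathcal A^{(t)}_{\mathcal E_L}=(\mathcal A^{(t)}_{\mathcal E_1})^L$.
\item Assume (2), so $M^L\to P_t$. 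Any eigenvector $Mv=\lambda v$ with $|\lambda|=1$, $\lambda\neq1$, would make $M^Lv=\lambda^Lv$ oscillate, which is impossible. Any fixed vector $Mv=v$ gives $v=M^Lv\to P_tv$, so $v\in\operatorname{im}P_t$; the reverse inclusion follows from $MP_t=P_t$. This gives $\ker(M-I)=\operatorname{im}P_t$.
\item Because $M-P_t=M(I-P_t)$ and $M$ commutes with $P_t$, the space splits as $\operatorname{im}P_t\oplus\ker P_t$. On $\ker P_t$ the operator $M$ has no eigenvalue $1$ (its fixed vectors lie in $\operatorname{im}P_t$) and no other unimodular eigenvalue. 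Since $\|M\|\le1$, the spectral radius of $M$ is at most $1$, so every eigenvalue of $M|_{\ker P_t}$ has modulus strictly less than $1$. Hence $\operatorname{spec}(M-P_t)\subset\{0\}\cup(\operatorname{spec}M\setminus\{1\})$, and its spectral radius is $r_t<1$.
\item Gelfand's formula gives \eqref{eq:gelfand-rate} for any $r\in(r_t,1)$, and \eqref{eq:gelfand-depth} follows by solving $C_{t,r}r^L\le\varepsilon$ for $L$.
\end{enumerate}
One point to check in step 3 is that $1$ is a semisimple eigenvalue; this is automatic here, because the uniform bound $\|M^L\|\le1$ rules out Jordan blocks on the unit circle.

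\textbf{Symmetric case.} If $\mathcal E_1$ is symmetric, then $\rho^{(t)}(g^{-1})=\rho^{(t)}(g)^\dagger$ makes $M$ self-adjoint on $\mathcal H_t$, which the unitary $\operatorname{vec}$ identifies with Hilbert--Schmidt space. Then $M-P_t$ is self-adjoint, so its operator norm equals its spectral radius and its powers satisfy $\|(M-P_t)^L\|=\|M-P_t\|^L=r_t^L$. This gives \eqref{eq:ragone-rate}, and \eqref{eq:ragone-depth} follows. I read the $\|\cdot\|_\infty$ norms on the moment operators as the spectral norm on $\mathcal H_t$, consistent with the chapter's convention.

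The main obstacle is the (1)$\Leftrightarrow$(4) step. The risk is not the argument itself but making sure Stromberg's necessity direction is stated for exactly the present notion of strict aperiodicity (support in a coset of a proper closed normal subgroup) and for weak-$*$ rather than total-variation convergence. If the citation needs reinforcing, I would argue via Fourier coefficients directly. For necessity: if $\operatorname{supp}\nu\subset xN$, take a nontrivial irreducible of $K/N$; then $\widehat{\nu^{*L}}$ is unitary in that representation for every $L$ and cannot tend to $0$. If instead $\nu$ fails to be adapted, $\widehat\nu$ has a nonzero fixed vector in some nontrivial irreducible that is trivial on $[\operatorname{supp}\nu]$. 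For sufficiency, I would use the equality case of norm-convexity discussed in the previous section: a unimodular eigenvector of $\widehat\nu(\pi)$ forces $\pi(g)v=\lambda v$ for $\nu$-a.e.\ $g$, and this puts $\operatorname{supp}\nu$ inside a coset of a proper closed normal subgroup unless $\pi$ is trivial.
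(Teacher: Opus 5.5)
Your proposal is correct, and on the equivalences it matches the paper. You prove (2)$\Leftrightarrow$(3)$\Leftrightarrow$(4) via Proposition~\ref{prop:moment-observable} with Lemma~\ref{lem:structure}(3)--(4), and (1)$\Leftrightarrow$(4) by citing Kawada--It\^o/Stromberg on $G_{\mathrm{obs}}$. The factorization $\mathcal A^{(t)}_{\mathcal E_L}=(\mathcal A^{(t)}_{\mathcal E_1})^L$ and the self-adjointness argument in the symmetric case are also the same as the paper's.

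The genuine difference is how you obtain the spectral gap. The paper derives it from hypothesis (1). It applies Lemma~\ref{lem:spectral} to $\widetilde\rho^{(t)}$ on $G_{\mathrm{obs}}$: adaptedness turns any unimodular eigenvector of $M=\mathcal M^{(t)}_{\mathcal E_1}$ into a character line, and strict aperiodicity forces that character to be trivial. Hence $\lambda=1$ and $\ker(M-I)=\operatorname{im}P_t$.

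You derive the gap from hypothesis (2) by finite-dimensional linear algebra. In fact $(M-P_t)^L=M^L-P_t\to0$ already forces $r(M-P_t)<1$ in one line, so your steps 3--4 could be collapsed. Your route is shorter and shows that, in finite dimensions, any convergence of the moment operators is automatically geometric. However, when one starts from (1), your rate rests on the cited sufficiency half of Kawada--It\^o. The paper's route gets the gap directly from the ergodic hypotheses, so it in effect reproves (1)$\Rightarrow$(2) with a rate independently of that citation. It also exposes the character mechanism reused in Corollary~\ref{cor:character-test}.

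One small point: for the symmetric identity $\|\mathcal A^{(t)}_{\mathcal E_1}\|_\infty=r_t$ you need equality of spectral radii, not just your inclusion. Equality holds because $\operatorname{spec}M\setminus\{1\}=\operatorname{spec}(M|_{\ker P_t})\subset\operatorname{spec}(M-P_t)$.

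If you fall back on your direct Fourier argument for (1)$\Leftrightarrow$(4), tighten two steps.
\begin{enumerate}
\item For sufficiency, the equality case only places $\operatorname{supp}\nu$ in a coset $x_0\operatorname{Stab}(v)$ of a closed subgroup that need not be normal. You must invoke adaptedness, as in Lemma~\ref{lem:spectral}(3), to make $\mathbb C v$ invariant. Only then is an irreducible with a unimodular eigenvalue a character $\chi$, which strict aperiodicity excludes via $\operatorname{supp}\nu\subset x_0\ker\chi$.
\item For the non-adapted case, what you need is a nontrivial irreducible containing a nonzero vector fixed by $H=[\operatorname{supp}\nu]$. Such an irreducible exists by Peter--Weyl applied to nonconstant functions on $K/H$. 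An irreducible that is trivial on $H$ is a different, stronger requirement, and it need not exist (e.g.\ $K=SO(3)$, $H=SO(2)$).
\end{enumerate}
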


\begin{proof}
Write
\[
M_t := \mathcal M^{(t)}_{\mathcal E_1},
\qquad
P_t := P^{(t)}=\mathcal M^{(t)}_{\mu_G}.
\]
By Lemma~\ref{lem:structure}(1)--(3), for every probability measure \(\mu\) on \(G\),
\begin{equation}
\label{eq:obs-identities}
\Phi_*(\mu*\nu')=\Phi_*\mu*\Phi_*\nu',
\qquad
\Phi_*\mu_G=\mu_K,
\qquad
\mathcal M^{(t)}_\mu
=
\widehat{\Phi_*\mu}\big(\widetilde\rho^{(t)}\big).
\end{equation}
In particular,
\begin{equation}
\label{eq:obs-powers-proof}
\Phi_*\mathcal E_L
=
(\Phi_*\mathcal E_1)^{*L}
=
\nu^{*L}.
\end{equation}
Next, by Ragone's convolution identity for moment operators
\cite{ragone_lie_2024},
\begin{equation}
\label{eq:moment-power-proof}
\mathcal M^{(t)}_{\mathcal E_L}
=
(\mathcal M^{(t)}_{\mathcal E_1})^L.
\end{equation}
Also, by Haar averaging
\cite{ragone_lie_2024},
\begin{equation}
\label{eq:haar-proj-proof}
P_t
=
\int_G d\mu_G(g) \, \rho^{(t)}(g)
\end{equation}
is the orthogonal projection onto
\[
(\mathcal H_t)^G
:=
\{v\in\mathcal H_t \, \vert \,\rho^{(t)}(g)v=v \; \; \forall g\in G\}.
\]
Hence
\begin{equation}
\label{eq:absorption-proof}
\mathcal M^{(t)}_{\mathcal E_1}P_t=P_t\mathcal M^{(t)}_{\mathcal E_1}=P_t.
\end{equation}
Combining \eqref{eq:moment-power-proof} and \eqref{eq:absorption-proof},
\begin{equation}
\label{eq:error-factor-proof}
\mathcal A^{(t)}_{\mathcal E_L}
=
\mathcal M^{(t)}_{\mathcal E_L}-P_t
=
(\mathcal M^{(t)}_{\mathcal E_1})^L-P_t
=
(\mathcal M^{(t)}_{\mathcal E_1}-P_t)^L
=
\big(\mathcal A^{(t)}_{\mathcal E_1}\big)^L.
\end{equation}

\noindent\emph{(1) \(\Leftrightarrow\) (4).}
By \eqref{eq:obs-powers-proof}, condition (4) is exactly
\[
\nu^{*L}\longrightarrow\mu_K
\qquad\text{weak-*}\text{ on } G_{\operatorname{obs}}.
\]
This is precisely the classical Kawada--It\^o theorem on the compact group \(G_{\operatorname{obs}}\)
(\cite{KawadaIto1940}; more generally \cite{Stromberg1960}), so (4) holds if
and only if \(\nu\) is adapted and strictly aperiodic, i.e. if and only if (1)
holds.

\noindent\emph{(2) \(\Leftrightarrow\) (3) \(\Leftrightarrow\) (4).}
This is Proposition~\ref{prop:moment-observable}, applied with
\[
\mu_L=\mathcal E_L,
\qquad
\mu_\infty=\mu_G,
\]
using \eqref{eq:obs-powers-proof} and \(\Phi_*\mu_G=\mu_{G_{\operatorname{obs}}}\) from
\eqref{eq:obs-identities}. This proves the equivalence of (2), (3), and (4),
hence of all four conditions.

\noindent\emph{General quantitative estimate.}
Assume now that the equivalent conditions hold, and fix \(t\ge1\).
By part (5) of Lemma~\ref{lem:structure},
\begin{equation}
\label{eq:Mt-Fourier-proof}
\mathcal M^{(t)}_{\mathcal E_1}
=
\widehat{\nu}\big(\widetilde\rho^{(t)}\big).
\end{equation}
Since \(\nu\) is adapted and strictly aperiodic on \(G_{\operatorname{obs}}\), Lemma~\ref{lem:spectral}
applied to \(\widetilde\rho^{(t)}\) implies that every eigenvalue of \(\mathcal M^{(t)}_{\mathcal E_1}\) on
the unit circle is equal to \(1\), and that the \(1\)-eigenspace is exactly the
subspace of \(G_{\operatorname{obs}}\)-invariant vectors. Because
\(\widetilde\rho^{(t)}\circ\Phi=\rho^{(t)}\) and \(\Phi\) is surjective, this is
precisely \((\mathcal H_t)^G\), hence
\begin{equation}
\label{eq:1eigenspace-proof}
\ker(\mathcal M^{(t)}_{\mathcal E_1}-I)
=
(\mathcal H_t)^G
=
\operatorname{im}P_t.
\end{equation}
Therefore \(\mathcal A^{(t)}_{\mathcal E_1}=\mathcal M^{(t)}_{\mathcal E_1}-P_t\) vanishes on \(\operatorname{im}P_t\),
while on \(\operatorname{im}(I-P_t)\) it agrees with the restriction
\[
\mathcal M^{(t)}_{\mathcal E_1}\big|_{\operatorname{im}(I-P_t)}.
\]
By \eqref{eq:1eigenspace-proof}, this restriction has no eigenvalue of modulus
\(1\). Since \(\mathcal H_t\) is finite-dimensional, it follows that
\[
r\!\left(\mathcal A^{(t)}_{\mathcal E_1}\right)
=
r_t
<
1,
\]
where \(r_t\) is the quantity in \eqref{eq:spectral-radius}. Gelfand's formula and enlargement of constants over finitely many initial powers therefore yields that for every \(r\in(r_t,1)\) there is a \(C_{t,r}>0\) such that
\begin{equation}
\label{eq:gelfand-proof}
\big\|\big(\mathcal A^{(t)}_{\mathcal E_1}\big)^L\big\|_\infty
\le
C_{t,r}\,r^L
\qquad\text{for all }L\ge1.
\end{equation}
Combining \eqref{eq:gelfand-proof} with \eqref{eq:error-factor-proof} gives
\eqref{eq:gelfand-rate}.
Thus, if
\[
L \ge \frac{\log C_{t,r}+\log(1/\varepsilon)}{\log(1/r)},
\]
then \(C_{t,r}r^L\le \varepsilon\), proving \eqref{eq:gelfand-depth}.

Now assume in addition that
\[
\mathcal E_1(B)=\mathcal E_1(B^{-1})
\qquad\text{for every Borel }B\subset G.
\]
Under what amounts to this condition, Ragone et al. proves that \(\mathcal M^{(t)}_{\mathcal E_1}=\mathcal M^{(t)}_{\mathcal E_1}\) is self-adjoint
with respect to the Hilbert--Schmidt inner product
\cite{ragone_lie_2024}. Since \(P_t\) is an orthogonal projection and
\(\mathcal M^{(t)}_{\mathcal E_1}P_t=P_t\mathcal M^{(t)}_{\mathcal E_1}=P_t\), the operator
\[
\mathcal A^{(t)}_{\mathcal E_1}=\mathcal M^{(t)}_{\mathcal E_1}-P_t
\]
is also self-adjoint. Hence its operator norm equals its spectral radius, so
\[
\big\|\mathcal A^{(t)}_{\mathcal E_1}\big\|_\infty
=
r\!\left(\mathcal A^{(t)}_{\mathcal E_1}\right)
=
r_t
<
1.
\]
Using \eqref{eq:error-factor-proof},
\[
\mathcal A^{(t)}_{\mathcal E_L}
=
\big(\mathcal A^{(t)}_{\mathcal E_1}\big)^L,
\]
and self-adjointness gives
\[
\big\|\mathcal A^{(t)}_{\mathcal E_L}\big\|_\infty
=
\big\|\mathcal A^{(t)}_{\mathcal E_1}\big\|_\infty^L
=
r_t^L,
\]
which is \eqref{eq:ragone-rate}. The depth bound \eqref{eq:ragone-depth} is then
immediate.
\end{proof}

\begin{remark}\label{rem:qaoaHaar}
These hypotheses are not abstract: for a randomized QAOA model with
full-support angle sampling, the one-layer measure is adapted and strictly aperiodic on the
compact group generated by the DLA even before considering the observable group, so the convolution powers converge to Haar in the $L \rightarrow \infty$ limit by \Cref{lem:descent-ergodic} (see \Cref{appendix:qaoaHaar}).  In fact, by essentially the same argument as \Cref{appendix:qaoaHaar}, it is
not hard to see that any fully-supported angle sampling law on the corresponding compact parameter torus for a one-layer VQA will converge as a channel ensemble
to Haar over the dynamical Lie group as $L \rightarrow \infty$. That said, since in algorithmic design we are not constrained to keep our measure the same layer-by-layer nor possess observable adaptedness or observable strict aperiodicity (for instance,
one could selectively decide which subset of layers to optimize with a Bayesian algorithm),
for if this is what instills a deadly Haar convergence, then we should simply avoid strategies
that implicitly permit such choices of measure.
\end{remark}

Relatedly, when Fontana et al. prove approximate 2-design mixing for a specific randomized LASA construction obtained by sampling from an orthogonal basis of DLA generators with uniformly distributed angles over full periods, they also explicitly note, immediately after their Theorem 2.6, that this assumption of approximate 2-design mixing may fail for other initialization schemes or correlated parameters \cite{fontana_adjoint_2024}. This result, \Cref{thm:observable-kawada-ito}, shows precisely when this is the case.

The following practical criterion reduces the verification of (i) to a
character computation, in the spirit of the abelian character discussion of \Cref{sec:variance-inequalities-non-haar}.
\begin{corollary}[Character test for the observable hypotheses]
\label{cor:character-test}
Let \(\nu = \Phi_*\mathcal E_1\) be the observable pushforward of the
one-layer law, viewed as a probability measure on \(G_{\mathrm{obs}}\), and
suppose \(\nu\) is adapted. Then \(\nu\) is strictly aperiodic if and only if
\[
\Big| \int_{G_{\mathrm{obs}}} \chi(x)\, d\nu(x) \Big| < 1
\qquad\text{for every nontrivial continuous character }
\chi : G_{\mathrm{obs}} \to \mathbb T .
\]
Equivalently, \(\mathcal E_1\) satisfies condition~\textnormal{(i)} of
\Cref{thm:observable-kawada-ito} if and only if
\[
[\operatorname{supp}\Phi_*\mathcal E_1] = G_{\mathrm{obs}}
\]
and no nontrivial continuous character of \(G_{\mathrm{obs}}\) has
unit-modulus mean under \(\Phi_* \mathcal E_1\).
\end{corollary}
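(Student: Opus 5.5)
The plan is to prove the two directions of the equivalence separately on the compact group \(K := G_{\mathrm{obs}}\) with adapted \(\nu\). The second formulation (``equivalently'') then follows immediately from \Cref{def:adapted-aperiodic} and the first.

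\emph{Necessity} will be argued by contraposition. Suppose some nontrivial continuous character \(\chi\) has \(\bigl|\int\chi\,d\nu\bigr| = 1\). Since \(|\chi|=1\), the equality case of the triangle inequality for a probability measure gives \(\chi(x)=\lambda\) for \(\nu\)-a.e. \(x\), with \(\lambda = \int \chi\, d\nu\in\mathbb T\). By continuity this holds on all of \(\operatorname{supp}\nu\). Set \(N:=\ker\chi\), which is a closed normal subgroup and proper because \(\chi\) is nontrivial. Pick any \(x_0\in\operatorname{supp}\nu\). Then every \(x\in\operatorname{supp}\nu\) satisfies \(\chi(x_0^{-1}x)=1\), so \(\operatorname{supp}\nu\subset x_0N\), and \(\nu\) fails to be strictly aperiodic. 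This step is routine and mirrors the abelian discussion in \Cref{sec:variance-inequalities-non-haar}.

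\emph{Sufficiency} is the main obstacle. Assume \(\nu\) is adapted but not strictly aperiodic, so \(\operatorname{supp}\nu\subset xN\) for some proper closed normal \(N\). I must produce a nontrivial character with unit-modulus mean. The first step is to pass to the quotient \(Q:=K/N\), a nontrivial compact group on which the image \(\bar\nu\) of \(\nu\) is the point mass \(\delta_{\bar x}\). Adaptedness of \(\nu\) descends, as in the proof of \Cref{lem:descent-ergodic}, to give \(Q=[\{\bar x\}]\), so \(Q\) is monothetic and in particular abelian. A nontrivial compact abelian group has a nontrivial continuous character \(\bar\chi\), by Pontryagin duality or Peter--Weyl (the irreducibles of an abelian group are one-dimensional and separate points). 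Then \(\chi:=\bar\chi\circ(K\to Q)\) is a nontrivial character of \(K\), constant equal to \(\bar\chi(\bar x)\) on \(xN\supset\operatorname{supp}\nu\), so \(\bigl|\int\chi\,d\nu\bigr|=1\).

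The delicate point is to check that \(Q\) is indeed generated by \(\bar x\). Suppose \(Q'=[\{\bar x\}]\) were proper. Its preimage in \(K\) is a closed subgroup containing \(xN\supset\operatorname{supp}\nu\), hence containing \([\operatorname{supp}\nu]=K\), a contradiction. This is the only place where adaptedness is used. It is essential: without adaptedness, a support inside a non-normal coset could occur with no witnessing character. I would also note that \(Q\) being a compact Lie group is not needed, only compact Hausdorff.
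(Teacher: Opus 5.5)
Your proof is correct and follows the paper's argument essentially step for step. In one direction you use contraposition, showing that a nontrivial \(\chi\) with unit-modulus mean is constant on \(\operatorname{supp}\nu\), which places \(\operatorname{supp}\nu\) in a coset of \(\ker\chi\). In the other you pass to the monothetic (hence abelian) quotient \(G_{\mathrm{obs}}/N\) and pull back a nontrivial character. The only cosmetic differences are that you obtain the constancy of \(\chi\) directly from the equality case of the triangle inequality instead of citing \Cref{lem:spectral}(2), and you verify \(G_{\mathrm{obs}}/N=[\{\bar x\}]\) by a preimage argument instead of the identity \(q([S])=[q(S)]\).
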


\begin{proof}
\emph{(\(\Rightarrow\)\,)} 
Suppose there exists a nontrivial character
\(\chi : G_{\mathrm{obs}}\to\mathbb T\) with
\(|\widehat\nu(\chi)| = 1\). Then \(\widehat\nu(\chi) = \lambda\) for some
\(\lambda\in\mathbb C\) with \(|\lambda|=1\); this is exactly the
one-dimensional case of a unit-modulus eigenvalue. By Lemma~\ref{lem:spectral}(b) applied to the representation \(\chi\),
we must have
\[
\chi(x) \equiv \lambda
\qquad\text{for all }x\in \operatorname{supp}\nu.
\]
Fix \(x_0 \in \operatorname{supp}\nu\). For any
\(x \in \operatorname{supp}\nu\),
\[
\chi(x_0^{-1}x)
=
\overline{\chi(x_0)}\,\chi(x)
=
\overline{\lambda}\,\lambda
=
1.
\]
Hence \(x_0^{-1}x \in \ker\chi\), so
\[
\operatorname{supp}\nu \subset x_0 \ker\chi.
\]
Since \(\chi\) is nontrivial, \(\ker\chi\) is a proper closed normal
subgroup of \(G_{\mathrm{obs}}\). Thus \(\nu\) is supported in a coset of a
proper closed normal subgroup, so \(\nu\) is not strictly aperiodic.

\emph{(\(\Leftarrow\)\,)}
Conversely, suppose \(\nu\) is not strictly aperiodic. Then there exist a
proper closed normal subgroup
\(N \triangleleft G_{\mathrm{obs}}\) and some \(x_0\in G_{\mathrm{obs}}\)
such that
\[
\operatorname{supp}\nu \subset x_0 N.
\]
Let
\[
q : G_{\mathrm{obs}} \to Q := G_{\mathrm{obs}}/N
\]
be the quotient map. Then
\[
q\big(\operatorname{supp}\nu\big) = \{\bar x_0\},
\]
where \(\bar x_0 = q(x_0)\). Since \(\nu\) is adapted,
\[
[\operatorname{supp}\nu] = G_{\mathrm{obs}},
\]
and therefore
\[
Q
=
q\big([\operatorname{supp}\nu]\big)
=
\big[q(\operatorname{supp}\nu)\big]
=
[\{\bar x_0\}].
\]
In other words, \(Q\) is the closure of the cyclic subgroup generated by
\(\bar x_0\). Hence \(Q\) is a nontrivial compact abelian group. Every nontrivial compact abelian group admits a nontrivial continuous
character \(\eta : Q \to \mathbb T\)
(see, e.g., \cite[Cor.~4.41 or Thm.~5.11]{Folland1995Harmonic}). Define
\[
\chi := \eta \circ q : G_{\mathrm{obs}} \to \mathbb T.
\]
Then \(\chi\) is a nontrivial continuous character of \(G_{\mathrm{obs}}\). Moreover, for any \(x \in x_0 N\),
\[
\chi(x)
=
\eta(q(x))
=
\eta(\bar x_0),
\]
so \(\chi\) is constant on \(x_0 N\), and in particular constant on
\(\operatorname{supp}\nu\subset x_0 N\). Therefore
\[
\left|\int_{G_{\mathrm{obs}}} \chi(x)\, d\nu(x)\right|
=
|\eta(\bar x_0)|
=
1.
\]
Thus failure of strict aperiodicity produces a nontrivial character with
unit-modulus mean. This proves the contrapositive of the desired
implication.
\end{proof}

\begin{remark}[Consistency with the counterexample of Appendix~A]
\label{rem:appendix-A}
Corollary~\ref{cor:character-test} places the cycle-graph QAOA
counterexample of Appendix~A in its natural context.  There, the central
torus of the DLA gives rise to genuinely \emph{visible} one-dimensional
character sectors of the represented group---characters of
$G_{\mathrm{obs}}$ in the present language, since they occur inside the
balanced representations $\rho^{(t)}$---and the offending parameter
distributions are exactly those whose observable pushforward has
unit-modulus mean on such a character.  By
\Cref{thm:observable-kawada-ito}, this is not merely a counterexample to
Lemma~6 of \cite{ragone_lie_2024}: it identifies the \emph{only} mechanism by
which an observably adapted ensemble can fail to converge to the design
hierarchy.  Invisible sectors (characters of $G$ nontrivial on the phase
kernel $Z$) are, by contrast, irrelevant to design convergence---they
cannot obstruct it, but their mixing can also never be certified by
moments.
\end{remark}

\section{Implications and Interpretation of Observable Kawada-It\^o}

The preceding discussion and proof of \Cref{thm:observable-kawada-ito} shows that, under the observable adaptedness and observable strict aperiodicity hypotheses of observable Kawada--It\^o (or more strictly, by \Cref{lem:descent-ergodic}, adaptedness and strict aperiodicity), repeated convolution of the one-layer sampling measure converges to the Haar measure on the compact Lie group generated by the underlying dynamical Lie algebra. In particular, if $\mu_{1,N}$ denotes the one-layer measure for an $N$-qubit variational architecture and $G_N$ denotes the corresponding compact dynamical Lie group, then for each fixed system size $N$ one has
\[
\mu_{1,N}^{*L}\xrightarrow[L\to\infty]{}\mu_{G_N}
\]
in the weak sense with respect to the observable moment algebra, and hence the associated moment operators converge to their Haar values. Consequently, the Lie-algebraic variance formula of Ragone et al. over the dynamical Lie group is asymptotically recovered whenever the hypotheses of observable Kawada-It\^o are satisfied.

It is important, however, not to overinterpret this asymptotic statement. The convergence just described is a limit in the circuit depth $L$ for each fixed $N$; it is not a statement that a single depth bound, or even a single approximation error $\varepsilon$, may be chosen uniformly in the system size. In particular, the depth required for $\mu_{1,N}^{*L}$ to become $\varepsilon$-close to Haar on $G_N$ may itself depend strongly on $N$. This dependence is essential for reconciling the DLA-based asymptotic picture with the depth-aware Krylov--Lie framework developed in the present work.

We now make precise the dimension-theoretic observation underlying this reconciliation. Let $\Theta_{N,L}\subset \mathbb{R}^{p(N,L)}$ denote the parameter space of a depth-$L$ variational ansatz on $N$ qubits, where $p(N,L)$ is the number of independent real parameters. Let
\[
F_{N,L}:\Theta_{N,L}\longrightarrow M_{N,L}
\]
denote the smooth map sending parameters to the corresponding point in the reachable manifold $M_{N,L}$ of states or represented unitaries. Since $M_{N,L}$ is the image of a smooth map from a $p(N,L)$-dimensional manifold, one always has
\[
\dim M_{N,L}\leq p(N,L).
\]
Hence, for any architecture whose parameter count satisfies $p(N,L)=\mathrm{poly}(N,L)$, the reachable manifold has at most polynomial dimension whenever $L=\mathrm{poly}(N)$.

This elementary geometric fact has a substantial consequence for the interpretation of barren plateau results. In either of the dimension-matched or dimension-reduced Krylov-Lie regimes of \Cref{chap:krylov-lie-approximations}, one chooses a compact BCH-matched Krylov-Lie group $K_{N,L}$ together with a comparison map
\[
\Phi_{N,L}:M_{N,L}\longrightarrow K_{N,L}
\]
such that $\dim K_{N,L} \leq \dim M_{N,L}$ and $\Phi_{N,L}$ has full rank almost everywhere. Under these hypotheses, the pushforward of the sampling measure on $M_{N,L}$ is absolutely continuous with respect to Haar measure on $K_{N,L}$, so that the Radon-Nikodym density required in \Cref{thm:weighted_ragone_compatible_variance} exists and the weighted variance formula applies for the reduced loss with error factorially small in the Krylov-Lie grade (\Cref{prop:KLG-concentration-transport}). But in this same regime one necessarily has
\[
\dim K_{N,L} \leq \dim M_{N,L}\leq p(N,L).
\]
Therefore, if $p(N,L)$ is polynomial in $N$, then the depth-aware reference group on which our variance theory is built is likewise at most polynomial-dimensional. Thus, by \Cref{thm:weighted_ragone_compatible_variance}, the fastest decay one can obtain in a circuit whose parameter count is polynomial in $N$ solely from Lie-algebraic, Haar-type sources is in fact polynomial decay, not exponential. Of course, these sectors may very well be regions where training is slower and which are undesirable for practical implementation, but they are \textit{not} barren plateaus as traditionally defined (\Cref{def:barren-plateau}). Unless the factorially small error in the Krylov-Lie grade of \Cref{prop:KLG-concentration-transport} is large enough relative to the magnitude of the variance of the reduced loss over the BCH-matched KLG to deviate away from the Krylov-Lie predictions and into a BP---something that is highly unlikely to occur at even modest depths provided that the cardinality of the generator set is not sizeable relative to the depth---we will not have a dimension-induced Lie-algebraic BP at polynomial depth.
 
This observation shows that one must distinguish two genuinely different regimes.

First, there is the \emph{asymptotic DLA-Haar regime}. Here, one fixes $N$, lets $L\to\infty$, and uses Kawada-It\^o to conclude convergence of the channel ensemble to the Haar measure on the full compact dynamical Lie group $G_N$. When the Lie algebra has large simple components, the corresponding Haar variance may be exponentially small in $N$, as in the Lie-algebraic barren plateau theory.

Second, there is the \emph{finite-depth, dimension-aware regime}. Here one studies a fixed depth scaling $L=L(N)$, typically polynomial in $N$, and models the actual reachable ensemble through a dimension-matched Krylov-Lie group $K_{N,L}$. In this setting, the group dimension is controlled by the parameter count, and hence cannot itself account for an exponential suppression arising solely from the size of the full DLA. Put differently, for polynomial-depth, polynomial-parameter ans\"atze, one cannot faithfully regard the circuit as already exploring an exponential-dimensional manifold simply because the DLA generated by the architecture is exponentially large.

This does \emph{not} imply that barren plateaus cannot occur at shallow depth. Rather, it clarifies that the mechanism responsible for such concentration need not be, and in general should not be interpreted as, the dimension of the full DLA alone. Indeed, shallow barren plateaus may still arise from the structure of the observable, the initial state, or the blockwise design properties of the ansatz. This is entirely consistent with the cost-function-dependent results of Cerezo et al. \cite{cerezo_cost_2021}, where global observables can exhibit exponentially small gradient variance even at shallow depth, while local costs remain trainable up to significantly greater depths. The point is that such shallow-depth concentration is driven by the observable and state sectors being probed and by local design behavior on subsystems, not by a faithful exploration of the full DLA group by a polynomial-depth ansatz.

The distinction may be summarized as follows. Convergence to Haar on the full DLA group is an $L\to\infty$ statement for each fixed $N$. By contrast, the present Krylov--Lie theory controls the geometry and measure of the ensemble actually explored at finite depth. If $L=L(N)$ grows only polynomially, then the reachable manifold remains at most polynomial-dimensional whenever the parameterization does, and any dimension-matched Krylov-Lie model must reflect this fact. Consequently, the exponentially small Haar variance associated with a large DLA should be understood as an asymptotic limiting value, not as a uniform finite-depth prediction valid throughout the polynomial-depth regime.

One should therefore resist the heuristic inference that the mere existence of an exponentially large dynamical Lie algebra forces a barren plateau for every polynomial depth. Such a conclusion would require a depth bound ensuring sufficiently rapid convergence to Haar on $G_N$ uniformly in $N$, together with compatibility between the observable-state pair and the high-dimensional sectors of the adjoint representation that actually contribute to the variance. Absent such additional hypotheses, the finite-depth landscape is more faithfully described by the non-Haar weighted variance formula of \Cref{thm:weighted_ragone_compatible_variance}, whose correction terms encode precisely which visible sectors remain unsuppressed.

This perspective also elucidates that the failure of Lemma 6 in the argument of Ragone et al. is not a minor technicality but a structural issue. Without a valid mechanism forcing uniform contraction of the nontrivial Fourier sectors, one cannot conclude that polynomial-depth circuit ensembles have already entered the full DLA-Haar regime. The Krylov-Lie framework isolates the finite-depth geometry and shows that deviations from Haar may persist in precisely those sectors that control trainability. Thus, while Haar-induced concentration on the full DLA may indeed emerge asymptotically under suitable ergodic hypotheses, the practically relevant shallow-depth regime must be analyzed through the depth-aware geometry of the reachable manifold rather than through the dimension of the ambient DLA alone.

\begin{remark}
The preceding discussion should be interpreted as a statement about the measure-faithful regime of the present theory. If one instead chooses to embed the reachable manifold into a larger canonical Lie group whose dimension exceeds that of the manifold, or even more simply when we do not have a pushforward who rank is full almost everywhere (frequently the case for pathological sampling measures), then the pushforward measure generally has singular components with respect to Haar and the Radon-Nikodym formalism of \Cref{sec:moment-formulas} no longer applies directly. Such enlarged ambient models, as well as more general ones with singular components, may still be heuristically useful, but they describe a different regime from the one treated rigorously in this thesis.
\end{remark}

\chapter{Discussion of Limitations and Possible Extensions}\label{chap:discussion}

The central contribution of this thesis has been the introduction of Krylov--Lie
structures as depth-aware approximators for the reachable manifolds of variational quantum
algorithms. By replacing the usual appeal to asymptotic Haar-randomness with a family of
finite-depth, seed-dependent Lie-theoretic models, the preceding chapters developed a
framework that is both more geometric and more sensitive to the actual circuit structure
than existing approaches based purely on dynamical Lie algebras or approximate
\(t\)-design heuristics. In particular, the thesis established that represented
Krylov--Lie groups provide canonical compact reference spaces carrying Haar measures,
derived seed-stratification and genericity results for the associated Krylov--Lie
subspaces and algebras, proved approximation results for the reachable manifold on compact
subsets (especially of the full-rank locus), and then used these structures to obtain weighted moment
and variance formulas for non-Haar loss landscapes on the resulting compact groups. These
results suggest that finite-depth geometric structure, rather than asymptotic randomness
alone, should play a central role in any realistic theory of trainability for shallow or
intermediate-depth variational circuits.

At the same time, the present theory is still incomplete. Several of the most powerful results
obtained here are existence, regularity, or approximation statements, and they therefore
raise natural questions about constructive algorithms, sharper structural classifications,
and extensions to broader physical settings. In particular, while the Lie-theoretic
framework developed in this thesis is well adapted to noiseless unitary dynamics, it is
not yet sufficient for a full treatment of noise, dissipative evolution, or optimization
procedures that interact nontrivially with the geometry of the ansatz. The purpose of this
chapter is therefore to summarize the principal limitations of the current framework and to
highlight the most important directions in which it should be extended.

\section{The Need for Jordan Structure}

One of the clearest limitations of the present theory is that it is fundamentally
Lie-theoretic. This is natural for unitary variational circuits, since their infinitesimal
structure is governed by commutators and the resulting reachable symmetries are encoded by
Lie algebras and compact Lie groups. However, once one attempts to study open-system
effects, noise channels, or more general dissipative quantum dynamics, the Lie bracket
alone is no longer sufficient to capture the relevant operator structure.

The basic reason is that the Lindbladian generator of a quantum Markov semigroup naturally
contains both commutator and anticommutator contributions. The Hamiltonian part of the
evolution is Lie-theoretic, but the dissipative part is built from products of the form
\[
L_\alpha \rho L_\alpha^\dagger
\qquad \text{and} \qquad
\{L_\alpha^\dagger L_\alpha,\rho\},
\]
so any attempt to construct a genuinely structure-preserving finite-depth approximation to
noisy variational dynamics must accommodate not only Lie brackets but also Jordan-type
products. Put differently, while the present Krylov--Lie framework is tailored to the
antisymmetric algebraic content of unitary evolution, a theory of noisy and nonunitary ansatzes appears
to require a simultaneous treatment of the symmetric algebraic content as well.

For this reason, one of the most important next steps is the development of a
Krylov--Jordan or, more ambitiously, a unified Jordan--Lie framework. Such a theory would
ideally retain the depth-awareness and seed-flexibility of the present construction while
also encoding the symmetric products needed to analyze Lindbladian dynamics, effective
noise models, and open-system trainability.  If one hopes to understand whether the geometric compression ideas developed
here remain useful in realistic noisy devices, then one must identify the correct
finite-dimensional algebraic object that simultaneously sees coherent control and
dissipative drift. Additionally, noisy and nonunitary dynamics may ultimately prove vital to demonstrating quantum advantage by granting faster algorithms in a manner akin to randomized numerical linear algebra, or to how stochastic gradient descent often improves convergence times by effectively incorporating Hessian information without direct computations \cite{Xie2020DiffusionArxiv}.

A related issue is that many observables relevant to trainability are not controlled solely
by the Lie structure of the ansatz, even in the noiseless setting. The present moment and
variance formulas show that compact represented Krylov--Lie groups carry the right Haar
background for analyzing non-Haar sampling laws, but they do not yet furnish an exact
finite-depth analogue of the more fully state-space-sensitive Jordan-algebraic machinery
that appears in other parts of the landscape literature. In this sense, the present
framework should be viewed as a robust geometric and measure-theoretic foundation, but not
yet as the final algebraic language for all aspects of trainability.

\section{Open Problems}

A first major open problem concerns the dimension theory of Krylov--Lie subspaces and
Krylov--Lie algebras. \Cref{chap:krylov-lie-algebras} showed that the relevant seed loci admit determinantal
stratifications and constructible decompositions, and that maximal dimension is generic on
appropriate supports. However, these results do not yet answer the most practically
important dimension questions. How sparse are the attainable dimensions in a fixed ansatz?
When dimension matching is possible, is it rare or abundant? Do the attainable dimensions
typically fill long integer intervals, or do they occur in thin arithmetic or
representation-theoretic patterns? More generally, to what extent can one predict in
advance whether a given family of generators and a chosen commutator depth can realize a
dimension-matched Krylov-Lie structure at all?

These questions are not immaterial for VQA landscape theory. The feasibility of the entire approximation
program depends greatly on whether dimension-matched Krylov-Lie groups are abundant
enough to be found in practice. The present results show that, once a desired dimension is
realized, one often obtains useful genericity statements on the corresponding seed support.
What remains unclear is how difficult it is to reach that support in the first place. A
full theory of attainable dimensions, including lower bounds, sparsity phenomena, and
possible interval results, would therefore substantially strengthen the practical content
of the approximation theorem.

A second open problem is whether the canonical comparison map used in this thesis is in any
meaningful sense optimal. The canonical map is attractive because it is natural, explicit, compatible with the compressed circuit data, and it satisfies the analytic Jacobian
hypothesis needed for the approximation theorems proved earlier. Nevertheless, there is no
obvious inherent reason that it must necessarily be the best possible map from the reachable manifold into a
represented Krylov-Lie group. It may be that other choices of comparison map preserve
more of the geometry, yield larger full-rank loci, improve injectivity properties, or
reduce the ambient approximation error. Accordingly, one of the central conceptual
questions for future work is whether one can find a better representation of the reachable
manifold than the canonical map, or instead prove a genuine optimality theorem for the
canonical construction under natural geometric or operator-theoretic criteria.

A third direction, also of a more geometric flavor, concerns the possibility of reconstructing the original reachable manifold and its statistics from Krylov–Lie structures themselves. In \Cref{chap:krylov-lie-approximations} we observed that, on suitable compact subsets of the full-rank locus, comparison maps into represented Krylov–Lie groups behave like submersions or local diffeomorphisms, and that the pushforward measure admits a well-controlled Radon–Nikodym density. This suggests that one might be able to construct atlases of Krylov–Lie charts whose transition data encode the geometry and measure of the reachable manifold in a purely Krylov-theoretic language, or more broadly to view reachable manifolds as admitting a Lietic structure in the informal sense of \Cref{rem:lietic-spaces}. Determining when such atlases exist, and to what extent they allow faithful reconstruction of manifold-level structure and statistics from Krylov models alone, is a potentially far-reaching open problem.

A fourth direction concerns the role of nonuniform sampling as well as the possibility of
reweighting the loss landscape. The weighted expectation and variance formulas in
\Cref{chap:applications} show that deviations from Haar are encoded by the centered Radon-Nikodym
density and its visible representation-theoretic sectors. This strongly suggests that the
sampling distribution itself can have a substantial effect on the observed trainability of
a variational algorithm. In particular, it raises the possibility that one could
deliberately reweight the effective landscape so as to downweight regions exhibiting barren
plateau behavior and upweight regions with more favorable variance structure. At present,
however, it is not clear whether such a strategy can be made algorithmically meaningful
without already knowing detailed information about the landscape over which one is trying to optimize.
Determining whether there exist principled, a priori reweighting schemes capable of
mitigating barren plateaus is therefore an important open problem. The theory of importance sampling will likely prove relevant here.

A fifth issue is interpretive rather than purely technical: the geometric meaning of
submersions in the Krylov-Lie approximation theory remains only partially understood from
the perspective of the original algorithm. In the present framework, a lower-dimensional
measure-faithful Krylov-Lie approximation naturally leads to a comparison map that is
generically a submersion rather than a local diffeomorphism. This suggests that there are
fiber directions in the reachable manifold that are invisible to the effective compact Lie
geometry. One possible interpretation is that these directions correspond to geometric or
algorithmic redundancy, in the sense that distinct parameter-space motions may induce
equivalent or nearly equivalent behavior at the level captured by the approximating group.
If that interpretation is correct, then submersions may encode a useful notion of
compressibility or overparameterization intrinsic to the ansatz. Further work is
needed to determine whether these hidden fiber directions correlate with trainability,
symmetry, gauge freedom, or optimization degeneracy in the original circuit family.

A sixth open problem concerns the computational meaning of ``good" Krylov-Lie approximations. Since the present framework seeks to approximate reachable manifolds by
lower-dimensional or otherwise structured compact Lie models, it is natural to ask whether
the quality of such approximations correlates with classical simulability. Roughly
speaking, if a variational circuit admits a highly accurate approximation by a small or
highly structured represented Krylov-Lie group, then one might expect the effective
degrees of freedom of the circuit to be limited in a way that makes certain observables,
moments, or optimization trajectories easier to simulate classically. Conversely, the
failure of low-dimensional Krylov--Lie approximations might signal genuinely hard
many-body behavior. At present this relationship is speculative, but it is sufficiently
natural that it deserves systematic investigation, especially in families of ans\"atze where
both classical simulation heuristics and Krylov--Lie constructions can be analyzed
explicitly.

Finally, a more practical but equally important direction is the development of algorithms
for finding good seeds and certifying good approximators. The theoretical results of this
thesis show that genericity statements are available once one works on the correct support,
and that optimal seeds (in the sense of the supremum of the deviation of representatives of a unitary in the reachable manifold in the Hilbert-Schmidt norm) exist after restricting to compact unit-norm seed domains (\Cref{cor:KLA-generic-rank-drop}, \Cref{thm:main_klg_approximation}). However,
an existence theorem is not yet a practical search procedure. Future work should therefore
address the algorithmic side of the theory: how to efficiently construct seeds that pass
finite full-rank tests, how to estimate uniform approximation error without exhaustive computation,
how to detect promising dimensions and depths, and how to couple these tasks to ansatz
design (in particular, assessing how optimal the initial state is as a seed).

The present theory provides a strong first step toward a depth-aware geometric
theory of variational quantum algorithms, but it also leaves a large and mathematically
interesting research program behind. Extending the framework from Lie structure to
Jordan-Lie structure, clarifying the attainable dimensions of Krylov-Lie objects,
characterizing the optimality of comparison maps, understanding reweighting strategies for
trainability, interpreting the geometric meaning of submersions, and relating approximation
quality to classical simulability are all natural next questions. If these directions can
be developed successfully, then Krylov-based approaches may eventually furnish a broader
replacement for purely Haar-centric theories of variational quantum dynamics.

\begingroup
\setstretch{1}
\bibliography{references}

@article{Grad_2024,
   title={Fundamentals of {Lie} categories},
   volume={19},
   ISSN={1661-6960},
   url={http://dx.doi.org/10.4171/JNCG/563},
   DOI={10.4171/jncg/563},
   number={1},
   journal={Journal of Noncommutative Geometry},
   publisher={European Mathematical Society - EMS - Publishing House GmbH},
   author={Grad, {\v{Z}}an},
   year={2024},
   month=Feb, pages={211–248} }

@article{Granirer1965,
  author  = {Granirer, Edmond E.},
  title   = {On the invariant mean on topological semigroups and on topological groups},
  journal = {Pacific Journal of Mathematics},
  volume  = {15},
  number  = {1},
  year    = {1965},
  pages   = {107--140}
}

@article{Argabright1966,
  author  = {Argabright, L. N.},
  title   = {A note on invariant integrals on locally compact semigroups},
  journal = {Proceedings of the American Mathematical Society},
  volume  = {17},
  number  = {2},
  year    = {1966},
  pages   = {377--382}
}

@article{Lawson1994,
author = {Lawson, Jimmie D.},
journal = {Journal für die reine und angewandte Mathematik},
pages = {191-219},
title = {Polar and {Ol'shanskii} decompositions.},
url = {http://eudml.org/doc/153602},
volume = {448},
year = {1994},
}

@article{BCHDynkin,
  author  = {Dynkin, E. B.},
  title   = {The {Baker--Campbell--Hausdorff} formula},
  journal = {Mathematics Surveys and Monographs},
  year    = {1950}
}

@article{KawadaIto1940,
  author  = {Kawada, Yukiyosi and It{\^o}, Kiyosi},
  title   = {On the probability distribution on a compact group. {I}},
  journal = {Proceedings of the Physico-Mathematical Society of Japan.
             3rd Series},
  volume  = {22},
  year    = {1940},
  pages   = {977--998}
}

@article{Stromberg1960,
  author  = {Stromberg, Karl},
  title   = {Probabilities on a compact group},
  journal = {Transactions of the American Mathematical Society},
  volume  = {94},
  year    = {1960},
  pages   = {295--309}
}

@book{Applebaum2014,
  author    = {Applebaum, David},
  title     = {{Probability on Compact Lie Groups}},
  series    = {Probability Theory and Stochastic Modelling},
  volume    = {70},
  publisher = {Springer International Publishing},
  address   = {Cham},
  year      = {2014}
}

@book{BtD1985,
  author    = {Br{\"o}cker, Theodor and Tom Dieck, Tammo},
  title     = {{Representations of Compact Lie Groups}},
  series    = {Graduate Texts in Mathematics},
  volume    = {98},
  publisher = {Springer},
  address   = {New York},
  year      = {1985}
}

@unpublished{EHMQAOAPaper,
  author = {Harrison Copp and Charlton Li and Anžej Margeta-Cacace and Amy Qiao},
  title  = {Dynamical {Lie} Algebras Cannot Describe Shallow {QAOA}: Cragged Terrains, Barren Plateaus, and Empirical Hardness Models},
  note   = {Manuscript in preparation},
  year   = {2026}
}

@book{DummitFoote2004,
  author    = {David S. Dummit and Richard M. Foote},
  title     = {{Abstract Algebra}},
  edition   = {3rd},
  publisher = {John Wiley \& Sons},
  address   = {Hoboken, NJ},
  year      = {2004},
  isbn      = {978-0-471-43334-7}
}

@misc{allcock2026dynamicalliealgebrasquantum,
      title={On the dynamical {Lie} algebras of quantum approximate optimization algorithms}, 
      author={Jonathan Allcock and Miklos Santha and Pei Yuan and Shengyu Zhang},
      year={2026},
      eprint={2407.12587},
      archivePrefix={arXiv},
      primaryClass={quant-ph},
      doi={https://doi.org/10.22331/q-2026-05-29-2119},
      url={https://arxiv.org/abs/2407.12587}, 
}

@misc{Xie2020DiffusionArxiv,
      title={A Diffusion Theory For Deep Learning Dynamics: Stochastic Gradient Descent Exponentially Favors Flat Minima}, 
      author={Zeke Xie and Issei Sato and Masashi Sugiyama},
      year={2021},
      eprint={2002.03495},
      archivePrefix={arXiv},
      primaryClass={cs.LG},
      url={https://arxiv.org/abs/2002.03495}, 
}

@article{NoteKawadaIto2022,
title = {A note on the {K}awada–{I}tô theorem},
journal = {Statistics \& Probability Letters},
volume = {181},
pages = {109261},
year = {2022},
issn = {0167-7152},
doi = {https://doi.org/10.1016/j.spl.2021.109261},
url = {https://www.sciencedirect.com/science/article/pii/S0167715221002236},
author = {Heybetkulu Mustafayev}
}

@misc{NaorConcentration,
  author       = {Assaf Naor},
  title        = {{Concentration of Measure}},
  note         = {Lecture notes, Courant Institute of Mathematical Sciences, Fall 2008. Scribed by Lingjiong Zhu},
  year         = {2008},
  url = {https://web.math.princeton.edu/~naor/homepage%20files/Concentration%20of%20Measure.pdf} 
}

@incollection{Petrogradsky2000,
  author    = {V. M. Petrogradsky},
  title     = {On {Witt's} Formula and Invariants for Free {Lie} Superalgebras},
  booktitle = {Formal Power Series and Algebraic Combinatorics},
  publisher = {Springer},
  year      = {2000},
  pages     = {543--551},
  doi       = {10.1007/978-3-662-04166-6_52}
}

@book{Harris1992,
  author    = {Joe Harris},
  title     = {{Algebraic Geometry: A First Course}},
  series    = {Graduate Texts in Mathematics},
  volume    = {133},
  publisher = {Springer},
  address   = {New York},
  year      = {1992},
  isbn      = {978-0-387-97716-4}
}

@book{BeltramettiAlgebraicGeometry2009,
  author    = {Mauro C. Beltrametti and Ettore Carletti and Dionisio Gallarati and Giacomo Monti Bragadin},
  title     = {Lectures on Curves, Surfaces and Projective Varieties},
  subtitle  = {A Classical View of Algebraic Geometry},
  series    = {EMS Textbooks in Mathematics},
  publisher = {European Mathematical Society},
  address   = {Z{\"u}rich},
  year      = {2009},
  isbn      = {978-3-03719-064-7}
}

@misc{sottile2016realalgebraicgeometry,
      title={Real Algebraic Geometry for Geometric Constraints}, 
      author={Frank Sottile},
      year={2016},
      eprint={1606.03127},
      archivePrefix={arXiv},
      primaryClass={math.AG},
      url={https://arxiv.org/abs/1606.03127}, 
}

@book{bochnak1998realalgebraicgeometry,
  author    = {Jacek Bochnak and Michel Coste and Marie-Fran{\c{c}}oise Roy},
  title     = {Real Algebraic Geometry},
  publisher = {Springer},
  address   = {Berlin, Heidelberg}, 
  year      = {1998}
}

@book{HallLieGroups2015,
  author    = {Brian C. Hall},
  title     = {Lie Groups, Lie Algebras, and Representations: An Elementary Introduction},
  edition   = {2nd},
  series    = {Graduate Texts in Mathematics},
  publisher = {Springer},
  address   = {Cham},
  year      = {2015},
  doi       = {10.1007/978-3-319-13467-3},
  isbn      = {978-3-319-13466-6}
}

@book{antoulas2005approximation,
  author    = {Athanasios C. Antoulas},
  title     = {Approximation of Large-Scale Dynamical Systems},
  publisher = {SIAM},
  address   = {Philadelphia},
  year      = {2005}
}

@book{saad2003iterative,
  author    = {Yousef Saad},
  title     = {Iterative Methods for Sparse Linear Systems},
  edition   = {2nd},
  publisher = {SIAM},
  address   = {Philadelphia},
  year      = {2003}
}

@article{gutknecht2009blockgrade,
  author  = {Martin H. Gutknecht and Thomas Schmelzer},
  title   = {The Block Grade of a Block {Krylov} Space},
  journal = {Linear Algebra and its Applications},
  volume  = {430},
  number  = {1},
  pages   = {174--185},
  year    = {2009}
}

@book{chavel2006riemannian,
  author    = {Isaac Chavel},
  title     = {Riemannian Geometry: A Modern Introduction},
  edition   = {2},
  publisher = {Cambridge University Press},
  year      = {2006},
  isbn      = {9780521853682}
}

@book{evans_gariepy_2015,
  author    = {Lawrence C. Evans and Ronald F. Gariepy},
  title     = {Measure Theory and Fine Properties of Functions},
  edition   = {Revised},
  publisher = {Chapman and Hall/CRC},
  year      = {2015},
  doi       = {10.1201/b18333}
}

@book{Lee2013SmoothManifolds,
  author    = {Lee, John M.},
  title     = {Introduction to Smooth Manifolds},
  edition   = {2nd},
  series    = {Graduate Texts in Mathematics},
  volume    = {218},
  publisher = {Springer},
  address   = {New York},
  year      = {2013},
  isbn      = {9781441999818}
}

@book{Folland1995Harmonic,
  author    = {Folland, Gerald B.},
  title     = {A Course in Abstract Harmonic Analysis},
  publisher = {CRC Press},
  address   = {Boca Raton, FL},
  year      = {1995},
  series    = {Studies in Advanced Mathematics},
  isbn      = {9780849384905}
}

@book{Folland1999RealAnalysis,
  author    = {Folland, Gerald B.},
  title     = {Real Analysis: Modern Techniques and Their Applications},
  edition   = {2nd},
  series    = {Pure and Applied Mathematics},
  publisher = {John Wiley \& Sons},
  address   = {New York},
  year      = {1999},
  isbn      = {9780471317166}
}

@phdthesis{OMeara2014LieTheoretic,
  author       = {O'Meara, Corey Patrick},
  title        = {A Lie Theoretic Framework for Controlling Open Quantum Systems},
  school       = {Technische Universit{\"a}t M{\"u}nchen},
  year         = {2014},
  archivePrefix= {arXiv},
  eprint       = {2510.04719},
  primaryClass = {quant-ph},
  doi          = {10.48550/arXiv.2510.04719},
}

@article{Park2024hamiltonian,
  title     = {Hamiltonian variational ansatz without barren plateaus},
  author    = {Park, Chae-Yeun and Killoran, Nathan},
  journal   = {Quantum},
  volume    = {8},
  pages     = {1239},
  year      = {2024},
  month     = feb,
  doi       = {10.22331/q-2024-02-01-1239},
  url       = {https://doi.org/10.22331/q-2024-02-01-1239},
  issn      = {2521-327X},
  publisher = {Verein zur F{\"o}rderung des Open Access Publizierens in den Quantenwissenschaften}
}

@article{Zhang2024absence,
   title={Absence of Barren Plateaus in Finite Local-Depth Circuits with Long-Range Entanglement},
   volume={132},
   ISSN={1079-7114},
   url={http://dx.doi.org/10.1103/PhysRevLett.132.150603},
   DOI={10.1103/physrevlett.132.150603},
   number={15},
   journal={Physical Review Letters},
   publisher={American Physical Society (APS)},
   author={Zhang, Hao-Kai and Liu, Shuo and Zhang, Shi-Xin},
   year={2024},
   month=Apr 
   }

@article{Holmes2022,
   title={Connecting Ansatz Expressibility to Gradient Magnitudes and Barren Plateaus},
   volume={3},
   ISSN={2691-3399},
   url={http://dx.doi.org/10.1103/PRXQuantum.3.010313},
   DOI={10.1103/prxquantum.3.010313},
   number={1},
   journal={PRX Quantum},
   publisher={American Physical Society (APS)},
   author={Holmes, Zoë and Sharma, Kunal and Cerezo, M. and Coles, Patrick J.},
   year={2022},
   month=Jan }

@article{HafterkampRandomCircuits,
  title     = {Random quantum circuits are approximate unitary
               $t$-designs in depth $O\!\left(nt^{5+o(1)}\right)$},
  author    = {Haferkamp, Jonas},
  journal   = {Quantum},
  volume    = {6},
  pages     = {795},
  year      = {2022},
  month     = {September},
  publisher = {Verein zur F\"{o}rderung des Open Access Publizierens
               in den Quantenwissenschaften},
  doi       = {10.22331/q-2022-09-08-795}
}

@book{kirillov_lie_groups,
  author    = {Kirillov, Jr., Alexander},
  title     = {Introduction to Lie Groups and Lie Algebras},
  publisher = {Cambridge University Press},
  year      = {2008}
}

@article{cerezo_variational_2021,
   title={Variational quantum algorithms},
   volume={3},
   ISSN={2522-5820},
   url={http://dx.doi.org/10.1038/s42254-021-00348-9},
   DOI={10.1038/s42254-021-00348-9},
   number={9},
   journal={Nature Reviews Physics},
   publisher={Springer Science and Business Media LLC},
   author={Cerezo, M. and Arrasmith, Andrew and Babbush, Ryan and Benjamin, Simon C. and Endo, Suguru and Fujii, Keisuke and McClean, Jarrod R. and Mitarai, Kosuke and Yuan, Xiao and Cincio, Lukasz and Coles, Patrick J.},
   year={2021},
   month=Aug, pages={625–644} }

@article{fontana_adjoint_2024,
   title={Characterizing barren plateaus in quantum ansätze with the adjoint representation},
   volume={15},
   ISSN={2041-1723},
   url={http://dx.doi.org/10.1038/s41467-024-49910-w},
   DOI={10.1038/s41467-024-49910-w},
   number={1},
   journal={Nature Communications},
   publisher={Springer Science and Business Media LLC},
   author={Fontana, Enrico and Herman, Dylan and Chakrabarti, Shouvanik and Kumar, Niraj and Yalovetzky, Romina and Heredge, Jamie and Sureshbabu, Shree Hari and Pistoia, Marco},
   year={2024},
   month=Aug }

@misc{brady_iterative_2023,
      title={Iterative Quantum Algorithms for Maximum Independent Set: A Tale of Low-Depth Quantum Algorithms}, 
      author={Lucas T. Brady and Stuart Hadfield},
      year={2023},
      eprint={2309.13110},
      archivePrefix={arXiv},
      primaryClass={quant-ph},
      url={https://arxiv.org/abs/2309.13110}, 
}

@article{ragone_lie_2024,
  author  = {Ragone, Michael and Bakalov, Bojko N. and Sauvage, Fr\'{e}d\'{e}ric
             and Kemper, Alexander F. and Ortiz Marrero, Carlos
             and Larocca, Mart\'{\i}n and Cerezo, M.},
  title   = {A {Lie} algebraic theory of barren plateaus for deep parameterized quantum circuits},
  journal = {Nature Communications},
  volume  = {15},
  pages   = {7172},
  year    = {2024},
  doi     = {10.1038/s41467-024-49909-3},
}

@misc{anschuetz_2025,
      title={A Unified Theory of Quantum Neural Network Loss Landscapes}, 
      author={Eric R. Anschuetz},
      year={2025},
      eprint={2408.11901},
      archivePrefix={arXiv},
      primaryClass={quant-ph},
      url={https://arxiv.org/abs/2408.11901}, 
}

@article{larocca_barren_2025,
   title={Barren plateaus in variational quantum computing},
   volume={7},
   ISSN={2522-5820},
   url={http://dx.doi.org/10.1038/s42254-025-00813-9},
   DOI={10.1038/s42254-025-00813-9},
   number={4},
   journal={Nature Reviews Physics},
   publisher={Springer Science and Business Media LLC},
   author={Larocca, Martín and Thanasilp, Supanut and Wang, Samson and Sharma, Kunal and Biamonte, Jacob and Coles, Patrick J. and Cincio, Lukasz and McClean, Jarrod R. and Holmes, Zoë and Cerezo, M.},
   year={2025},
   month=Mar, pages={174–189} }

@article{cerezo_cost_2021,
	title = {Cost function dependent barren plateaus in shallow parametrized quantum circuits},
	volume = {12},
	copyright = {2021 The Author(s)},
	issn = {2041-1723},
	url = {https://www.nature.com/articles/s41467-021-21728-w},
	doi = {10.1038/s41467-021-21728-w},
	language = {en},
	number = {1},
	urldate = {2025-08-16},
	journal = {Nature Communications},
	author = {Cerezo, M. and Sone, Akira and Volkoff, Tyler and Cincio, Lukasz and Coles, Patrick J.},
	month = mar,
	year = {2021},
	pages = {1791},
}
\endgroup

\appendix
\crefalias{chapter}{appendix}

\chapter{Central-Character Obstructions to Unit Eigenvalues in Cycle-Graph QAOA}\label{appendix:cycle-graph-qaoa}

In this section we show that cycle-graph QAOA on MaxCut presents a natural obstruction to the
usual strict-overlap argument used to deduce a bound of the form $|\lambda|<1$ from mere
non-invariance. The key point is that the dynamical Lie algebra for $C_n$ has a nontrivial
center, and the induced conjugation action of the corresponding central torus on operator
space contains genuine one-dimensional character sectors.

Throughout, let
\[
\mathfrak g:=\mathfrak g_{C_n}\subset \mathfrak{su}(2^n),
\qquad
X:=i\sum_{j=0}^{n-1}X_j,
\qquad
ZZ:=i\sum_{j=0}^{n-1}Z_jZ_{j+1},
\]
where indices are understood modulo $n$. By the cycle-graph QAOA Lie-algebra
classification, one has the reductive decomposition
\[
\mathfrak g
=
\mathfrak c
\oplus
\mathfrak g_1
\oplus\cdots\oplus
\mathfrak g_{n-1},
\qquad
\mathfrak g_k\cong \mathfrak{su}(2),
\qquad
\dim \mathfrak c =2,
\]
and the center is explicitly
\[
\mathfrak c=\operatorname{span}_{\mathbb R}\{c_1,c_2\}.
\]
For odd $n$,
\[
c_1=-X+\sum_{t=1}^{(n-1)/2}\bigl(ZX^{2t-1}Z+YX^{2t-1}Y\bigr),
\qquad
c_2=X^{n-1}+\sum_{t=0}^{(n-3)/2}\bigl(ZX^{2t}Z+YX^{2t}Y\bigr),
\]
whereas for even $n$,
\[
c_1=X^{n-1}-X+\sum_{t=1}^{(n-2)/2}\bigl(ZX^{2t-1}Z+YX^{2t-1}Y\bigr),
\qquad
c_2=\sum_{t=0}^{(n-2)/2}\bigl(ZX^{2t}Z+YX^{2t}Y\bigr).
\]
See \cite{allcock2026dynamicalliealgebrasquantum}.

\begin{lemma}[Joint weight decomposition of the central action]
\label{lem:joint-weight-decomposition}
There exists an orthogonal decomposition
\[
\mathcal H
=
\bigoplus_{\lambda\in\Lambda}\mathcal H_\lambda
\]
into simultaneous eigenspaces of $c_1$ and $c_2$ such that, for every
$\lambda=(\lambda_1,\lambda_2)\in\Lambda$,
\[
c_r|_{\mathcal H_\lambda}=i\lambda_r\,I_{\mathcal H_\lambda},
\qquad r=1,2.
\]
\end{lemma}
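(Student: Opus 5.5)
The plan is to deduce this from the fact that $c_1,c_2$ are two commuting skew-Hermitian operators on the finite-dimensional space $\mathcal H=(\mathbb C^2)^{\otimes n}$, and then apply the spectral theorem for commuting normal operators.

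First, check that $c_1$ and $c_2$ are skew-Hermitian. Each summand in the explicit formulas is $i$ times a Hermitian operator: $X$ and $ZZ$ carry the factor $i$ by definition, and the strings $ZX^{m}Z$, $YX^{m}Y$, $X^{n-1}$ in \cite{allcock2026dynamicalliealgebrasquantum} are, in the same convention, $i$ times translation-invariant sums of Pauli strings. More intrinsically, $\mathfrak c\subset\mathfrak g\subset\mathfrak{su}(2^n)$ by the cited classification, so $c_r^\dagger=-c_r$ and $H_r:=-ic_r$ is Hermitian for $r=1,2$.

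Second, check that $c_1$ and $c_2$ commute. Since $\mathfrak c$ is the center of $\mathfrak g$ and $c_1,c_2\in\mathfrak g$, we have $[c_1,c_2]=0$. The bracket on $\mathfrak g$ is the matrix commutator inside $\mathfrak{su}(2^n)$ (Definition~\ref{def:DLA-and-DLG}), so this is genuine commutation of operators on $\mathcal H$, and therefore $[H_1,H_2]=0$.

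Third, apply the simultaneous spectral theorem. Diagonalize $H_1=\sum_{a}a\,E_a$ with orthogonal spectral projectors $E_a$. Because $H_2$ commutes with $H_1$, it preserves each range $E_a\mathcal H$, and its restriction there is Hermitian. Diagonalizing $H_2$ on each $E_a\mathcal H$ gives orthogonal projectors $E_{(a,b)}$ whose ranges $\mathcal H_{(a,b)}$ are mutually orthogonal and sum to $\mathcal H$. Let $\Lambda$ be the finite set of pairs $(a,b)\in\mathbb R^2$ with $\mathcal H_{(a,b)}\neq0$. On $\mathcal H_\lambda$ with $\lambda=(\lambda_1,\lambda_2)$ we have $H_r=\lambda_r I$, so $c_r=iH_r=i\lambda_r I$, which is the claimed decomposition.

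None of the steps is a real obstacle. The only point that needs care is that the weights $\lambda_r$ are real. This depends on $c_r$ actually lying in $\mathfrak{su}(2^n)$ under the sign and $i$ conventions of the cited $c_1,c_2$, and on the fact that the Hermitian-side transported bracket of Remark~\ref{rem:VQAConventions} differs from the matrix commutator only by a factor of $i$, so it does not affect commutativity. If the formulas are read with Hermitian strings instead, one works with $c_r$ directly, and the eigenvalue statement absorbs the factor of $i$.
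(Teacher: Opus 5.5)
Your proposal is correct and matches the paper's proof: the paper also observes that $c_1,c_2\in\mathfrak{su}(2^n)$ are commuting skew-Hermitian operators, so $-ic_1$ and $-ic_2$ are commuting Hermitian operators that are simultaneously unitarily diagonalizable, and their joint eigenspaces give the decomposition. Your step of diagonalizing $-ic_2$ on each eigenspace of $-ic_1$ just writes out the simultaneous spectral theorem that the paper cites in one line.
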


\begin{proof}
Since $c_1,c_2\in\mathfrak{su}(2^n)$ are skew-Hermitian and commute, the Hermitian
operators $-ic_1$ and $-ic_2$ commute and are therefore simultaneously unitarily
diagonalizable. The joint eigenspace decomposition of $-ic_1$ and $-ic_2$ yields the stated
decomposition for $c_1,c_2$.
\end{proof}

\begin{remark}
Since $\mathfrak c\neq 0$, the set of joint weights $\Lambda$ is nonempty. Moreover, because
$c_1$ and $c_2$ are nonzero traceless operators, at least one of them is non-scalar, so
there exist distinct joint weights whenever the center acts nontrivially on $\mathcal H$.
\end{remark}

Let
\[
T:=\exp(\mathfrak c)
=
\{\exp(\theta_1 c_1+\theta_2 c_2) \, \vert \,(\theta_1,\theta_2)\in\mathbb R^2\}.
\]
This is a compact Abelian subgroup of the represented dynamical Lie group generated by the
center.

\begin{lemma}[Characters from weight differences]
\label{lem:characters-from-weight-differences}
Fix joint weights $\lambda,\mu\in\Lambda$, and let
\[
A_{\lambda\mu}\in \Hom(\mathcal H_\mu,\mathcal H_\lambda)
\subset \End(\mathcal H).
\]
Then for every $(\theta_1,\theta_2)\in\mathbb R^2$,
\[
\exp(\theta_1 c_1+\theta_2 c_2)\,
A_{\lambda\mu}\,
\exp(-\theta_1 c_1-\theta_2 c_2)
=
e^{\,i[(\lambda_1-\mu_1)\theta_1+(\lambda_2-\mu_2)\theta_2]}\,
A_{\lambda\mu}.
\]
Equivalently, the conjugation representation of $T$ on
$\Hom(\mathcal H_\mu,\mathcal H_\lambda)$ is the one-dimensional character
\[
\chi_{\lambda-\mu}(\theta_1,\theta_2)
:=
\exp\!\bigl(i[(\lambda_1-\mu_1)\theta_1+(\lambda_2-\mu_2)\theta_2]\bigr).
\]
\end{lemma}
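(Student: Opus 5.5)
The plan is a direct computation using the joint weight decomposition of \Cref{lem:joint-weight-decomposition}. Write $Y:=\theta_1 c_1+\theta_2 c_2\in\mathfrak c$. By \Cref{lem:joint-weight-decomposition}, for every joint weight $\nu\in\Lambda$ we have
\[
Y|_{\mathcal H_\nu}=i(\lambda\!\cdot\!\theta)_\nu\,I_{\mathcal H_\nu},\qquad (\lambda\!\cdot\!\theta)_\nu:=\nu_1\theta_1+\nu_2\theta_2 .
\]
So $Y$ is block-diagonal, with scalar blocks, with respect to the orthogonal decomposition $\mathcal H=\bigoplus_\nu\mathcal H_\nu$. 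Since the decomposition is orthogonal, the matrix exponential acts blockwise. This gives
\[
e^{\pm Y}=\bigoplus_{\nu\in\Lambda}e^{\pm i(\nu_1\theta_1+\nu_2\theta_2)}\,I_{\mathcal H_\nu}.
\]
The cleanest justification is that $e^{Y}$ is a power series in $Y$, and every power of $Y$ preserves each $\mathcal H_\nu$, where it acts as the scalar $(i\nu\cdot\theta)^k$.

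Next I regard $A_{\lambda\mu}\in\Hom(\mathcal H_\mu,\mathcal H_\lambda)$ as an element of $\End(\mathcal H)$ via the convention $A_{\lambda\mu}=P_\lambda A_{\lambda\mu}P_\mu$, where $P_\nu$ is the orthogonal projector onto $\mathcal H_\nu$. In other words, the operator vanishes on $\mathcal H_\mu^\perp$ and takes values in $\mathcal H_\lambda$. The blockwise form then gives
\[
e^{-Y}P_\mu=e^{-i(\mu_1\theta_1+\mu_2\theta_2)}P_\mu,\qquad e^{Y}P_\lambda=e^{i(\lambda_1\theta_1+\lambda_2\theta_2)}P_\lambda .
\]
Substituting these into
\[
e^{Y}A_{\lambda\mu}e^{-Y}=e^{Y}P_\lambda A_{\lambda\mu}P_\mu e^{-Y}
\]
and using that $e^{\pm Y}$ commutes with the $P_\nu$ produces the product of the two scalars, $e^{i[(\lambda_1-\mu_1)\theta_1+(\lambda_2-\mu_2)\theta_2]}$, multiplying $A_{\lambda\mu}$. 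This is the first claim.

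For the reformulation, I will first note that the displayed identity shows each element of $T=\exp(\mathfrak c)$ maps $\Hom(\mathcal H_\mu,\mathcal H_\lambda)$ into itself under conjugation, and acts there as the scalar $\chi_{\lambda-\mu}(\theta_1,\theta_2)$. Two further checks are needed.

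The first is that $\chi_{\lambda-\mu}$ is a genuine homomorphism $T\to\mathbb T$. On $\mathbb R^2$ this holds because $c_1$ and $c_2$ commute, so $\exp(\theta c)\exp(\theta' c)=\exp((\theta+\theta')c)$ and $\chi_{\lambda-\mu}(\theta+\theta')=\chi_{\lambda-\mu}(\theta)\chi_{\lambda-\mu}(\theta')$. It is well defined on $T$ itself because its value is read off from the conjugation action of the group element $\exp(\theta_1c_1+\theta_2c_2)$, not from the parameters. Any two parameter values giving the same group element therefore give the same scalar whenever $\Hom(\mathcal H_\mu,\mathcal H_\lambda)\neq0$, which holds since joint eigenspaces are nonzero. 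Continuity is then immediate.

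The second is that the scalar action on a space of dimension $\dim\mathcal H_\lambda\dim\mathcal H_\mu$ is an isotypic sum of copies of the one-dimensional character $\chi_{\lambda-\mu}$. This is the precise sense of ``is the one-dimensional character.''

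I expect no serious obstacle. The only delicate point is this well-definedness of $\chi_{\lambda-\mu}$ on the torus $T$ rather than on $\mathbb R^2$, and it is handled by reading the character off the group action as above.
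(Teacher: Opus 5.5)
Your proof is correct and uses the same key ingredient as the paper's, the joint weight decomposition. The paper works infinitesimally: it shows $[\theta_1c_1+\theta_2c_2,A_{\lambda\mu}]=i[(\lambda_1-\mu_1)\theta_1+(\lambda_2-\mu_2)\theta_2]A_{\lambda\mu}$ by testing on $v\in\mathcal H_\mu$ and then exponentiates, whereas you exponentiate blockwise first and then conjugate, and your explicit convention $A_{\lambda\mu}=P_\lambda A_{\lambda\mu}P_\mu$ and the well-definedness check on $T$ make precise two points the paper leaves implicit.
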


\begin{proof}
Let $v\in\mathcal H_\mu$. Since $A_{\lambda\mu}v\in\mathcal H_\lambda$, Lemma
\ref{lem:joint-weight-decomposition} gives
\[
(\theta_1 c_1+\theta_2 c_2)A_{\lambda\mu}v
=
i(\lambda_1\theta_1+\lambda_2\theta_2)A_{\lambda\mu}v,
\]
whereas
\[
A_{\lambda\mu}(\theta_1 c_1+\theta_2 c_2)v
=
i(\mu_1\theta_1+\mu_2\theta_2)A_{\lambda\mu}v.
\]
Hence
\[
[(\theta_1 c_1+\theta_2 c_2),A_{\lambda\mu}]
=
i[(\lambda_1-\mu_1)\theta_1+(\lambda_2-\mu_2)\theta_2]A_{\lambda\mu}.
\]
Since the commutator acts by a scalar on $A_{\lambda\mu}$, exponentiating yields the claim.
\end{proof}

\begin{corollary}[Failure of strict overlap]
\label{cor:strict-overlap-fails}
Assume $\lambda\neq\mu$ and $A_{\lambda\mu}\neq 0$. Then $A_{\lambda\mu}$ is not
$T$-invariant, but for every $(\theta_1,\theta_2)\in\mathbb R^2$ one has
\[
\left|
\left\langle
A_{\lambda\mu},
\exp(\theta_1 c_1+\theta_2 c_2)\,
A_{\lambda\mu}\,
\exp(-\theta_1 c_1-\theta_2 c_2)
\right\rangle_{HS}
\right|
=
\|A_{\lambda\mu}\|_{HS}^2.
\]
In particular, non-invariance alone does not imply the existence of a group element whose
Hilbert--Schmidt overlap with $A_{\lambda\mu}$ is strictly smaller in modulus than
$\|A_{\lambda\mu}\|_{HS}^2$.
\end{corollary}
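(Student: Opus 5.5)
The plan is to read the corollary directly off Lemma~\ref{lem:characters-from-weight-differences}. That lemma shows that conjugation by $\exp(\theta_1 c_1+\theta_2 c_2)$ acts on $A_{\lambda\mu}$ by multiplication by the scalar $\chi_{\lambda-\mu}(\theta_1,\theta_2)$, which has modulus one. Both claims then follow.

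First, non-invariance. Since $\lambda\neq\mu$, some coordinate of $\lambda-\mu$ is nonzero, say $\lambda_r-\mu_r\neq0$. Set the other angle to zero and choose $\theta_r=\pi/(\lambda_r-\mu_r)$. The character then equals $e^{i\pi}=-1$, so the conjugated operator is $-A_{\lambda\mu}$. Because $A_{\lambda\mu}\neq0$, we have $-A_{\lambda\mu}\neq A_{\lambda\mu}$, so $A_{\lambda\mu}$ is moved by an element of $T$ and is not $T$-invariant. Any $\theta$ with $(\lambda-\mu)\cdot\theta\notin 2\pi\mathbb Z$ would serve equally well.

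Second, the overlap identity. Substitute the lemma into the Hilbert--Schmidt pairing. Using sesquilinearity, the scalar pulls out:
\[
\bigl\langle A_{\lambda\mu},\chi_{\lambda-\mu}(\theta)A_{\lambda\mu}\bigr\rangle_{HS}
=\chi_{\lambda-\mu}(\theta)\,\|A_{\lambda\mu}\|_{HS}^2 .
\]
It makes no difference whether the pairing is linear in the second slot or conjugate-linear in the first, since $|\overline{\chi}|=|\chi|$. Taking absolute values and using $|\chi_{\lambda-\mu}(\theta)|=1$, which holds because $\lambda_r,\mu_r,\theta_r$ are all real, gives the stated equality for every $(\theta_1,\theta_2)$.

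The closing ``in particular'' sentence is the contrapositive of these two facts. Every element of $T$ produces an overlap of modulus exactly $\|A_{\lambda\mu}\|_{HS}^2$, yet $A_{\lambda\mu}$ is not invariant. So non-invariance under $T$ does not yield an element with strictly smaller overlap modulus.

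The only point needing care is the realness of the weights, which Lemma~\ref{lem:joint-weight-decomposition} provides: the $c_r$ are skew-Hermitian, so $-ic_r$ are Hermitian with real eigenvalues $\lambda_r$. Beyond that, the argument is routine.
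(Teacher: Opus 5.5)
Your proof is correct and follows the paper's argument: both use Lemma~\ref{lem:characters-from-weight-differences} to reduce conjugation by $\exp(\theta_1 c_1+\theta_2 c_2)$ to multiplication by the unit-modulus character $\chi_{\lambda-\mu}$, and the overlap identity then follows at once. Your explicit choice $\theta_r=\pi/(\lambda_r-\mu_r)$ simply spells out the paper's one-line remark that the character is nontrivial when $\lambda\neq\mu$.
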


\begin{proof}
By Lemma \ref{lem:characters-from-weight-differences},
\[
\exp(\theta_1 c_1+\theta_2 c_2)\,
A_{\lambda\mu}\,
\exp(-\theta_1 c_1-\theta_2 c_2)
=
\chi_{\lambda-\mu}(\theta_1,\theta_2)A_{\lambda\mu},
\]
with $|\chi_{\lambda-\mu}(\theta_1,\theta_2)|=1$. Therefore
\[
\left|
\langle A_{\lambda\mu},\chi_{\lambda-\mu}(\theta_1,\theta_2)A_{\lambda\mu}\rangle_{HS}
\right|
=
|\chi_{\lambda-\mu}(\theta_1,\theta_2)|\,\|A_{\lambda\mu}\|_{HS}^2
=
\|A_{\lambda\mu}\|_{HS}^2.
\]
Since $\lambda\neq\mu$, the character is nontrivial, so $A_{\lambda\mu}$ is not
$T$-invariant.
\end{proof}

The preceding corollary shows that the usual strict overlap argument cannot be justified
solely from the statement that an operator lies outside the invariant subspace. In the
presence of a nontrivial central torus, one must additionally control the Fourier
coefficients of the sampling measure against the characters arising from the weight
differences.

We now identify the central components of the two standard cycle-QAOA generators.

\begin{lemma}[Central projections of the QAOA generators]
\label{lem:central-projections}
With respect to the orthogonal decomposition
\[
\mathfrak g
=
\mathfrak c\oplus[\mathfrak g,\mathfrak g],
\]
the central components of $X$ and $ZZ$ are
\[
X_{\mathfrak c}=-\frac{1}{n}c_1,
\qquad
(ZZ)_{\mathfrak c}=\frac{1}{n}c_2.
\]
\end{lemma}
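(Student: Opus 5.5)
The central component of $A\in\mathfrak g$ is the Hilbert--Schmidt orthogonal projection onto $\mathfrak c=\operatorname{span}_{\mathbb R}\{c_1,c_2\}$. The orthogonality of $\mathfrak c$ and $[\mathfrak g,\mathfrak g]$ under the trace form follows from the argument of \Cref{ReductiveKLA}, applied to the ad-invariant form $\langle A,B\rangle=\Tr[A^\dagger B]$ on $\mathfrak g$. So I will compute
\[
X_{\mathfrak c}=\sum_{r,s}(\mathcal G^{-1})_{rs}\langle c_s,X\rangle_{HS}\,c_r,\qquad \mathcal G_{rs}:=\langle c_r,c_s\rangle_{HS},
\]
and likewise for $ZZ$. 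Everything then reduces to Pauli-string bookkeeping, because every operator involved is $i$ times a real combination of Pauli strings, and distinct Pauli strings are HS-orthogonal with $\|P\|_{HS}^2=2^n$.

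The first step is to interpret the shorthand of \cite{allcock2026dynamicalliealgebrasquantum}. I read $ZX^{m}Z$ as $i\sum_{j}Z_jX_{j+1}\cdots X_{j+m}Z_{j+m+1}$, the cyclic sum of $n$ translates; similarly for $YX^mY$. I read $X^{n-1}$ as $i\sum_j\prod_{l\neq j}X_l$. Each such family is a sum of $n$ mutually orthogonal Pauli strings, so its squared norm is $n\,2^n$, and distinct families are mutually orthogonal. One exception needs checking: when $m=n-2$, the strings $Z_jX\cdots XZ_{j-1}$ still give $n$ distinct strings, so no coincidences arise. I will also check that $X^{n-1}$, $X$, and $ZZ$ (which is $ZX^0Z$) do not coincide as strings with any other family. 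For $n\ge3$ they do not.

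From this, $\langle c_1,X\rangle$ equals $-n2^n$ in both parity cases, and $\langle c_2,X\rangle=0$. Also, $ZZ=ZX^0Z$ appears only in $c_2$, with coefficient $+1$, so $\langle c_2,ZZ\rangle=n2^n$ and $\langle c_1,ZZ\rangle=0$. Next, $c_1$ and $c_2$ are orthogonal: $c_1$ uses only odd $X$-runs together with $X$ (and $X^{n-1}$ when $n$ is even, $n-1$ odd), while $c_2$ uses only even runs (and $X^{n-1}$ when $n$ is odd, $n-1$ even). The parity of the run length separates the supports. Hence $\mathcal G$ is diagonal, with $\mathcal G_{rr}=\|c_r\|^2=n2^n\cdot(\#\text{families in }c_r)$.

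This last point is where the naive computation does not reproduce the claimed constants. It would give $X_{\mathfrak c}=-c_1/(\#\text{families in }c_1)$ rather than $-c_1/n$. In the odd case, for example, $c_1$ contains $1+2\cdot\frac{n-1}{2}=n$ families, which gives exactly $-\frac1n c_1$. Likewise $c_2$ for odd $n$ has $1+2\cdot\frac{n-1}{2}=n$ families, and for even $n$, $c_1$ has $2+2\cdot\frac{n-2}{2}=n$ and $c_2$ has $2\cdot\frac n2=n$. So every $c_r$ is a sum of exactly $n$ unit-coefficient families, and $\|c_r\|^2=n^2 2^n$. This yields $X_{\mathfrak c}=\frac{-n2^n}{n^22^n}c_1=-\frac1n c_1$ and $(ZZ)_{\mathfrak c}=\frac1n c_2$.

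The main obstacle is the family-counting step. I have to confirm that all the strings are pairwise distinct, including the edge cases $m=0$ and $m=n-2$ and possible coincidences between $Z\cdots Z$ and $Y\cdots Y$ strings at small $n$. I also have to confirm that the families listed are genuinely distinct operators, so that the Gram matrix really is $n^22^n I_2$. I would verify $n=3,4$ by hand before stating the general count.
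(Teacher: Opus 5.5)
Your proposal is correct and follows essentially the same route as the paper. Both arguments use the orthogonality of $\mathfrak c$ and $[\mathfrak g,\mathfrak g]$, the fact that each $c_r$ is a unit-coefficient signed sum of exactly $n$ mutually orthogonal orbit-sums (so $\|c_r\|_{HS}^2=n^2 2^n$), the overlaps $\langle X,c_1\rangle_{HS}=-n2^n$ and $\langle ZZ,c_2\rangle_{HS}=n2^n$ with vanishing cross terms, and then the diagonal projection formula; your Pauli-string bookkeeping simply makes explicit the orthogonality and norm facts that the paper takes from the orbit-sum basis of \cite{allcock2026dynamicalliealgebrasquantum}.
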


\begin{proof}
The cycle-graph orbit-sums form an orthogonal basis of $\mathfrak g$ \cite{allcock2026dynamicalliealgebrasquantum}, and the central basis
vectors $c_1,c_2$ are sums of exactly $n$ such orbit-sums with coefficients $\pm 1$;
hence
\[
\|c_1\|_{HS}^2=\|c_2\|_{HS}^2=n^2 2^n.
\]
Moreover, by inspection of the explicit formulas for $c_1,c_2$: the coefficient of $X$ in $c_1$ is $-1$, and $X$ does not appear in $c_2$, while the coefficient of $ZZ$ in $c_2$ is $+1$, and $ZZ$ does not appear in $c_1$. Therefore
\[
\langle X,c_1\rangle_{HS}=-n2^n,
\qquad
\langle X,c_2\rangle_{HS}=0,
\]
and
\[
\langle ZZ,c_1\rangle_{HS}=0,
\qquad
\langle ZZ,c_2\rangle_{HS}=n2^n.
\]
Since $c_1,c_2$ are orthogonal, the orthogonal projection formulas give
\[
X_{\mathfrak c}
=
\frac{\langle X,c_1\rangle_{HS}}{\|c_1\|_{HS}^2}c_1
=
-\frac{1}{n}c_1,
\qquad
(ZZ)_{\mathfrak c}
=
\frac{\langle ZZ,c_2\rangle_{HS}}{\|c_2\|_{HS}^2}c_2
=
\frac{1}{n}c_2.
\]
\end{proof}

\begin{proposition}[Central character of a single QAOA layer]
\label{prop:central-character-single-layer}
Let
\[
U(\beta,\gamma):=e^{\beta X}e^{\gamma ZZ}.
\]
Then its central factor is
\[
U_{\mathfrak c}(\beta,\gamma)
=
\exp\!\left(-\frac{\beta}{n}c_1+\frac{\gamma}{n}c_2\right).
\]
Consequently, for every $A_{\lambda\mu}\in\Hom(\mathcal H_\mu,\mathcal H_\lambda)$,
\[
U_{\mathfrak c}(\beta,\gamma)\,
A_{\lambda\mu}\,
U_{\mathfrak c}(\beta,\gamma)^\dagger
=
e^{\,i\phi_{\lambda,\mu}(\beta,\gamma)}A_{\lambda\mu},
\]
where
\[
\phi_{\lambda,\mu}(\beta,\gamma)
=
-\frac{\lambda_1-\mu_1}{n}\beta
+
\frac{\lambda_2-\mu_2}{n}\gamma.
\]
Thus the induced character on this sector is
\[
\chi_{\lambda-\mu}(\beta,\gamma)
=
\exp\!\left(
i\left[
-\frac{\lambda_1-\mu_1}{n}\beta
+
\frac{\lambda_2-\mu_2}{n}\gamma
\right]
\right).
\]
\end{proposition}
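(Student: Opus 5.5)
The plan is to reduce the statement to the reductive splitting $\mathfrak g=\mathfrak c\oplus[\mathfrak g,\mathfrak g]$ together with Lemma~\ref{lem:central-projections}. First I will write each generator as $X=X_{\mathfrak c}+X_{\mathrm{ss}}$ and $ZZ=(ZZ)_{\mathfrak c}+(ZZ)_{\mathrm{ss}}$, with semisimple parts in $[\mathfrak g,\mathfrak g]$. Central elements commute with all of $\mathfrak g$, so $X_{\mathfrak c}$ commutes with $X_{\mathrm{ss}}$. Hence
\[
e^{\beta X}=e^{\beta X_{\mathfrak c}}e^{\beta X_{\mathrm{ss}}},
\]
and similarly for $ZZ$.

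Next I will regroup the product. Because $e^{\gamma (ZZ)_{\mathfrak c}}$ is central in the represented group, it can be moved past $e^{\beta X_{\mathrm{ss}}}$. This gives
\[
U(\beta,\gamma)=e^{\beta X_{\mathfrak c}+\gamma (ZZ)_{\mathfrak c}}\cdot e^{\beta X_{\mathrm{ss}}}e^{\gamma (ZZ)_{\mathrm{ss}}},
\]
where I used that the two central exponents commute. The first factor lies in $T=\exp(\mathfrak c)$, and the second lies in the semisimple subgroup $\exp([\mathfrak g,\mathfrak g])$.

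The main subtlety is that "central factor" must be well defined. The intersection of $T$ with the semisimple subgroup is finite and central, so the factorization is unique only up to that finite group. I will therefore define $U_{\mathfrak c}(\beta,\gamma)$ as the factor produced by this canonical Lie-algebra splitting of the exponents. Substituting Lemma~\ref{lem:central-projections} then yields
\[
U_{\mathfrak c}(\beta,\gamma)=\exp\!\left(-\tfrac{\beta}{n}c_1+\tfrac{\gamma}{n}c_2\right).
\]

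For the character statement, I will apply Lemma~\ref{lem:characters-from-weight-differences} with $\theta_1=-\beta/n$ and $\theta_2=\gamma/n$. Using $U_{\mathfrak c}^\dagger=\exp(-\theta_1c_1-\theta_2c_2)$, since the $c_r$ are skew-Hermitian, the conjugation multiplies $A_{\lambda\mu}$ by
\[
e^{i[(\lambda_1-\mu_1)\theta_1+(\lambda_2-\mu_2)\theta_2]}.
\]
This is exactly $e^{i\phi_{\lambda,\mu}(\beta,\gamma)}$, which gives the stated character $\chi_{\lambda-\mu}(\beta,\gamma)$. I do not expect any step to be a genuine obstacle. The only care needed is the well-definedness of the central factor, which the canonical splitting handles.
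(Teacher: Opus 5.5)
Your proposal is correct and follows the paper's proof essentially step for step. You split each generator along $\mathfrak c\oplus[\mathfrak g,\mathfrak g]$, use that central and semisimple parts commute to extract $\exp(\beta X_{\mathfrak c}+\gamma(ZZ)_{\mathfrak c})$, substitute Lemma~\ref{lem:central-projections}, and finish with Lemma~\ref{lem:characters-from-weight-differences} at $\theta_1=-\beta/n$, $\theta_2=\gamma/n$. Your remark that the central factor is canonical only through the Lie-algebra splitting of the exponents also makes explicit a point the paper leaves implicit, since the group-level factorization is unique only up to the finite group $T\cap\exp([\mathfrak g,\mathfrak g])$.
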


\begin{proof}
Write
\[
X=X_{\mathfrak c}+X_{\mathrm{ss}},
\qquad
ZZ=(ZZ)_{\mathfrak c}+(ZZ)_{\mathrm{ss}},
\]
with $X_{\mathfrak c},(ZZ)_{\mathfrak c}\in\mathfrak c$ and
$X_{\mathrm{ss}},(ZZ)_{\mathrm{ss}}\in[\mathfrak g,\mathfrak g]$.
Because $\mathfrak g=\mathfrak c\oplus[\mathfrak g,\mathfrak g]$ is a direct sum of ideals,
all central elements commute with all semisimple elements. Hence
\[
e^{\beta X}=e^{\beta X_{\mathfrak c}}e^{\beta X_{\mathrm{ss}}},
\qquad
e^{\gamma ZZ}=e^{\gamma (ZZ)_{\mathfrak c}}e^{\gamma (ZZ)_{\mathrm{ss}}},
\]
and the central factor of $U(\beta,\gamma)$ is
\[
e^{\beta X_{\mathfrak c}}e^{\gamma (ZZ)_{\mathfrak c}}
=
\exp\!\left(\beta X_{\mathfrak c}+\gamma (ZZ)_{\mathfrak c}\right).
\]
Applying Lemma \ref{lem:central-projections} yields
\[
U_{\mathfrak c}(\beta,\gamma)
=
\exp\!\left(-\frac{\beta}{n}c_1+\frac{\gamma}{n}c_2\right).
\]
The formula for the induced action on $A_{\lambda\mu}$ now follows from Lemma
\ref{lem:characters-from-weight-differences}.
\end{proof}

\begin{corollary}[Sampling-measure Fourier coefficient]
\label{cor:sampling-fourier-coefficient}
Let $\nu$ be any probability measure on the parameter space of one-layer cycle-QAOA.
Then on the weight-transfer sector $\Hom(\mathcal H_\mu,\mathcal H_\lambda)$, the averaged
central action is multiplication by the Fourier coefficient
\[
\widehat{\nu}(\lambda-\mu)
:=
\int
\chi_{\lambda-\mu}(\beta,\gamma)\,d\nu(\beta,\gamma).
\]
Hence a strict contraction on this sector requires a uniform bound
\[
\bigl|\widehat{\nu}(\lambda-\mu)\bigr|<1
\]
for every nonzero weight difference $\lambda-\mu$ that occurs.
\end{corollary}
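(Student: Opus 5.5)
The plan is to compute the one-layer averaged operator restricted to the sector $\Hom(\mathcal H_\mu,\mathcal H_\lambda)$ by splitting each layer unitary into its central and semisimple factors, and then to read off a unit-modulus obstruction from the central factor. I would proceed in three steps.

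\textbf{Step 1: factor the conjugation action.} By \Cref{prop:central-character-single-layer}, $U(\beta,\gamma)=U_{\mathfrak c}(\beta,\gamma)\,U_{\mathrm{ss}}(\beta,\gamma)$, where $U_{\mathrm{ss}}:=e^{\beta X_{\mathrm{ss}}}e^{\gamma(ZZ)_{\mathrm{ss}}}$ and the two factors commute. Conjugation by $U$ is therefore the composition of two commuting superoperators, $\Ad_{U_{\mathfrak c}}$ and $\Ad_{U_{\mathrm{ss}}}$.

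\textbf{Step 2: average the central part.} On $\Hom(\mathcal H_\mu,\mathcal H_\lambda)$, $\Ad_{U_{\mathfrak c}}$ acts as multiplication by the scalar $\chi_{\lambda-\mu}(\beta,\gamma)$. Integrating against $\nu$, the central factor by itself contributes exactly the scalar $\widehat\nu(\lambda-\mu)$. This is the claimed averaged central action.

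I want to be explicit that the full one-layer moment operator does not necessarily preserve this sector. The semisimple part commutes with $c_1,c_2$, so it preserves each $\mathcal H_\lambda$. It may, however, act nontrivially inside those blocks, and the scalar $\chi_{\lambda-\mu}(\beta,\gamma)$ still correlates with $\Ad_{U_{\mathrm{ss}}}$ through the shared parameters. So the statement concerns the central projection of the averaged action, namely the action of the pushforward of $\nu$ under $(\beta,\gamma)\mapsto U_{\mathfrak c}(\beta,\gamma)$ into $T$. I would state that projection explicitly; that pushforward acts on the sector by $\widehat\nu(\lambda-\mu)$.

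\textbf{Step 3: necessity of $|\widehat\nu(\lambda-\mu)|<1$.} I argue by contrapositive. Suppose $|\widehat\nu(\lambda-\mu)|=1$ for some occurring nonzero weight difference. Since $|\chi_{\lambda-\mu}|\equiv 1$, equality in $\bigl|\int\chi\,d\nu\bigr|\le\int|\chi|\,d\nu$ forces $\chi_{\lambda-\mu}\equiv e^{i\phi_0}$ for $\nu$-almost every parameter. This is the character rigidity discussed in \Cref{sec:variance-inequalities-non-haar}, and also \Cref{lem:spectral} applied to a one-dimensional representation.

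The central factor then acts on the whole sector by a single unimodular scalar, almost surely. The averaged operator restricted to the sector becomes $e^{i\phi_0}$ times the semisimple average. Now take an $A_{\lambda\mu}$ that is invariant under the semisimple factors; for instance, the semisimple action may be trivial on the relevant blocks. For such an $A_{\lambda\mu}$, the averaged operator has eigenvalue $e^{i\phi_0}$ of modulus $1$. On the other hand, $A_{\lambda\mu}$ is not $G$-invariant by \Cref{cor:strict-overlap-fails}. Hence no strict contraction holds on the sector. Since a strict contraction must hold on every sector that occurs, one needs $|\widehat\nu(\lambda-\mu)|<1$ for every such difference.

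\textbf{Main obstacle.} The delicate part is Step 3. Passing from a unimodular central scalar to a failure of contraction of the full operator requires that the semisimple average not itself damp every vector in the sector. I would either phrase the conclusion for the central-torus component, which is what "averaged central action" literally refers to, or exhibit semisimple-invariant vectors in $\Hom(\mathcal H_\mu,\mathcal H_\lambda)$. If neither works in general, the necessity claim should be read as a statement about the central factor alone.
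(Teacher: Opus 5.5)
Your Steps 1--2 are exactly the paper's proof, which invokes \Cref{prop:central-character-single-layer} and averages the scalar character $\chi_{\lambda-\mu}$ over $\nu$. The necessity clause there is read, as in your fallback, for the averaged central action: a scalar operator on the sector, hence a strict contraction if and only if $|\widehat{\nu}(\lambda-\mu)|<1$.

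Your caution in Step 3 is well placed. Extending necessity to the full moment operator needs an extra hypothesis, such as a semisimple-invariant $A_{\lambda\mu}$, which the paper only imposes later in \Cref{prop:kernel-curve-counterexample}; the remark right after this corollary concedes that the semisimple factor may still contract the sector. One small correction: the full one-layer operator does preserve $\Hom(\mathcal H_\mu,\mathcal H_\lambda)$, since every element of $G$ commutes with $c_1,c_2$; what fails is only that it need not act there by the scalar $\widehat{\nu}(\lambda-\mu)$.
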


\begin{proof}
This is immediate from Proposition \ref{prop:central-character-single-layer} by averaging the
scalar character action over $\nu$.
\end{proof}

\begin{remark}
Corollary \ref{cor:strict-overlap-fails} should be interpreted as an obstruction to a proof
strategy, not as a standalone claim that the full moment operator of cycle-QAOA must have
eigenvalue $1$. The semisimple factor
$\bigoplus_{k=1}^{n-1}\mathfrak g_k\cong\bigoplus_{k=1}^{n-1}\mathfrak{su}(2)$ may still
produce contraction on the same operator sector. What the argument proves is that no strict
spectral inequality statement can, in general, be deduced from non-invariance alone once a nontrivial central
torus is present as in this case certain choices of sampling measure can allow for an eigenvalue of 1. We shall next demonstrate a concrete instance of this being the case.
\end{remark}

\begin{proposition}[Kernel-curve counterexample for all moments]
\phantomsection
\label{prop:kernel-curve-counterexample}
Let $\mathcal E_L$ denote the depth-$L$ cycle-graph QAOA ensemble with
parameters $(\boldsymbol{\beta},\boldsymbol{\gamma})$ and let
$\mathcal{M}^{(t)}_{\mathcal E_L}$ be the corresponding $t$-th moment operator
acting on $\mathrm{End}(\mathcal H^{\otimes t})$ via
\[
\mathcal{M}^{(t)}_{\mathcal E_L}(X)
:=
\int_{\mathcal E_L}
U(\boldsymbol{\beta},\boldsymbol{\gamma})^{\otimes t}
\,X\,
(U(\boldsymbol{\beta},\boldsymbol{\gamma})^{\otimes t})^\dagger
\,d\nu_L(\boldsymbol{\beta},\boldsymbol{\gamma}),
\qquad
X\in\mathrm{End}(\mathcal H^{\otimes t}),
\]
where $U(\boldsymbol{\beta},\boldsymbol{\gamma})$ is the DLG representation of
the circuit unitary.

Assume there exists a nonzero DLA-visible weight difference
$\lambda-\mu$ and a corresponding visible subspace
$\mathcal H_{\lambda\mu}\subset \Hom(\mathcal H_\mu,\mathcal H_\lambda)$ such that the restriction of the DLG
representation to $\mathcal H_{\lambda\mu}$ is given by a central character
\[
U(\boldsymbol{\beta},\boldsymbol{\gamma})\big|_{\mathcal H_{\lambda\mu}}
=
\chi_{\lambda-\mu}(\boldsymbol{\beta},\boldsymbol{\gamma})\,
I_{\mathcal H_{\lambda\mu}},
\]
with $\chi_{\lambda-\mu}$ the nontrivial central character on the
$(\lambda,\mu)$-weight transfer sector (as defined in \Cref{appendix:cycle-graph-qaoa}). Assume further that there exists a nonconstant smooth curve on the space of all parameters $\Theta_L$
\[
c : \mathcal I \to \Theta_L,\qquad
t\mapsto (\boldsymbol{\beta}(t),\boldsymbol{\gamma}(t)),
\]
defined on a nontrivial open interval $\mathcal I\subset\mathbb{R}$, such that
\[
\chi_{\lambda-\mu} \big(c(t)\big)=1
\quad\text{for all }t\in \mathcal I.
\]

Let $q\in L^1(\mathcal I)$ with $q\ge 0$ and $\int_\mathcal{I} q(t)\,dt=1$, and define a probability
measure $\nu_L$ on $\Theta_L$, and therefore on $\mathcal E_L$, by the pushforward:
\[
\nu_L(E):=
\int_{c^{-1}(E)} q(t)\,dt,
\qquad
E\subset\Theta_L\text{ Borel}.
\]

Then:

\begin{enumerate}
\item $\nu_L$ is absolutely continuous with respect to one-dimensional Lebesgue 
measure on the parameter curve $c(\mathcal I)$.

\item For every integer $t\ge 1$, the $t$-th moment operator
$\mathcal{M}^{(t)}_{\mathcal E_L}$ acts as the identity on
$\mathrm{End}(\mathcal H_{\lambda\mu}^{\otimes t})$:
\[
\mathcal{M}^{(t)}_{\mathcal E_L}(A_{\lambda \mu})
=
A_{\lambda \mu}
\quad\text{for all }A_{\lambda \mu}\in\mathrm{End}(\mathcal H_{\lambda\mu}^{\otimes t}).
\]
Equivalently,
\[
\mathcal{M}^{(t)}_{\mathcal E_L}\big|_{\mathrm{End}(\mathcal H_{\lambda\mu}^{\otimes t})}
=
\mathrm{id}_{\mathrm{End}(\mathcal H_{\lambda\mu}^{\otimes t})}.
\]
In particular, for every $t\ge 1$ there is an eigenvalue $1$ in this
DLA-visible tensor sector, so there is no contraction there despite the
parameter distribution having continuous density along $\mathcal I$.
\end{enumerate}
\end{proposition}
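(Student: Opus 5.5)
The plan is to work directly from the hypotheses of \Cref{prop:kernel-curve-counterexample}. The pointwise invariance in part (2) comes from the kernel-curve condition. The integral is then handled by the fact that $\nu_L$ is a probability measure carried by $c(\mathcal I)$. No convolution or spectral argument is needed.

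For part (1), I would first note that $\nu_L = c_*(q\,dt)$. Since $q\in L^1(\mathcal I)$, $q\ge0$ and $\int q=1$, this is a Borel probability measure on $\Theta_L$, and it is concentrated on $c(\mathcal I)$. For absolute continuity, take a set $E\subset c(\mathcal I)$ of one-dimensional Lebesgue (arclength, i.e.\ $\mathcal H^1$) measure zero. The goal is to show $c^{-1}(E)$ is Lebesgue-null in $\mathcal I$, so that the integral of $q$ over it vanishes. This is the one genuinely delicate step: a nonconstant smooth curve can have critical points, and can stall or backtrack, so $c^{-1}(E)$ need not be null in general.

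I would handle this in two stages. First, I would observe that the regular set $\{t: c'(t)\neq0\}$ is open and nonempty. On it, $c$ is locally an immersion, so locally bi-Lipschitz onto its image; by the area formula, preimages of $\mathcal H^1$-null sets are null there. Second, I would reparametrize, or simply restrict $\mathcal I$ to a subinterval where $c'\neq0$ and $c$ is injective. Doing so is harmless, because every later step only uses the existence of \emph{some} curve and density. Equivalently, I could read ``one-dimensional Lebesgue measure on the parameter curve'' as the pushforward of Lebesgue measure on $\mathcal I$ under $c$. In that reading $\nu_L$ has density $q$ relative to it tautologically, and I would state this explicitly.

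For part (2), fix $t\ge1$ (to avoid a clash with the curve parameter, rename the latter $s$). By hypothesis, every $U=U(c(s))$ restricts on $\mathcal H_{\lambda\mu}$ to $\chi_{\lambda-\mu}(c(s))\,I = I$. Hence $U^{\otimes t}$ restricts to the identity on $\mathcal H_{\lambda\mu}^{\otimes t}$. Then for $A\in\mathrm{End}(\mathcal H_{\lambda\mu}^{\otimes t})$, embedded in $\mathrm{End}(\mathcal H^{\otimes t})$ by extending by zero on the orthogonal complement, I would compute $U^{\otimes t}A(U^{\otimes t})^\dagger$. The key point is that $\mathcal H_{\lambda\mu}^{\otimes t}$ is invariant under $U^{\otimes t}$ with trivial action. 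Unitarity then makes its orthogonal complement invariant as well. So $U^{\otimes t}$ commutes with the orthogonal projector onto $\mathcal H_{\lambda\mu}^{\otimes t}$ and acts as the identity on its range, which gives $U^{\otimes t}A(U^{\otimes t})^\dagger=A$ for every $s\in\mathcal I$.

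The set of parameters where this identity fails is disjoint from $c(\mathcal I)$, so it is $\nu_L$-null. The change of variables $\int F\,d\nu_L=\int_{\mathcal I}F(c(s))q(s)\,ds$ then gives $\mathcal M^{(t)}_{\mathcal E_L}(A)=A\int q=A$. This shows the moment operator restricts to the identity on this sector, so $1$ is an eigenvalue there, as claimed. I would close by noting that this is the Kawada--It\^o obstruction of \Cref{cor:character-test}: $\operatorname{supp}\nu_L$ lies in $\ker\chi_{\lambda-\mu}$.
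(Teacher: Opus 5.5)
Your argument for part (2) is the paper's: the integrand is pointwise invariant along $c(\mathcal I)$ because $\chi_{\lambda-\mu}\circ c\equiv 1$, and then $\int_{\mathcal I}q=1$.

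On part (1) you are more careful than the paper, whose proof simply asserts that $c_*(q\,dt)$ is absolutely continuous with respect to $\mathcal H^1$ on $c(\mathcal I)$. Your worry is justified, and in fact (1) is false as stated. Precompose the cosine--sine curve of \Cref{ex:explicit-cos-sine-kernel} with a smooth nonconstant function that is constant on a subinterval $J\subset\mathcal I$. This gives an admissible kernel curve (still $\beta+\gamma\equiv\alpha$) that is constant on $J$. Any $q$ with $\int_J q>0$ then gives $\nu_L$ an atom at the $\mathcal H^1$-null point $c(J)$.

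Your remedy, though, changes the statement rather than proving it: restricting $\mathcal I$ changes $\nu_L$ unless $q$ already vanishes off the subinterval. The injectivity requirement is also beside the point, since backtracking is harmless. The sharp form comes from the area formula
\[
\int_{c^{-1}(E)}\lvert c'(t)\rvert\,dt=\int_E \#\,c^{-1}(y)\,d\mathcal H^1(y).
\]
It shows that $\mathcal H^1(E)=0$ forces $c^{-1}(E)\subset\{c'=0\}$ up to a Lebesgue-null set. Conversely, $\mathcal H^1\bigl(c(\{c'=0\})\bigr)\le\int_{\{c'=0\}}\lvert c'\rvert\,dt=0$. Hence $\nu_L\ll\mathcal H^1|_{c(\mathcal I)}$ if and only if $q=0$ almost everywhere on $\{c'=0\}$, and that is the hypothesis to add. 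It holds for every $q$ in \Cref{ex:explicit-cos-sine-kernel}, where $c'$ vanishes only on $\tfrac{\pi}{2}\mathbb Z$, even though that curve doubles back.

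Under your alternative reading (reference measure $c_*(dt)$), absolute continuity is indeed tautological. But the density is then the fibre average of $q$, which equals $q$ only where $c$ is injective. None of this affects (2), so the obstruction to Lemma~6 of Ragone et al.\ stands.

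One further remark on (2). Your proof reads the statement literally: $\mathcal H_{\lambda\mu}^{\otimes t}$ is a subspace of $\mathcal H^{\otimes t}$ on which $U^{\otimes t}$ acts by the scalar $\chi_{\lambda-\mu}^{\,t}$. In that reading, conjugation returns $\lvert\chi_{\lambda-\mu}\rvert^{2t}A=A$ at \emph{every} parameter, so the kernel-curve hypothesis does no work.

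The hypothesis matters only in the reading set up by \Cref{lem:characters-from-weight-differences}. There $\mathcal H_{\lambda\mu}\subset\operatorname{Hom}(\mathcal H_\mu,\mathcal H_\lambda)$ is a space of operators on which $\operatorname{Ad}_U$ acts by $\chi_{\lambda-\mu}$. So $\mathcal H_{\lambda\mu}^{\otimes t}$ already lies in $\operatorname{End}(\mathcal H)^{\otimes t}\cong\operatorname{End}(\mathcal H^{\otimes t})$, and $\mathcal M^{(t)}_{\mathcal E_L}$ acts on it by the scalar $\int\chi_{\lambda-\mu}^{\,t}\,d\nu_L$. That scalar equals $1$ precisely because $\chi_{\lambda-\mu}\circ c\equiv 1$. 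Your argument transfers verbatim with $\operatorname{Ad}$ in place of the left action, and the extension-by-zero and projector-commutation steps become unnecessary.
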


\begin{proof}
(1). By construction, the parameter law on $\mathcal I$ is $q(t)\,dt$, which is absolutely 
continuous with respect to the Lebesgue measure $dt$ on $\mathcal I$. The pushforward 
$\nu_L=c_*(q(t)\,dt)$ is therefore absolutely continuous with respect to the 
one-dimensional Hausdorff (Lebesgue) measure on the image curve 
$c(\mathcal I)\subset \mathcal E_L$.

(2). Fix an integer $t\ge 1$ and an operator
$A_{\lambda \mu}\in\Hom(\mathcal H_\mu,\mathcal H_\lambda)$ supported on $\mathcal H_{\lambda\mu}^{\otimes t}$,
i.e.\ $X$ acts nontrivially only on that tensor factor. By the central-character
assumption, for each parameter pair $(\boldsymbol{\beta},\boldsymbol{\gamma})$
we have
\[
U(\boldsymbol{\beta},\boldsymbol{\gamma})\big|_{\mathcal H_{\lambda\mu}}
=
\chi_{\lambda-\mu}(\boldsymbol{\beta},\boldsymbol{\gamma})\,
I_{\mathcal H_{\lambda\mu}},
\]
hence on the $t$-fold tensor power
\[
U(\boldsymbol{\beta},\boldsymbol{\gamma})^{\otimes t}
\big|_{\mathcal H_{\lambda\mu}^{\otimes t}}
=
\chi_{\lambda-\mu}(\boldsymbol{\beta},\boldsymbol{\gamma})^t\,
I_{\mathcal H_{\lambda\mu}^{\otimes t}}.
\]

Along the curve $c(t)$, the kernel condition
$\chi_{\lambda-\mu}(c(s))\equiv 1$ for all $s\in \mathcal I$ implies
\[
U\big(c(s)\big)^{\otimes t}
\big|_{\mathcal H_{\lambda\mu}^{\otimes t}}
=
\chi_{\lambda-\mu}\big(c(s)\big)^t\,
I_{\mathcal H_{\lambda\mu}^{\otimes t}}
=
1^t\,I_{\mathcal H_{\lambda\mu}^{\otimes t}}
=
I_{\mathcal H_{\lambda\mu}^{\otimes t}}
\quad\text{for all }s\in \mathcal I.
\]
Thus for any such $A_{\lambda \mu}$ and any $s\in \mathcal I$,
\[
U\big(c(s)\big)^{\otimes t}\,A_{\lambda \mu}\,
(U\big(c(s)\big)^{\otimes t})^ \dagger
=
A_{\lambda \mu}.
\]

By definition of $\mathcal{M}^{(t)}_{\mathcal E_L}$ and of the pushforward
measure $\nu_L$, we have
\begin{equation*}
\begin{split}
\mathcal{M}^{(t)}_{\mathcal E_L}(A_{\lambda \mu})
&= \int_{\Theta_L}
U(\boldsymbol{\beta},\boldsymbol{\gamma})^{\otimes t}
\,A_{\lambda \mu}\,
\bigl(U(\boldsymbol{\beta},\boldsymbol{\gamma})^{\otimes t}\bigr)^\dagger
\,d\nu_L(\boldsymbol{\beta},\boldsymbol{\gamma}) \\
&= \int_\mathcal{I}
U\bigl(c(s)\bigr)^{\otimes t}
\,A_{\lambda \mu}\,
\bigl(U(c(s))^{\otimes t}\bigr)^\dagger
\,q(s)\,ds.
\end{split}
\end{equation*}
Since the integrand is identically $A_{\lambda \mu}$ for all $s\in \mathcal I$, this reduces to
\[
\mathcal{M}^{(t)}_{\mathcal E_L}(A_{\lambda \mu})
=
\int_\mathcal{I} A_{\lambda \mu}\,q(s)\,ds
=
\left(\int_\mathcal{I} q(s)\,ds\right) A_{\lambda \mu}
=
A_{\lambda \mu}.
\]
Therefore
\(
\mathcal{M}^{(t)}_{\mathcal E_L}\big|_{\mathrm{End}(\mathcal H_{\lambda\mu}^{\otimes t})}
=
\mathrm{id}
\),
proving (2).
\end{proof}

\begin{example}[Explicit cosine--sine kernel curve for all moments]
\label{ex:explicit-cos-sine-kernel}
In the setting of the preceding proposition, starting with the single-layer cycle graph QAOA on Max Cut, take any
$(\lambda,\mu)$-weight transfer sector where the central character has the
schematic form
\[
\chi_{\lambda-\mu}(\beta,\gamma)
=
\exp\!\Big(i\big(a\,\beta + b\,\gamma\big)\Big),
\]
for some nonzero integers $a,b$ determined by the DLA-visible weights
(cf.\ Appendix~A). Consider
the special case $a=b\neq 0$.

Set
\[
\alpha
:= \frac{2\pi k}{a}, \ \text{fix } k\in\mathbb{Z},
\]
and define a smooth curve $c:\mathcal I\to\Theta_L$ by
\[
c(t)
=
\big(\beta(t),\gamma(t)\big)
:=
\big(\alpha\cos^2 t,\;\alpha\sin^2 t\big),
\qquad t\in \mathcal I,
\]
where $\mathcal I\subset\mathbb{R}$ is any nontrivial open interval on which
$\cos t$ and $\sin t$ are not both constant.

For every $t\in \mathcal I$ we have
\begin{align*}
a\,\beta(t) + a\,\gamma(t)
&=
a\,\alpha\big(\cos^2 t + \sin^2 t\big) \\
&=
a\,\alpha \\
&=
a\,\frac{2\pi k}{a} \\
&=
2\pi k.
\end{align*}
Hence the phase in the central character is constant:
\[
\chi_{\lambda-\mu}\big(c(t)\big)
=
\exp\!\big(i(a\,\beta(t)+a\,\gamma(t))\big)
=
\exp(2\pi i k)
=
1
\quad\text{for all }t\in \mathcal I.
\]

Now fix any $q\in L^1(\mathcal I)$ with $q\ge 0$ and $\int_\mathcal{I} q(t)\,dt=1$ (e.g. $q(t)=1/\text{len}(\mathcal I)$ if $\mathcal I$ is compact), and define
$\nu_L$ on $\mathcal E_L$ by the pushforward
\[
\nu_L(E)
:=
\int_{c^{-1}(E)} q(t)\,dt,
\qquad
E\subset \Theta_L\ \text{Borel}.
\]
By the preceding proposition, $\nu_L$ is absolutely continuous with
respect to one-dimensional Lebesgue measure on $c(\mathcal I)$, and for every
integer $t\ge 1$ and every
$A_{\lambda\mu}\in\mathrm{End}(\mathcal H_{\lambda\mu}^{\otimes t})$ one has
\[
\mathcal{M}^{(t)}_{\mathcal E_L}(A_{\lambda\mu})
=
A_{\lambda\mu}.
\]
Thus this explicit cosine--sine curve gives a parameter law with
continuous density on a nontrivial interval for which all $t$-th moment
operators have eigenvalue $1$ on the DLA-visible tensor sector
$\mathcal H_{\lambda\mu}^{\otimes t}$. This construction, at some notational cost, can be straightforwardly extended to arbitrary layers.
\end{example}

\chapter{The Observable Quotient Formalism and Auxiliary Proofs}\label{appendix:observable-quotient-proofs}

We collect here some structural facts about the observable algebra
\(\mathcal A\) and the observable group \(G_{\mathrm{obs}}\) used in the main
text. All notation is as in the body of \Cref{subsec:observable-kawada-ito}.

\subsection{\(\mathcal A\) is already an algebra}

The first lemma justifies calling \(\mathcal A\) an “algebra”: it shows that
we do not need to close under products or conjugation by hand; those
operations are already present at the level of matrix coefficients.

\begin{lemma}[$\mathcal A$ is a unital $*$-algebra]
\phantomsection
\label{lem:algebra}
For all integers \(t,s \ge 0\),
\[
\mathcal F_t \cdot \mathcal F_s \subset \mathcal F_{t+s}
\quad\text{and}\quad
\overline{\mathcal F_t} = \mathcal F_t .
\]
Consequently, \(\mathcal A
= \operatorname{span}_{\mathbb C} \bigcup_{t=0}^\infty \mathcal F_t\)
is a unital $*$-subalgebra of \(C(G)\): it contains the constants and is
closed under products and complex conjugation.
\end{lemma}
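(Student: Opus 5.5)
The plan is to prove both closure properties directly at the level of matrix coefficients, using only the tensor structure of $\rho^{(t)}$. After that, the algebra statement follows by bilinearity and the observation that $\mathcal F_0$ already contains the constants.

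\emph{Products.} Take $f(g)=\langle V,\rho^{(t)}(g)W\rangle$ and $h(g)=\langle V',\rho^{(s)}(g)W'\rangle$ with $V,W\in\mathcal H_t$ and $V',W'\in\mathcal H_s$. First, $f(g)h(g)=\langle V\otimes V',(\rho^{(t)}(g)\otimes\rho^{(s)}(g))(W\otimes W')\rangle$ in $\mathcal H_t\otimes\mathcal H_s$. Next, let $\sigma:\mathcal H_t\otimes\mathcal H_s\to\mathcal H_{t+s}$ be the unitary that permutes tensor factors, moving the $s$ unconjugated factors of $\mathcal H_s$ next to the $t$ unconjugated factors of $\mathcal H_t$. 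This map satisfies
\[
\sigma\,\bigl(\rho^{(t)}(g)\otimes\rho^{(s)}(g)\bigr)\,\sigma^{-1}=\rho^{(t+s)}(g),
\]
because both sides equal $\rho(g)^{\otimes(t+s)}\otimes\overline{\rho(g)}^{\otimes(t+s)}$ after reordering. Therefore $fh(g)=\langle\sigma(V\otimes V'),\rho^{(t+s)}(g)\,\sigma(W\otimes W')\rangle\in\mathcal F_{t+s}$. The case $t=0$ or $s=0$ is trivial, since $\rho^{(0)}$ is trivial and $\mathcal F_0$ consists of the constants $\langle v,w\rangle$ with $v,w\in\mathbb C$.

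\emph{Conjugation.} Here we have $\overline{f(g)}=\langle\overline V,\overline{\rho^{(t)}(g)}\,\overline W\rangle$, where the bar is the entrywise conjugation in the fixed basis; this uses $\overline{\langle a,b\rangle}=\langle\bar a,\bar b\rangle$ for the standard inner product. Moreover, $\overline{\rho^{(t)}(g)}=\overline{\rho(g)}^{\otimes t}\otimes\rho(g)^{\otimes t}$. Let $\tau:\mathcal H_t\to\mathcal H_t$ be the unitary swapping the block $\mathcal H^{\otimes t}$ with the block $\overline{\mathcal H}^{\otimes t}$. Both blocks are identified with $(\mathbb C^{d_\rho})^{\otimes t}$ in the chosen basis, so the swap is a permutation of coordinates. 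It satisfies $\tau\overline{\rho^{(t)}(g)}\tau^{-1}=\rho^{(t)}(g)$, and hence $\overline f(g)=\langle\tau\overline V,\rho^{(t)}(g)\,\tau\overline W\rangle\in\mathcal F_t$. This gives $\overline{\mathcal F_t}\subset\mathcal F_t$, and applying the bar twice gives equality.

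The consequences then follow. Spans of sets closed under pairwise products and conjugation are closed under both operations, by bilinearity and antilinearity. Constants lie in $\mathcal F_0$, so $1\in\mathcal A$. Each $\mathcal F_t$ is already a linear space, because $(V,W)\mapsto\langle V,\rho^{(t)}W\rangle$ can be summed via direct-sum tricks; even without that, taking the span handles linearity.

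The only step needing care is making the identification maps $\sigma$ and $\tau$ precise as genuine unitaries intertwining the representations. The conjugate space $\overline{\mathcal H}$ must be identified with $\mathcal H$ via the fixed basis, consistent with the paper's conventions. I expect this bookkeeping to be the main obstacle, though it is routine.
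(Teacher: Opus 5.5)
Your proof is correct and is essentially the paper's argument. You rewrite the product of coefficients as a coefficient of $\rho^{(t)}\otimes\rho^{(s)}$ and transport it to $\rho^{(t+s)}$ by the factor-permuting unitary (the paper's $\Sigma_{t,s}$), you handle conjugation by entrywise conjugation followed by the block-swap $\tau$, and you get unitality from $\mathcal F_0$ being the constants.

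Your aside that each $\mathcal F_t$ is already a linear space is false in general. Single coefficients $\langle V,\rho^{(t)}(g)W\rangle$ correspond to rank-one operators $WV^\dagger$ on $\mathcal H_t$, and sums need not stay rank-one on multiplicity-one irreducible blocks (e.g.\ the adjoint block of $\rho^{(1)}$ for $SU(d)$). As you note, though, $\mathcal A$ is defined as a span, so nothing in your argument depends on it.
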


\begin{proof}
Let \(f_1 \in \mathcal F_{t}\) and \(f_2 \in \mathcal F_{s}\), say
\[
f_1(g)
=
\big\langle V_1, \rho^{(t)}(g) W_1 \big\rangle,
\qquad
f_2(g)
=
\big\langle V_2, \rho^{(s)}(g) W_2 \big\rangle
\]
for some \(V_1,W_1 \in \mathcal H_t\) and \(V_2,W_2 \in \mathcal H_s\).

Then, for every \(g\in G\),
\begin{align*}
f_1(g)\, f_2(g)
&=
\big\langle V_1, \rho^{(t)}(g) W_1 \big\rangle\,
\big\langle V_2, \rho^{(s)}(g) W_2 \big\rangle
\\
&=
\big\langle V_1 \otimes V_2,\;
\big(\rho^{(t)} \otimes \rho^{(s)}\big)(g)\,
(W_1 \otimes W_2)
\big\rangle .
\end{align*}

Let
\[
\Sigma_{t,s}
:\;
\mathcal H_t \otimes \mathcal H_s
\longrightarrow
\mathcal H_{t+s}
\]
be the unitary that permutes tensor factors by first collecting the
\(t+s\) unbarred factors and then the \(t+s\) barred factors. By construction
of the balanced tensor powers, \(\Sigma_{t,s}\) intertwines
\(\rho^{(t)} \otimes \rho^{(s)}\) with \(\rho^{(t+s)}\): both act by
\(\rho(g)\) on each unbarred tensor factor and by \(\overline{\rho(g)}\) on
each barred factor. Therefore
\begin{align*}
f_1(g) f_2(g)
&=
\big\langle
V_1 \otimes V_2,\,
\big(\rho^{(t)} \otimes \rho^{(s)}\big)(g)\,
(W_1 \otimes W_2)
\big\rangle
\\
&=
\big\langle
\Sigma_{t,s}(V_1 \otimes V_2),\,
\rho^{(t+s)}(g)\,
\Sigma_{t,s}(W_1 \otimes W_2)
\big\rangle
\\
&\in
\mathcal F_{t+s}.
\end{align*}
This proves \(\mathcal F_t \cdot \mathcal F_s \subset \mathcal F_{t+s}\).

For complex conjugation, write
\[
f(g) = \langle V, \rho^{(t)}(g) W\rangle
\qquad (V,W\in\mathcal H_t).
\]
In the fixed product basis we have
\[
\overline{\rho^{(t)}}
=
\overline{\rho}^{\,\otimes t} \otimes \rho^{\otimes t}.
\]
Let
\[
\tau :
\overline{\mathcal H}^{\otimes t} \otimes \mathcal H^{\otimes t}
\longrightarrow
\mathcal H^{\otimes t} \otimes \overline{\mathcal H}^{\otimes t}
\]
be the tensor-swap unitary that exchanges the two blocks. By construction,
\(\tau\) intertwines \(\overline{\rho^{(t)}}\) with \(\rho^{(t)}\).

Thus, for all \(g\in G\),
\begin{align*}
\overline{f(g)}
&=
\big\langle \overline V,\, \overline{\rho^{(t)}(g)}\, \overline W \big\rangle
\\
&=
\big\langle \tau \overline V,\,
\rho^{(t)}(g)\, \tau \overline W
\big\rangle
\\
&\in \mathcal F_t .
\end{align*}
Hence \(\overline{\mathcal F_t} = \mathcal F_t\).

Finally, by definition \(\mathcal F_0\) consists of the constant functions
(matrix coefficients of the trivial representation), so \(1 \in \mathcal A\).

The span of \(\bigcup_{t=0}^\infty \mathcal F_t\) is closed under products by the
first part and bilinearity, and closed under conjugation by the second part,
so \(\mathcal A\) is a unital $*$-subalgebra of \(C(G)\).
\end{proof}

\subsection{Observable Stone--Weierstrass and Peter--Weyl}

The next lemma formalizes the idea that the observable algebra \(\mathcal A\)
is, after pushing forward to the observable quotient \(G_{\mathrm{obs}}\), as
large as one could reasonably hope: it becomes a dense $*$-subalgebra of
\(C(G_{\mathrm{obs}})\), and its balanced tensor powers see every irreducible
representation of \(G_{\mathrm{obs}}\).

\begin{lemma}[Observable Stone--Weierstrass and observable Peter--Weyl]
\label{lem:density}
For every \(f \in \mathcal A\) there is a unique
\(\widetilde f \in C(G_{\mathrm{obs}})\) with
\[
f = \widetilde f \circ \Phi,
\]
and the resulting map \(f \mapsto \widetilde f\) is a unital $*$-algebra
isomorphism of \(\mathcal A\) onto
\[
\widetilde{\mathcal A}
:=
\operatorname{span}_{\mathbb C}
\Big\{ x \mapsto \big\langle V, \widetilde\rho^{(t)}(x) W\big\rangle \, \Big \vert \,
t \ge 0,\; V,W \in \mathcal H_t \Big\}
\;\subset\; C(G_{\mathrm{obs}}) .
\]
Moreover:
\begin{enumerate}
\item
\(\widetilde{\mathcal A}\) is uniformly dense in \(C(G_{\mathrm{obs}})\).

\item
Every irreducible representation of \(G_{\mathrm{obs}}\) is unitarily
equivalent to a subrepresentation of \(\widetilde\rho^{(t)}\) for some
\(t \ge 1\). Equivalently, an irreducible representation \(\pi\) of \(G\)
occurs in some \(\rho^{(t)}\) if and only if \(\pi\) is trivial on
\(Z = \ker\Phi\).

\item
Every \(f \in \mathcal A\) is constant on cosets of \(Z\). In particular, if
\(Z \neq \{e\}\), then \(\mathcal A\) is not uniformly dense in \(C(G)\),
and its uniform closure is the proper subalgebra
\[
\{ f \in C(G) \, \vert \, f(gz) = f(g) \ \forall g \in G,\, z \in Z \}
\;\cong\;
C(G_{\mathrm{obs}}).
\]
\end{enumerate}
\end{lemma}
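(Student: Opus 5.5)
The plan is to obtain the factorization first and then deduce the three numbered items from it using Peter--Weyl and Stone--Weierstrass on the compact group $G_{\mathrm{obs}}$.

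\textbf{Factorization and the isomorphism.} For a generator $f(g)=\langle V,\rho^{(t)}(g)W\rangle$ of $\mathcal F_t$, Lemma~\ref{lem:structure}(2) gives $\rho^{(t)}=\widetilde\rho^{(t)}\circ\Phi$. Hence $f=\widetilde f\circ\Phi$ with $\widetilde f(x)=\langle V,\widetilde\rho^{(t)}(x)W\rangle$, which is continuous. By linearity every $f\in\mathcal A$ factors through $\Phi$. Uniqueness follows because $\Phi$ is surjective onto $G_{\mathrm{obs}}$. The map $f\mapsto\widetilde f$ is therefore well defined, injective and linear, and its image is exactly $\widetilde{\mathcal A}$. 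It preserves products, conjugation and constants, since these operations commute with precomposition by a surjection. By Lemma~\ref{lem:algebra}, $\widetilde{\mathcal A}$ is then a unital $*$-subalgebra of $C(G_{\mathrm{obs}})$.

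\textbf{Item (1).} I would apply Stone--Weierstrass, so it remains to show that $\widetilde{\mathcal A}$ separates points. Already $\widetilde{\mathcal F}_1$ does this. Its coefficients include all matrix entries of $\widetilde\rho^{(1)}(x)$, and $\widetilde\rho^{(1)}$ is the identity inclusion $G_{\mathrm{obs}}\subset U(\mathcal H\otimes\overline{\mathcal H})$, because $\widetilde\rho^{(1)}\circ\Phi=\Phi$ and $\Phi$ is surjective. So two distinct points $x\neq y$ have some differing entry. Density follows.

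\textbf{Item (2).} This is the step I expect to need the most care. Suppose $\pi$ is irreducible on $G_{\mathrm{obs}}$ and occurs in no $\widetilde\rho^{(t)}$. By Schur orthogonality, every matrix coefficient of $\pi$ is $L^2(\mu_{G_{\mathrm{obs}}})$-orthogonal to every coefficient of every irreducible constituent of every $\widetilde\rho^{(t)}$. As noted after Definition~\ref{def:observable-algebra}, these constituents span $\widetilde{\mathcal A}$, so the coefficients of $\pi$ are orthogonal to all of $\widetilde{\mathcal A}$. Item (1) says $\widetilde{\mathcal A}$ is uniformly dense, hence $L^2$-dense, so any such coefficient would be zero, which is a contradiction. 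For the equivalent formulation on $G$:
\begin{itemize}
\item if an irreducible $\pi$ of $G$ occurs in $\rho^{(t)}$, then $\pi$ is trivial on $Z$, since $Z\subset\ker\rho^{(t)}$ by Lemma~\ref{lem:structure}(1);
\item conversely, if $\pi$ is trivial on $Z$, it descends to an irreducible representation of $G/Z\cong G_{\mathrm{obs}}$, occurs in some $\widetilde\rho^{(t)}$ by the previous argument, and pulling back along $\Phi$ gives its occurrence in $\rho^{(t)}$.
\end{itemize}
One should also check the trivial representation, which occurs in $\rho^{(1)}$ through $\operatorname{vec}(I)$, so $t\ge1$ suffices.

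\textbf{Item (3).} Constancy on $Z$-cosets is immediate from $f=\widetilde f\circ\Phi$. Denote by $C(G)^Z$ the set of $f\in C(G)$ with $f(gz)=f(g)$ for all $g\in G$, $z\in Z$. This set is closed under uniform limits, so the closure of $\mathcal A$ lies in $C(G)^Z$. Precomposition $h\mapsto h\circ\Phi$ is an isometric $*$-isomorphism $C(G_{\mathrm{obs}})\to C(G)^Z$. It is onto because $G\to G/Z\cong G_{\mathrm{obs}}$ is a quotient map, and it is isometric by surjectivity of $\Phi$. Item (1) then shows that the closure of $\mathcal A$ equals $C(G)^Z$. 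If $Z\neq\{e\}$, Urysohn gives a continuous function separating $e$ from some $z\neq e$ in $Z$, so $C(G)^Z$ is proper in $C(G)$ and $\mathcal A$ is not dense.
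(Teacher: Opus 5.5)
Your proposal is correct and follows the paper's proof essentially step for step. You factor through $\widetilde\rho^{(t)}$ with uniqueness from surjectivity of $\Phi$, get density from Stone--Weierstrass via point separation by $\widetilde\rho^{(1)}$, use Schur orthogonality plus $L^2$-density for (2), and use coset-constancy plus Urysohn for (3). Two of your steps make explicit what the paper leaves implicit: securing $t\ge 1$ through the invariant vector $\operatorname{vec}(I)$, and identifying the uniform closure of $\mathcal A$ with $C(G)^Z$ via the isometric $*$-isomorphism $h\mapsto h\circ\Phi$.
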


\begin{proof}
Every generator of \(\mathcal A\) has the form
\[
f(g)
=
\big\langle V, \rho^{(t)}(g) W \big\rangle
\qquad (t\ge0,\, V,W\in\mathcal H_t).
\]
By Lemma~\ref{lem:structure}(2) the balanced representation \(\rho^{(t)}\)
factors through \(\Phi\), that is,
\[
\rho^{(t)}
=
\widetilde\rho^{(t)} \circ \Phi
\]
for a unique \(\widetilde\rho^{(t)} : G_{\mathrm{obs}} \to U(\mathcal H_t)\).
Thus
\[
f(g)
=
\big\langle V,\,
\widetilde\rho^{(t)}(\Phi(g)) W
\big\rangle
=
\widetilde f(\Phi(g)),
\]
where we set
\[
\widetilde f(x)
:=
\big\langle V, \widetilde\rho^{(t)}(x) W\big\rangle,
\qquad x\in G_{\mathrm{obs}}.
\]

This constructs, for each generator \(f\in\mathcal A\), a corresponding
\(\widetilde f \in \widetilde{\mathcal A}\) with \(f=\widetilde f\circ\Phi\).
The map extends by linearity to all of \(\mathcal A\). Surjectivity of
\(\Phi : G \to G_{\mathrm{obs}}\) implies that \(\widetilde f\) is uniquely
determined by \(f\), and that the assignment
\[
\mathcal A \longrightarrow \widetilde{\mathcal A} \quad\text{by}
\quad f \longmapsto \widetilde f
\]
is an injective unital $*$-algebra homomorphism whose image is exactly
\(\widetilde{\mathcal A}\). It is also an isometry for the uniform norms,
since
\[
\|f\|_\infty
=
\sup_{g\in G} |f(g)|
=
\sup_{x\in G_{\mathrm{obs}}} |\widetilde f(x)|
=
\|\widetilde f\|_\infty .
\]

(1). By Lemma~\ref{lem:algebra}, \(\mathcal A\) is a unital $*$-algebra, so its
image \(\widetilde{\mathcal A}\) is a unital $*$-subalgebra of
\(C(G_{\mathrm{obs}})\). $\widetilde {\mathcal A}$ separates points: if \(x_1,x_2\in G_{\mathrm{obs}}\) with
\(x_1\neq x_2\), then some matrix entry of the unitary
\[
x \in G_{\mathrm{obs}} \subset
U(\mathcal H \otimes \overline{\mathcal H})
\]
takes different values on \(x_1\) and \(x_2\). But those matrix entries are
precisely the coefficients of \(\widetilde\rho^{(1)}\), which lie in
\(\widetilde{\mathcal A}\). Since \(G_{\mathrm{obs}}\) is a compact Hausdorff space, the
Stone–Weierstrass theorem \cite[Thm.~4.45]{Folland1995Harmonic} implies that
\(\widetilde{\mathcal A}\) is uniformly dense in \(C(G_{\mathrm{obs}})\).

(2). Assume towards contradiction that there exists an irreducible representation
\(\sigma : G_{\mathrm{obs}}\to U(W)\) that is not equivalent to a
subrepresentation of any \(\widetilde\rho^{(t)}\). Decompose each \(\widetilde\rho^{(t)}\) into irreducibles, and use the Schur
orthogonality relations \cite[Thm.~5.8]{Folland1995Harmonic} on
\(G_{\mathrm{obs}}\). Every matrix coefficient of \(\sigma\) is orthogonal
in \(L^2(G_{\mathrm{obs}}, \mu_{G_{\mathrm{obs}}})\) to every matrix
coefficient of every subrepresentation of \(\widetilde\rho^{(t)}\), hence to
every element of \(\widetilde{\mathcal A}\).

By part (1), \(\widetilde{\mathcal A}\) is uniformly dense in
\(C(G_{\mathrm{obs}})\), and thus dense in
\(L^2(G_{\mathrm{obs}},\mu_{G_{\mathrm{obs}}})\). Therefore every matrix
coefficient of \(\sigma\) must vanish identically, which contradicts
\(\langle v,\sigma(e)v\rangle = \|v\|^2 \ne 0\) for any nonzero \(v\in W\).
Thus every irreducible representation of \(G_{\mathrm{obs}}\) appears in
some \(\widetilde\rho^{(t)}\).

For the equivalent restatement on \(G\), note first that every \(\rho^{(t)}\)
is trivial on \(Z=\ker\Phi\) by Lemma~\ref{lem:structure}(1), so any
irreducible \(\pi\) occurring in some \(\rho^{(t)}\) must be \(Z\)-trivial.

Conversely, let \(\pi\) be an irreducible representation of \(G\) that is
trivial on \(Z\). Then \(\pi\) descends to an irreducible representation of
\(G/Z \cong G_{\mathrm{obs}}\). By the first part, this descendant occurs in
some \(\widetilde\rho^{(t)}\), and hence \(\pi\) occurs in the pullback
\[
\rho^{(t)}
=
\widetilde\rho^{(t)} \circ \Phi.
\]
This is the balanced analogue of the classical fact that tensor powers
\(\rho^{\otimes m} \otimes \overline\rho^{\otimes n}\) of a faithful
representation \(\rho\) contain every irreducible representation of the
underlying compact group; see, for instance, \cite[Ch.~3]{BtD1985}.

(3). For any \(f\in\mathcal A\), we have \(f = \widetilde f \circ \Phi\). If
\(z\in Z\), then \(\Phi(gz)=\Phi(g)\) for all \(g\in G\), so
\[
f(gz)
=
\widetilde f(\Phi(gz))
=
\widetilde f(\Phi(g))
=
f(g).
\]
Thus \(f\) is constant on every coset \(gZ\).

If \(Z\neq\{e\}\), choose \(z\in Z\setminus\{e\}\). By Urysohn’s lemma there
exists \(h\in C(G)\) with
\[
h(e)=0,
\qquad
h(z)\neq 0.
\]
Then for every \(f\in\mathcal A\),
\[
\|f-h\|_\infty
\;\ge\;
\frac{1}{2}\,|h(z)-h(e)|
\;>\;
0,
\]
since \(f(e)=f(z)\) by coset-constancy and \(h(e)\ne h(z)\). Thus
\(\mathcal A\) is not dense in \(C(G)\) when \(Z\neq\{e\}\).

Finally, the uniform closure of \(\mathcal A\) consists exactly of those
continuous functions on \(G\) that are constant on cosets of \(Z\):
\[
\overline{\mathcal A}
=
\{f\in C(G) \, \vert \,f(gz)=f(g)\ \forall g\in G, z\in Z\}.
\]
The quotient map \(G\to G/Z\cong G_{\mathrm{obs}}\) identifies this algebra
with \(C(G_{\mathrm{obs}})\), in agreement with part (1).
\end{proof}

\subsection{Unit-modulus eigenvalues of Fourier coefficients}

The next lemma is the key spectral mechanism used in the observable
Kawada–Itô theorem. It gives an “equality case” characterization: an
eigenvalue of modulus one of a Fourier coefficient \(\widehat\nu(\sigma)\)
forces the measure to be supported inside a stabilizer coset, and under
adaptedness and strict aperiodicity that can only occur in a
one-dimensional character sector, and only with eigenvalue \(1\).

\begin{lemma}[Unit-modulus eigenvalues of Fourier coefficients]
\phantomsection
\label{lem:spectral}
Let \(K\) be a compact group, \(\nu\) a probability measure on \(K\), and
\(\sigma : K \to U(V)\) a continuous unitary representation on a
finite-dimensional Hilbert space \(V\). Write
\[
\widehat\nu(\sigma)
:=
\int_K \sigma(x)\, d\nu(x).
\]
Then:
\begin{enumerate}
\item
\(\|\widehat\nu(\sigma)\|_\infty \le 1\).

\item
If \(\widehat\nu(\sigma)v = \lambda v\) for some \(0 \neq v \in V\) and
\(\lambda \in \mathbb C\) with \(|\lambda| = 1\), then
\[
\sigma(x)v
=
\lambda v
\qquad
\text{for every }x \in \operatorname{supp}\nu.
\]

\item
If, in addition, \(\nu\) is adapted (i.e. \([\operatorname{supp}\nu]=K\)),
then \(\mathbb C v\) is \(\sigma(K)\)-invariant, and the function
\(\chi : K \to \mathbb T\) defined by
\[
\sigma(x)v
=
\chi(x)v
\]
is a continuous character of \(K\) satisfying \(\chi \equiv \lambda\) on
\(\operatorname{supp}\nu\).

\item
If, in addition, \(\nu\) is strictly aperiodic, then \(\chi \equiv 1\); in
particular \(\lambda = 1\) and \(v\) is \(\sigma(K)\)-invariant.
\end{enumerate}
\end{lemma}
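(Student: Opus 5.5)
Part (1) follows from the triangle inequality for the Bochner integral in the finite-dimensional space $\End(V)$. For $w\in V$,
\[
\|\widehat\nu(\sigma)w\|\le\int_K\|\sigma(x)w\|\,d\nu(x)=\|w\|,
\]
since each $\sigma(x)$ is unitary and $\nu$ is a probability measure.

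For part (2) I would use the equality case of Cauchy--Schwarz. Normalize $\|v\|=1$. The eigen-equation gives
\[
1=\bar\lambda\langle v,\widehat\nu(\sigma)v\rangle=\int_K \mathrm{Re}\,\bar\lambda\langle v,\sigma(x)v\rangle\,d\nu(x).
\]
The integrand is a continuous function of $x$ bounded above by $1$, with equality exactly when $\sigma(x)v=\lambda v$ (equality in Cauchy--Schwarz for unit vectors). So the continuous nonnegative function $1-\mathrm{Re}\,\bar\lambda\langle v,\sigma(x)v\rangle$ has $\nu$-integral zero and vanishes $\nu$-a.e. Its zero set is closed and of full measure, so it contains $\operatorname{supp}\nu$. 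This upgrades "a.e." to "every point of the support", which is the point of the statement.

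For part (3), let $H:=\{x\in K : \sigma(x)v\in\mathbb T v\}$. This is a closed subgroup: closedness follows from continuity of $\sigma$ and closedness of $\mathbb T v$, and it is a subgroup because products and inverses of unitaries that scale $v$ by a phase again do so. By (2), $\operatorname{supp}\nu\subset H$, so adaptedness gives $K=[\operatorname{supp}\nu]\subset H$. Hence $\mathbb Cv$ is invariant. On it, $\chi(x):=\langle v,\sigma(x)v\rangle$ is continuous, multiplicative and unimodular, and $\chi\equiv\lambda$ on the support by (2).

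For part (4) I would copy the argument of \Cref{cor:character-test}. Fix $x_0\in\operatorname{supp}\nu$. For every $x$ in the support, $\chi(x_0^{-1}x)=\bar\lambda\lambda=1$, so $\operatorname{supp}\nu\subset x_0\ker\chi$, where $\ker\chi$ is a closed normal subgroup. Strict aperiodicity forbids $\ker\chi$ from being proper, so $\ker\chi=K$ and $\chi\equiv1$. Then $\lambda=\chi(x_0)=1$ and $v$ is $\sigma(K)$-invariant.

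The main subtlety is the passage in (2) from $\nu$-a.e. to every point of $\operatorname{supp}\nu$. That passage needs continuity of the integrand together with the definition of support as the smallest closed set of full measure; the rest is routine.
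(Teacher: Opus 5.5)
Your proof is correct and follows the paper's argument essentially step for step. It uses the triangle inequality for (1); for (2), the continuous function $\operatorname{Re}\,\bar\lambda\langle v,\sigma(x)v\rangle\le\|v\|^2$, whose $\nu$-integral equals $\|v\|^2$, together with the equality case of Cauchy--Schwarz; for (3), the closed subgroup of elements preserving the line $\mathbb C v$ together with adaptedness; and for (4), the inclusion $\operatorname{supp}\nu\subset x_0\ker\chi$ played against strict aperiodicity.

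One caveat: in the paper, \Cref{cor:character-test} is itself proved from this lemma. You should therefore present the short coset argument for (4) directly, as you in fact do, rather than as a citation of that corollary, so the proof is not circular.
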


\begin{proof}
(1). For any unit vector \(u\in V\),
\begin{align*}
\|\widehat\nu(\sigma)u\|
&=
\Big\| \int_K \sigma(x)u\, d\nu(x) \Big\|
\\
&\le
\int_K \|\sigma(x)u\|\, d\nu(x)
\\
&=
\int_K \|u\|\, d\nu(x)
=
1,
\end{align*}
using unitarity of \(\sigma(x)\) and Jensen’s inequality. Taking the
supremum over all unit vectors \(u\) gives
\(\|\widehat\nu(\sigma)\|_\infty \le 1\).

(2). Assume \(\widehat\nu(\sigma)v = \lambda v\) with \(|\lambda|=1\) and \(v\neq 0\). Define the continuous real-valued function
\[
\varphi(x)
:=
\operatorname{Re}\big\langle v,\,
\bar\lambda\,\sigma(x) v\big\rangle,
\qquad
x\in K.
\]
By Cauchy–Schwarz and unitarity,
\[
\varphi(x)
\le
\big| \langle v, \bar\lambda \sigma(x)v\rangle \big|
\le
\|v\|^2
\qquad
\text{for all }x\in K.
\]
On the other hand,
\begin{align*}
\int_K \varphi(x)\, d\nu(x)
&=
\operatorname{Re}
\Big\langle v,\,
\bar\lambda \int_K \sigma(x) v\, d\nu(x)
\Big\rangle
\\
&=
\operatorname{Re}
\big\langle v,\, \bar\lambda\, \widehat\nu(\sigma) v\big\rangle
\\
&=
\operatorname{Re}
\big\langle v,\, \bar\lambda \lambda v\big\rangle
\\
&=
\|v\|^2.
\end{align*}
Thus
\[
\int_K (\|v\|^2 - \varphi(x))\, d\nu(x) = 0
\]
with a continuous nonnegative integrand. Therefore
\(\varphi(x) = \|v\|^2\) for all \(x\in \operatorname{supp}\nu\): if
\(\varphi(x_0)<\|v\|^2\) at some \(x_0\in\operatorname{supp}\nu\), then
\(\varphi\le\|v\|^2-\delta\) on an open neighborhood of \(x_0\) of positive
\(\nu\)-measure, contradicting the integral identity.

Now fix \(x\in\operatorname{supp}\nu\), and set
\[
u
:=
\bar\lambda\,\sigma(x)v.
\]
Then \(\|u\|=\|v\|\) by unitarity, and
\[
\operatorname{Re}\langle v,u\rangle
=
\varphi(x)
=
\|v\|^2
=
\|v\|\,\|u\|.
\]
By Cauchy–Schwarz,
\[
|\langle v,u\rangle|
\le
\|v\|\,\|u\|,
\]
and equality together with
\(\operatorname{Re}\langle v,u\rangle = \|v\|\,\|u\|\) forces
\(\langle v,u\rangle = \|v\|\,\|u\|\), so \(u\) is a positive scalar
multiple of \(v\). Since \(\|u\|=\|v\|\), we have \(u=v\).

Thus $\bar\lambda\,\sigma(x)v = v$, i.e. 
\[
\sigma(x)v = \lambda v
\]
for every \(x\in\operatorname{supp}\nu\).

(3). Define the subset
\[
K_v
:=
\{x\in K \, \vert \, \sigma(x)v \in \mathbb C v\}.
\]
This is a subgroup: if \(\sigma(x)v = \alpha v\) and
\(\sigma(y)v=\beta v\) with \(\alpha,\beta\in\mathbb T\), then
\[
\sigma(xy)v
=
\sigma(x)(\sigma(y)v)
=
\alpha\beta v,
\qquad
\sigma(x^{-1})v
=
\bar\alpha v.
\]
It is also closed, since
\[
x\longmapsto \|(I - P_{\mathbb C v})\sigma(x)v\|
\]
is continuous and \(K_v\) is its zero-set.

By part (2),
\(\sigma(x)v=\lambda v\) for all \(x\in\operatorname{supp}\nu\), so
\(\operatorname{supp}\nu\subset K_v\). Adaptedness gives
\[
K
=
[\operatorname{supp}\nu]
\subset
K_v,
\]
and hence \(K_v=K\). Thus every \(\sigma(x)\) preserves the line
\(\mathbb C v\), i.e. \(\mathbb C v\) is \(\sigma(K)\)-invariant.

Define \(\chi : K \to \mathbb T\) by
\[
\chi(x)
:=
\frac{\langle v, \sigma(x)v\rangle}{\|v\|^2}.
\]
Then \(\sigma(x)v=\chi(x)v\), \(|\chi(x)|=1\) by unitarity,
\(\chi(xy)=\chi(x)\chi(y)\), and \(\chi\) is continuous. By part (2),
\(\chi(x)=\lambda\) for all \(x\in\operatorname{supp}\nu\), so
\(\chi\equiv\lambda\) on \(\operatorname{supp}\nu\).

(4). Assume \(\nu\) is strictly aperiodic. Suppose \(\chi\) from part (3) were
nontrivial. Then
\[
N := \ker\chi
\]
is a proper closed normal subgroup of \(K\): normality follows because for
any \(k\in K\), the conjugate \(kNk^{-1}\) also lies in \(\ker\chi\).

The support of a probability measure on a compact space is nonempty, so
choose \(x_0\in\operatorname{supp}\nu\). For any \(x\in\operatorname{supp}\nu\),
\[
\chi(x_0^{-1}x)
=
\overline{\chi(x_0)}\,\chi(x)
=
\bar\lambda \lambda
=
1,
\]
so \(x_0^{-1} x \in N\), and hence \(x\in x_0N\). Thus
\[
\operatorname{supp}\nu \subset x_0N,
\]
contradicting strict aperiodicity.

Therefore \(\chi\) must be trivial, i.e. \(\chi(x)=1\) for all \(x\in K\).
In particular \(\lambda = \chi(x_0)=1\), and
\[
\sigma(x)v = v
\quad
\text{for all }x\in K.
\]
That is, \(v\) is a \(\sigma(K)\)-invariant vector.
\end{proof}

\subsection{Proof of i.i.d. QAOA convergence to Haar}\label{appendix:qaoaHaar}

Finally, we shall prove the claim of \Cref{rem:qaoaHaar}.
\begin{proof}
Let \(\rho\) be the probability law of \((\beta,\gamma)\) on the periodic
QAOA parameter torus \(T\), and let
\[
\mathcal U:T\to G,\qquad \mathcal U(\beta,\gamma)=e^{i\beta H_M}e^{i\gamma H_C},
\]
so that \(\mu_1 := \mathcal U_*\rho\) is the induced one-layer measure on \(G\).
Since \(\rho\) has full support on \(T\), we have \(\operatorname{supp}\rho=T\).
Using the definition of the pushforward measure,
\[
\mu_1(B)=\rho\big(\mathcal U^{-1}(B)\big)
\quad\text{for Borel }B\subset G,
\]
it follows that \(\mathcal U(T)\subset \operatorname{supp}\mu_1\): indeed, if
\(x\in T\) and \(O\subset G\) is open with \(\mathcal U(x)\in O\), then
\(\mathcal U^{-1}(O)\) is an open neighborhood of \(x\), hence
\(\rho(\mathcal U^{-1}(O))>0\), so \(\mu_1(O)>0\).

In particular, for every allowed \(\beta,\gamma\),
\[
e^{i\beta H_M}=\mathcal U(\beta,0)\in \operatorname{supp}\mu_1,\qquad
e^{i\gamma H_C}=\mathcal U(0,\gamma)\in \operatorname{supp}\mu_1,
\]
and also
\[
e=\mathcal U(0,0)\in \operatorname{supp}\mu_1.
\]
Therefore the closed subgroup generated by \(\operatorname{supp}\mu_1\)
contains both one-parameter subgroups
\[
\{e^{i\beta H_M}\}\quad\text{and}\quad \{e^{i\gamma H_C}\},
\]
and hence contains the closed subgroup generated by them. By definition,
that closed subgroup is exactly \(G\). Thus
\[
[\operatorname{supp}\mu_1]=G,
\]
so \(\mu_1\) is adapted.

Now suppose toward a contradiction, that \(\mu_1\) is not strictly
aperiodic. Then there exists a proper closed normal subgroup
\(N\triangleleft G\) and an element \(g\in G\) such that
\[
\operatorname{supp}\mu_1\subset gN.
\]
Since \(e\in\operatorname{supp}\mu_1\), we have \(e\in gN\), so
\(g^{-1}\in N\). Hence \(gN=N\) \cite[§3.1]{DummitFoote2004}, and therefore
\[
\operatorname{supp}\mu_1\subset N.
\]
But then
\[
[\operatorname{supp}\mu_1]\subset N\neq G,
\]
contradicting adaptedness. Therefore no such \(g\) and \(N\) exist, and
\(\mu_1\) is strictly aperiodic.
\end{proof}
\end{document}